\newcommand{\address}[1]{\gdef\@address{#1}}
\newcommand{\email}[1]{\gdef\@email{\url{#1}}}
\newcommand{\@endstuff}{\par\vspace{\baselineskip}\noindent\small
\begin{tabular}{@{}l}\scshape\@address\\\textit{E-mail address:} \@email\end{tabular}}
\newtheorem{theorem}{Theorem}[section]
\newtheorem{lemma}[theorem]{Lemma}
\newtheorem{proposition}[theorem]{Proposition}
\newtheorem{corollary}[theorem]{Corollary}
\theoremstyle{definition}
\theoremstyle{definition}
\newtheorem{definition}[theorem]{Definition}
\theoremstyle{definition}
\newtheorem{remark}[theorem]{Remark}
\newcommand{\diver}{\mathrm{div}}
\newcommand{\ric}{\mathrm{Ric}}
\newcommand{\tr}{\mathrm{tr}}
\newcommand{\K}{\mathcal{K}}
\newcommand{\h}{\mathcal{H}}
\newcommand{\Ko}{\mathcal{\mathring{K}}}
\newcommand{\ho}{\mathcal{\mathring{H}}}
\newcommand{\Ric}{\mathfrak{Ric}}
\newcommand{\sric}{\overline{\ric}}
\newcommand{\lie}{\mathcal{L}}
\newcommand{\n}{\nabla}
\newcommand{\sn}{\overline{\n}}
\newcommand{\p}{\partial}
\newcommand{\s}{\varphi}
\newcommand{\phio}{\mathring{\Phi}}
\newcommand{\psio}{\mathring{\Psi}}
\newcommand{\R}{\mathbb{R}}
\newcommand{\ca}{\mathcal{A}}
\newcommand{\ce}{\mathcal{E}}
\newcommand{\cm}{\mathcal{M}}
\newcommand{\cd}{\mathcal{D}}
\newcommand{\adj}{\mathrm{adj}}
\newcommand{\deh}{\delta h}
\newcommand{\dk}{\delta K}
\newcommand{\dt}{\delta \theta}
\newcommand{\dep}{\delta \s}
\newcommand{\cs}{\mathcal{S}}
\newcommand{\cf}{\mathcal{F}}
\newcommand{\mfe}{\mathfrak E}
\newcommand{\mfx}{\mathfrak X}
\newcommand{\se}{\mathscr E}
\newcommand{\Ro}{\mathring{R}}
\title{Developments of initial data on big bang singularities for the Einstein--nonlinear scalar field equations}
\author{Andrés Franco-Grisales}
\date{}
\address{Department of Mathematics, KTH, 100 44 Stockholm, Sweden}
\email{anfg@kth.se}
\begin{document}

\maketitle

\begin{abstract}
    In a recent work, Ringström proposed a geometric notion of initial data on big bang singularities. Moreover, he conjectured that initial data on the singularity could be used to parameterize quiescent solutions to Einstein's equations; that is, roughly speaking, solutions whose leading order asymptotics are convergent. We prove that given initial data on the singularity for the Einstein--nonlinear scalar field equations in $4$ spacetime dimensions, as defined by Ringström, there is a corresponding unique development of the data. We do not assume any symmetry or analyticity, and we allow for arbitrary closed spatial topology. Our results thus present an important step towards resolving Ringström's conjecture. Furthermore, our results show that the Einstein--nonlinear scalar field equations have a geometric singular initial value problem formulation, which is analogous to the classical result by Choquet-Bruhat and Geroch for initial data on a Cauchy hypersurface.

    In the literature, there are two conditions which are expected to ensure that quiescent behavior occurs. The first one is an integrability condition on a special spatial frame. The second one is an algebraic condition on the eigenvalues of the expansion normalized Weingarten map associated with a foliation of the spacetime near the singularity. Our result is the first such result where both possibilities are allowed. That is, we allow for the first condition to ensure quiescence in one region of space and for the second condition to take over in the region where the first one is violated. This fact allows for our results to include the vacuum setting. We expect that all current results in the literature where the authors specify data on a big bang singularity and then construct a corresponding solution, to the Einstein vacuum or Einstein--scalar field equations in $4$ spacetime dimensions, with closed spatial topology, and where the limits of the eigenvalues of the expansion normalized Weingarten map are everywhere distinct, should be special cases of our results.  
\end{abstract}

\tableofcontents

\section{Introduction}

According to Hawking's singularity theorem, cosmological solutions to Einstein's equations typically present singularities in the form of incomplete timelike geodesics. However, the theorem gives no information about the nature of the singularity. One proposal for the behavior of generic big bang singularities is the so called BKL conjecture. It states that the singularity is spatially local, and either oscillatory or quiescent. Here we are concerned with the quiescent setting, which roughly means that the spacetime presents leading order asymptotics which are convergent near the singularity.

In \cite{ringstrom_initial_2022-1}, Ringström proposed a geometric notion of initial data on big bang singularities. The aim was to provide a unifying framework for several existing results in the literature, where the authors specify different notions of data on the singularity, and then prove existence of corresponding solutions to Einstein's equations. Moreover, he conjectured that initial data on the singularity could be used to parameterize quiescent solutions to Einstein's equations, which woluld potentially have applications to the study of oscillatory big bang singularities in the spatially inhomogeneous setting. However, for this to even be possible, one first has to show that quiescent solutions necessarily induce initial data on the singularity; and second, one has to show that given initial data on the singularity, there is a corresponding unique development. The former is addressed in \cite{ringstrom_initial_2022-1}. Here we are concerned with the latter. We prove that given initial data on the singularity for the Einstein--nonlinear scalar field equations, as in \cite{ringstrom_initial_2022-1}, with an appropriate potential $V$, there is a corresponding unique (up to isometry) solution to the Einstein--nonlinear scalar field equations with potential $V$. The spacetimes that we construct admit Gaussian foliations near the singularity, that is, the metric takes the form $g = -dt \otimes dt + h$ in a neighborhood of the singularity, where $h$ denotes the family of induced (Riemannian) metrics on the level sets of the $t$ coordinate. Our results have the following important consequence. The notion of initial data on the singularity, as in \cite{ringstrom_initial_2022-1}, leads to a singular initial value problem formulation for the Einstein--nonlinear scalar field equations. Here we prove results analogous to the fundamental results of Choquet-Bruhat and Geroch \cite{foures-bruhat_theoreme_1952,choquet-bruhat_global_1969} for initial data on a Cauchy hypersurface. That is, our results show that given initial data on the singularity, as in \cite{ringstrom_initial_2022-1}, for the Einstein--nonlinear scalar field equations, there is a corresponding unique (up to isometry) maximal globally hyperbolic development within the class of spacetimes under consideration.

For the construction of the solutions, we use methods similar to the ones developed by Fournodavlos and Luk in \cite{fournodavlos_asymptotically_2023}. In \cite{fournodavlos_asymptotically_2023} the authors construct solutions to the Einstein vacuum equations, on $(0,T] \times \mathbb{T}^3$, with prescribed initial data on the singularity. We remark that the notion of initial data used in \cite{fournodavlos_asymptotically_2023}, is a special case of the one introduced in \cite{ringstrom_initial_2022-1}. Our existence result can thus be seen as a generalization of \cite{fournodavlos_asymptotically_2023}, where we include a nonlinear scalar field and allow for arbitrary closed (compact without boundary) spatial topology. Moreover, for general developments of initial data on the singularity, we obtain detailed asymptotics for the eigenvalues and eigenspaces of the expansion normalized Weingarten map (the second fundamental form with an index raised, divided by the mean curvature) associated with the foliation of the spacetime near the singularity. Thanks to these detailed asymptotics, we are able to show that the metric, in a neighborhood of the singularity, can be put in the form
\[
g = -dt \otimes dt + \sum_{i,k} b_{ik}t^{2p_{\max\{i,k\}}} \omega^i \otimes \omega^k
\]
on $(0,T)\times \Sigma$. Here $p_i \in C^\infty(\Sigma)$ are time independent, $\{\omega^i\}$ is a (time independent) frame of one forms on $\Sigma$, and the functions $b_{ik} \in C^\infty((0,T)\times\Sigma)$ satisfy $b_{ik} \to \delta_{ik}$ as $t \to 0$. We note that this form of the metric is analogous to the ansatz used in \cite{fournodavlos_asymptotically_2023}. This fact is remarkable since we do not make any such assumptions for the definition of development of the data. Moreover, this turns out to be of essential importance to settle the question of uniqueness of developments.

\subsection{Developments of initial data on the singularity}

We are interested in solving the Einstein--nonlinear scalar field equations, with cosmological constant $\Lambda$ and a potential $V \in C^{\infty}(\R)$. Let $(M,g)$ be a 4-dimensional spacetime and let $\s \in C^{\infty}(M)$ denote the scalar field. Then the equations are
\begin{subequations}
\begin{align}
    \ric - \frac{1}{2} Sg + \Lambda g &= T, \label{einstein equation 1}\\
    \Box_g \s &= V' \circ \s, \label{matter equation}
\end{align} 
\end{subequations}
where $\ric$ and $S$ denote the Ricci and scalar curvature of $g$ respectively, $\Box_g = \tr_g \n^2$ is the wave operator associated with $g$, $\n$ is the Levi-Civita connection of $g$ and $T$ is the energy-momentum tensor of $\s$, which is given by
\[
T = d\s \otimes d\s - \bigg( \frac{1}{2}|d\s|^2_g + V \circ \s \bigg)g.
\]
Note that \eqref{einstein equation 1} may be reformulated as
\begin{equation} \label{einstein equation 2}
    \ric = d\s \otimes d\s + (V \circ \s)g + \Lambda g.
\end{equation}
It may be verified that \eqref{matter equation} implies that $\diver_g T = 0$, thus ensuring the compatibility of $T$ with the Einstein tensor $G = \ric - \frac{1}{2}Sg$. Note that the cosmological constant can be accounted for by adding a constant to $V$. Hence, there is no loss of generality in assuming $\Lambda = 0$, and we do so in what follows. We are only interested in potentials satisfying the following condition; cf. \cite[Definition~1]{oude_groeniger_formation_2023}.

\begin{definition} \label{admissible potential}
    We say that $V \in C^{\infty}(\R)$ is an \emph{admissible potential} if there are constants $C_m$ and $0 < a < \sqrt{6}$ such that
    \begin{equation*}
        |V^{(m)}(x)| \leq C_m e^{a|x|}
    \end{equation*}
    for every non-negative integer $m$. Also, define $\varepsilon_V := 1 - \frac{a}{2} \sqrt{\frac{2}{3}}$. Note that $\varepsilon_V > 0$.
\end{definition}

Now we introduce our notion of initial data on the singularity; cf. \cite[Definition 10]{ringstrom_initial_2022-1}. For our purposes, the definition of initial data on the singularity in the Einstein--scalar field setting translates directly to the Einstein--nonlinear scalar field setting.

\begin{definition} \label{initial data}
    Let $(\Sigma, \ho)$ be a closed 3-dimensional Riemannian manifold, $\Ko$ a (1,1)-tensor field on $\Sigma$ and $\phio, \psio \in C^{\infty}(\Sigma)$. Then $(\Sigma, \ho, \Ko, \phio, \psio)$ are \emph{non-degenerate quiescent initial data on the singularity for the Einstein--nonlinear scalar field equations} if the following holds:
    \begin{enumerate}
        \item $\tr \Ko = 1$ and $\Ko$ is symmetric with respect to $\ho$.
        \item $\tr \Ko^2 + \psio^2 = 1$ and $\diver_{\ho} \Ko = \psio d\phio$.
        \item The eigenvalues of $\Ko$ are everywhere distinct.
        \item $(\gamma_{23}^1)^2 = 0$ in a neighborhood of $x \in \Sigma$ if $p_1(x) \leq 0$, where $p_1 < p_2 < p_3$ are the eigenvalues of $\Ko$, the $e_i$ are orthonormal (with respect to $\ho$) eigenvector fields of $\Ko$ such that $\Ko(e_i) = p_i e_i$, and the  $\gamma_{ik}^{\ell}$ are defined by $[e_i,e_k] = \gamma_{ik}^{\ell}e_{\ell}$. 
    \end{enumerate}
\end{definition}

\begin{remark}
    Note that depending on the choice of eigenvectors, $\gamma_{ik}^{\ell}$ is well defined up to a sign. Hence the $(\gamma_{ik}^{\ell})^2$ are well defined.
\end{remark}

Henceforth, we shall refer to non-degenerate quiescent initial data on the singularity for the Einstein--nonlinear scalar field equations, simply as initial data on the singularity for short. Let us now fix some notation. Given initial data on the singularity, as in Definition~\ref{initial data}, we denote by $\{e_i\}$ a frame of eigenvectors of $\Ko$, with dual frame $\{\omega^i\}$, such that $\Ko(e_i) = p_i e_i$, where $p_1 < p_2 < p_3$ denote the eigenvalues of $\Ko$, and normalized so that $\ho(e_i,e_i)=1$. Also, denote by $\gamma_{ik}^\ell$ the structure coefficients of the frame, defined by $[e_i,e_k] = \gamma_{ik}^\ell e_\ell$. Note that such a frame always exists, at least locally. However, by an argument similar to \cite[Lemma A.1]{ringstrom_wave_2021}, there is a finite covering space of $\Sigma$ such that, if we pull back the initial data, the pullback of $\Ko$ by the covering map has a global frame of eigenvectors. That being the case, from now on we assume, if necessary, that we work on this finite covering space instead, so that the frame $\{e_i\}$ is global. Later we shall see why this assumption is not a restriction on our results; see Remark~\ref{about the global frame} below. Finally, denote by $D$ the Levi-Civita connection of $\ho$. 

Now, it is necessary to clarify what the correspondence between initial data on the singularity and a solution to Einstein's equations should be. For that purpose, let $(M,g)$ be a spacetime and let $\Sigma \subset M$ be a spacelike hypersurface. We denote by $h$ the induced metric on $\Sigma$ and by $k$ the second fundamental form, defined by
\[
k(X,Y) := g(\n_X U,Y),
\]
where $U$ is the future pointing unit normal of $\Sigma$ and $X,Y \in \mathfrak X(\Sigma)$. Define by $K := k^{\sharp}$ the Weingarten map (see Definition~\ref{raising an index} below for our conventions regarding the notation $\sharp$), and by $\theta := \tr_{h} k = \tr K$ the mean curvature. Now we introduce the expansion normalized quantities, which are the ones expected to converge to the initial data on the singularity along a suitable foliation of the spacetime.

\begin{definition} \label{expansion normalized}
    Let $(M,g)$ be a spacetime, $\s \in C^\infty(M)$ and $\Sigma \subset M$ a spacelike hypersurface with future pointing unit normal $U$. If $\theta > 0$, the \emph{expansion normalized Weingarten map} is defined by
    \[
    \K(X) := \frac{1}{\theta} K(X),
    \]
    and the \emph{expansion normalized induced metric} by
    \[
    \h(X,Y) := h(\theta^{\K}(X), \theta^{\K}(Y)), 
    \]
    where
    \[
    \theta^{\K}(X) := \sum_{n = 0}^{\infty} \frac{(\ln \theta)^n}{n!} \K^n(X)
    \]
    and $X, Y \in \mathfrak{X}(\Sigma)$. Furthermore, define the \emph{expansion normalized normal derivative of the scalar field} and the \emph{expansion normalized induced scalar field} by
    \[
    \Psi := \frac{1}{\theta}U \s, \qquad \Phi := \s + \Psi \ln \theta  
    \]
    respectively. 
\end{definition}

Before relating the expansion normalized quantities with the initial data on the singularity, we introduce some conventions regarding the type of foliation that we will use. Let $\Sigma$ be a manifold. In what follows, we will consider metrics of the form $g = -dt \otimes dt + h$ on $(0,T) \times \Sigma$, where $h$ denotes the family of induced metrics on the $\Sigma_t := \{t\} \times \Sigma$ hypersurfaces, which are assumed to be spacelike. Hence, $h_t$ for $t \in (0,T)$ is a smooth one parameter family of Riemannian metrics on $\Sigma$. Note that the future pointing unit normal of the $\Sigma_t$ hypersurfaces is given by $U = \p_t$. We are now ready to introduce the notion of development of initial data on the singularity.

\begin{definition} \label{development of the data}
    Let $(\Sigma,\ho,\Ko,\phio,\psio)$ be initial data on the singularity and let $V$ be an admissible potential. We say that a spacetime $(M,g)$ and a function $\s \in C^\infty(M)$ are a \emph{locally Gaussian development of the initial data} if the following holds. $(M,g,\s)$ is a solution to the Einstein--nonlinear scalar field equations with potential $V$. There is a $T > 0$ and a diffeomorphism $F$ from $(0,T) \times \Sigma$ to an open subset of $M$, such that $F^*g$ takes the form $F^*g = -dt \otimes dt + h$. The mean curvature $\theta$ of the $\Sigma_t$ hypersurfaces satisfies $\theta \to \infty$ as $t \to 0$. Finally, there are positive constants $\delta$ and $C_m$ such that the expansion normalized quantities, associated with the $\Sigma_t$ hypersurfaces, satisfy the estimates
    \begin{equation} \label{convergence estimates}
        |D^m(\h - \ho)|_\ho + |D^m(\K - \Ko)|_\ho + |D^m(\Phi - \phio)|_\ho + |D^m(\Psi - \psio)|_\ho \leq C_mt^\delta
    \end{equation}
    for all $t \in (0,T)$ and every non-negative integer $m$. Moreover, if $(M,g)$ is globally hyperbolic and the hypersurfaces $F(\Sigma_t)$ are Cauchy hypersurfaces in $(M,g)$, then $(M,g,\s)$ is called a \emph{locally Gaussian globally hyperbolic development of the initial data}.
\end{definition}

\begin{remark}
    In principle, one could consider developments of initial data on the singularity of finite regularity and consider convergence estimates such as \eqref{convergence estimates} with Sobolev norms instead. Nevertheless, here we choose to work in the smooth setting. The reason is that in the current situation, a loss of derivatives is expected to occur when going back and forth between regular initial data and data on the singularity. Moreover, the number of derivatives that one can expect to lose is unbounded; see \cite{li_scattering_2024}. 
\end{remark}

\begin{remark}
    One might wonder what the geometric significance of the time coordinate $t$ is in the definition above. Let us restrict ourselves to a neighborhood of the singularity where the metric takes the form $F^*g = -dt \otimes dt + h$ and consider a point $(t,x) \in (0,T) \times \Sigma$. It is clear that every past inextendible causal curve emanating from $(t,x)$ has length bounded above by $t$. Moreover, the segment of the integral curve of $\p_t$ which passes through $(t,x)$ and which lies to the past of $(t,x)$, is a past inextendible timelike geodesic with length equal to $t$. That is, the time coordinate $t$ is the \emph{cosmological time function} of the spacetime near the singularity; see \cite{andersson_cosmological_1998}.     
\end{remark}

\subsection{Main results}

In order to state our results, it is convenient to introduce some notation.

\begin{definition} \label{stable and unstable regions}
    Let $(\Sigma,\ho,\Ko,\phio,\psio)$ be initial data on the singularity. For every $x \in \Sigma$ such that $p_1(x) \leq 0$, there is an open neighborhood $U_x \ni x$ such that $\gamma_{23}^1|_{U_x} \equiv 0$. Define $D_- := \cup_x U_x$ and $D_+ := \Sigma \setminus D_-$. Thus $D_-$ is an open neighborhood of $\{x \in \Sigma : p_1(x) \leq 0 \}$ where $\gamma_{23}^1 \equiv 0$ and $D_+$ is closed, hence compact, and $p_1 > 0$ in $D_+$. Define
    \[
    \mathring{\varepsilon} := \min\Big\{ \min_{x \in D_+}\{ 2p_1(x) \}, \min_{x \in \Sigma}\{ 1-p_3(x) \} \Big\}.
    \]
    Note that $\mathring{\varepsilon} > 0$.
\end{definition}

\begin{definition}
    Let $(\Sigma,\ho,\Ko,\phio,\psio)$ be initial data on the singularity and let $V$ be an admissible potential. Define $\varepsilon := \min\{\mathring\varepsilon, \varepsilon_V\}$.
\end{definition}

Now we are ready to state our main existence result. 

\begin{theorem} \label{main existence theorem}
    Let $(\Sigma, \ho, \Ko, \phio, \psio)$ be initial data on the singularity and let $V$ be an admissible potential. Then there is a $T > 0$, depending only on the initial data and the potential, such that the following holds. There is a Lorentzian metric $g = -dt \otimes dt + h$ and a function $\s$ on $(0,T) \times \Sigma$, which solve the Einstein--nonlinear scalar field equations with potential $V$, such that the mean curvature $\theta$ of the $\Sigma_t$ hypersurfaces satisfies $\theta \to \infty$ as $t \to 0$. Moreover, the corresponding expansion normalized quantities satisfy the estimates
    \begin{equation*} 
        |D^m(\h - \ho)|_\ho + |D^m(\K - \Ko)|_\ho + |D^m(\Phi - \phio)|_\ho + |D^m(\Psi - \psio)|_\ho \leq C_mt^\varepsilon
    \end{equation*}
    for every non-negative integer $m$, where the constants $C_m$ depend only on the initial data and the potential.
\end{theorem}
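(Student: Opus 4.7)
The plan is to adapt the construction of Fournodavlos--Luk (carried out for the Einstein vacuum equations on $\mathbb{T}^3$) to this more general setting, handling simultaneously the presence of a nonlinear scalar field, arbitrary closed spatial topology, and the coexistence of the two BKL quiescence mechanisms (the integrability condition on $D_-$ and the eigenvalue positivity on $D_+$). The argument proceeds in three stages: Fuchsian reformulation, construction of a high-order approximate solution, and construction of the true solution as a small correction to the approximate one.

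First, I would write \eqref{einstein equation 2} and \eqref{matter equation} in the Gaussian gauge $g = -dt \otimes dt + h$. This yields an ADM-type evolution system for $(h, K)$ together with the Hamiltonian and momentum constraints on each $\Sigma_t$, coupled with the wave equation for $\s$. Taking components in the frame $\{e_i\}$ of eigenvectors of $\Ko$ and passing to the expansion normalized variables $(\h, \K, \Phi, \Psi)$ recasts the system as Fuchsian near $t = 0$, with singular part determined by $\Ko$ and $\psio$. The guiding heuristic is the asymptotic profile $h_{ik} \sim \delta_{ik} t^{2 p_{\max\{i,k\}}}$ anticipated in the introduction, which is consistent with $\K \to \Ko$ and $\h \to \ho$.

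Second, I would construct an approximate solution $(g_{\mathrm{app}}, \s_{\mathrm{app}})$ by iterating the formal asymptotic expansion of the Fuchsian system to a sufficiently high order $N$. At each step the coefficients are determined by algebraic/elliptic relations on $\Sigma$ from the lower-order data, and solvability hinges on the conditions of Definition~\ref{initial data}: in $D_+$ the strict positivity $p_1 > 0$ supplies a favorable sign for the Fuchsian operator, while in $D_-$ the vanishing $\gamma_{23}^1 \equiv 0$ eliminates the source that would otherwise generate a logarithmic resonance in the expansion of the $h_{11}$ component. The constraint equations are propagated via the second Bianchi identity together with \eqref{matter equation}, and verified at the initial singularity by the conditions $\tr \Ko = 1$, $\tr \Ko^2 + \psio^2 = 1$, and $\diver_{\ho} \Ko = \psio d\phio$.

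Third, I would solve for the remainder $(r, \sigma) := (g - g_{\mathrm{app}}, \s - \s_{\mathrm{app}})$ as a singular initial value problem: prescribing zero data at some small $T_0 \in (0, T)$, solving on $[T_0, T] \times \Sigma$ by standard local existence, and establishing weighted energy estimates of the form
\[
E_M(t) = \sum_{|\alpha| \le M} \int_\Sigma t^{-2\varepsilon'}\bigl( |D^\alpha(\h - \ho)|_\ho^2 + |D^\alpha(\K - \Ko)|_\ho^2 + |D^\alpha(\Phi - \phio)|_\ho^2 + |D^\alpha(\Psi - \psio)|_\ho^2 \bigr) d\mu_{\ho}
\]
for any $\varepsilon' < \varepsilon$, uniformly in $T_0$, and then passing to the limit $T_0 \to 0$ to obtain a solution on $(0, T] \times \Sigma$ with the stated decay. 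The main obstacle is the simultaneous closure of these estimates across $D_+$ and $D_-$ with weight matched to $\varepsilon = \min\{\mathring{\varepsilon}, \varepsilon_V\}$: in $D_-$ the control of spatial derivatives of the ``large'' off-diagonal components of $h$ requires a precise use of the vanishing of $\gamma_{23}^1$ to cancel a would-be resonant source in the evolution of $h_{11}$; in $D_+$ the positivity $2p_1 \ge \mathring{\varepsilon}$ plays the analogous role directly through the Fuchsian weight; and the admissibility of $V$ enters through $\varepsilon_V > 0$, preventing the scalar field nonlinearity from overwhelming the weight. A secondary technical point is that the frame $\{e_i\}$ is only guaranteed on a finite cover of $\Sigma$, but the solution on the cover is equivariant under the deck transformations and descends to $\Sigma$ itself.
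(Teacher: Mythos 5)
Your broad outline—build a high-order approximate solution, then control the remainder by weighted energy estimates with data at $T_0\to 0$—matches the paper's strategy in spirit (Section~\ref{sequence of approximate solutions} builds such a sequence, Section~\ref{construction of the solution} closes the estimates). But two of the genuinely difficult steps in the paper's argument are missing from your proposal, and a third point is stated in a way that would not actually work.

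First, the first-order ``ADM-type evolution system for $(h,K)$'' you propose to estimate loses derivatives: the evolution equation for $K$ contains $\sric^{\sharp}$, hence two derivatives of $h$, while the equation for $h$ gains no derivatives of $K$, so an energy estimate on $(h,K)$ at a fixed Sobolev level cannot close. The paper resolves this by differentiating \eqref{evolution equation for k} in time and using the first variation of Ricci to obtain a genuine wave equation for $K$; this in turn introduces the term $\sn^2(\tr K)^\sharp$, which forces $\theta=\tr K$ to be carried as a separate unknown, and then a \emph{modified} top-order energy is required to avoid loss of derivatives through $\theta$ and $h$. None of this appears in your plan, and without something equivalent (e.g., a symmetric hyperbolic reduction) the estimates you write down would not close. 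Relatedly, the energy $E_M(t)$ you display is on $\h-\ho$, $\K-\Ko$, etc., but the $\frac{d}{dt}\ce \lesssim \frac{1}{t}\ce$ structure of the singular equations makes Grönwall unusable for these quantities; the paper's energy is on the difference from the approximate solution precisely so the inhomogeneity decays as fast as needed, which is also what your ``remainder'' framing would require---your stated energy is inconsistent with your own remainder equation. Second, you never address convergence of $\h$ itself. The energy argument delivers $tK\to\Ko$ and $\bar\h\to\ho$, but $\h$ is built from $\theta^{\K}$, not $t^{-\Ko}$, and to pass from one to the other you must show the eigenvalues and eigenprojections of $\K$ converge (with the off-diagonal improvements in the frame $\{e_i\}$ intact). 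The paper does this in Subsection~\ref{asymptotics for the expansion normalized metric} via the resolvent representation of the eigenprojections; this is a separate, non-trivial step that your proposal does not mention. Finally, descent from the finite cover of $\Sigma$ is not free: the construction is not canonically deck-equivariant, and the paper has to invoke its uniqueness theorem (Remark~\ref{about the global frame}) to show the solution on $\widetilde\Sigma$ is deck-invariant before quotienting. Asserting equivariance without a uniqueness argument is a gap.
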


The proof of Theorem~\ref{main existence theorem} is to be found at the end of Subsection~\ref{asymptotics for the expansion normalized metric}. In order to prove Theorem~\ref{main existence theorem}, it turns out to be useful to also define alternative versions of $\h$, $\Psi$ and $\Phi$.

\begin{definition}
    Let $(\Sigma,\ho,\Ko,\phio,\psio)$ be initial data on the singularity and consider a metric $g = -dt \otimes dt + h$ and a function $\s$ on $(0,T) \times \Sigma$. Define the \emph{approximate expansion normalized induced metric} by
    \[
    \bar \h(X,Y) := h(t^{-\Ko}(X),t^{-\Ko}(Y)),
    \]
    where $t^{-\Ko}$ is defined similarly as $\theta^{\K}$ and $X,Y \in \mfx(\Sigma)$. Also define $\bar \Psi$ and $\bar \Phi$ by
    \[
    \bar \Psi := t\p_t \s, \qquad \bar \Phi := \s - \bar \Psi \ln t.
    \]
\end{definition}

The motivation for defining $\bar\h$, $\bar\Phi$ and $\bar\Psi$ is the following. Even though this is not immediately obvious from Definition~\ref{development of the data}, the expectation is that the mean curvature $\theta$ should be asymptotic to $t^{-1}$. Since $\K$ should converge to $\Ko$, then $\bar\h$, $\bar\Phi$ and $\bar\Psi$ should be asymptotically the same as $\h$, $\Phi$ and $\Psi$ respectively. One advantage of $\bar\h$, $\bar\Phi$ and $\bar\Psi$ is then that it is easier to work with the time coordinate $t$ rather than with $\theta$. But more importantly, $\bar\h$ can be treated by using the frame $\{e_i\}$ of eigenvectors of $\Ko$, which turns out to be very useful for our construction. 

Throughout the proof of Theorem~\ref{main existence theorem}, we obtain much more detailed asymptotic information about the constructed spacetimes than what the convergence estimates \eqref{convergence estimates} immediately imply. It is thus of interest to know whether the more detailed asymptotics we obtain actually follow as a consequence of \eqref{convergence estimates} and Einstein's equations. The answer to this question turns out to be affirmative and is given by the following result. 

%add something about the time derivatives!!
%%
%%
%%
%%
%%
%%
%%
\begin{theorem} \label{asymptotics of the frame}
    Let $(\Sigma, \ho, \Ko, \phio, \psio)$ be initial data on the singularity and let $V$ be an admissible potential. Suppose that we have a locally Gaussian development $(M,g,\s)$ of the initial data. Then, by taking $T$ (as in Definition~\ref{development of the data}) smaller if necessary, there are constants $C_m$, for every non-negative integer $m$, such that the following holds. Define $\sigma := \min\{\varepsilon,\delta/2\}$. The mean curvature satisfies
    \[
    |D^m( t\theta - 1 )|_\ho + |D^m( \ln \theta + \ln t )|_\ho  \leq C_mt^\sigma.
    \]
    The eigenvalues of $\K$ are everywhere distinct. Let $q_1 < q_2 < q_3$ be the eigenvalues of $\K$. There is a frame $\{E_i\}$ on $\Sigma$, with $\K(E_i) = q_i E_i$, which is orthonormal with respect to $\h$, with dual frame $\{\eta^i\}$, such that
    \[
    |D^m(q_i - p_i)|_\ho + |D^m(E_i - e_i)|_\ho + |D^m(\eta^i - \omega^i)|_\ho \leq C_mt^\sigma.
    \]
    For $i\neq k$, the following off-diagonal improvements on the estimates hold,
    \[
    \big|D^m\big(\omega^k(E_i)\big)\big|_\ho + \big|D^m\big( \eta^k(e_i) \big)\big|_\ho \leq C_m t^\sigma \min\{1, t^{2(p_i - p_k)}\}.
    \]
    Moreover, for $x \in D_+$, the better improvements
    \[
    \big|D^m\big( \omega^k(E_i)\big)\big|_\ho(x) + \big|D^m\big( \eta^k(e_i)\big)\big|_\ho(x) \leq C_m t^{\sigma + 2(p_i-p_1)(x)}
    \]
    hold. In particular,
    \[
    |D^m(\bar \h - \ho)|_\ho + |D^m(tK - \Ko)|_\ho \leq C_mt^\sigma,
    \]
    and for $i \neq k$,
    \[
    \big|D^m\big( \bar\h(e_i,e_k) \big)\big|_\ho \leq C_mt^{\sigma + |p_i - p_k|}, \quad \big|D^m\big(tK(e_i,\omega^k)\big)\big|_\ho \leq C_m t^\sigma \min\{1,t^{2(p_i-p_k)}\}.
    \]
    Furthermore, for $x \in D_+$, the better improvements
    \[
    \big|D^m\big( \bar\h(e_i,e_k) \big)\big|_\ho(x) \leq C_mt^{\sigma + (p_i + p_k - 2p_1)(x)}, \quad \big|D^m\big(tK(e_i,\omega^k)\big)\big|_\ho(x) \leq C_m t^{\sigma + 2(p_i-p_1)(x)}
    \]
    hold. Finally, if $R$ denotes the Riemann curvature tensor of $g$, the Kretschmann scalar $|R|_g^2$ satisfies the estimate
    \[
    \big|D^m \big[ t^4|R|_g^2 - 4 \big( \textstyle\sum_i p_i^2(1-p_i)^2 + \textstyle\sum_{i<k} p_i^2p_k^2 \big) \big] \big|_\ho \leq C_mt^\sigma,
    \]
    so that the spacetime is $C^2$ past inextendible.
\end{theorem}

The proof of Theorem~\ref{asymptotics of the frame} is to be found in Subsection~\ref{proofs of detailed asymptotics and uniqueness}. As a consequence of Theorem~\ref{asymptotics of the frame}, it is also possible to obtain estimates for the time derivatives. 

\begin{corollary} \label{estimates for time derivatives}
    With the same setup as in Theorem~\ref{asymptotics of the frame}, there are constants $C_{m,r}$ such that
    \begin{align*}
        |D^m (t\p_t)^r (t\theta)|_\ho + |D^m (t\p_t)^r \bar\Psi|_\ho + |D^m (t\p_t)^r q_i|_\ho &\leq C_{m,r}t^\sigma,\\
        \big|D^m (t\p_t)^r\big(\omega^i(E_i)\big)\big|_\ho + \big|D^m (t\p_t)^r\big(\eta^i(e_i)\big)\big|_\ho &\leq C_{m,r}t^\sigma,\\
        \big|D^m (t\p_t)^r\big(\omega^k(E_i)\big)\big|_\ho(x) + \big|D^m (t\p_t)^r\big(\eta^k(e_i)\big)\big|_\ho(x) &\leq C_{m,r}t^{\sigma + 2(p_i-p_1)(x)},\\
        \big|D^m (t\p_t)^r\big(\omega^k(E_i)\big)\big|_\ho(y) + \big|D^m (t\p_t)^r\big(\eta^k(e_i)\big)\big|_\ho(y) &\leq C_{m,r}t^{\sigma} \min\{1, t^{2(p_i-p_k)(y)}\},
    \end{align*}
    for $i \neq k$ (no summation on $i$), $x \in D_+$ and $y \in D_-$, every non-negative integer $m$ and every integer $r \geq 1$. In particular,
    \[
    \big|D^m (t\p_t)^r\big(\bar \h(e_i,e_i) \big)\big|_\ho + \big|D^m (t\p_t)^r\big(tK(e_i,\omega^i)\big)\big|_\ho \leq C_{m,r}t^\sigma
    \]
    (no summation on $i$), and for $i \neq k$, $x \in D_+$ and $y \in D_-$,
    \begin{align*}
        \big|D^m (t\p_t)^r\big( \bar\h(e_i,e_k) \big)|_\ho(x) &\leq C_{m,r}t^{\sigma + (p_i + p_k - 2p_1)(x)},\\
        \big|D^m (t\p_t)^r\big( \bar\h(e_i,e_k) \big)\big|_\ho(y) &\leq C_{m,r}t^{\sigma + |p_i - p_k|(y)},\\
        \big|D^m (t\p_t)^r\big(tK(e_i,\omega^k)\big)\big|_\ho(x) &\leq C_{m,r} t^{\sigma + 2(p_i-p_1)(x)},\\
        \big|D^m (t\p_t)^r\big(tK(e_i,\omega^k)\big)\big|_\ho(y) &\leq C_{m,r} t^\sigma \min\{1,t^{2(p_i-p_k)(y)}\}.
    \end{align*}
\end{corollary}

\begin{remark}
    Since $\gamma_{23}^1 \equiv 0$ in $D_-$, it is of interest to know what the behavior of the corresponding structure coefficient of the frame $\{E_i\}$ is. Define the $\lambda_{ik}^\ell$ by $[E_i,E_k] = \lambda_{ik}^\ell E_\ell$. The convergence estimates for the $E_i$ then imply the decay
    \[
    |D^m(\lambda_{23}^1)|_\ho(y) \leq C_mt^{\sigma + 2(p_2-p_1)(y)},
    \]
    for $y \in D_-$.
\end{remark}

\begin{remark} \label{fournodavlos luk ansatz remark}
    A few comments are in order regarding the definition of $\h$. If we compare Definition~\ref{development of the data} with \cite[Definition 16]{ringstrom_initial_2022-1}, we note that the definitions of the expansion normalized induced metric differ. Specifically, in \cite{ringstrom_initial_2022-1}, the pulled-back metric is assumed to take the form 
    \begin{equation} \label{fournodavlos luk ansatz} 
        F^*g = -dt \otimes dt + \sum_{i,k} b_{ik}t^{2p_{\max\{i,k\}}} \omega^i \otimes \omega^k.
    \end{equation}
    Define $\check{h} := b_{ik} \omega^i \otimes \omega^k$. In \cite{ringstrom_initial_2022-1}, it is then required that $\check{h} \to \ho$ as $t \to 0$. We remark that this form of the metric arises from the ansatz used in \cite{fournodavlos_asymptotically_2023} for their construction, with the difference that \cite{fournodavlos_asymptotically_2023} uses a global coordinate frame on $\mathbb{T}^3$ instead of a frame of eigenvectors of $\Ko$. By contrast, our definition of $\h$ corresponds to that of \cite{oude_groeniger_formation_2023}, which has the advantage of being frame independent, and of being independent of the initial data on the singularity. The issue is that it is a priori unclear how convergence of $\h$ is related with convergence of $\check{h}$. The expectation is that both conditions should be equivalent, and our results show that this is indeed the case. In the course of our existence proof, we essentially show that $\check{h} \to \ho$ implies $\h \to \ho$ (see the proof of Theorem~\ref{main existence theorem} below). The other implication follows from Theorem~\ref{asymptotics of the frame} and Corollary~\ref{estimates for time derivatives} by defining $b_{ik} := t^{-|p_i-p_k|}\bar\h(e_i,e_k)$, so that $b_{ik} \to \delta_{ik}$ as $t \to 0$ with the estimates
    \[
    |D^m(t\p_t)^r (b_{ik} - \delta_{ik})|_\ho \leq C_{m,r}\langle \ln t \rangle^mt^\sigma,
    \]
    where $\langle \ln t \rangle := \sqrt{1 + (\ln t)^2}$. In fact, we can say a bit more. Since we have even better off-diagonal estimates for $\bar\h$ in $D_+$, we can define the functions $a_{ik} := t^{2p_1 - p_i-p_k}\bar\h(e_i,e_k)$ for $i \neq k$ and $a_{ii} := \bar\h(e_i,e_i)$. Then $F^*g$ can be written in the form
    \[
    F^*g = -dt \otimes dt + \sum_i a_{ii}t^{2p_i}\omega^i \otimes \omega^i + \sum_{i \neq k} a_{ik}t^{2(p_i+p_k-p_1)} \omega^i \otimes \omega^k,
    \]
    where $a_{ik}(x) \to \delta_{ik}$ as $t \to 0$ for $x \in D_+$ with the estimates 
    \[
    |D^m(t\p_t)^r (a_{ik} - \delta_{ik})|_\ho(x) \leq C_{m,r}\langle \ln t \rangle^mt^\sigma.
    \]
    Thus, given a locally Gaussian development, writing the metric in the form \eqref{fournodavlos luk ansatz} is not a restriction.
\end{remark}

\begin{remark}
    The off-diagonal estimates that we obtain for $K$, excluding the better ones that we obtain in $D_+$, are analogous to those obtained in \cite{fournodavlos_asymptotically_2023}. Similar estimates for the mean curvature and the off-diagonal components of the eigenvectors of $\K$, again excluding the better ones in $D_+$, are also obtained in \cite{ringstrom_initial_2022-1}.
\end{remark}

Now we move on to the question of uniqueness of developments of initial data on the singularity. We obtain the following geometric uniqueness result, which essentially states that if we have two locally Gaussian developments of the same initial data on the singularity, then the solutions are locally isometric in a neighborhood of the singularity. In order to prove this result, the detailed asymptotics of Theorem~\ref{asymptotics of the frame}, in particular the off-diagonal improvements on the estimates for $K$ and $\bar\h$, turn out to be of essential importance. 

\begin{theorem} \label{main uniqueness theorem}
    Let $(\Sigma, \ho, \Ko, \phio, \psio)$ be initial data on the singularity and let $V$ be an admissible potential. Suppose that there are locally Gaussian developments $(M_i,g_i,\s_i)$ of the data, with corresponding diffeomorphisms $F_i:(0,T_i) \times \Sigma \to U_i \subset M_i$ for $i = 1,2$. There is a sufficiently small $T>0$ such that, if $V_i := F_i((0,T) \times \Sigma)$ and $\Upsilon := F_2|_{(0,T) \times \Sigma} \circ F_1^{-1}|_{V_1}$, then the diffeomorphism $\Upsilon:V_1 \to V_2$ satisfies $\Upsilon^* g_2 = g_1$ and $\s_2 \circ \Upsilon = \s_1$. 
\end{theorem}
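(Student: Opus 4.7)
The plan is to transport both developments to the common manifold $(0,T) \times \Sigma$ and compare them there. Set $\tilde g_\iota := F_\iota^* g_\iota$ and $\tilde \s_\iota := F_\iota^* \s_\iota$ for $\iota \in \{1,2\}$. By Definition~\ref{development of the data}, each $\tilde g_\iota$ is already in Gaussian form $-dt \otimes dt + h^{(\iota)}$ on $(0,\min\{T_1,T_2\}) \times \Sigma$, both solve the Einstein--nonlinear scalar field equations with the same admissible potential $V$, and both induce the same non-degenerate quiescent initial data $(\ho, \Ko, \phio, \psio)$ on the singularity via the expansion normalized quantities. Since $\Upsilon^* g_2 = g_1$ and $\s_2 \circ \Upsilon = \s_1$ are equivalent to $\tilde g_1 = \tilde g_2$ and $\tilde \s_1 = \tilde \s_2$ on $(0,T)\times\Sigma$, the entire theorem reduces to a uniqueness statement for the singular initial value problem in the Gaussian gauge with data prescribed on the singularity.

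Next, I would invoke Theorem~\ref{asymptotics of the frame} applied separately to each $(\tilde g_\iota, \tilde \s_\iota)$, and use the common frame $\{e_i\}$ of eigenvectors of $\Ko$ to put both metrics in the form
\[
\tilde g_\iota = -dt \otimes dt + \ssum_{i,k} b_{ik}^{(\iota)}\, t^{2p_{\max\{i,k\}}}\, \omega^i \otimes \omega^k,
\]
with $b_{ik}^{(\iota)} \to \delta_{ik}$ as $t \to 0$ and with the sharper off-diagonal decay on $D_+$ described in Remark~\ref{fournodavlos luk ansatz remark}. Write $\delta b_{ik} := b_{ik}^{(1)} - b_{ik}^{(2)}$ and $\dep := \tilde \s_1 - \tilde \s_2$. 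Subtracting the Einstein and scalar field equations in the Gaussian gauge yields a linear system for $(\delta b_{ik}, \dep)$ whose principal part is a wave-type operator built from either background metric and whose lower-order coefficients are controlled by the detailed background asymptotics of Theorem~\ref{asymptotics of the frame}. There is no classical initial value at $t=0$; it is replaced by the statement that, by the same theorem, all components of $(\delta b_{ik}, \dep)$ and their spatial derivatives already decay at a definite positive rate as $t \to 0$.

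The remaining work is an energy estimate in the spirit of the uniqueness portion of \cite{fournodavlos_asymptotically_2023}. I would define a high-order weighted energy $\ce_N(t)$ for $(\delta b_{ik}, \dep)$, adapted to the anisotropic weights $t^{2p_i}$ and commuted with the scaling derivation $t\p_t$ and with the frame $\{e_i\}$. Differentiating $\ce_N$ in $t$ and integrating by parts on $\Sigma$, the bulk error terms must be absorbed into $t^{-1}\ce_N$ with a prefactor strictly smaller than the coercivity constant of $\ce_N$. This is precisely where the off-diagonal estimates of Theorem~\ref{asymptotics of the frame}, both the global ones and the sharper $D_+$ versions, are essential: they force the borderline singular contributions generated by off-diagonal components of $K$ and $\bar\h$, and by the spatial variation of $p_i$, to be genuinely subleading. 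A Grönwall argument then gives $\ce_N(t) \leq t^{2N}\, \ce_N(T)$ for every $N$; since $\ce_N$ is bounded on $(0,T]$ by smoothness, this forces $\ce_N \equiv 0$ and hence $\tilde g_1 = \tilde g_2$ and $\tilde \s_1 = \tilde \s_2$, proving the theorem.

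The main obstacle is exactly the construction of this weighted energy and the absorption of its error terms. In the Gaussian gauge for a metric with degenerate anisotropic weights $t^{2p_i}$, generic energy arguments generate commutator contributions of size $t^{-1}$ whose prefactors involve eigenvalue gaps $p_i - p_k$ and structure coefficients such as $\gamma_{23}^1$; such terms drive logarithmic blow-up in $t$ and ruin the Grönwall step. The two-regime structure of Definition~\ref{initial data}, namely $\gamma_{23}^1 \equiv 0$ on $D_-$ combined with $p_1 > 0$ on $D_+$, is exactly what makes the detailed asymptotics of Theorem~\ref{asymptotics of the frame} sharp enough for the uniqueness estimate to close uniformly on all of $\Sigma$.
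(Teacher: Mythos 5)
Your high-level plan matches the paper: pull both developments back to $(0,T)\times\Sigma$, apply Theorem~\ref{asymptotics of the frame} to each, and then deduce that the pulled-back solutions coincide. The paper also does exactly this, invoking Proposition~\ref{second uniqueness result}. But the way you propose to close the argument has a genuine gap, and in fact the key step is missing.

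The problem is your Gr\"onwall step. The wave-transport energy estimate has the form $\frac{d}{dt}\ce(t)\leq \frac{C}{t}\ce(t)$, where the constant $C$ is fixed by the background coefficients, not by how many times you commute. Integrating from $t_0$ to $t$ (with $t_0<t$) gives $\ce(t)\leq (t/t_0)^{C}\ce(t_0)$, so to conclude $\ce\equiv 0$ you need the \emph{a priori} decay $\ce(t_0)\to 0$ as $t_0\to 0$ to be \emph{faster} than $t_0^{C}$. Your claim ``$\ce_N(t)\leq t^{2N}\ce_N(T)$ for every $N$'' has the inequality pointing the wrong way (Gr\"onwall from $t$ to $T$ actually yields $\ce(t)\geq (t/T)^{C}\ce(T)$), and the exponent $2N$ cannot be made arbitrary by commuting: higher-order energies do not come with a better Gr\"onwall constant, and the boundedness of $\ce_N$ on $(0,T]$ does not force it to vanish.

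What is missing is the paper's intermediate step (the content of Proposition~\ref{second uniqueness result}): before running the energy estimate, one must \emph{upgrade} the a priori decay of the difference $(\delta b_{ik},\dep)$ from the rate $t^{\delta}$ supplied by Theorem~\ref{asymptotics of the frame} to a rate $t^{M}$ for an arbitrarily prescribed $M$. This is done by feeding the differences into the expansion-normalized evolution equations for $\bar\h$, $tK$, $\bar\Psi$, $\bar\Phi$ and $\theta$ and applying Lemma~\ref{linear ode lemma} iteratively; each iteration gains a power of $t^{\varepsilon}$ at the cost of two spatial derivatives, so one needs the detailed asymptotics of Theorem~\ref{asymptotics of the frame} in $C^{A}$ for sufficiently large $A$. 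Only once the decay has been improved beyond the Gr\"onwall exponent can the weighted energy (with the modified top-order quantities that avoid the loss of derivatives in Equation~\eqref{main equation for k}) be applied, as in Proposition~\ref{first uniqueness result}, to conclude $\tilde g_1=\tilde g_2$ and $\tilde\s_1=\tilde\s_2$. Without this iterative decay improvement your argument does not close.
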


\begin{remark}
    Another way to interpret Theorem~\ref{main uniqueness theorem} is the following. For $i = 1,2$, let $(\Sigma_i, \ho_i, \Ko_i, \phio_i, \psio_i)$ be initial data on the singularity and let $((0,T_i)\times\Sigma_i,g_i,\s_i)$, with $g_i = -dt \otimes dt + h_i$, be corresponding locally Gaussian developments. If there is a diffeomorphism $\overline{\Upsilon}:\Sigma_1 \to \Sigma_2$ such that $\overline{\Upsilon}^*\ho_2 = \ho_1$, $\overline{\Upsilon}^*\Ko_2=\Ko_1$, $\phio_2 \circ \overline{\Upsilon} = \phio_1$ and $\psio_2 \circ \overline{\Upsilon} = \psio_1$ (that is, the data are isometric), then we can define $\Upsilon:(0,T)\times\Sigma_1 \to (0,T)\times\Sigma_2$ by $\Upsilon(t,x) := (t,\overline{\Upsilon}(x))$. By Theorem~\ref{main uniqueness theorem}, we see that $\Upsilon^*g_2 = g_1$ and $\s_2 \circ \Upsilon = \s_1$ for $T$ small enough. That is, isometric initial data on the singularity give rise to developments which are locally isometric in a neighborhood of the singularity, and the isometry preserves the foliation.
\end{remark}

The fundamental works \cite{foures-bruhat_theoreme_1952,choquet-bruhat_global_1969} of Choquet-Bruhat and Geroch ensure that given regular initial data (as opposed to initial data on the singularity) for the Einstein--nonlinear scalar field equations, there is a corresponding maximal globally hyperbolic development which is unique up to isometry; see \cite{ringstrom_cauchy_2009} for a detailed discussion. As a consequence of Theorem~\ref{main uniqueness theorem}, there is also a meaningful notion of a unique (up to isometry) \emph{maximal} locally Gaussian globally hyperbolic development.

\begin{definition}
    Let $(\Sigma,\ho,\Ko,\phio,\psio)$ be initial data on the singularity, $V$ an admissible potential, and consider two locally Gaussian developments of the data $(M_i,g_i,\s_i)$, for $i=1,2$. We say that the developments are \emph{isometric} if there is a diffeomorphism $\Upsilon:M_1 \to M_2$ such that $\Upsilon^*g_2 = g_1$ and $\s_2 \circ \Upsilon = \s_1$. 
\end{definition}

Note that for two developments to be isometric, we do not only require $\Upsilon$ to be an isometry in the usual sense, but also require that it preserves the scalar field.

\begin{definition} \label{maximal development}
    Let $(\Sigma,\ho,\Ko,\phio,\psio)$ be initial data on the singularity, $V$ an admissible potential, and let $(M,g,\s)$ be a locally Gaussian globally hyperbolic development of the data. We say that $(M,g,\s)$ is a \emph{maximal locally Gaussian globally hyperbolic development} of the initial data on the singularity if for any other locally Gaussian globally hyperbolic development $(\widetilde M,\widetilde g,\widetilde\s)$ of the same data, there is a map $\Upsilon:\widetilde M \to M$ which is a diffeomorphism onto its image such that $\Upsilon^*g = \widetilde g$, $\s \circ \Upsilon = \widetilde\s$ and $\Upsilon$ preserves time orientation.
\end{definition}

\begin{theorem} \label{uniqueness}
    Let $(\Sigma,\ho,\Ko,\phio,\psio)$ be initial data on the singularity and let $V$ be an admissible potential. Then there is a maximal locally Gaussian globally hyperbolic development of the data which is unique up to isometry.
\end{theorem}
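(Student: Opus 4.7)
The strategy is to combine Theorem~\ref{main existence theorem} with the classical Choquet-Bruhat--Geroch theorem \cite{foures-bruhat_theoreme_1952,choquet-bruhat_global_1969} for the Cauchy problem, using Theorem~\ref{main uniqueness theorem} to transport the locally Gaussian foliation. The key observation is that a locally Gaussian globally hyperbolic development of the singular data is, in particular, an ordinary globally hyperbolic development of the Cauchy data induced on any of its $\Sigma_t$ hypersurfaces. Thus the classical MGHD construction will produce a ``big'' spacetime into which every other locally Gaussian globally hyperbolic development embeds, while the singular-initial-data part of the universal property is handled by applying Theorem~\ref{main uniqueness theorem} near the singularity.

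For existence, I first apply Theorem~\ref{main existence theorem} to obtain a locally Gaussian development $((0,T_0)\times\Sigma,g_0,\s_0)$ with $g_0 = -dt\otimes dt + h$. Because the metric is globally of this Gaussian form, each $\Sigma_t$ is a Cauchy hypersurface, so this already qualifies as a locally Gaussian globally hyperbolic development with $F_0 = \mathrm{id}$. Fixing some $t_0 \in (0,T_0)$, the induced Cauchy data $(h_{t_0}, k_{t_0}, \s|_{\Sigma_{t_0}}, \p_t\s|_{\Sigma_{t_0}})$ satisfy the Einstein--scalar field constraints, so by Choquet-Bruhat--Geroch there is a maximal Cauchy development $(M^\ast,g^\ast,\s^\ast)$ together with an isometric embedding $\iota:(0,T_0)\times\Sigma\hookrightarrow M^\ast$ preserving the scalar field. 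Setting $F^\ast := \iota$ promotes $M^\ast$ to a locally Gaussian globally hyperbolic development. To verify its maximality, let $(\widetilde M,\widetilde g,\widetilde\s,\widetilde F)$ be any other locally Gaussian globally hyperbolic development. Theorem~\ref{main uniqueness theorem} applied to $\widetilde M$ and $M^\ast$ yields a foliation-preserving isometry $\Upsilon_0$ between neighborhoods of the singularity in the two spacetimes that also preserves the scalar field. Choosing $t_1 > 0$ small enough that $\widetilde F(\Sigma_{t_1})$ lies in the overlap, $\Upsilon_0$ identifies the induced Cauchy data on $\widetilde F(\Sigma_{t_1})$ with that on $F^\ast(\Sigma_{t_1})$. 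Applying the classical MGHD theorem to these Cauchy hypersurfaces then produces an isometric embedding $\Upsilon:\widetilde M\to M^\ast$ that restricts to $\Upsilon_0$ on $\widetilde F(\Sigma_{t_1})$. Since two isometric embeddings of a globally hyperbolic spacetime that agree on a Cauchy hypersurface coincide on its Cauchy development, $\Upsilon\equiv\Upsilon_0$ on $\widetilde F((0,t_1]\times\Sigma)$, so $\Upsilon\circ\widetilde F|_{(0,t_1)} = F^\ast|_{(0,t_1)}$, which is the foliation-preservation clause of Definition~\ref{maximal development}.

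Uniqueness is then standard: if $(M^\ast_j,g^\ast_j,\s^\ast_j,F^\ast_j)$ for $j=1,2$ are both maximal, the universal property in each direction gives embeddings $\Upsilon_{12}:M^\ast_1\to M^\ast_2$ and $\Upsilon_{21}:M^\ast_2\to M^\ast_1$ whose compositions are isometric self-embeddings restricting to the identity on a neighborhood of the singularity, hence on a Cauchy hypersurface there. By the rigidity of isometries (the locus where a smooth map and its differential agree with the identity is clopen in a connected Lorentzian manifold), each composition is the identity, so $\Upsilon_{12}$ is an isometry. The main obstacle is the gluing step in the maximality argument: one has to ensure that the near-singularity isometry supplied by Theorem~\ref{main uniqueness theorem} and the Cauchy extension supplied by Choquet-Bruhat--Geroch match on the bridging Cauchy surface $\widetilde F(\Sigma_{t_1})$. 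This is precisely what the foliation-preservation output of Theorem~\ref{main uniqueness theorem}, reflected in the last condition of Definition~\ref{maximal development}, is designed to guarantee; without it, the abstract Cauchy MGHD would not automatically absorb the singular foliation in a compatible way.
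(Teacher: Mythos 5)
Your proof is correct and takes essentially the same approach as the paper: obtain a locally Gaussian globally hyperbolic development from Theorem~\ref{main existence theorem}, pass to the Choquet-Bruhat--Geroch maximal globally hyperbolic development of the induced regular Cauchy data, and verify the foliation-preservation clause of Definition~\ref{maximal development} by combining Theorem~\ref{main uniqueness theorem} near the singularity with the rigidity of time-orientation-preserving isometries that agree on a Cauchy hypersurface. The only difference is cosmetic: you invoke this rigidity fact abstractly, whereas the paper reproves it via the geodesic argument and is careful to note that the differentials must also match on the unit normal, which is where time-orientation preservation is used.
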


\begin{remark} \label{preserves the foliation}
    Let $(M,g,\s)$ be a maximal locally Gaussian globally hyperbolic development with diffeomorphism $F: (0,T) \times \Sigma \to U \subset M$, and let $(\widetilde M,\widetilde g,\widetilde\s)$ be another locally Gaussian globally hyperbolic development of the same initial data with diffeomorphism ${\widetilde F:(0,\widetilde T) \times \Sigma \to \widetilde U \subset \widetilde M}$. The map $\Upsilon$ of Definition~\ref{maximal development} can be assumed to satisfy the following property. There is a sufficiently small $t_0 > 0$ such that $\Upsilon \circ \widetilde F|_{(0,t_0) \times \Sigma} = F|_{(0,t_0) \times \Sigma}$. That is, $\Upsilon$ can be assumed to preserve the foliation near the singularity. For details, see the proof of Theorem~\ref{uniqueness} below.
\end{remark}

\begin{proof}
    By Theorem~\ref{main existence theorem}, there exists a locally Gaussian globally hyperbolic development of the initial data $(M,g,\s)$, with diffeomorphism $F: (0,T) \times \Sigma \to U \subset M$. By extending $(M,g,\s)$ if necessary, we may assume that it is the maximal globally hyperbolic development of the regular initial data induced on the Cauchy hypersurface $\Sigma_t$ for some $t \in (0,T)$. We begin by showing that $(M,g,\s)$ is maximal in the sense of Definition~\ref{maximal development}. Let $(\widetilde M, \widetilde g, \widetilde \s)$ be another locally Gaussian globally hyperbolic development of the data with diffeomorphism $\widetilde F:(0,\widetilde T) \times \Sigma \to \widetilde U \subset \widetilde M$. Note that there is no loss of generality in assuming $\widetilde T = T$. By Theorem~\ref{main uniqueness theorem}, if we take $T$ smaller if necessary, we have $F^*g = \widetilde F^* \widetilde g$ and $\s \circ F = \widetilde\s \circ \widetilde F$. Hence both $(M,g,\s)$ and $(\widetilde M,\widetilde g,\widetilde\s)$ are globally hyperbolic developments of the same regular initial data induced on $\Sigma_t$. By maximality of $(M,g,\s)$, there is a map $\Upsilon:\widetilde M \to M$ which is a diffeomorphism onto its image such that $\Upsilon^* g = \widetilde g$, $\s \circ \Upsilon = \widetilde\s$, it preserves time orientation and $\Upsilon \circ \widetilde F|_{\Sigma_t} = F|_{\Sigma_t}$. This finishes the existence part. 
    
    Next, we prove that the property stated in Remark~\ref{preserves the foliation} holds for $(M,g,\s)$. That is, we show that in fact $\Upsilon \circ \widetilde F = F$. This works in exactly the same way as the proof that two maximal globally hyperbolic developments of the same regular initial data are isometric; cf. the comments made after Definition 16.5 in \cite{ringstrom_cauchy_2009}. Let $p \in J^+(\widetilde F(\Sigma_t),\widetilde U)$ (the causal future of $\widetilde F(\Sigma_t)$ in the spacetime $(\widetilde U,\widetilde g)$) and let $\gamma$ be an inextendible future directed timelike geodesic with $\gamma(0) = p$. Then there is an $s \leq 0$ such that $\gamma(s) \in \widetilde F(\Sigma_t)$. Hence $F \circ \widetilde F^{-1} \circ \gamma(s) = \Upsilon \circ \gamma(s)$. Moreover, $d(F \circ \widetilde F^{-1})$ and $d\Upsilon$ agree on $T \Sigma_t$ and since they are both time orientation preserving isometries, both send the future pointing unit normal of $\widetilde F(\Sigma_t)$ to the future pointing unit normal of $F(\Sigma_t)$. That is, $d(F \circ \widetilde F^{-1})$ and $d\Upsilon$ agree on $T_q \widetilde M$ for all $q \in \widetilde F(\Sigma_t)$. Consequently, $(F \circ \widetilde F^{-1} \circ \gamma)'(s) = (\Upsilon \circ \gamma)'(s)$, implying $F \circ \widetilde F^{-1} \circ \gamma = \Upsilon \circ \gamma$. In particular, $F \circ \widetilde F^{-1} (p) = \Upsilon(p)$. Similarly for $p \in J^-(\widetilde F(\Sigma_t,\widetilde U))$. We conclude that $F \circ \widetilde F^{-1} = \Upsilon$ as desired.
    
    For uniqueness, now assume $(\widetilde M, \widetilde g, \widetilde \s)$ to be maximal in the sense of Definition~\ref{maximal development} we show that it is isometric to $(M,g,\s)$. By maximality of $(\widetilde M, \widetilde g, \widetilde \s)$, there is a map $\widetilde\Upsilon: M \to \widetilde M$ which is a diffeomorphism onto its image such that $\widetilde\Upsilon^*\widetilde g = g$, $\widetilde\s \circ \widetilde\Upsilon = \s$ and it preserves time orientation. But now we can think of $\widetilde\Upsilon \circ F|_{\Sigma_t}$ as another embedding of the initial data induced on $\Sigma_t$ into $(\widetilde M, \widetilde g, \widetilde \s)$. Therefore, by maximality of $(M,g,\s)$, as a development of the data induced on $\Sigma_t$, there is a map $\Upsilon: \widetilde M \to M$ (not necessarily the same as above) which is a diffeomorphism onto its image such that $\Upsilon^*g = \widetilde g$, $\s \circ \Upsilon = \widetilde \s$, it preserves time orientation and $\Upsilon \circ \widetilde\Upsilon \circ F|_{\Sigma_t} = F|_{\Sigma_t}$. In particular, $\Upsilon \circ \widetilde\Upsilon$ is the identity map on $F(\Sigma_t)$. The same argument as in the paragraph above now shows that $\Upsilon$ is the inverse of $\widetilde\Upsilon$.  
\end{proof}

\begin{remark}
    One limitation of our uniqueness result is that it is only applicable to spacetimes with Gaussian foliations inducing data on the singularity. Ideally, one would like to have a similar result which allows for more general foliations, like, for instance, asymptotically CMC (constant mean curvature) foliations with zero shift vector field. Note that, by Theorem~\ref{asymptotics of the frame}, Gaussian foliations fall under this category. However, we do not pursue this line of thought here.

    Other generalizations of our results which would be desirable to have are dropping the requirement that the spatial manifold be closed and allowing for higher dimensions. We do note that, in order to generalize to higher dimensions, one would need a way to deal with the spatial Weyl tensor during the construction of the solution in Section~\ref{construction of the solution}. That is, if one were to use the same methods that we do.
\end{remark}

Now we take some time to discuss the conditions of Definition~\ref{initial data}. We note that the necessity of these conditions is discussed at length in \cite{ringstrom_initial_2022-1}. The necessity of Condition 1 is clear, since by the definition of $\h$ and $\K$, we see that $\tr\K = 1$ and $\K$ is symmetric with respect to $\h$. Condition 2 consists of the limits of the constraint equations. The Hamiltonian constraint along the $\Sigma_t$ hypersurfaces reads
    \[
    \bar S + \theta^2 - \tr K^2 = (\p_t\s)^2 + |d\s|_h^2 + 2V \circ \s,
    \]
    where $\bar S$ denotes the scalar curvature of $h$. We can expansion normalize this equation, by multiplying it by $\theta^{-2}$, to obtain
    \[
    \theta^{-2}\bar S + 1 - \tr\K^2 = \Psi^2 + \theta^{-2} |d\s|_h^2 + 2\theta^{-2} V \circ \s.
    \]
The expectation is that $\theta^{-2}\bar S$ and $\theta^{-2}|d\s|_h^2$ should converge to zero as $t \to 0$. Here we are only interested in potentials which do not yield a contribution to the leading order asymptotics, hence we want to ensure that $\theta^{-2}V \circ \s$ converges to zero as $t \to 0$ as well. Note that $\s$ should be asymptotic to $-\psio \ln\theta + \phio$. Moreover, since $\tr\Ko = 1$, then $\tr \Ko^2 \geq 1/3$, implying $|\psio| = \sqrt{1-\tr \Ko^2} \leq \sqrt{2/3}$. Hence we can ensure that $\theta^{-2}V \circ \s$ converges to zero if we impose a bound $|V(x)| \leq Ce^{a|x|}$ with $a < \sqrt{6}$. This motivates the introduction of Definition~\ref{admissible potential} and the first equation in Condition 2. The limit of the momentum constraint is, however, not as straightforward; we refer the reader to Subsection~\ref{the velocity dominated solution} below, in particular Lemma~\ref{velocity dominated solution} below, for a motivation. 

Condition 3 is required to ensure the existence of the frame of eigenvectors $\{e_i\}$, which is extensively used in our construction. It is also important for the formulation of Condition 4, although there is a frame independent way of formulating this condition; see \cite[Remark~27]{ringstrom_initial_2022-1}. Moreover, Condition 3 allows us to construct the frame of eigenvectors of $\K$ near the singularity. Finally, Condition 4 is the one ensuring that the corresponding solutions are indeed quiescent. The necessity of this condition is shown in \cite[Theorem~49]{ringstrom_initial_2022-1}. This condition is discussed in greater detail in Subsection~\ref{related works} below.

\subsection{Related works} \label{related works}

\paragraph{BKL conjecture and AVTD singularities.} By Hawking's singularity theorem (see \cite[Theorem 55A, p. 431]{oneill_semi-riemannian_1983}), we know that cosmological solutions to Einstein's equations typically have singularities. However, the theorem only asserts the existence of incomplete timelike geodesics. One proposal for the generic behavior of solutions near the singularity was provided in a series of papers by Belinskii, Khalatnikov and Lifschitz (BKL); see \cite{lifshitz_investigations_1963,belinskii_oscillatory_1970,belinskii_effect_1973,belinskii_general_1982}, and \cite{damour_cosmological_2003,heinzle_cosmological_2009} for more recent improvements and generalizations. They propose that spatial derivatives should be negligible near the singularity. Moreover, the behavior is expected to be either quiescent, which in the present context can be taken to mean convergence of the eigenvalues of the expansion normalized Weingarten map, or oscillatory. Since our focus is in the quiescent setting, we shall not go into details about the oscillatory setting. The generic behavior in $3+1$-dimensions is expected to be oscillatory. However, there are some situations in which quiescent behavior is to be expected. One possibility is the vacuum setting with spacetime dimensions $\geq 11$. This was first observed in \cite{demaret_non-oscillatory_1985}. Another possibility is the presence of a scalar field or a stiff fluid, see \cite{belinskii_effect_1973,barrow_quiescent_1978}. From these and related works, see e.g. \cite{damour_kasner-like_2002,ringstrom_geometry_2021,fournodavlos_stable_2023,oude_groeniger_formation_2023}, it is expected that the following condition on the eigenvalues of $\Ko$, in the $n+1$-dimensional setting, is sufficient to ensure stable quiescent behavior,
\begin{equation} \label{eigenvalue condition}
    1 + p_1 - p_{n-1} - p_n > 0
\end{equation}
where $p_1 \leq \cdots \leq p_n$. Note that in the $3+1$-dimensional setting this reduces to $p_1 > 0$; cf. Condition 4 in Definition~\ref{initial data}. It turns out that \eqref{eigenvalue condition} is incompatible with vacuum in the $3+1$-dimensional setting, since in that case the $p_i$ have to satisfy the \emph{Kasner relations} $\sum p_i = \sum p_i^2 = 1$, which forces $p_1 \leq 0$. This is where the scalar field enters the picture, since in the scalar field setting the condition on the sum of the squares changes to $\sum p_i^2 + \psio^2 = 1$, which now allows for all of the $p_i$ to be positive. In the regions of the spatial manifold where \eqref{eigenvalue condition} is violated, where $p_1 \leq 0$ in our setting, we need to impose another condition ensuring quiescence, this is why we introduce the condition on the vanishing of $\gamma_{23}^1$. That this condition should be sufficient is suggested by \cite{ringstrom_geometry_2021}, necessity is discussed in \cite{ringstrom_initial_2022-1}. Note that $\gamma_{23}^1 = 0$ may be reformulated as $\omega^1 \wedge d\omega^1 = 0$, which dates back to \cite{lifshitz_investigations_1963}. As opposed to \eqref{eigenvalue condition}, quiescent behavior arising from this condition is not expected to be stable in general. In fact, it is because of this condition that it was proposed in \cite{belinskii_oscillatory_1970} that quiescent behavior is non-generic in the $3+1$-dimensional vacuum setting.

Another terminology that is usually used in the context of quiescent behavior is \emph{asymptotically velocity term dominated} (AVTD) behavior. Essentially what this means is that the asymptotics of a solution to Einstein's equations near the singularity are dictated by the behavior of a solution to a system of simplified equations, the \emph{velocity term dominated} (VTD) equations, where some of the spatial derivative terms have been dropped from the evolution equations. This terminology originates in \cite{eardley_velocitydominated_1972,isenberg_asymptotic_1990}. The solutions that we construct are AVTD in this sense; see Subsection~\ref{the velocity dominated solution} below. There is an extensive literature regarding construction of AVTD solutions to Einstein's equations. In symmetric settings this has been done in \cite{ames_quasilinear_2013,ames_class_2017,choquet-bruhat_topologically_2004,isenberg_asymptotic_1999,isenberg_asymptotic_2002,isenberg_asymptotic_1990,kichenassamy_analytic_1998,rendall_fuchsian_2000,stahl_fuchsian_2002} for the Einstein vacuum equations. The first construction of AVTD solutions without symmetry assumptions was obtained for the Einstein--scalar field and Einstein--stiff fluid equations in $3+1$-dimensions by Andersson and Rendall in \cite{andersson_quiescent_2001}, in the real analytic setting. This was later generalized to include more matter models and higher dimensions in \cite{damour_kasner-like_2002}. A related result is \cite{klinger_new_2015}, where analytic AVTD solutions to the Einstein vacuum equations without symmetry are constructed. Finally, there is \cite{fournodavlos_asymptotically_2023} in the $3+1$-dimensional vacuum setting, which is the first such result without symmetry or analyticity assumptions; a localized version of this result was later introduced in \cite{athanasiou2024localizedconstructionkasnerlikesingularities}. See also \cite{nutzi_semiglobal_2020} for a related result for the Einstein--scalar field equations. We note that, as a consequence of \cite[Propositions 5 and 22]{ringstrom_initial_2022-1}, the results of \cite{fournodavlos_asymptotically_2023} and \cite{andersson_quiescent_2001}, in the scalar field setting, are special cases of our result.

\paragraph{Stability of big bang singularities.} Recently, there has been a lot of progress regarding stability of big bang singularities. The first results came from a series of papers by Rodnianski and Speck \cite{rodnianski_regime_2018,rodnianski_stable_2018,speck_maximal_2018,rodnianski_nature_2021}. They were later joined by Fournodavlos in \cite{fournodavlos_stable_2023}, where they proved that the singularity of the family of spatially homogeneous and spatially flat solutions to the Einstein--scalar field equations satisfying the condition \eqref{eigenvalue condition} is nonlinearly stable. For other recent results on stability of big bang singularities see \cite{beyer_localized_2023,beyer_past_2023,beyer2025localizedpaststabilitysubcritical,fajman_cosmic_2023,fajman_pastmaximal_2024}. We note that in all of these results, the constructed spacetimes exhibit convergent behavior near the singularity.

\paragraph{From regular initial data to data on the singularity.} Related to \cite{fournodavlos_stable_2023} there is the recent work \cite{oude_groeniger_formation_2023} of Oude Groeniger, Petersen and Ringström, where they identify conditions on initial data for the Einstein--nonlinear scalar field equations which lead to the formation of a quiescent big bang singularity. What is remarkable about \cite{oude_groeniger_formation_2023} is that it does not make reference to any background solution. In \cite{oude_groeniger_formation_2023}, the condition \eqref{eigenvalue condition} also plays an important role. Recall that one of the motivations for introducing the notion of initial data on the singularity is to parameterize quiescent solutions. From that point of view, \cite{oude_groeniger_formation_2023} is complementary to our results, since they obtain part of the data on the singularity that one would like to prescribe. They obtain the data on the singularity induced by the scalar field and the limits of the eigenvalues of $\K$. However, they do not manage to get full convergence of $\K$ and $\h$. This is due to the fact that they do not get much information about the frame that they use to express the components of the tensors. Another issue is that in \cite{oude_groeniger_formation_2023}, the authors use a CMC foliation, which has the advantage of synchronizing the singularity. On the other hand, we use a Gaussian foliation. Hence, in order to construct a suitable map between initial data on the singularity and regular initial data, one would presumably need an analog to our result, but which uses a CMC foliation instead. 

\paragraph{Strong cosmic censorship.} AVTD behavior has also been important in the context of the strong cosmic censorship conjecture. Strong cosmic censorship was first proved for polarized Gowdy spacetimes in \cite{chrusciel_strong_1990}. In the general $\mathbb{T}^3$-Gowdy case, it was proved by Ringström in \cite{ringstrom_existence_2006,ringstrom_strong_2009} (see also \cite{ringstrom_strong_2008}). In both cases, the inextendibility of the spacetimes past the singularity is due to curvature blow up, which in turn is obtained through the AVTD behavior. 

\paragraph{Linear wave equations on cosmological spacetimes.} There are also a number of results regarding the asymptotics of solutions to linear wave equations on big bang backgrounds near the singularity; see, for instance, \cite{alho_wave_2019,ringstrom_linear_2020,franco_grisales_asymptotics_2023,ringstrom_unified_2019,allen_asymptotics_2010,li_scattering_2024}.

\subsection{Strategy for the proof}

\paragraph{Existence of developments.}

For the proof of Theorem~\ref{main existence theorem}, we follow the methods of \cite{fournodavlos_asymptotically_2023}. First, we construct a sequence of approximate solutions to Einstein's equations, such that the corresponding approximate expansion normalized quantities converge to the initial data on the singularity. Second, we prove existence of an actual solution to Einstein's equations, by using energy estimates to control the difference between the solution and an appropriately chosen approximate solution. At the end of the construction, convergence of $\K$, $\Phi$ and $\Psi$ to the initial data is already obtained. The final step is then to obtain convergence of $\h$.

For the construction of the approximate solutions, we start with the construction of a \emph{velocity dominated solution} from the initial data. Assuming the metric to take the form ${g = -dt \otimes dt + h}$, one can deduce a system of evolution and constraint equations for the induced metric, the Weingarten map and the scalar field. The velocity dominated solution is then a solution to the associated VTD equations and it has the property that the expansion normalized quantities are constant in time; cf. the concept of the \emph{scaffold} in \cite[Subsection~1.7]{oude_groeniger_formation_2023}. Using the velocity dominated solution as a starting point, we then inductively construct the sequence as follows. For simplicity, we illustrate the idea in the vacuum setting. In that case the evolution equation for the Weingarten map can be written as
\begin{equation} \label{vacuum equation for k}
    \lie_{\p_t} K + \sric^\sharp + \theta K = 0,
\end{equation}
where $\sric$ denotes the Ricci curvature of $h$ and $\lie_{\p_t}K$ is introduced in Definition~\ref{normal lie derivative} below. Assume we are given a one parameter family of $(1,1)$-tensors $\bar K_{n-1}$ and a Lorenzian metric $g_{n-1} = -dt \otimes dt + h_{n-1}$, such that $\bar K_{n-1}$ approximates the Weingarten map of the $\Sigma_t$ hypersurfaces with respect to $g_{n-1}$. Since the spatial derivative terms in the equations are expected to be negligible, we replace $\sric$ in the equation above by the Ricci curvature of $h_{n-1}$ and solve the resulting equation to define $\bar K_n$. Of course, for this idea to be consistent, we need to have an appropriate bound for the spatial Ricci curvature of the approximate solutions. The relevant bound is $t^2 \sric^\sharp = O(t^\delta)$ for some $\delta>0$. Condition 4 in Definition~\ref{initial data} is what allows us to obtain the required bound for the velocity dominated solution, thus ensuring that we can carry on with the construction. Comparing with \cite{fournodavlos_asymptotically_2023}, our situation is more complicated since we include the scalar field. Additionally, in \cite{fournodavlos_asymptotically_2023} they use a global coordinate frame on $\mathbb{T}^3$, whereas the frame $\{e_i\}$ that we use does not, in general, commute, so one has to deal with the structure coefficients. Nonetheless, our construction of the approximate solutions is more streamlined in the sense that we identify a common structure on the equations satisfied by the expansion normalized quantities, which allows us to apply general existence and uniqueness results throughout. 

The construction of the actual solution to Einstein's equations is based on performing energy estimates by using a wave equation for the Weingarten map $K$. In order to obtain this equation we time differentiate the evolution equation for $K$, Equation~\eqref{vacuum equation for k} in the vacuum case, and then use the first variation formula for the Ricci tensor. This leads to a second order equation for $K$ such that the higher order derivative terms are
\[
\lie_{\p_t}^2 K - \Delta_h K + \sn^2(\tr K)^\sharp,
\]
where $\sn$ is the Levi-Civita connection of $h$. The resulting equation is not a wave equation because of the term $\sn^2(\tr K)^\sharp$. In order to get around this, we consider $\theta = \tr K$ as a separate variable and add an evolution equation for it. The system that we solve is then a wave transport system where the variables are $h$, $K$, $\theta$ and $\s$. There is one issue with this system, which is that it presents a potential loss of derivatives. In order to deal with this difficulty we introduce, besides the main energy, a modified energy. The modified energy is designed in such a way that the loss of derivatives is avoided. Moreover, by using elliptic estimates, it can be shown that both energies are in fact equivalent. We still have not mentioned the role that the approximate solutions play here. In the current situation, the expectation is that energy estimates take the form
\[
\frac{d}{dt}\ce(t) \leq \frac{C}{t} \ce(t).
\]
The issue is that we want uniform control of the energy for all $t$ in an interval of the form $(0,T]$, but this cannot be done with Grönwall's inequality, since $Ct^{-1}$ is not integrable all the way to $t=0$. This is where the approximate solutions come in. Instead of controlling the energy of the solution to the system, we control the energy of the difference between the solution and an appropriately chosen approximate solution. The advantage of taking this approach is that this difference can be made to decay as an arbitrarily large power of $t$ as $t \to 0$. Moreover, an inhomogeneous term is introduced to the energy estimate, which can also be made to decay as an arbitrarily large power of $t$ as $t \to 0$. This is what allows us to close the energy estimate. Existence of solutions to the system on an interval of the form $(0,T]$ then follows as a consequence of the energy estimate. The next step is then to prove that in fact $K$ is the Weingarten map, $\theta = \tr K$ and that the constructed solution is indeed a solution to Einstein's equations. All of these things can be accomplished by using that the difference between the solution and the approximate solution decays as a large power of $t$. The constructed solutions are in principle of finite regularity. In order to prove that there is a smooth solution, we use a uniqueness statement for the wave transport system. So far, this part of the proof follows closely the arguments of \cite{fournodavlos_asymptotically_2023}, with the added difficulties of dealing with the scalar field. 

After constructing the solution, we already obtain convergence of $\K$, $\Phi$ and $\Psi$ to the initial data. The only thing left to finish the proof of Theorem~\ref{main existence theorem}, is to prove convergence of $\h$. In order to do so, we obtain convergence estimates for the eigenvalues and the eigenprojections of $\K$. This is done through an application of the implicit function theorem and some perturbation theory. It is then possible to construct a frame $\{E_i\}$ of eigenvectors of $\K$, which is orthonormal with respect to $\h$ and converges to the frame $\{e_i\}$ as $t \to 0$, thus showing convergence of $\h$. We stress that being able to construct the frame $\{E_i\}$ depends crucially on the improved off-diagonal estimates for $\K$ in terms of the frame $\{e_i\}$, which are inherited by the $E_i$.

\paragraph{Detailed asymptotics and uniqueness.}

The proof of Theorem~\ref{asymptotics of the frame} consists of two steps. First, starting from a locally Gaussian development of the initial data, we need to show that the mean curvature satisfies an estimate of the form $t\theta - 1 = O(t^\delta)$. This is accomplished by using Einstein's equations, with arguments resembling those of \cite{ringstrom_initial_2022-1}. As mentioned before, the differences are due to our definitions of developments of initial data not coinciding; see Remark~\ref{fournodavlos luk ansatz remark}. Moreover, the presence of the potential introduces further difficulties in our setting. The remaining step is to construct the frame $\{E_i\}$ of eigenvectors of $\K$. Showing existence of the frame works similarly as in the proof of Theorem~\ref{main existence theorem}, but it is not immediately clear whether the off-diagonal improvements also hold in this case. This is due to the fact that, a priori, we do not have any additional information about the off-diagonal components of $\K$ in terms of $\{e_i\}$. In order to get the improvements, we use Einstein's equations to deduce evolution equations for the frame $\{E_i\}$. It is then possible to use the evolution equations to iteratively improve on the estimates, until we obtain the desired ones. One thing to point out is that the system of equations for the $E_i$ presents a potential loss of derivatives, but this is not an issue for us since from the beginning we already have estimates for all derivatives of the $E_i$. 

Finally, Theorem~\ref{main uniqueness theorem} follows from Theorem~\ref{asymptotics of the frame}. The idea is that the detailed asymptotics allow us to show that the uniqueness statement for the wave transport system, used in the proof of Theorem~\ref{main existence theorem}, is applicable to any locally Gaussian development. One thing to point out is that we are not able to obtain a uniqueness statement analog to Theorem~\ref{main uniqueness theorem} in finite regularity. This is due to the following fact. Given one set of initial data on the singularity, we need to have the detailed asymptotics up to some finite degree of regularity to prove uniqueness. However, how many derivatives are needed depends on the initial data under consideration. Hence, the only clean uniqueness result that we can prove is in the smooth setting.

\subsection{Outline}

The paper is organized as follows. In Section~\ref{sequence of approximate solutions} we construct the sequence of approximate solutions to Einstein's equations. The main result of this section is Theorem~\ref{approximate solutions}. In Subsections~\ref{the velocity dominated solution}--\ref{general results for odes} we introduce the necessary setup for the iteration. The sequence is defined in Subsections~\ref{the approximate weingarten map}--\ref{the scalar field}. Then Subsections~\ref{comparing with the weingarten map} and \ref{the sequence satisfies the eqs asymptotically} are dedicated to obtaining the necessary estimates for the error terms of the sequence.

Proceeding with the proof of Theorem~\ref{main existence theorem}, in Section~\ref{construction of the solution} we construct the required solution to Einstein's equations. Existence of a solution to the wave transport system is proven in Subsections~\ref{preliminary estimates}--\ref{finishing the construction}. In Subsection~\ref{the constructed solution solves einsteins eqs} we prove that the solution we construct indeed solves Einstein's equations. Smoothness of the solution is then proved in Subsection~\ref{the constructed solution is smooth}. Finally, Subsection~\ref{asymptotics for the expansion normalized metric} is dedicated to obtaining convergence of $\h$, thus finishing the proof of Theorem~\ref{main existence theorem}.

In Section~\ref{detailed asymptotics and uniqueness} we prove Theorems~\ref{asymptotics of the frame} and \ref{main uniqueness theorem}. Subsections~\ref{asymptotics for the mean curvature section} and \ref{asymptotics for the frame of eigenvectors} are dedicated to obtaining the required estimates for the mean curvature and constructing the frame of eigenvectors of $\K$. In Subsection~\ref{proofs of detailed asymptotics and uniqueness} we conclude the proofs of Theorems~\ref{asymptotics of the frame} and \ref{main uniqueness theorem}.

Finally, in Appendix~\ref{appendix}, we introduce our conventions regarding notation for constants, norms of tensors, The notation $\lie_{\p_t}T$ for $T$ a one parameter family of tensors on $\Sigma$, and our use of the notation $\sharp$ for raising indices.

\subsection*{Acknowledgements}

The author would like to thank Hans Ringström for suggesting the topic, the helpful discussions and the comments on the manuscript; and Oliver Petersen for the helpful discussions. This research was funded by the Swedish Research Council, dnr. 2017-03863 and 2022-03053; and supported by foundations managed by The Royal Swedish Academy of Sciences.

\section{Sequence of approximate solutions} \label{sequence of approximate solutions}

Throughout, we use the notation $\langle \xi \rangle := \sqrt{1 + \xi^2}$ for $\xi \in \R$. In this section, we set out to prove the following result.

\begin{theorem} \label{approximate solutions}
    Let $(\Sigma, \ho, \Ko, \phio, \psio)$ be initial data on the singularity and $V$ an admissible potential. For every non-negative integer $n$, there is a $t_n > 0$, depending only on the initial data and the potential, and for $t \in (0,t_n]$ a smooth one parameter family of Riemannian metrics $h_n(t)$ on $\Sigma$ and a smooth function $\s_n$ on $(0,t_n] \times \Sigma$ such that the following holds.
    
    \paragraph{Convergence to initial data:} Define $g_n := -dt \otimes dt + h_n$. Then $g_n$ is a Lorentzian metric on $(0,t_n] \times \Sigma$. Let $K_n$ denote the Weingarten map of the $\Sigma_t$ hypersurfaces with respect to the metric $g_n$. Also define 
    \[
    \bar\h_n := h_n(t^{-\Ko}(\,\cdot\,),t^{-\Ko}(\,\cdot\,)), \quad \bar\Psi_n := t\p_t \s_n, \quad \bar\Phi_n := \s_n - \bar\Psi_n \ln t. 
    \]
    Then there are constants $C_{m,r,n}$ and $C_{m,n}$ such that
    \begin{align*}
        |D^m (t\lie_{\p_t})^r( \bar\h_n - \ho )|_{\ho} &\leq C_{m,r,n} \langle \ln t \rangle^{m+2} t^{2\varepsilon},\\
        |D^m (t\lie_{\p_t})^r( tK_n - \Ko )|_{\ho} &\leq C_{m,r,n} \langle \ln t \rangle^{m+2} t^{2\varepsilon},\\
        |D^m (t\p_t)^r( \bar\Psi_n - \psio )|_{\ho} &\leq C_{m,r,n} \langle \ln t \rangle^{m+2} t^{2\varepsilon},\\
        |D^m ( \bar\Phi_n - \phio )|_{\ho} &\leq C_{m,n} \langle \ln t \rangle^{m+3} t^{2\varepsilon}.
    \end{align*}
    Moreover, for $i \neq k$, $x \in D_+$ and $y \in D_-$,
    \begin{align*}
        \big|D^m (t\p_t)^r\big( \bar\h_n(e_i,e_k) \big)\big|_{\ho}(x) &\leq C_{m,r,n} \langle \ln t \rangle^{m+2} t^{2\varepsilon + (p_i + p_k - 2p_1)(x)},\\
        \big|D^m (t\p_t)^r\big( \bar\h_n(e_i,e_k) \big)\big|_{\ho}(y) &\leq C_{m,r,n} \langle \ln t \rangle^{m+2} t^{2\varepsilon + |p_i - p_k|(y)},\\
        \big|D^m (t\p_t)^r\big( tK_n(e_i,\omega^k) \big)\big|_{\ho}(x) &\leq C_{m,r,n} \langle \ln t \rangle^{m+2} t^{2\varepsilon + 2(p_i - p_1)(x)},\\
        \big|D^m (t\p_t)^r\big( tK_n(e_i,\omega^k) \big)\big|_{\ho}(y) &\leq C_{m,r,n} \langle \ln t \rangle^{m+2} t^{2\varepsilon} \min\{ 1, t^{2(p_i - p_k)(y)} \}.
    \end{align*}

    \paragraph{Einstein's equations approximately satisfied:} Define
    \[
    E_n := \ric_n - d\s_n \otimes d\s_n - (V \circ \s_n)g_n,
    \]
    where $\ric_n$ denotes the Ricci tensor of $g_n$. Also, for $X, Y \in \mathfrak{X}(\Sigma)$, define the one parameter family of $(1,1)$-tensors $\ce_n$ and the one parameter family of one forms $\cm_n$ on $\Sigma$ by
    \[
    h_n(\ce_n(X),Y) := E_n(X,Y), \qquad \cm_n(X) := E_n(\p_t,X).
    \]
    Then there are non-negative integers $N_n$, depending only on $n$, such that
    \begin{align*}
        t^2|D^m( t\lie_{\partial_t})^r\ce_n |_{\ho} &\leq C_{m,r,n} \langle \ln t \rangle^{m+N_n} t^{2(n+1)\varepsilon},\\
        t^2\big|D^m (t\p_t)^r\big(E_n(\partial_t,\partial_t)\big)\big|_{\ho} + t|D^m (t\lie_{\p_t})^r \cm_n|_{\ho} &\leq C_{m,r,n} \langle \ln t \rangle^{m+N_n} t^{2(n+1)\varepsilon},\\
        t^2|D^m (t\p_t)^r( \Box_{g_n} \s_n - V' \circ \s_n )|_{\ho} &\leq C_{m,r,n} \langle \ln t \rangle^{m+N_n} t^{2(n+1)\varepsilon}.
    \end{align*}
\end{theorem}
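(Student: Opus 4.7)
The plan is to construct $(h_n, \s_n)$ by iteration, starting from an exact solution of the velocity term dominated (VTD) equations and correcting at each step for the spatial derivative terms previously neglected. For the base case I would set $h_0 := \sum_i t^{2p_i}\omega^i \otimes \omega^i$ and $\s_0 := \psio \ln t + \phio$. Since $\Ko$ is $\ho$-symmetric with distinct eigenvalues (Conditions 1 and 3 of Definition~\ref{initial data}), a direct computation yields Weingarten map $K_0 = t^{-1}\Ko$, and hence $\bar\h_0 \equiv \ho$, $tK_0 \equiv \Ko$, $\bar\Psi_0 \equiv \psio$, $\bar\Phi_0 \equiv \phio$ on the nose. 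The convergence estimates thus hold trivially at $n = 0$. Moreover, the Hamiltonian and momentum constraint limits provided by Condition 2 of Definition~\ref{initial data} give the leading-order cancellations in $E_n$ at each step.

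For the inductive step, I would write the Einstein--nonlinear-scalar-field system in the Gaussian gauge as a coupled evolution system for $(K, \theta, \s)$, in which the spatial contributions appear as $\sric^\sharp$, $\Delta_h \s$, $d\s\otimes d\s$ and $V\circ\s$. Given $(h_{n-1}, \bar K_{n-1}, \s_{n-1})$, I would define $\bar K_n$ by solving the evolution equation for $K$ with $\sric^\sharp$ replaced by $\sric(h_{n-1})^\sharp$ and matter source terms evaluated at $\s_{n-1}$, then define $h_n$ by integrating $\p_t h_n = 2 h_n \circ \bar K_n$, and finally define $\s_n$ by solving a modified wave equation whose spatial Laplacian is computed from $\s_{n-1}$ and $h_{n-1}$. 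Once decomposed along the eigenspaces of $\Ko$ in the frame $\{e_i\}$, these equations have a uniform structure so that existence and smoothness on an interval $(0, t_n]$ reduce to the general ODE lemmas of Subsection~\ref{general results for odes}, with the off-diagonal $D_+$/$D_-$ decomposition of the solution inherited eigenspace by eigenspace from the source.

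The main obstacle, and the heart of the argument, is closing the induction on the weighted norms of the spatial source terms: I must show that $t^2\,\sric(h_{n-1})^\sharp$ and the matter analogues decay as a positive power of $t$ in the $\ho$-norm, with the claimed off-diagonal improvements. Writing $\sric$ in the frame $\{e_i\}$ produces terms quadratic in the structure coefficients $\gamma_{ik}^\ell$ weighted by various ratios of $t^{2p_i}$, with the dangerous contribution coming from $(\gamma_{23}^1)^2$, which is precisely the reason Condition 4 of Definition~\ref{initial data} is required. On the compact set $D_+$, where $p_1$ is bounded below by $\mathring\varepsilon/2 > 0$, the $t^2$ prefactor converts this contribution into a positive power of $t$; on $D_-$, Condition 4 makes $\gamma_{23}^1$ vanish identically, while the off-diagonal improvements $\bar\h_{n-1}(e_i, e_k) = O(t^{|p_i - p_k|})$ and $tK_{n-1}(e_i,\omega^k) = O(\min\{1, t^{2(p_i - p_k)}\})$ control the remaining contributions. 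The matter source terms $V\circ\s_{n-1}$ and $|d\s_{n-1}|^2_{h_{n-1}}$ are handled by the admissibility condition of Definition~\ref{admissible potential}, which is designed precisely to ensure the positive exponent $\varepsilon_V$.

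Once these weighted estimates are available, the residual $E_n$ is a difference of $\sric$ and matter terms between $h_n$ and $h_{n-1}$ (and analogously for the scalar-field equation), so the gain of a factor of $t^{2\varepsilon}$ at each step follows by applying the same weighted bounds to the differences $(h_n - h_{n-1}, \s_n - \s_{n-1})$, which are themselves controlled by the source terms from the previous iteration. The uniform $t^{2\varepsilon}$ convergence to the initial data on the singularity persists across all $n$ because it is already attained at $n = 0$ and all subsequent corrections are of strictly smaller order; the $\langle \ln t\rangle^{m+2}$ logarithmic factors accumulate from taking spatial derivatives of the $t^{2p_i}$ weights, since each derivative of $t^{2p_i}$ brings in a factor of $\ln t \cdot e_j p_i$.
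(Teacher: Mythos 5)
Your proposal matches the paper's overall architecture: iterate starting from the velocity dominated solution, substituting the previous iterate's spatial Ricci and matter sources into the evolution equations, and closing by appealing to Conditions 3 and 4 of Definition~\ref{initial data} and the admissibility of $V$ for the weighted $t^2 \sric^\sharp$ bounds. The base case and the role of the $D_+$/$D_-$ dichotomy are described correctly, and your intuition for where the $t^{2\varepsilon}$ improvement per step comes from (differences of successive iterates being smaller) is the same as the paper's.

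There are, however, two concrete gaps. First, the iteration naturally produces a one-parameter family of $(1,1)$-tensors $\bar K_n$ by solving the evolution equation with lagged sources, and then $h_n$ by integrating $\lie_{\p_t}h_n = h_n(\bar K_n\cdot,\cdot)+h_n(\cdot,\bar K_n\cdot)$; but the theorem asserts estimates for the \emph{actual} Weingarten map $K_n$ of the $\Sigma_t$-hypersurfaces in the metric $g_n = -dt\otimes dt + h_n$, which need not coincide with $\bar K_n$. The difference $K_n - \bar K_n$ equals one half the $h_n$-raised antisymmetric part of $\bar K_n$, and one must separately derive an evolution equation for that antisymmetric part and show it decays at rate $t^{2(n+1)\varepsilon}$ (this is Lemma~\ref{error in the approximate weingarten map} in the paper). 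Your proposal does not address this step, and without it the convergence estimates for $K_n$ in the theorem statement are not established.

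Second, your claim that ``the residual $E_n$ is a difference of $\sric$ and matter terms between $h_n$ and $h_{n-1}$'' is accurate for the spatial components $\ce_n$, because the evolution equation for $\bar K_n$ is \emph{designed} so that the excess $\bar\ce_n$ is such a difference. But the components $E_n(\p_t,\p_t)$ and $\cm_n$ are constraint-type quantities; they do not appear as plugged-in sources anywhere in the iteration, so the naive difference-bound only yields $t^{2\varepsilon}$ decay for them, not $t^{2(n+1)\varepsilon}$. To obtain the improvement the paper derives evolution equations for $E_n(\p_t,\p_t)$ and $\cm_n$ from the twice-contracted Bianchi identity applied to $E_n$ (Lemma~\ref{evolution equations for the constraint equations}), and then bootstraps: the sources in these evolution equations involve $\ce_n$ and $\Box_{g_n}\s_n - V'\circ\s_n$, which \emph{do} decay at the improved rate, so integrating the evolution system iteratively improves the decay of the constraint components to $t^{2(n+1)\varepsilon}$ (Lemma~\ref{the Einstein scalar field tensor decays very fast}). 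Your proposal has no substitute for this propagation argument, and without it the final block of estimates in the theorem is unproven.
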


The conclusion of the proof of Theorem~\ref{approximate solutions} is to be found at the end of this section. We will begin by defining the velocity dominated solution associated with the initial data in Subsection~\ref{the velocity dominated solution}. Then, in Subsection~\ref{estimates for ricci etc}, we obtain some general estimates for the spatial Ricci curvature, the Levi-Civita connection of the spatial metric and some quantities related with the scalar field. Next, in Subsection~\ref{general results for odes}, we establish two general results for ODEs which will be the main tools used for the construction. After that, the remaining subsections will be devoted to constructing the sequence, Subsections~\ref{the approximate weingarten map}--\ref{the scalar field}, and estimating the error terms, Subsections~\ref{comparing with the weingarten map} and \ref{the sequence satisfies the eqs asymptotically}.

\subsection{The velocity dominated solution} \label{the velocity dominated solution}

Before starting, we need to express the Ricci tensor in terms of the foliation that we will use. Consider a metric $g = -dt \otimes dt + h$. The Levi-Civita connection, Riemann tensor, Ricci tensor and scalar curvature of the family of induced metrics $h$ will be denoted by $\sn$, $\bar R$, $\sric$ and $\bar S$ respectively, as opposed to the corresponding objects associated with the spacetime metric $g$, which are denoted without the bars.

\begin{proposition} \label{ricci}
    Let $\Sigma$ be a manifold and consider a metric $g = -dt \otimes dt + h$ on $(0,T] \times \Sigma$. Then
    \begin{align*}
        \ric(\partial_t,\partial_t) &= -\partial_t \theta - |k|_h^2,\\
        \ric(\partial_t,X) &= \diver_h k (X) - d\theta(X),\\
        \Ric(X) &= \mathcal{L}_{\partial_t} K(X) + \overline{\ric}^{\sharp}(X) + \theta K(X),
    \end{align*}
    where $\Ric$ is the one parameter family of (1,1)-tensors on $\Sigma$ defined by
    \[
    h(\Ric(X),Y) := \ric(X,Y)
    \]
    and $X, Y \in \mathfrak{X}(\Sigma)$.
\end{proposition}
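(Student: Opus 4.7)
The plan is a direct computation that exploits the Gaussian structure $g = -dt \otimes dt + h$, in which the Gauss–Codazzi–Ricci machinery for the embedding of a hypersurface simplifies substantially. First I would record the basic identities: because the lapse is unity and the shift vanishes, the integral curves of $U = \partial_t$ are unit-speed timelike geodesics, so $\nabla_U U = 0$; for any vector field $X$ on $\Sigma$ extended to be $t$-independent, the commutator $[\partial_t, X]$ vanishes, $\nabla_X U = K(X)$ is tangent to $\Sigma_t$, and time-differentiating $h(X,Y) = g(X,Y)$ on such $t$-independent pairs yields $\partial_t h = 2k$.

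For the first formula I would compute $R(X, U)U$; the vanishing commutator and $\nabla_U U = 0$ collapse the Riemann tensor to $R(X, U)U = -\nabla_U K(X)$, so tracing over an orthonormal spatial frame $\{e_i\}$ gives $\ric(U,U) = -\sum_i g(\nabla_U K(e_i), e_i)$. Since the trace of the $(1,1)$-tensor $K$ is metric-independent, one has $\partial_t \theta = \tr(\lie_{\p_t} K)$; expanding $\nabla_U K(e_i)$ in terms of the Christoffel symbols of the Gaussian metric together with $\partial_t h = 2k$ produces an extra $|k|_h^2$ contribution, which is precisely what is needed for the Raychaudhuri identity $\ric(U,U) = -\partial_t \theta - |k|_h^2$.

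The second formula is the standard Codazzi–Mainardi contraction: for spatial $X, Y, Z$ one has $g(R(X,Y)U, Z) = (\sn_X k)(Y,Z) - (\sn_Y k)(X,Z)$, and combining this with the decomposition $\ric(U,X) = \sum_i g(R(e_i, X)U, e_i)$ (the temporal piece vanishes by antisymmetry of $R$), the symmetry of $k$, and the definition $\theta = \tr_h k$ yields $\ric(U,X) = \diver_h k(X) - d\theta(X)$ after one line of algebra. For the third formula I would decompose $\ric(X,Y) = -g(R(U,X)U, Y) + \sum_i g(R(e_i, X)Y, e_i)$ for spatial $X, Y$, use the first computation to identify the temporal piece with $h(\lie_{\p_t} K(X), Y)$ plus a $k$-quadratic correction, and apply the Gauss equation to the spatial piece, which supplies $\sric(X,Y) + \theta k(X,Y)$ together with an equal-and-opposite quadratic correction. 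The two $k^2$ terms cancel, leaving $\ric(X,Y) = h(\lie_{\p_t} K(X), Y) + \sric(X,Y) + \theta k(X,Y)$, and raising an index with $h$ produces the claimed formula for $\Ric$.

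The delicate step throughout is the identification of the normal Lie derivative $\lie_{\p_t} K$ from the appendix with the appropriate combination of spacetime covariant time-derivatives of $K$ viewed as a one-parameter family of spatial tensors. The discrepancy between these two objects is precisely the source of the $|k|_h^2$ term in the Raychaudhuri identity and of the cancellation of the $k^2$ cross terms in the spatial-Ricci formula; once this bookkeeping is handled carefully, the three formulas follow immediately from Gauss, Codazzi, and the doubly normal projection of the curvature.
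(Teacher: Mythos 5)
Your computation is correct. The paper does not derive these formulas itself---it cites Chapter 6, (3.20)--(3.22) of Choquet-Bruhat's textbook specialized to unit lapse and zero shift---and your argument is a self-contained reconstruction of exactly that derivation, namely the Gauss, Codazzi, and doubly normal projections of the ambient curvature in the Gaussian-foliation setting, with the $|k|_h^2$ term and the cancellation of the $k$-quadratic pieces correctly traced back to the discrepancy between $\nabla_{\p_t}K$ and $\lie_{\p_t}K$; so it is the same route, written out rather than cited.
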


\begin{proof}
    This is a special case of \cite[Chapter 6, (3.20)--(3.22)]{choquet-bruhat_general_2009}, by setting the lapse function to $1$ and the shift vector field to zero.
\end{proof}

\begin{remark}
    Given a metric $g = -dt \otimes dt + h$ and a function $\s$ on $(0,T] \times \Sigma$, the notation $d\s$ could either mean the spacetime differential of $\s$ or the ``spatial" differential on a $\Sigma_t$ hypersurface. It will be clear from the context which one of the two is meant. $\n \s$ and $\sn \s$ will denote the gradient of $\s$ with respect to $g$ and $h$ respectively. There is the possibility of confusion, since when extending the connection $\n$ to tensors one defines $\n \s := d\s$. However, which one is meant should be clear from the context.
\end{remark}

Using Proposition~\ref{ricci}, we can formulate the Einstein--nonlinear scalar field equations in terms of the foliation $g = -dt \otimes dt + h$ as follows. We have the \emph{constraint equations},
\begin{subequations}
\begin{align}
    \bar{S} + \theta^2 - \tr K^2 &= (\p_t \s)^2 + |d\s|_h^2 + 2V \circ \s,\label{hamiltonian constraint}\\
    \diver_h K - d\theta &= (\p_t \s)d\s.\label{momentum constraint}
\end{align}
\end{subequations}  
Equation \eqref{hamiltonian constraint} is called the \emph{Hamiltonian constraint equation} and \eqref{momentum constraint} is called the \emph{momentum constraint equation}. And we have the \emph{evolution equations for $h$, $K$ and $\s$},
\begin{subequations} \label{adm equations}
\begin{align}
    \lie_{\p_t} h &= 2k,\label{evolution equation for h}\\
    \lie_{\p_t} K + \sric^{\sharp} + \theta K &= d\s \otimes  \sn\s + (V \circ \s)I,\label{evolution equation for k}\\
    -\p_t^2 \s + \Delta_h \s - \theta \p_t \s &= V' \circ \s,\label{evolution equation for phi}
\end{align}
\end{subequations}
where $I$ denotes the identity $(1,1)$-tensor field on $\Sigma$.

We are now ready to introduce the velocity dominated solution. These spacetimes can be thought of as the $0$-th order approximation of the corresponding solutions to Einstein's equations. 

\begin{definition}
    Given initial data on the singularity $(\Sigma, \ho, \Ko, \phio, \psio)$, the associated \emph{velocity dominated solution} is the triple $((0,\infty) \times \Sigma,g_0,\s_0)$, where the metric $g_0 = -dt \otimes dt + h_0$ and the function $\s_0$ are defined by
    \[
    h_0(X,Y) := \ho(t^{\Ko}(X),t^{\Ko}(Y)), \qquad \s_0 := \psio \ln t + \phio,
    \]
    for $X,Y \in \mfx(\Sigma)$.
\end{definition}

Let $(\Sigma, \ho, \Ko, \phio, \psio)$ be initial data on the singularity and consider the associated velocity dominated solution $((0,\infty) \times \Sigma,g_0,\s_0)$. In terms of the frame $\{e_i\}$, we have $t^{\Ko}(e_i) = t^{p_i}e_i$. Hence we can write the metric $g_0$ as
\[
g_0 = -dt \otimes dt + \sum_i t^{2p_i} \omega^i \otimes \omega^i;
\]
cf. \cite{lifshitz_investigations_1963}. If $k_0$ is the second fundamental form of the $\Sigma_t$ hypersurfaces, then
\[
k_0(e_i,e_k) = \frac{1}{2}\partial_t \big( g_0(e_i,e_k)\big) = \frac{p_i}{t} t^{2p_i} \delta_{ik}.
\]
We see that the Weingarten map $K_0$ takes the form
\[
K_0= \sum_i \frac{p_i}{t}\omega^i \otimes e_i,
\]
and the mean curvature $\theta_0$ is equal to $t^{-1}$. Thus if $\K_0$ is the expansion normalized Weingarten map, we immediately see that $\K_0 = \Ko$. Moreover, by definition of $h_0$, the expansion normalized induced metric $\h_0$ is given by $\h_0 = \ho$. Turning our attention to the scalar field, it is clear that $\Psi_0 = \psio$ and $\Phi_0 = \phio$. So we see that all the expansion normalized quantities associated with the velocity dominated solution, are constant in time and equal to the corresponding initial data on the singularity.

Note that $K_0$ and $\s_0$ satisfy the following equations,
\[
\theta_0^2 - \tr K_0^2 = (\p_t\s_0)^2, \qquad \lie_{\p_t} K_0 + \theta_0 K_0 = 0, \qquad \p_t^2 \s_0 + \theta_0 \p_t \s_0 = 0.
\]
These correspond to the VTD equations associated with \eqref{hamiltonian constraint}, \eqref{evolution equation for k} and \eqref{evolution equation for phi}; if, in addition to the spatial derivative terms, we also drop the potential terms. We do so since we are interested only in the situation where the potential yields a negligible contribution to the asymptotics. In Lemma~\ref{velocity dominated solution} below we verify that, as a consequence of Condition 2 in Definition~\ref{initial data}, the momentum constraint \eqref{momentum constraint} is satisfied by the velocity dominated solution.

Of course, in general, the velocity dominated solution is not going to be a solution to the Einstein--nonlinear scalar field equations with potential $V$. So in order to justify why it is a reasonable model spacetime for the situation that we are interested in, we need to verify that it is an approximate solution to Einstein's equations as $t \to 0$, in an appropriately normalized sense. 

\begin{lemma} \label{velocity dominated solution}
    Let $(\Sigma, \ho, \Ko, \phio, \psio)$ be initial data on the singularity, let $V$ be an admissible potential and let $((0,\infty) \times \Sigma,g_0,\s_0)$ be the associated velocity dominated solution. Let $\ric_0$ be the Ricci tensor of $g_0$ and define
    \[
    E_0 := \ric_0 - d\s_0 \otimes d\s_0 - (V \circ \s_0)g_0.
    \]
    Also define the one parameter family of $(1,1)$-tensors $\ce_0$ and the one parameter family of one forms $\cm_0$ on $\Sigma$ by  
    \[
    h(\ce_0(X),Y) := E_0(X,Y), \qquad \cm_0(X) := E_0(\p_t,X),
    \]
    for $X,Y \in \mathfrak{X}(\Sigma)$. Then $\cm_0 = 0$, and there is a $T > 0$ and constants $C_{m,r}$ such that
    \[
    \begin{split}
        t^2\big|D^m (t\p_t)^r\big(E_0(\p_t,\p_t)\big)\big|_\ho + t^2|D^m (t\lie_{\p_t})^r \ce_0|_\ho &\leq C_{m,r} \langle \ln t \rangle^{m+2}t^{2\varepsilon},\\
        t^2|D^m (t\p_t)^r(\Box_{g_0} \s_0 - V' \circ \s_0)|_\ho &\leq C_{m,r} \langle \ln t \rangle^{m+2}t^{2\varepsilon},
    \end{split}
    \]
    for $t \in (0,T]$.
\end{lemma}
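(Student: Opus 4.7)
The plan is to directly compute each of the four quantities using Proposition~\ref{ricci} and then bound the residuals using admissibility of $V$ together with the eigenvalue conditions of Definition~\ref{stable and unstable regions} and Condition~4 of Definition~\ref{initial data}. The velocity dominated solution is constructed precisely so that the leading VTD contributions cancel, leaving curvature and matter-potential residuals whose decay follows from the defining conditions.

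First I would observe that for the normal-normal component, Proposition~\ref{ricci} with $\theta_0 = 1/t$ and $K_0 = t^{-1}\Ko$ gives $\ric_0(\p_t,\p_t) = 1/t^2 - \tr\Ko^2/t^2 = \psio^2/t^2$ by the first identity of Condition~2; since $(\p_t\s_0)^2 = \psio^2/t^2$ and $g_0(\p_t,\p_t)=-1$, this yields $E_0(\p_t,\p_t) = V\circ\s_0$. For the momentum, $d\theta_0=0$ as $\theta_0$ is spatially constant, so $\cm_0(X) = \diver_{h_0} k_0(X) - (\psio/t)\,d\phio(X)$; a direct frame computation relating the Christoffel symbols of $h_0$ in $\{e_i\}$ to those of $\ho$ reduces this to $t\,\diver_{h_0}k_0 = \diver_\ho\Ko$, which equals $\psio\,d\phio$ by the second identity of Condition~2, giving $\cm_0 \equiv 0$. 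For the spatial part, Proposition~\ref{ricci} gives $\Ric_0 = \lie_{\p_t}K_0 + \sric_0^\sharp + \theta_0 K_0$; since $\Ko$ is time-independent, the evolution terms cancel, $\lie_{\p_t}K_0 + \theta_0 K_0 = -t^{-2}\Ko + t^{-2}\Ko = 0$, leaving $\ce_0 = \sric_0^\sharp - d\phio \otimes \sn\phio - (V\circ\s_0)I$ (with $\sn\phio$ denoting the $h_0$-gradient). Analogously, $\p_t^2\s_0 + \theta_0\p_t\s_0 = 0$ gives $\Box_{g_0}\s_0 - V'\circ\s_0 = \Delta_{h_0}\phio - V'\circ\s_0$.

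The next step is to estimate the residuals. For the potential: $|\s_0| \leq |\psio||\ln t| + C$ and $|\psio| \leq \sqrt{2/3}$ from Condition~2, so admissibility of $V$ gives $|V^{(j)}\circ\s_0| \leq C t^{-a|\psio|} \leq C t^{-2(1-\varepsilon_V)}$, hence $t^2|V^{(j)}\circ\s_0| \leq C t^{2\varepsilon_V} \leq C t^{2\varepsilon}$. For the gradient and Laplacian: $h_0^\sharp$ is diagonal in $\{e_i\}$ with entries $t^{-2p_i}$, so both $t^2|d\phio\otimes\sn\phio|_\ho$ and $t^2|\Delta_{h_0}\phio|_\ho$ are bounded by $C t^{2(1-p_3)} \leq C t^{2\mathring\varepsilon}$. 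The spatial Ricci $t^2|\sric_0^\sharp|_\ho$ will be the main technical point. I would introduce the $h_0$-orthonormal frame $\hat e_i := t^{-p_i}e_i$, whose structure coefficients $\tilde\gamma_{ij}^k$ defined by $[\hat e_i,\hat e_j] = \tilde\gamma_{ij}^k \hat e_k$ take the form $t^{p_k - p_i - p_j}\gamma_{ij}^k$ for distinct indices, plus terms of order $t^{-p_\ell}\ln t$ arising from spatial derivatives of $t^{-p_j}$. Applying Koszul's formula for the Christoffel symbols and expanding the Ricci tensor yields a sum of terms with $t$-scalings governed by $2(p_k - p_i - p_j) = 4p_k - 2$ (using $\tr\Ko = 1$), combined with conversion factors $t^{p_i - p_j}$ coming from re-expressing the $(1,1)$-tensor in the $\ho$-orthonormal frame $\{e_i\}$. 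The dangerous combination is $(\gamma_{23}^1)^2$, whose scaling $t^{4p_1}$ after the $t^2$-multiplication is controlled only when $p_1 > 0$. Condition~4 is what makes the estimate work: on $D_-$ it forces $\gamma_{23}^1 \equiv 0$, while on $D_+$ one has $4p_1 \geq 2\mathring\varepsilon \geq 2\varepsilon$; the remaining combinations are handled by $1 - p_3 \geq \mathring\varepsilon$ and compactness of $\Sigma$.

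Higher spatial derivatives would be obtained inductively: each $D$ either acts on smooth time-independent data ($\phio$, $\psio$, $p_i$, $\gamma_{ij}^k$), producing bounded contributions, or on factors $t^{p_i}$, producing a factor of $\ln t$. Careful bookkeeping yields the claimed $\langle\ln t\rangle^{m+2}$ dependence, where the $+2$ accounts for the two log powers already present in the leading-order estimate (one from the Christoffel symbols of $h_0$ and one from the chain rule applied to $V^{(j)}\circ\s_0$). The hard part will be the spatial Ricci analysis: one must check carefully that every scaling exponent appearing in the orthonormal-frame expansion is compatible with the required $t^{2\varepsilon}$ decay, and this is precisely where the geometric content of Condition~4 enters the argument.
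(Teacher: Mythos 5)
There is a genuine gap in the momentum constraint calculation. Your chain of reasoning asserts that $\cm_0(X) = \diver_{h_0}k_0(X) - (\psio/t)\,d\phio(X)$ and that ``$t\,\diver_{h_0}k_0 = \diver_\ho\Ko$.'' Both of these are false, because they drop $\ln t$ terms that do not individually vanish. Since $\s_0 = \psio\ln t + \phio$, the spatial differential is $d\s_0 = (\ln t)\,d\psio + d\phio$, so the correct identity is $\cm_0(X) = \diver_{h_0}k_0(X) - (\psio/t)\big((\ln t)\,d\psio + d\phio\big)(X)$. And computing the Christoffel symbols of $h_0$ in the frame $\{e_i\}$ gives
\[
t\,\diver_{h_0}k_0(e_i) = e_i(p_i) + \sum_{k\neq i}\big((\ln t)\,e_i(p_k) + \gamma_{ki}^k\big)(p_i - p_k) = \diver_\ho\Ko(e_i) - \tfrac{1}{2}(\ln t)\,e_i(\tr\Ko^2),
\]
which carries a $\ln t$ contribution precisely because the $p_i$ vary spatially. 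As written, your argument would produce a residual $-\tfrac{1}{2t}(\ln t)\,e_i(\tr\Ko^2)$ that is manifestly nonzero in general, and so you cannot conclude $\cm_0 = 0$. The identity does hold, but only after one notices that the two dropped $\ln t$ terms cancel against each other: since $\tr\Ko^2 + \psio^2 = 1$, one has $-\tfrac{1}{2}e_i(\tr\Ko^2) = \psio\,e_i(\psio)$, which exactly matches $(\psio/t)(\ln t)\,d\psio(e_i)$ after clearing the $t^{-1}$ factor. The full form of Condition~2 is needed here: the first identity $\tr\Ko^2 + \psio^2 = 1$ is used to kill the $\ln t$ residual, and only then does the second identity $\diver_\ho\Ko = \psio\,d\phio$ kill what remains. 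Your proposal invokes only the second identity.

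The same slip propagates elsewhere but is harmless there. You write $\ce_0 = \sric_0^\sharp - d\phio\otimes\sn\phio - (V\circ\s_0)I$ and $\Box_{g_0}\s_0 - V'\circ\s_0 = \Delta_{h_0}\phio - V'\circ\s_0$; the correct identities have $d\s_0\otimes\sn\s_0$ and $\Delta_{h_0}\s_0$ respectively, each carrying additional $\ln t$-weighted terms in $d\psio$ and $\Delta_{h_0}\psio$. Because the lemma only asks for a bound with a $\langle\ln t\rangle^{m+2}$ factor, those dropped terms are absorbed and the estimate survives, though the equations as stated are wrong. Beyond that, your approach is essentially the paper's: the residuals are $V$, $\sric_0^\sharp$, the matter gradient, and $\Delta_{h_0}\s_0$, and the paper delegates all of the spatial-Ricci and matter estimates to Lemma~\ref{decay of ricci} rather than re-deriving them in an orthonormal frame as you sketch, but the mechanism you describe --- the scalings $t^{2p_k - 2p_i - 2p_j + 2}$ with the dangerous $(\gamma_{23}^1)^2 t^{4p_1}$ term controlled either by $\gamma_{23}^1 \equiv 0$ on $D_-$ or by $2p_1 \geq \mathring\varepsilon$ on $D_+$ --- is the correct one.
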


\begin{proof}
    We begin with the momentum constraint $\cm_0$. If $\sn^{(0)}$ is the Levi-Civita connection of $h_0$, then
    \[
    \begin{split}
        \omega^{\ell}(\sn^{(0)}_{e_i} e_k) &= (\ln t) e_i(p_k) \delta_{k\ell} + (\ln t) e_k(p_i) \delta_{i\ell} - (\ln t) t^{2(p_i - p_{\ell})} e_{\ell}(p_i) \delta_{ik}\\
        &\phantom{=} - \frac{1}{2}t^{2(p_i-p_{\ell})} \gamma_{k\ell}^i - \frac{1}{2} t^{2(p_k - p_{\ell})} \gamma_{i\ell}^k + \frac{1}{2} \gamma_{ik}^{\ell}.
    \end{split}
    \]
    Hence, by recalling that $\tr \Ko^2 + \psio^2 = 1$,
    \[
    \begin{split}
        \ric_0(\partial_t,e_i) = \diver_{h_0} k_0 (e_i) &= \frac{1}{t} e_i(p_i) + \sum_{k \neq i} \frac{1}{t} \big((\ln t) e_i(p_k) + \gamma_{ki}^k \big)(p_i - p_k)\\
        &= \frac{1}{t}\bigg( \diver_{\ho} \Ko (e_i) - \frac{1}{2} (\ln t) e_i(\tr \Ko^2) \bigg)\\
        &= \frac{1}{t} \bigg( \diver_{\ho} \Ko (e_i) + (\ln t) \psio e_i(\psio) \bigg)\\
        &= \frac{1}{t} \bigg( (\diver_{\ho} \Ko - \psio d\phio) (e_i) + t\p_t \s_0 d\s_0(e_i) \bigg).
    \end{split} 
    \]
    Implying $t\cm_0 = \diver_\ho \Ko - \psio d\phio = 0$, by Definition~\ref{initial data}.
    
    Next, consider $E_0(\p_t,\p_t)$. First note that
    \[
    \ric_0(\partial_t,\partial_t) = -\partial_t \theta_0 - |k_0|_{h_0}^2 = \frac{1}{t^2}(1 - \tr \Ko^2) = \frac{1}{t^2} \psio^2 = (\p_t \s_0)^2.
    \]
    Hence $E_0(\p_t,\p_t) = V \circ \s_0$. So we just need to derive an appropriate bound for $V \circ \s_0$ to obtain the result. For $\ce_0$, recall that $\mathcal{L}_{\partial_t} K_0 + \theta_0 K_0 = 0$, thus $\Ric_0 = \overline{\ric}_0^{\sharp}$. We then have
    \[
    \ce_0 = \sric^{\sharp}_0 - d\s_0 \otimes \sn \s_0 - (V \circ \s_0)I,
    \]
    where $\sn \s_0$ denotes the gradient of $\s_0$ with respect to $h_0$. Hence, to obtain the result for $\ce_0$, we need appropriate estimates for $\sric_0^\sharp$, $d\s_0 \otimes \sn \s_0$ and $V \circ \s_0$. Finally, $\s_0$. Since it satisfies $\p_t^2 \s_0 + \theta_0 \p_t \s_0 = 0$, it is enough to obtain estimates for $\Delta_{h_0}\s_0$ and $V' \circ \s_0$. All of the estimates needed to obtain the conclusions follow from Lemma~\ref{decay of ricci} below.
\end{proof}

\subsection{Estimates for the spatial Ricci curvature, the spatial connection and the scalar field} \label{estimates for ricci etc}

Now we derive some general estimates which shall be used extensively in what follows.

\begin{remark}
    Since we are interested in the asymptotic behavior as $t \to 0$, when we consider manifolds of the form $(0,T] \times \Sigma$, there is no loss of generality in assuming that $T \leq 1$, and we do so in what follows.
\end{remark}

\begin{lemma} \label{general estimate for ricci lemma}
    Let $(\Sigma,\ho,\Ko,\phio,\psio)$ be initial data on the singularity and let $h(t)$, for $t \in (0,T]$, be a smooth one parameter family of Riemannian metrics on $\Sigma$. Suppose there is a frame $\{E_i(t)\}$, with dual frame $\{\eta^i(t)\}$, and constants $C_{m,r}$ such that
    \[
    |D^m (t\lie_{\p_t})^r E_i|_\ho + |D^m (t\lie_{\p_t})^r \eta^i|_\ho \leq C_{m,r} \langle \ln t \rangle^m. 
    \]
    Moreover, assume that $h^{ik} := h^{-1}(\eta^i,\eta^k)$ satisfy
    \begin{equation*}
        |D^m(t\p_t)^r (h^{ik})|_\ho \leq C_{m,r}\langle \ln t \rangle^m t^{-2p_{\min\{i,k\}}};
    \end{equation*}
    and that $\Gamma_{ik}^\ell := \eta^\ell(\sn_{E_i} E_k)$ satisfy
    \begin{equation} \label{hypothesis on the connection}
    \begin{split}
        |D^m(t\p_t)^r (\Gamma_{ii}^\ell) |_\ho &\leq C_{m,r}\langle \ln t \rangle^{m+1} t^{2(p_i - p_\ell)},\\
        |D^m(t\p_t)^r (\Gamma_{ik}^i) |_\ho + |D^m(t\p_t)^r (\Gamma_{ik}^k) |_\ho &\leq C_{m,r}\langle \ln t \rangle^{m+1},\\
        |D^m (t\p_t)^r(h^{ab} E_\alpha \Gamma_{ab}^k) |_\ho &\leq C_{m,r}\langle \ln t \rangle^{m+|\alpha|+1}t^{-2p_k},
    \end{split}
    \end{equation}
    (no summation on $i$ or $k$), where $\sn$ is the Levi-Civita connection of $h$ and $\alpha$ is a multiindex (see Definition~\ref{multiindex definition} below for our conventions regarding multiindices) with $|\alpha| \leq 1$. Then, if $\sric$ is the Ricci tensor of $h$ and the $\lambda_{ik}^\ell$ are defined by $[E_i,E_k] = \lambda_{ik}^\ell E_\ell$, there are constants $C_{m,r}$ such that
    \begin{subequations} \label{general estimates for ricci}
    \begin{align}
        \big|D^m (t\p_t)^r \big(\sric^\sharp(E_i,\eta^i) + \Lambda_{ik\ell}\big)\big|_\ho &\leq C_{m,r} \langle \ln t \rangle^{m+2} t^{-2p_3},\\
        \big|D^m (t\p_t)^r \big(\sric^\sharp(E_i,\eta^k) + \Upsilon_{ik\ell}\big)\big|_\ho &\leq C_{m,r} \langle \ln t \rangle^{m+2} t^{-2p_3} \min\{1,t^{2(p_i - p_k)}\}, \label{general off diagonal estimate for ricci}
    \end{align}
    \end{subequations}
    where $i$, $k$ and $\ell$ are distinct (no summation on $i$), and
    \begin{align*}
        \Lambda_{ik\ell} &:= h^{\ell i} \Gamma_{ii}^k \Gamma_{\ell k}^i + h^{\ell k} \Gamma_{ik}^k \Gamma_{\ell k}^i + h^{a\ell} \Gamma_{i\ell}^k \Gamma_{ak}^i + h^{ki} \Gamma_{ii}^\ell \Gamma_{k\ell}^i\\
        &\quad + h^{ak} \Gamma_{ik}^\ell \Gamma_{a\ell}^i + h^{k\ell} \Gamma_{i\ell}^\ell \Gamma_{k\ell}^i + h^{a\ell} \lambda_{ia}^k \Gamma_{k\ell}^i + h^{ak} \lambda_{ia}^\ell \Gamma_{\ell k}^i,\\
        \Upsilon_{ik\ell} &:= h^{a\ell} E_a \Gamma_{i\ell}^k - h^{ab} \Gamma_{ab}^\ell \Gamma_{i\ell}^k + h^{\ell a} \Gamma_{ia}^i \Gamma_{\ell i}^k + h^{a\ell}\Gamma_{i\ell}^k \Gamma_{ak}^k + h^{ii} \Gamma_{ii}^\ell \Gamma_{i\ell}^k\\
        &\quad + h^{ak}\Gamma_{ik}^\ell \Gamma_{a\ell}^k + h^{i\ell} \Gamma_{i\ell}^\ell \Gamma_{i\ell}^k + h^{\ell i} \lambda_{i\ell}^k \Gamma_{ki}^k + h^{ai} \lambda_{ia}^\ell \Gamma_{\ell i}^k + h^{a\ell} \lambda_{ia}^i \Gamma_{i\ell}^k + h^{\ell\ell} \lambda_{i\ell}^k \Gamma_{k\ell}^k.
    \end{align*}
\end{lemma}

\begin{remark}
    The assumptions on the time derivatives are only necessary to obtain the conclusions for the time derivatives. So, if we only knew the assumptions to hold for all $m$ and $r \leq R$, then we would still obtain the conclusions for all $m$ and $r \leq R$.
\end{remark}

\begin{proof}
    For this proof, let $i$, $k$ and $\ell$ denote fixed indices, so that there is no summation over them when they are repeated. First, note that our assumptions on the $E_i$ and the $\eta^i$ imply
    \[
    |D^m(t\p_t)^r(\lambda_{ik}^\ell)|_\ho \leq C_{m,r}\langle \ln t \rangle^{m+1}.
    \]
    Now we move on to $\sric$. We have
    \[
    \begin{split}
        \sric^{\sharp}(E_i) &= h^{ab} \bar R(E_i,E_{a})E_b\\
        &= h^{ab}( \sn_{E_i} \sn_{E_{a}} E_b - \sn_{E_{a}} \sn_{E_i} E_b - \sn_{[E_i,E_{a}]} E_b )\\
        &= h^{ab}\big( (E_i \Gamma_{ab}^c)E_c - (E_{a}\Gamma_{ib}^c) E_c + \Gamma_{ab}^c \Gamma_{ic}^d E_d - \Gamma_{ib}^c \Gamma_{a c}^d E_d - \lambda_{i a}^c \Gamma_{cb}^d E_d \big).
    \end{split}
    \]
    Thus
    \begin{equation} \label{ricci in terms of the frame}
        \sric^{\sharp}(E_i,\eta^k) = \underbrace{h^{ab} E_i \Gamma_{a b}^k}_{\mathrm{I}} - \underbrace{h^{ab} E_{a} \Gamma_{ib}^k}_{\mathrm{II}} + \underbrace{ h^{ab} \Gamma_{ab}^c \Gamma_{ic}^k}_{\mathrm{III}} - \underbrace{h^{ab} \Gamma_{ib}^c \Gamma_{a c}^k}_{\mathrm{IV}} - \underbrace{h^{ab} \lambda_{i a}^c \Gamma_{cb}^k}_{\mathrm{V}}.
    \end{equation}
    Note that our assumptions on the $\Gamma_{ab}^c$ directly give us control over $\mathrm{I}$ and the first two factors in $\mathrm{III}$. Forgetting about $\mathrm{V}$ for the moment, the idea is that we want to single out all the cases where there is a $\Gamma_{ab}^c$ with all three indices being distinct in $\mathrm{II}$ and $\mathrm{IV}$, and in the third factor of $\mathrm{III}$, while we estimate the rest. To that end, we first focus on the following terms: those with repeated indices in the $\Gamma_{ab}^c$ appearing in $\mathrm{II}$, in the third factor of $\mathrm{III}$, and in the second and third factors of $\mathrm{IV}$ (here we look for repeated indices in both factors at the same time). In the case $i=k$, we see that $D^m(t\p_t)^r$ of the corresponding terms is bounded by
    \[
    C_{m,r}\langle \ln t \rangle^{m+2} t^{-2p_3}.
    \]
    Furthermore, if $i \neq k$, then $D^m(t\p_t)^r$ of the terms under consideration, except one, are bounded by
    \[
    C_{m,r}\langle \ln t \rangle^{m+2} t^{-2p_k}.
    \]
    Whereas the remaining term, which is $h^{ki} \Gamma_{ii}^\ell \Gamma_{k \ell}^k$ for $i$, $k$ and $\ell$ distinct (this comes from $\mathrm{IV}$), satisfies
    \[
    |D^m(t\p_t)^r( h^{ki} \Gamma_{ii}^\ell \Gamma_{k \ell}^k )|_\ho \leq C_{m,r} \langle \ln t \rangle^{m+2} t^{2( p_i - p_\ell - p_{\min\{i,k\}} )}.
    \]
    So, altogether, the terms under consideration are bounded by
    \[
    C_{m,r} \langle \ln t \rangle^{m+2} t^{-2p_3} \min\{1,t^{2(p_i - p_k)}\}.
    \]
    Now we look at $\mathrm{V}$. Returning to the case $i=k$, we see that the terms in $\mathrm{V}$ which present repeated indices in the $\Gamma_{ab}^c$ are also bounded by the same expression. Hence, if $i$, $k$ and $\ell$ are distinct,
    \[
    \begin{split}
        &\big|D^m(t\p_t)^r\big( \sric^\sharp(E_i,\eta^i) + h^{\ell i} \Gamma_{ii}^k \Gamma_{\ell k}^i + h^{\ell k} \Gamma_{ik}^k \Gamma_{\ell k}^i + h^{a\ell} \Gamma_{i\ell}^k \Gamma_{ak}^i\\
        &+ h^{ki} \Gamma_{ii}^\ell \Gamma_{k\ell}^i + h^{ak} \Gamma_{ik}^\ell \Gamma_{a\ell}^i + h^{k\ell} \Gamma_{i\ell}^\ell \Gamma_{k\ell}^i + h^{a\ell} \lambda_{ia}^k \Gamma_{k\ell}^i + h^{ak} \lambda_{ia}^\ell \Gamma_{\ell k}^i \big)\big|_\ho \leq C_{m,r}\langle \ln t \rangle^{m+2} t^{-2p_3},
    \end{split}
    \]
    which is what we wanted to prove. On the other hand, if $i \neq k$ in \eqref{ricci in terms of the frame}, we get the desired bound for the terms in $\mathrm{V}$ when $b=k$ or $c=b$. Hence,
    \[
    \begin{split}
        &\big|D^m(t\p_t)^r \big(\mathrm{V} - h^{ai} \lambda_{ia}^k \Gamma_{ki}^k - h^{ai} \lambda_{ia}^\ell \Gamma_{\ell i}^k - h^{a\ell} \lambda_{ia}^i \Gamma_{i\ell}^k - h^{a\ell} \lambda_{ia}^k \Gamma_{k\ell}^k  \big)\big|_\ho\\
        &\hspace{5cm} \leq C_{m,r} \langle \ln t \rangle^{m+2} t^{-2p_3}\min\{1,t^{2(p_i-p_k)}\}.
    \end{split}
    \]
    Note that the first and the last terms to the right of $\mathrm{V}$ vanish when $a=i$ and satisfy the desired bound when $a=k$. Thus, we can go back to \eqref{ricci in terms of the frame} to obtain
    \[
    \begin{split}
        &\big|D^m(t\p_t)^r\big( \sric^\sharp(E_i,\eta^k) + h^{a\ell} E_a \Gamma_{i\ell}^k - h^{ab} \Gamma_{ab}^\ell \Gamma_{i\ell}^k + h^{\ell a} \Gamma_{ia}^i \Gamma_{\ell i}^k + h^{a\ell}\Gamma_{i\ell}^k \Gamma_{ak}^k + h^{ii} \Gamma_{ii}^\ell \Gamma_{i\ell}^k\\
        &+ h^{ak}\Gamma_{ik}^\ell \Gamma_{a\ell}^k + h^{i\ell} \Gamma_{i\ell}^\ell \Gamma_{i\ell}^k + h^{\ell i} \lambda_{i\ell}^k \Gamma_{ki}^k + h^{ai} \lambda_{ia}^\ell \Gamma_{\ell i}^k + h^{a\ell} \lambda_{ia}^i \Gamma_{i\ell}^k + h^{\ell\ell} \lambda_{i\ell}^k \Gamma_{k\ell}^k \big)\big|_\ho\\
        &\hspace{6cm} \leq C_{m,r}\langle \ln t \rangle^{m+2} t^{-2p_3} \min\{1,t^{2(p_i - p_k)}\}.
    \end{split}
    \]
    The lemma follows.
\end{proof}

\begin{lemma} \label{decay of ricci}
    Let $(\Sigma, \ho, \Ko, \phio, \psio)$ be initial data on the singularity and consider an admissible potential $V$, a metric $g = -dt \otimes dt + h$ and a function $\s$ on $(0,T] \times \Sigma$. Assume that there are constants $C$, $C_m$, $C_{m,r}$, $\eta > 0$ and $\delta > 0$ such that
    \begin{align*}
        |D^m (t\mathcal{L}_{\partial_t})^r \bar \h|_{\ho} &\leq C_{m,r} \langle \ln t \rangle^m, & |D^m (t\p_t)^r \bar \Psi |_{\ho} &\leq C_{m,r}, & |D^m\bar \Phi|_{\ho} &\leq C_{m},\\
        |\bar \h(e_i,e_i)| &\geq \eta, & |\bar\Psi - \psio| &\leq Ct^\delta.
    \end{align*}
    Moreover, if $i \neq k$, $x \in D_+$ and $y \in D_-$, assume that the following off-diagonal improvements of the estimates hold,
    \begin{equation*}
    \begin{split}
        \big|D^m (t\p_t)^r \big(\bar\h(e_i,e_k) \big)\big|_{\ho}(x) &\leq C_{m,r} \langle \ln t \rangle^m t^{(p_i + p_k - 2p_1)(x)},\\
        \big|D^m (t\p_t)^r \big(\bar\h(e_i,e_k) \big)\big|_{\ho}(y) &\leq C_{m,r} \langle \ln t \rangle^m t^{|p_i - p_k|(y)}.\\
    \end{split}
    \end{equation*}
    Then, by taking $T$ smaller if necessary, there are constants $C_{m,r}$, depending only on $\eta$, $\delta$, the initial data and the potential, such that the following holds. Let $h_{ik} := h(e_i,e_k)$ and $h^{ik} := h^{-1}(\omega^i,\omega^k)$. Then
    \[
    \begin{split}
        |D^m(t\p_t)^r (h_{ik})|_\ho &\leq C_{m,r}\langle \ln t \rangle^m t^{2p_{\max\{i,k\}}},\\
        |D^m(t\p_t)^r (h^{ik})|_\ho &\leq C_{m,r}\langle \ln t \rangle^m t^{-2p_{\min\{i,k\}}}.
    \end{split}
    \]
    Moreover, for $x \in D_+$ the following improvements hold for $i \neq k$,
    \[
    \begin{split}
        |D^m(t\p_t)^r (h_{ik})|_\ho(x) &\leq C_{m,r}\langle \ln t \rangle^m t^{2(p_i+p_k-p_1)(x)},\\
        |D^m(t\p_t)^r (h^{ik})|_\ho(x) &\leq C_{m,r}\langle \ln t \rangle^m t^{-2p_1(x)}.
    \end{split}
    \]
    Let $\bar{\Gamma}_{ik}^{\ell} := \omega^{\ell}(\sn_{e_i} e_k)$, where $\sn$ is the Levi-Civita connection of $h$. If $x \in D_+$ and $y \in D_-$, then
    \begin{equation} \label{connection estimates}
    \begin{split}
        |D^m (t\p_t)^r (\bar{\Gamma}_{ii}^{\ell})|_\ho &\leq C_{m,r} \langle \ln t \rangle^{m+1} t^{2(p_i - p_{\ell})},\\
        |D^m (t\p_t)^r (\bar{\Gamma}_{ik}^i)| + |D^m (t\p_t)^r (\bar{\Gamma}_{ik}^k)|_\ho &\leq C_{m,r} \langle \ln t \rangle^{m+1},\\
        |D^m (t\p_t)^r (\bar{\Gamma}_{ik}^{\ell})|_\ho(x) &\leq C_{m,r} \langle \ln t \rangle^{m+1} t^{2(p_1 - p_{\ell})(x)},\\
        |D^m (t\p_t)^r (\bar{\Gamma}_{ik}^{\ell})|_\ho(y) &\leq C_{m,r} \langle \ln t \rangle^{m+1} t^{2(p_2 - p_{\ell})(y)}, 
    \end{split}
    \end{equation}
    where $i$, $k$ and $\ell$ are distinct in the last two inequalities (no summation on $i$ or $k$). Furthermore,
    \[
    t^2|D^m (t\mathcal{L}_{\partial_t})^r \sric^{\sharp}|_{\ho} \leq C_{m,r} \langle \ln t \rangle^{m+2} t^{2\mathring{\varepsilon}}
    \]
    and for $i \neq k$, $x \in D_+$ and $y \in D_-$, the following off-diagonal improvements hold,
    \begin{equation*}
    \begin{split}
        t^2\big|D^m (t\p_t)^r \big(\sric^{\sharp}(e_i,\omega^k)\big)\big|_{\ho}(x) &\leq C_{m,r} \langle \ln t \rangle^{m+2} t^{2\mathring{\varepsilon} + 2(p_i - p_1)(x)},\\
        t^2\big|D^m (t\p_t)^r \big(\sric^{\sharp}(e_i,\omega^k)\big)\big|_{\ho}(y) &\leq C_{m,r} \langle \ln t \rangle^{m+2} t^{2\mathring{\varepsilon}} \min\{ 1, t^{2(p_i - p_k)(y)} \}. 
    \end{split}
    \end{equation*}
    Finally,
    \[
    \begin{split}
        t^2 |D^m (t\lie_{\p_t})^r( d\s \otimes \sn \s )|_{\ho} + t^2 |D^m (t\p_t)^r ( \Delta_h \s )|_{\ho} &\leq C_{m,r} \langle \ln t \rangle^{m+2} t^{2\mathring{\varepsilon}},\\
        t^2 |D^m (t\p_t)^r( V\circ\s )|_\ho + t^2 |D^m (t\p_t)^r( V'\circ\s )|_\ho &\leq C_{m,r} \langle \ln t \rangle^m t^{2\varepsilon_V},
    \end{split}
    \]
    and for $i \neq k$, $x \in D_+$ and $y \in D_-$,
    \[
    \begin{split}
        t^2\big|D^m (t\p_t)^r \big((d\s \otimes \sn \s) (e_i,\omega^k)\big)\big|_{\ho}(x) &\leq C_{m,r} \langle \ln t \rangle^{m+2} t^{2\mathring{\varepsilon} + 2(p_i - p_1)(x)},\\
        t^2\big|D^m (t\p_t)^r \big(( d\s \otimes \sn \s) (e_i,\omega^k)\big)\big|_{\ho}(y) &\leq C_{m,r} \langle \ln t \rangle^{m+2} t^{2\mathring{\varepsilon}} \min\{ 1, t^{2(p_i - p_k)(y)} \}. 
    \end{split}
    \]
\end{lemma}

\begin{remark} \label{finite regularity}
    If we only had the assumptions for all $m \leq M$ and all $r \leq R$, then we would still obtain the conclusions for all $m \leq M-2$ and all $r \leq R$. Finally, the conclusions about the potential $V$ depend only on the assumptions on $\s$.
\end{remark}

\begin{proof}
    As in the proof of Lemma~\ref{general estimate for ricci lemma}, for this proof $i$, $k$ and $\ell$ denote fixed indices, so there is no summation over them when repeated. Also, let $x$ denote an element of $D_+$ and $y$ denote an element of $D_-$.
    
    \paragraph{The metric and the connection coefficients:} Note that $h_{ik} = t^{p_i + p_k} \bar\h(e_i,e_k)$, hence for every multiindex $\alpha$ of order $m$, 
    \[
    |e_{\alpha} (t\p_t)^r h_{ii}| \leq C_{m,r} \langle \ln t \rangle^m t^{2p_i}.
    \]
    For $i \neq k$, we consider each case separately. We have
    \[
    |e_{\alpha} (t\p_t)^r h_{ik}|(x) \leq C_{m,r} \langle \ln t \rangle^m t^{2(p_i + p_k - p_1)(x)}.
    \]
    On the other hand, 
    \[
    |e_{\alpha} (t\p_t)^r h_{ik}|(y) \leq C_{m,r} \langle \ln t \rangle^m t^{(p_i + p_k + |p_i - p_k|)(y)} \leq C_{m,r} \langle \ln t \rangle^m t^{2p_{\max\{i,k\}}(y)}.
    \]
    Note that, in particular, the estimates
    \[
    |e_\alpha (t\p_t)^r h_{ik}| \leq C_{m,r}\langle \ln t \rangle^m t^{2p_{\max\{i,k\}}}
    \]
    always hold. This in turn implies
    \[
    |e_\alpha (t\p_t)^r \det h| \leq C_{m,r}\langle \ln t \rangle^m t^2, \quad |\det h| \geq t^2 \Big( \textstyle \prod_i |\bar\h(e_i,e_i)| - Ct^{2\min\{ p_2 - p_1, p_3 - p_2 \}}\Big),
    \]
    which, along with the lower bounds on $|\bar\h(e_i,e_i)|$, and after taking $T$ smaller if necessary, yields
    \[
    t^r\bigg| e_{\alpha} (t\p_t)^r \bigg( \frac{1}{\det h} \bigg) \bigg| \leq C_{m,r}\langle \ln t \rangle^m t^{-2}.
    \]
    To estimate the dual metric, note that 
    \[
    h^{ik} = \frac{1}{\det h} \adj(h)_{ik},
    \]
    where $\adj(h)$ is the adjugate of the matrix with components $h_{ik}$. It can then be computed that
    \[
    |e_{\alpha} (t\p_t)^r h^{ii}| \leq C_{m,r}\langle \ln t \rangle^m t^{-2p_i}
    \]
    and, for $i \neq k$,
    \[
    |e_{\alpha} (t\p_t)^r h^{ik}|(x) \leq C_{m,r}\langle \ln t \rangle^m t^{-2p_1(x)}, \quad  |e_{\alpha} (t\p_t)^r h^{ik}|(y) \leq C_{m,r}\langle \ln t \rangle^m t^{-2p_{\min\{i,k\}}(y)}.
    \]
    In particular, the estimates
    \[
    |e_\alpha (t\p_t)^r h^{ik}| \leq C_{m,r}\langle \ln t \rangle^m t^{-2p_{\min\{i,k\}}}
    \]
    always hold. 
    
    Now we move on to $\sn$. By the Koszul formula, we have
    \[
    2\bar{\Gamma}_{ik}^{\ell} = h^{\ell a} ( e_i h_{ka} + e_k h_{ia} - e_a h_{ik} - \gamma_{ka}^b h_{ib} - \gamma_{ia}^b h_{kb} ) + \gamma_{ik}^{\ell}.
    \]
    For simplicity, let us focus on the case with no derivatives. First we make the following basic observation,
    \begin{equation*}
        |h^{\ell a} e_i h_{ka}| \leq C\langle \ln t \rangle \min\{1, t^{2(p_k - p_{\ell})}\} = C\langle \ln t \rangle t^{2(p_{\max\{k,\ell\}} - p_{\ell})}.
    \end{equation*}
    Consider first the case $i = k$. Then
    \[
    2\bar{\Gamma}_{ii}^{\ell} = h^{\ell a} ( 2e_i h_{ia} - e_a h_{ii} - 2\gamma_{ia}^b h_{ib} ),
    \]
    and thus
    \[
    |\bar{\Gamma}_{ii}^{\ell}| \leq C \langle \ln t \rangle t^{2(p_i - p_{\ell})}.
    \]
    Now consider $\bar{\Gamma}_{ik}^i$. Note that
    \[
    |h^{ia} h_{kb} \gamma_{ia}^b|(y) \leq C,
    \]
    since the only way for this term to grow is if $k < i$, but for $k = 2$ it does not happen by antisymmetry of the $\gamma_{ik}^{\ell}$, and for $k = 1$ it does not happen because $\gamma_{23}^1(y) = 0$. On the other hand, 
    \[
    |h^{ia} h_{kb} \gamma_{ia}^b|(x) \leq Ct^{2(p_k - p_1)(x)} \leq C.
    \]
    The rest of the terms can then be estimated to obtain
    \[
    |\bar{\Gamma}_{ik}^i| \leq C\langle \ln t \rangle.
    \]
    For $\bar \Gamma_{ik}^k$ just note that $\bar \Gamma_{ik}^k = \bar \Gamma_{ki}^k + \gamma_{ik}^k$. Now take $\bar{\Gamma}_{ik}^{\ell}$ with $i$, $k$ and $\ell$ distinct. Then
    \[
    t^{2p_{\ell}}|h^{\ell a}( e_i h_{ka} + e_k h_{ia} - e_a h_{ik} )| \leq C\langle \ln t \rangle(t^{2p_{\max\{k,\ell\}}} + t^{2p_{\max\{i,\ell\}}} + t^{2p_{\max\{i,k\}}}),
    \]
    and since $i, k$ and $\ell$ are distinct, the worst power of $t$ on the right-hand side is $t^{2p_2}$. Now for the terms with structure coefficients,
    \[
    \begin{split}
        t^{2p_{\ell}}|h^{\ell a} h_{ib} \gamma_{ka}^b| &\leq C( t^{2p_i} |\gamma_{k \ell}^i|  + t^{2p_{\max\{i,k\}}} + t^{2p_{\max\{i,\ell\}}} ),\\
        t^{2p_{\ell}} |h^{\ell a} h_{kb} \gamma_{ia}^b| &\leq C( t^{2p_{\max\{i,k\}}} + t^{2p_k} |\gamma_{i \ell}^k| + t^{2p_{\max\{k,\ell\}}} ).
    \end{split}
    \]
    We conclude that
    \[
    |\bar{\Gamma}_{ik}^{\ell}|(x) \leq C \langle \ln t \rangle t^{2(p_1 - p_{\ell})(x)}.
    \]
    For $D_-$, the condition $\gamma_{23}^1(y) = 0$ ensures that the terms with structure coefficients are not worse than the other terms, thus
    \[
    |\bar{\Gamma}_{ik}^{\ell}|(y) \leq C\langle \ln t \rangle t^{2(p_2 - p_{\ell})(y)}.
    \]
    For the derivatives, note that for every derivative of $\bar{\Gamma}_{ik}^{\ell}$, the resulting expression can be estimated in exactly the same way, the only difference being that an additional power of $\langle \ln t \rangle$ is introduced for every spatial derivative. Hence \eqref{connection estimates} follows.
    
    \paragraph{The Ricci tensor:} We use Lemma~\ref{general estimate for ricci lemma}. Note that \eqref{connection estimates} implies
    \[
    |D^m (t\p_t)^r (h^{ab} e_{\alpha} \bar{\Gamma}_{ab}^k)| \leq C_{m,r}\langle \ln t \rangle^{m+|\alpha|+1} t^{-2p_k}
    \]
    for $|\alpha| \leq 1$. Moreover, since $E_i = e_i$ in this case, the conditions on the frame are trivially satisfied. We start by verifying that the expression on the right-hand side of \eqref{general off diagonal estimate for ricci} satisfies the desired bounds in $D_+$ (note that it already satisfies what we want in $D_-$). Indeed, note that for $i \neq 3$,
    \[
    t^{-2p_3}\min\{1,t^{2(p_i-p_k)}\} \leq t^{2(p_1 - p_2 - p_3) + 2(p_i - p_1)} = t^{-2 + 4p_1 + 2(p_i-p_1)};
    \]
    and for $i=3$, 
    \[
    t^{-2p_3} \min\{1,t^{2(p_3-p_k)}\} = t^{2(p_1-p_k-p_3) + 2(p_3-p_1)} \leq t^{-2+4p_1 + 2(p_3-p_1)}.
    \]
    Hence, we only need to estimate $\Lambda_{ik\ell}$ and $\Upsilon_{ik\ell}$. Starting with $\Lambda_{ik\ell}$, for clarity, we focus on the case with no derivatives. We have
    \[
    \begin{split}
        \Lambda_{ik\ell}(x) &\leq  |h^{\ell\ell} \bar \Gamma_{i\ell}^k \bar \Gamma_{\ell k}^i|(x) + |h^{kk}\bar \Gamma_{ik}^\ell \bar \Gamma_{k\ell}^i|(x)\\
        &\quad + |h^{\ell \ell}\gamma_{i\ell}^k \bar \Gamma_{k\ell}^i|(x) + |h^{kk} \gamma_{ik}^\ell \bar \Gamma_{\ell k}^i|(x) + C \langle \ln t \rangle^{2}t^{-2p_3(x)}\\
        &\leq C \langle \ln t \rangle^{2}t^{2(p_1 - p_2 - p_3)(x)}.
    \end{split}
    \]
    On the other hand, for $D_-$, if we look at the fifth and seventh terms in $\Lambda_{ik\ell}$,
    \[
    \begin{split}
        |h^{ak}\bar \Gamma_{ik}^\ell \bar \Gamma_{a\ell}^i|(y) + |h^{a\ell}\gamma_{ia}^k \bar \Gamma_{k\ell}^i|(y) &\leq |h^{ik}\bar \Gamma_{ik}^\ell \bar \Gamma_{i\ell}^i|(y) + |h^{kk}\bar \Gamma_{ik}^\ell \bar \Gamma_{k\ell}^i|(y) + |h^{\ell k}\bar \Gamma_{ik}^\ell \bar \Gamma_{\ell\ell}^i|(y)\\
        &\quad + C\langle \ln t \rangle^{2}t^{2(p_2 - p_i - p_\ell)(y)}\\
        &\leq C\langle \ln t \rangle^2 \big( t^{2(p_2 - p_\ell - p_{\min\{i,k\}})(y)} + t^{2(2p_2 - 1)(y)} + t^{2(p_2 - p_i - p_\ell )(y)}\big)\\
        &\leq C\langle \ln t \rangle^2 t^{-2p_3(y)}.
    \end{split}
    \]
    The rest of the terms are similar, hence $\Lambda_{ik\ell}(y) \leq C\langle \ln t \rangle^2 t^{-2p_3(y)}$. Since a derivative introduces at worst a factor of $\langle \ln t \rangle$ in the estimates, if it is spatial, altogether we obtain
    \[
    |D^m (t\p_t)^r(\Lambda_{ik\ell})|_\ho \leq C_{m,r}\langle \ln t \rangle^{m+2} t^{-2+2\mathring\varepsilon}.
    \]
    Moving on to $\Upsilon_{ik\ell}$, for $D_+$ we see that
    \[
    \begin{split}
        |D^m (t\p_t)^r(\Upsilon_{ik\ell})|_\ho(x) &\leq C_{m,r}\langle \ln t \rangle^{m+2} t^{2(p_1 - p_k - p_\ell)(x)}\\
        &= C_{m,r}\langle \ln t \rangle^{m+2} t^{2(p_1 + p_i-1)(x)}\\
        &\leq C_{m,r}\langle \ln t \rangle^{m+2} t^{-2 + 4p_1(x) + 2(p_i - p_1)(x)},
    \end{split}
    \]
    which gives the desired estimate. And for $D_-$,
    \[
    \begin{split}
        &|D^m (t\p_t)^r(\Upsilon_{ik\ell})|_\ho(y)\\
        &\hspace{2cm} \leq C_{m,r}\langle \ln t \rangle^{m+2} \Big( t^{2(p_2-p_\ell-p_k)(y)} + t^{-2p_k(y)} + t^{-2p_\ell(y)}\sum_{s=0}^m|D^s \gamma_{i\ell}^k|_\ho(y) \Big)\\
        &\hspace{2cm} \leq C_{m,r}\langle \ln t \rangle^{m+2} t^{-2p_3(y)}.
    \end{split}
    \]
    Moreover, if $i>k$,
    \[
    \begin{split}
        &|D^m (t\p_t)^r(\Upsilon_{ik\ell})|_\ho(y)\\
        &\qquad \leq C_{m,r}\langle \ln t \rangle^{m+2}t^{2(p_i-p_k)(y)} \Big( t^{2(p_2-p_\ell-p_i)(y)} + t^{-2p_i(y)} + t^{2(p_k-p_i-p_\ell)(y)}\sum_{s=0}^m|D^s \gamma_{i\ell}^k|_\ho(y) \Big).
    \end{split}
    \]
    The only way for $t^{2(p_k-p_i-p_\ell)}$ to be worse than $t^{-2p_3}$ is if $k = 1$, but in that case $\gamma_{i\ell}^k = \pm \gamma_{23}^1$, which vanishes in a neighborhood of $y$. Thus, we obtain
    \[
    |D^m (t\p_t)^r(\Upsilon_{ik\ell})|_\ho(y) \leq C_{m,r}\langle \ln t \rangle^{m+2} t^{-2p_3(y)}\min\{1,t^{2(p_i-p_k)(y)}\}.
    \]
    This finishes the proof of the estimates for $\sric^\sharp$.

    \paragraph{The scalar field:} Finally, the scalar field $\s$. We have
    \[
    (d\s \otimes \sn \s)(e_i, \omega^k) = (e_i \s) h^{k a} (e_a \s).
    \]
    Moreover, we can write $\s = \bar\Psi \ln t + \bar\Phi$. Hence,
    \[
    \big|e_\alpha (t\p_t)^r\big((d\s \otimes \sn \s)(e_i,\omega^k)\big)\big| \leq C_{m,r} \langle \ln t \rangle^{m+2} t^{-2p_k}.
    \]
    For $i \neq k$, we obtain the improvements by noting that 
    \[
    -2p_k = 2(p_1 - p_i - p_k) + 2(p_i - p_1) = -2p_i + 2(p_i - p_k).
    \]
    Moving on to the potential,
    \[
    e_{\alpha} (t\p_t)^r( V \circ \s ) = \sum ( V^{(q)} \circ \s ) ( e_{\beta_1} (t\p_t)^{r_1} \s ) \cdots ( e_{\beta_q} (t\p_t)^{r_q} \s ),
    \]
    where the sum is over appropriate tuples $(r_1,\ldots,r_q)$ and multiindices $\beta_i$ such that $|\beta_1| + \cdots + |\beta_q| = |\alpha|$ and $r_1 + \cdots + r_q = r$. Then
    \[
    t^2|e_{\alpha}(t\p_t)^r( V \circ \s )| \leq C_{m,r} \langle \ln t \rangle^m t^2 e^{a|\s|} \leq C_{m,r} \langle \ln t \rangle^m t^{2 -a |\psio|} \leq C_{m,r} \langle \ln t \rangle^m t^{2\varepsilon_V}.
    \]
    The estimate for $V' \circ \s$ follows similarly. And the Laplacian,
    \[
    |D^m (t\p_t)^r (\Delta_h \s)|_\ho = | D^m (t\p_t)^r (h^{ab} e_a e_b \s - h^{ab} \bar{\Gamma}_{ab}^c e_c \s ) |_\ho \leq C_{m,r} \langle \ln t \rangle^{m+2} t^{-2p_3}.
    \]
    The result follows.
\end{proof}

Often we will need to estimate the difference between two corresponding objects which are derived from two different metrics and scalar fields. The necessary estimates are obtained in the following result, which is a consequence of the proof of Lemma~\ref{decay of ricci}.

\begin{corollary} \label{estimates of differences}
    Suppose that we have metrics $g_1$ and $g_2$, and functions $\s_1$ and $\s_2$, satisfying the assumptions of Lemma~\ref{decay of ricci}. Moreover, assume that there is a function $f:(0,T] \to \R$ such that
    \[
    \begin{split}
        |D^m (t\lie_{\p_t})^r ( \bar\h_1 - \bar\h_2 )|_{\ho} + |D^m (t\p_t)^r( \bar\Psi_1 - \bar\Psi_2 )|_{\ho} &\leq C_{m,r} \langle \ln t \rangle^{m} f(t),\\
        |D^m(\bar\Phi_1 - \bar\Phi_2)|_\ho &\leq C_m\langle \ln t \rangle^m f(t),
    \end{split}
    \]
    and for $i \neq k$, $x \in D_+$ and $y \in D_-$,
    \[
    \begin{split}
        \big|D^m (t\p_t)^r \big(( \bar\h_1 - \bar\h_2 )(e_i,e_k)\big)\big|_{\ho}(x) &\leq C_{m,r} \langle \ln t \rangle^{m} t^{(p_i + p_k - 2p_1)(x)} f(t),\\
        \big|D^m (t\p_t)^r \big(( \bar\h_1 - \bar\h_2 )(e_i,e_k)\big)\big|_{\ho}(y) &\leq C_{m,r} \langle \ln t \rangle^{m} t^{|p_i - p_k|(y)} f(t).
    \end{split}
    \]
    Define $\cd := \sn^1 - \sn^2$, where $\sn^1$ and $\sn^2$ denote the Levi-Civita connections of $h_1$ and $h_2$ respectively. Then there are constants $C_{m,r}$, depending only on $\eta$, $\delta$, the initial data and the potential, such that, if $x \in D_+$ and $y \in D_-$, then
    \begin{equation*}
    \begin{split}
        |D^m(t\p_t)^r (\cd_{ii}^{\ell})|_\ho &\leq C_{m,r} \langle \ln t \rangle^{m+1} t^{2(p_i - p_{\ell})} f(t),\\
        |D^m( t\p_t^r) (\cd_{ik}^i)|_\ho + |D^m( t\p_t)^r (\cd_{ik}^k)|_\ho &\leq C_{m,r} \langle \ln t \rangle^{m+1} f(t),\\
        |D^m(t\p_t)^r (\cd_{ik}^{\ell})|_\ho(x) &\leq C_{m,r} \langle \ln t \rangle^{m+1} t^{2(p_1 - p_{\ell})(x)} f(t),\\
        |D^m(t\p_t)^r (\cd_{ik}^{\ell})|_\ho(y) &\leq C_{m,r} \langle \ln t \rangle^{m+1} t^{2(p_2 - p_{\ell})(y)} f(t), 
    \end{split}
    \end{equation*}
    where $i$, $k$ and $\ell$ are distinct in the last two inequalities (no summation on $i$ or $k$). Furthermore,
    \[
    t^2|D^m (t\mathcal{L}_{\partial_t})^r (\sric_1^{\sharp} - \sric_2^{\sharp})|_{\ho} \leq C_{m,r} \langle \ln t \rangle^{m+2} t^{2\mathring{\varepsilon}} f(t)
    \]
    and for $i \neq k$, $x \in D_+$ and $y \in D_-$, the following off-diagonal improvements hold,
    \begin{equation*}
    \begin{split}
        t^2\big|D^m (t\p_t)^r \big((\sric_1^{\sharp} - \sric_2^{\sharp})(e_i,\omega^k)\big)\big|_{\ho}(x) &\leq C_{m,r} \langle \ln t \rangle^{m+2} t^{2\mathring{\varepsilon} + 2(p_i - p_1)(x)} f(t),\\
        t^2\big|D^m (t\p_t)^r \big((\sric_1^{\sharp} - \sric_2^{\sharp})(e_i,\omega^k)\big)\big|_{\ho}(y) &\leq C_{m,r} \langle \ln t \rangle^{m+2} t^{2\mathring{\varepsilon}} \min\{ 1, t^{2(p_i - p_k)(y)} \} f(t). 
    \end{split}
    \end{equation*}
    Finally,
    \[
    \begin{split}
        t^2 |D^m (t\lie_{\p_t})^r( d\s_1 \otimes \sn \s_1 - d\s_2 \otimes \sn \s_2 )|_{\ho} &\leq C_{m,r} \langle \ln t \rangle^{m+3} t^{2\mathring{\varepsilon}} f(t),\\
        t^2 |D^m (t\p_t)^r ( \Delta_{h_1} \s_1 - \Delta_{h_2} \s_2 )|_{\ho} &\leq C_{m,r} \langle \ln t \rangle^{m+3} t^{2\mathring{\varepsilon}} f(t),\\
        t^2 |D^m (t\p_t)^r ( V \circ \s_1 - V \circ \s_2 )|_{\ho} &\leq C_{m,r} \langle \ln t \rangle^{m+1} e^{a|\s_1 - \s_2|} t^{2\varepsilon_V} f(t),\\
        t^2 |D^m (t\p_t)^r ( V' \circ \s_1 - V' \circ \s_2 )|_{\ho} &\leq C_{m,r} \langle \ln t \rangle^{m+1} e^{a|\s_1 - \s_2|} t^{2\varepsilon_V} f(t);
    \end{split}
    \]
    and for $i \neq k$, $x \in D_+$ and $y \in D_-$,
    \[
    \begin{split}
        &t^2\big|D^m (t\p_t)^r \big((d\s_1 \otimes \sn \s_1 - d\s_2 \otimes \sn \s_2)(e_i,\omega^k)\big)\big|_{\ho}(x)\\
        &\hspace{5cm}\leq C_{m,r} \langle \ln t \rangle^{m+3} t^{2\mathring{\varepsilon} + 2(p_i - p_1)(x)} f(t),\\
        &t^2\big|D^m (t\p_t)^r \big((d\s_1 \otimes \sn \s_1 - d\s_2 \otimes \sn \s_2) (e_i,\omega^k)\big)\big|_{\ho}(y)\\
        &\hspace{5cm}\leq C_{m,r} \langle \ln t \rangle^{m+3} t^{2\mathring{\varepsilon}} \min\{ 1, t^{2(p_i - p_k)(y)} \} f(t). 
    \end{split}
    \]
\end{corollary}

\begin{remark}
    Remark~\ref{finite regularity} equally applies here.
\end{remark}

\begin{proof}
    We begin by making the following basic observation. Let $M(x)$ denote a monomial in the variables $x = (x_1, \ldots, x_n)$, so that there are non-negative integers $b_1, \ldots, b_n$ and a real number $c$ such that
    \[
    M(x) = c x_1^{b_1} \cdots x_n^{b_n}.
    \]
    If $y = (y_1, \ldots, y_n)$ denotes another set of variables, then
    \begin{equation} \label{difference of monomials}
        M(x) - M(y) = c \sum_{i=1}^n \sum_{r = 0}^{b_i - 1} y_1^{b_1} \cdots y_i^{r} (x_i - y_i) x_i^{b_i - r - 1} \cdots x_n^{b_n}.
    \end{equation}

    The assumptions on $h_1$ and $h_2$ imply
    \begin{equation} \label{difference between metrics}
        \begin{split}
        |e_{\alpha} (t\p_t)^r(h_1 - h_2)_{ii}| &\leq C_{m,r} \langle \ln t \rangle^{m} t^{2p_i} f(t),\\
        |e_{\alpha} (t\p_t)^r( h_1 -h_2 )_{ik}|(x) &\leq C_{m,r} \langle \ln t \rangle^{m} t^{2(p_i + p_k - p_1)(x)} f(t),\\
        |e_{\alpha} (t\p_t)^r( h_1 -h_2 )_{ik}|(y) &\leq C_{m,r} \langle \ln t \rangle^{m} t^{2p_{\max\{i,k\}}(y)} f(t),
    \end{split}
    \end{equation}
    for $i \neq k$, $x \in D_+$ and $y \in D_-$. Since $\det h_a$ is a polynomial in the components $(h_a)_{ik}$, by using \eqref{difference of monomials} on the differences of the corresponding terms, we see that
    \[
    |e_{\alpha} (t\p_t)^r( \det h_1 - \det h_2 )| \leq C_{m,r} \langle \ln t \rangle^{m} t^2 f(t).
    \]
    Now to estimate the difference between the dual metrics,
    \[
    \begin{split}
        (h_1 - h_2)^{ik} &= \frac{1}{\det h_1} \adj(h_1)_{ik} - \frac{1}{\det h_2} \adj(h_2)_{ik}\\
        &= \bigg( \frac{\det h_2 - \det h_1}{(\det h_1)(\det h_2)} \bigg) \adj(h_1)_{ik} + \frac{1}{\det h_2}\big( \adj(h_1)_{ik} - \adj(h_2)_{ik} \big).
    \end{split}
    \]
    Again, since the entries of $\adj(h_a)$ are polynomial on the $(h_a)_{ik}$, we can use \eqref{difference of monomials} and \eqref{difference between metrics} to estimate this expression in the same way as in Lemma~\ref{decay of ricci}, the only difference being that an additional $ f(t)$ factor is introduced. We conclude that
    \[
    \begin{split}
        |e_{\alpha} (t\p_t)^r( h_1 - h_2 )^{ii}| &\leq C_{m,r} \langle \ln t \rangle^{m} t^{-2p_i} f(t),\\
        |e_{\alpha} (t\p_t)^r( h_1 - h_2 )^{ik}|(x) &\leq C_{m,r} \langle \ln t \rangle^{m} t^{-2p_1(x)} f(t),\\
        |e_{\alpha} (t\p_t)^r( h_1 - h_2 )^{ik}|(y) &\leq C_{m,r} \langle \ln t \rangle^{m} t^{-2p_{\min\{i,k\}}(y)} f(t),\\
    \end{split}
    \]
    for $i \neq k$, $x \in D_+$ and $y \in D_-$. All of the estimates, except the ones involving the potential $V$, then follow by using \eqref{difference of monomials} and following the same steps as in the proofs of Lemmas~\ref{decay of ricci} and \ref{general estimate for ricci lemma}.

    For the remaining estimates, write
    \[
    V \circ \s_1 - V \circ \s_2 = \int_0^1 V'\big( s\s_1 + (1-s)\s_2 \big)ds( \s_1 - \s_2 ),
    \]
    implying
    \[
    \begin{split}
        |e_{\alpha} (t\p_t)^r( V \circ \s_1 - V \circ \s_2 )| &\leq C_{m,r} \langle \ln t \rangle^{m+1} e^{a( |\s_1 - \s_2| + |\s_2| )} f(t)\\
        &\leq C_{m,r} \langle \ln t \rangle^{m+1} e^{a |\s_1 - \s_2|} t^{2(\varepsilon_V - 1)} f(t).
    \end{split}
    \]
    Similarly for $V' \circ \s_1 - V' \circ \s_2$.
\end{proof}

\subsection{General results for ODEs} \label{general results for odes}

Here we prove the general existence and uniqueness results for ODEs which will be used to construct the sequence of approximate solutions.

\begin{lemma} \label{nonlinear ode lemma}
    Let $(\Sigma,\ho)$ be a closed Riemannian manifold and $D$ the Levi-Civita connection of $\ho$. Consider the equation 
    \[
    \partial_{\tau} u = u^2 + u + f
    \]
    on $M_{\tau_0} = [\tau_0,\infty) \times \Sigma$, where $\tau_0 > 0$, $f:M_{\tau_0} \to \mathbb{R}$ is smooth, and there are constants $C_m$, $N_m$ and $\delta > 0$ such that
    \[
    | D^m f |_{\ho} \leq C_m \langle \tau \rangle^{N_m} e^{-\delta\tau}.
    \]
    Then, by taking $\tau_0$ large enough (independent of $m$), there are constants $C_m$ and a unique smooth solution ${u:M_{\tau_0} \to \mathbb{R}}$ to the equation such that
    \[
    | D^m u |_{\ho} \leq C_m \langle \tau \rangle^{N_m} e^{-\delta\tau}.
    \]
\end{lemma}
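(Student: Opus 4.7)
The plan is to rewrite the ODE as a Volterra-type integral equation obtained by integrating backward from $+\infty$, and then to apply the Banach fixed point theorem in an appropriately weighted function space. Observe that the linear part $\p_\tau u - u = g$ has the explicit decaying solution $u(\tau,x) = -\int_\tau^\infty e^{\tau-s}g(s,x)\,ds$ whenever $g$ decays exponentially. Treating the quadratic term as a perturbation, any smooth decaying solution of the full equation should satisfy
\[
u(\tau,x) = -\int_\tau^\infty e^{\tau-s}\big(u(s,x)^2 + f(s,x)\big)\,ds,
\]
and conversely, any bounded continuous $u$ satisfying this identity is a smooth solution of the original ODE. Existence and uniqueness are thereby reduced to a fixed-point problem for the operator $T$ defined by the right-hand side.

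First I would establish the $m=0$ estimate by contraction. Introduce the weighted Banach space $X := \{u \in C^0(M_{\tau_0}) : \|u\|_X := \sup |u|\langle\tau\rangle^{-N_0}e^{\delta\tau} < \infty\}$. A direct computation using Peetre's inequality $\langle \tau+r\rangle^{N_0}\leq C\langle \tau\rangle^{N_0}\langle r\rangle^{N_0}$ shows that, for $u,v$ in the ball of radius $M$ in $X$,
\[
\|T(u)\|_X \leq C_1\big(C_0 + M^2\langle \tau_0\rangle^{N_0}e^{-\delta\tau_0}\big), \qquad \|T(u)-T(v)\|_X \leq C_1 M\langle \tau_0\rangle^{N_0}e^{-\delta\tau_0}\|u-v\|_X.
\]
Choosing $M := 2C_1 C_0$ and $\tau_0$ sufficiently large, $T$ becomes a strict contraction on the ball of radius $M$ and maps it into itself. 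Its unique fixed point $u$ is the desired solution, and it satisfies $|u| \leq M\langle\tau\rangle^{N_0}e^{-\delta\tau}$.

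Next I would upgrade spatial regularity and obtain the $m\geq 1$ estimates by induction on $|\alpha|$. Formal differentiation of the ODE yields
\[
\p_\tau(D^\alpha u) = (1+2u)D^\alpha u + D^\alpha f + \sum_{\substack{\beta+\gamma=\alpha\\ |\beta|,|\gamma|<|\alpha|}} c_{\beta\gamma}\, D^\beta u\, D^\gamma u,
\]
a linear ODE in $\tau$ whose forcing, by the inductive hypothesis together with the fact that the quadratic contributions carry an extra factor of $e^{-\delta s}$ and are therefore sub-leading, is bounded by $C\langle\tau\rangle^{N_m}e^{-\delta\tau}$. Solving this equation via the integrating factor $\exp\big(-\int_{\tau_0}^\tau (1+2u)\,dr\big)$ and integrating backward from $+\infty$ expresses $D^\alpha u$ as a convergent integral; and since the smallness of $u$ established in the base case makes the integrating factor comparable to $e^{\tau - s}$, the same elementary estimate as before gives the claimed bound on $D^\alpha u$ with the same exponent $N_m$ as in the hypothesis. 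Smoothness in $\tau$ follows from the ODE itself, so the constructed function is jointly smooth on $M_{\tau_0}$.

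For uniqueness within the decaying class, if $u_1, u_2$ both solve the ODE and decay, then $w := u_1 - u_2$ satisfies the linear equation $\p_\tau w = (1 + u_1 + u_2)w$, with explicit solution $w(\tau) = w(\tau_0)\exp\big(\int_{\tau_0}^\tau (1 + u_1 + u_2)\,dr\big)$. Since $u_1 + u_2$ decays exponentially, this grows essentially like $e^{\tau - \tau_0}$ unless $w(\tau_0) = 0$; the assumed decay of $w$ therefore forces $w\equiv 0$. The only delicate point in the whole argument will be ensuring that the smallness produced by taking $\tau_0$ large is strong enough both to close the contraction in the base case and to make the integrating factor comparable to $e^{\tau - s}$ in the inductive step; both hold once $\tau_0$ is chosen large relative to the fixed constants $C_m$, $N_m$, and $\delta$ appearing in the hypothesis on $f$.
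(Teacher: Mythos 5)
Your base case ($m=0$) construction — rewriting the problem as the integral equation $u(\tau)=-\int_\tau^\infty e^{\tau-s}\bigl(u(s)^2+f(s)\bigr)\,ds$ and running a contraction in the weighted sup space — is correct and is exactly the paper's $m=0$ argument, and your uniqueness argument via the explicit solution of the linear ODE for $w=u_1-u_2$ is fine (and cleaner than the paper's "independence of $B$" observation). The gap is in the inductive step for $m\geq 1$. You write that solving the linearized equation "via the integrating factor \ldots\ expresses $D^\alpha u$ as a convergent integral," but this presupposes both that $u$ is $C^{|\alpha|}$ in $x$ and that $D^\alpha u$ decays as $\tau\to\infty$ — which is precisely what has to be proved, since the base case only gives a continuous decaying $u$. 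The integrating-factor formula produces a decaying function $v$ solving the linearized equation; one still needs an argument identifying $v$ with $D^\alpha u$, and as written the induction does not close. You flag the "only delicate point" as the size of $\tau_0$, but that is not where the difficulty lies.

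The paper handles this differently: it runs the contraction directly in the space $C^m_{\delta,B}(M_{\tau_0})$ of $C\bigl([\tau_0,\infty),C^m(\Sigma)\bigr)$ functions with weighted decay, yielding for each $m$ a genuinely $C^m$-in-$x$ solution on $[\tau_0(m),\infty)$; it then invokes uniqueness of the decaying $C^0$ solution to identify all of these, and transports the $C^m$ regularity back to the fixed $\tau_0=\tau_0(0)$ by solving the ODE on the compact interval $[\tau_0,\tau_0(m)]$ (smooth dependence on parameters). Your integrating-factor computation is the natural way to obtain the sharper $N_m$-dependent constant once differentiability is known, but by itself it is an a priori estimate, not an existence argument. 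To close the gap, either revert to the paper's $C^m$-contraction plus uniqueness mechanism, or first establish the existence of $D^\alpha u$ with \emph{some} exponential decay via smooth dependence of the Banach fixed point on the parameter $x$ (implicit function theorem in the weighted space), and then apply your computation as an a posteriori estimate to recover the $\langle\tau\rangle^{N_m}e^{-\delta\tau}$ bound.
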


\begin{proof}
    Define the map
    \[
    \varphi(u) := -\int_{\tau}^{\infty} e^{\tau - s} \big( u(s)^2 + f(s) \big)ds
    \]
    and the space $C_{\delta,B}^m(M_{\tau_0}) := \{ u \in C( [\tau_0,\infty), C^m(\Sigma) ) : \; \|u\|_m \leq B \; \}$, where
    \[
    \|u\|_m := \sup_{\tau \geq \tau_0} \langle \tau \rangle^{-N_m} e^{\delta\tau} \|u\|_{C^m(\Sigma)}.
    \]
    Clearly $C_{\delta,B}^m(M_{\tau_0})$ is a complete metric space with distance $d(u,v) := \|u - v\|_m$. We verify the conditions of the Banach fixed point theorem for the map $\varphi$. If $u \in C_{\delta,B}^m(M_{\tau_0})$ and $|\alpha| \leq m$,
    \[
    \begin{split}
        &|e_{\alpha} \varphi(u)|\\
        &\leq e^{\tau} \int_{\tau}^{\infty} e^{-s}\big( B^2 \langle s \rangle^{2N_m} e^{-2\delta s} + \|f\|_m \langle s \rangle^{N_m} e^{-\delta s} \big)ds\\
        &\leq C_m B^2 \langle \tau \rangle^{2N_m} e^{-2\delta \tau} \int_{\tau}^{\infty} \langle s-\tau \rangle^{2N_m} e^{-(1+2\delta)(s-\tau)}ds\\
        &\phantom{\leq} \hspace{4cm} + C_m \|f\|_m \langle \tau \rangle^{N_m} e^{-\delta \tau} \int_{\tau}^{\infty} \langle s-\tau \rangle^{N_m} e^{-(1+\delta)(s-\tau)}ds\\
        &\leq C_m\bigg( B^2 \langle \tau \rangle^{N_m} e^{-\delta \tau} \int_0^{\infty} \langle r \rangle^{2N_m} e^{-(1+2\delta)r}dr + \|f\|_m \int_0^{\infty} \langle r \rangle^{N_m} e^{-(1+\delta)r}dr \bigg) \langle \tau \rangle^{N_m} e^{-\delta \tau}.
    \end{split}
    \]
    By choosing $B$ large enough, it is possible to take $\tau_0$ large enough such that $\varphi$ maps $C_{\delta,B}^m(M_{\tau_0})$ to itself. Now we verify that it is a contraction. Let $u,v \in C_{\delta,B}^m(M_{\tau_0})$, then
    \[
    \begin{split}
        \big|e_{\alpha} \big( \varphi(u) - \varphi(v) \big)\big| &\leq e^{\tau} \int_{\tau}^{\infty} e^{-s} \sum |e_{\beta}(u) e_{\gamma}(u) - e_{\beta}(v) e_{\gamma}(v)|ds\\
        &\leq e^{\tau} \int_{\tau}^{\infty} e^{-s} \sum \Big( |e_{\beta}(u)||e_{\gamma}(u-v)| + |e_{\gamma}(v)||e_{\beta}(u-v)| \Big)ds\\
        &\leq C_m B e^{\tau} \int_{\tau}^{\infty} e^{-s} \langle s \rangle^{2N_m} e^{-2\delta s}ds \|u-v\|_m\\
        &\leq C_m B \langle \tau \rangle^{2N_m} e^{-2\delta \tau} \int_{\tau}^{\infty} \langle s-\tau \rangle^{2N_m} e^{-(1+2\delta)(s-\tau)}ds \|u-v\|_m,
    \end{split}
    \]
    where the sum is over appropriate multiindices $\beta$ and $\gamma$ such that $|\beta| + |\gamma| = |\alpha|$, and $C_m$ is a constant depending only on $m$. Hence
    \[
    \langle \tau \rangle^{-N_m} e^{\delta \tau} \| \varphi(u) - \varphi(v) \|_{C^m(\Sigma)} \leq \Big( C_m B \langle \tau \rangle^{N_m} e^{-\delta \tau} \int_0^{\infty} \langle r \rangle^{2N_m} e^{-(1+2\delta)r}dr \Big) \|u-v\|_m.
    \]
    By taking $\tau_0$ large enough, we can ensure that the expression inside the parentheses is smaller than $1/2$, so that $\varphi$ is a contraction. We conclude, from Banach's fixed point theorem, that there is a unique $u \in C^m_{\delta,B}(M_{\tau_0})$ such that 
    \[
    u = \varphi(u) = -\int_{\tau}^{\infty} e^{\tau-s} \big( u(s)^2 + f(s) \big)ds.
    \]
    In particular, $u$ solves the differential equation. Note that $u$ is independent of $B$, since for two choices $B_1 < B_2$ with corresponding solutions $u_1$ and $u_2$, we have $u_1 \in C^m_{\delta,B_2}(M_{\tau_0})$. Thus by uniqueness $u_1 = u_2$ on the intersection of their domains. Moving on to the regularity of $u$. Given any degree of regularity $C^m(\Sigma)$, there is a corresponding solution defined for $\tau \in [\tau_0(m),\infty)$, with $\tau_0(m)$ in principle increasing with $m$. If we have two solutions, $u_1$ and $u_2$, which are $C^{m_1}(\Sigma)$ and $C^{m_2}(\Sigma)$ respectively, then they are both $C^{\min\{m_1,m_2\}}(\Sigma)$ and thus by uniqueness agree on the intersection of their domains. The regularity can then be transported to some $\tau_0$ independent of $m$, thus there is a $C^\infty(\Sigma)$ solution. Regularity in time then follows from differentiating the equation.   
\end{proof}

\begin{lemma} \label{linear ode lemma}
    Let $(\Sigma,\ho)$ be a closed Riemannian manifold and $D$ the Levi-Civita connection of $\ho$. Consider the equation
    \[
    \partial_{\tau} v = A_{\mathrm{rem}} v + F
    \]
    on $M_{\tau_0} = [\tau_0,\infty) \times \Sigma$, where $v, F: M_{\tau_0} \to \mathbb{R}^k$ with $F$ smooth, and $A_{\mathrm{rem}}: M_{\tau_0} \to \mathbb{M}_k(\mathbb{R})$ smooth. If there are constants $C_m$, $N_m$ and $\varepsilon, \delta > 0$ such that
    \[
    |D^m A_{\mathrm{rem}}|_{\ho} \leq C_m e^{-\varepsilon \tau}, \quad |D^m F|_{\ho} \leq C_m \langle \tau \rangle^{N_m} e^{-\delta \tau},
    \]
    then there are constants $C_m$ and a unique smooth solution $v$ to the equation such that
    \[
    |D^m v|_{\ho} \leq C_m \langle \tau \rangle^{N_m} e^{-\delta \tau}.
    \]
    In particular, if $F = 0$ the only solution which decays as $\tau \to \infty$ is the trivial solution $v = 0$.
\end{lemma}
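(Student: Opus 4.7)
The plan is to mimic the fixed-point argument of Lemma~\ref{nonlinear ode lemma}, which is substantially simpler in the linear setting. First I would reformulate the ODE as an integral equation adapted to decay at $\infty$: any solution $v$ which tends to $0$ as $\tau \to \infty$ must satisfy
\[
v(\tau) = -\int_\tau^\infty \bigl( A_{\mathrm{rem}}(s) v(s) + F(s) \bigr) ds,
\]
and conversely any fixed point of this integral operator solves the ODE. Accordingly, define
\[
\varphi(v)(\tau) := -\int_\tau^\infty \bigl( A_{\mathrm{rem}}(s) v(s) + F(s) \bigr) ds,
\]
and work in the Banach space of continuous $\mathbb{R}^k$-valued functions on $M_{\tau_0}$ equipped with the weighted norm $\|v\|_m := \sup_{\tau \geq \tau_0} \langle \tau \rangle^{-N_m} e^{\delta \tau}\|v(\tau)\|_{C^m(\Sigma)}$.

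Next I would run the Banach fixed point theorem. The key elementary estimate is that, after the change of variables $s = \tau + r$ and a standard binomial bound $(\tau+r)^{N_m} \leq 2^{N_m}(\tau^{N_m} + r^{N_m})$, one has $\int_\tau^\infty \langle s\rangle^{N_m} e^{-\delta s} ds \leq C_m \langle \tau \rangle^{N_m} e^{-\delta \tau}$ for $\tau \geq \tau_0$ large enough. Applying $e_\alpha$ for $|\alpha| = m$ inside the integral, distributing over $A_{\mathrm{rem}} v$ by Leibniz, and using $|D^j A_{\mathrm{rem}}|_{\ho} \leq C_j e^{-\varepsilon s}$ together with $|D^j v|_{\ho} \leq \|v\|_m \langle s\rangle^{N_m} e^{-\delta s}$, I obtain
\[
\langle \tau\rangle^{-N_m} e^{\delta \tau}|e_\alpha \varphi(v)| \leq C_m' e^{-\varepsilon \tau_0}\|v\|_m + C_m'' \|F\|_m,
\]
so $\varphi$ maps the closed ball of radius $B = 2 C_m'' \|F\|_m$ in $C^m_{\delta,B}(M_{\tau_0})$ into itself once $\tau_0$ is large enough. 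For the contraction property, the subtraction kills $F$ and leaves
\[
\|\varphi(v_1) - \varphi(v_2)\|_m \leq C_m e^{-\varepsilon \tau_0}\|v_1 - v_2\|_m,
\]
which is $\leq \tfrac{1}{2}\|v_1-v_2\|_m$ for $\tau_0$ large. This produces a unique fixed point at each finite regularity $m$; as in Lemma~\ref{nonlinear ode lemma}, the fixed points are mutually compatible so one obtains a $C^\infty(\Sigma)$ solution, and time regularity then follows from the equation.

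For the final uniqueness claim, suppose $F = 0$ and $v$ is any solution to $\p_\tau v = A_{\mathrm{rem}} v$ tending to $0$ as $\tau \to \infty$. Since $A_{\mathrm{rem}}$ is pointwise integrable in $\tau$ (by the $e^{-\varepsilon \tau}$ bound), the fundamental matrix $\Pi(\tau,\tau_0)$ of the homogeneous system, viewed fiberwise on $\Sigma$, is bounded and has a bounded inverse for all $\tau \geq \tau_0$; concretely $\Pi(\tau,\tau_0)$ converges to an invertible limit as $\tau\to\infty$ by the Dyson series. Writing $v(\tau) = \Pi(\tau,\tau_0)v(\tau_0)$ and letting $\tau \to \infty$ forces $v(\tau_0) = 0$ at every point of $\Sigma$, hence $v \equiv 0$. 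The main (minor) technical point is simply the weighted integral estimate above that propagates the rate $\delta$ through $\varphi$; there is no analog of the quadratic nonlinearity or the smallness constraint from Lemma~\ref{nonlinear ode lemma}, so the construction is entirely standard once the correct weighted space is in place.
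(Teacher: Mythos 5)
Your proof is correct and follows essentially the same route the paper intends: the paper's own proof of this lemma is just the one-line remark that it is the same Banach fixed-point argument as in Lemma~\ref{nonlinear ode lemma} but simpler, and your adaptation (dropping the $e^{\tau-s}$ kernel since there is no leading order linear term, using the exponential decay of $A_{\mathrm{rem}}$ to make the iteration map both contractive and ball-preserving for $\tau_0$ large, and bootstrapping regularity across $m$ via uniqueness of the fixed point) is exactly that simplification.
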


\begin{remark}
    Here $D^m F$, $D^m v$ and $D^m A_{\mathrm{rem}}$ are taken componentwise and the $|\cdot|_{\ho}$ norms can be defined as the sum of the $|\cdot|_{\ho}$ norms of the components.
\end{remark}

\begin{proof}
    This is very similar to the proof of Lemma~\ref{nonlinear ode lemma} but simpler, since the equation in question is linear.  
\end{proof}

\subsection{The approximate Weingarten map} \label{the approximate weingarten map}

We now begin the construction of the sequence of approximate solutions, by constructing an approximate Weingarten map $\bar K_n$ which satisfies the convergence estimates to the initial data on the singularity that we want. Later we will prove that the difference between $\bar K_n$ and the actual Weingarten map of the $\Sigma_t$ hypersurfaces, with respect to the $n$-th approximate solution, decays as $t \to 0$ at a rate which is increasing in $n$. This will then imply the desired convergence for the Weingarten map. Before proceeding, we set some conventions regarding notation.

\begin{remark}
    For the remainder of this section, $N_n$ will denote a positive integer whose value may change from line to line, which is only allowed to depend on $n$.
\end{remark}

\begin{remark}
    Below, we will work with metrics $g_n = -dt \otimes dt + h_n$ and functions $\s_n$ indexed by $n$. We will denote by $\sn\s_n$ the gradient of $\s_n$ with respect to $h_n$. Tensors which are derived from the metrics $g_n$ and $h_n$ will be indexed by $n$ accordingly. So, for instance, $\sric_n$ denotes the Ricci tensor of $h_n$. Moreover, indices will be raised and lowered with the corresponding metric, so when we write $\sric_n^\sharp$, we mean $\sharp$ to be taken with respect to $h_n$. Finally, denote by $\sn^{(n)}$ the Levi-Civita connection of $h_n$.
\end{remark}

\begin{lemma} \label{definition of kn}
    Let $(\Sigma,\ho,\Ko,\phio,\psio)$ be initial data on the singularity and let $V$ be an admissible potential. Assume that we have a metric $g_{n-1} = -dt \otimes dt + h_{n-1}$ and a function $\s_{n-1}$ on $(0,t_{n-1}] \times \Sigma$ satisfying the assumptions of Lemma~\ref{decay of ricci}. Then there is a $t_n > 0$ and a unique $\bar K_n$ on $(0,t_n] \times \Sigma$ solving
    \begin{equation} \label{approximate weingarten map equation}
        \mathcal{L}_{\partial_t} \bar K_n + \overline{\ric}_{n-1}^{\sharp} + \bar \theta_n \bar K_n = d\s_{n-1} \otimes \sn \s_{n-1} + (V \circ \s_{n-1})I,
    \end{equation}
    where $\bar \theta_n = \tr \bar K_n$, such that
    \[
    |D^m (t\lie_{\p_t})^r(t\bar K_n - \Ko)|_{\ho} \leq C_{m,r,n} \langle \ln t \rangle^{m+2} t^{2\varepsilon}.
    \]
    Moreover, for $i \neq k$, $x \in D_+$ and $y \in D_-$,
    \[
    \begin{split}
        \big|D^m (t\p_t)^r\big( t\bar K_n(e_i,\omega^k) \big)\big|_{\ho}(x) &\leq C_{m,r,n} \langle \ln t \rangle^{m+2} t^{2\varepsilon +2(p_i - p_1)(x)},\\
        \big|D^m (t\p_t)^r\big( t\bar K_n(e_i,\omega^k) \big)\big|_{\ho}(y) &\leq C_{m,r,n} \langle \ln t \rangle^{m+2} t^{2\varepsilon} \min\{ 1, t^{2(p_i - p_k)(y)} \}.
    \end{split}
    \]
\end{lemma}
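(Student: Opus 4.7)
The plan is to pass to the logarithmic time coordinate $\tau := -\ln t$ and to decouple \eqref{approximate weingarten map equation} into a scalar nonlinear equation for the trace and a linear equation for the residual. Concretely, I write $\bar\theta_n = (1+u)/t$ and $\bar K_n = (\Ko + M_n)/t$, where $u$ and $M_n$ are the unknowns that should decay as $t \to 0$. Because $\{e_i\}$ and $\{\omega^k\}$ are time-independent, the Lie derivative $\lie_{\p_t}$ acts as $\p_t$ on each component $M_n(e_i,\omega^k)$, and one has $\p_t = -(1/t)\p_\tau$ under this change of variable; this is precisely the setup that feeds into Lemmas~\ref{nonlinear ode lemma} and \ref{linear ode lemma}.

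Taking the trace of \eqref{approximate weingarten map equation} yields the scalar Riccati equation $\p_t \bar\theta_n + \bar\theta_n^2 = -\tr\sric_{n-1}^\sharp + |\sn\s_{n-1}|_{h_{n-1}}^2 + 3 V\circ\s_{n-1}$. A direct substitution converts this into
\[
\p_\tau u = u + u^2 + f, \qquad f := t^2\,\tr\sric_{n-1}^\sharp - t^2 |\sn\s_{n-1}|_{h_{n-1}}^2 - 3 t^2\, V\circ\s_{n-1}.
\]
Since $(h_{n-1},\s_{n-1})$ satisfies the hypotheses of Lemma~\ref{decay of ricci}, one gets $|D^m f|_\ho \leq C_m\langle \ln t \rangle^{m+2} t^{2\varepsilon}$, so Lemma~\ref{nonlinear ode lemma} yields a unique smooth $u$ with decay $\langle\ln t\rangle^{m+2}t^{2\varepsilon}$. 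Substituting $\bar\theta_n = (1+u)/t$ and $\bar K_n = (\Ko + M_n)/t$ back into \eqref{approximate weingarten map equation}, and using $\lie_{\p_t}\Ko = 0$, the residual equation becomes, component-wise in $\{e_i,\omega^k\}$,
\[
\p_\tau M_n(e_i,\omega^k) = u\, M_n(e_i,\omega^k) + u\,\Ko(e_i,\omega^k) + t^2 \sric_{n-1}^\sharp(e_i,\omega^k) - t^2(d\s_{n-1}\otimes \sn\s_{n-1})(e_i,\omega^k) - t^2(V\circ\s_{n-1})\delta_{ik}.
\]
This is of the scalar form addressed by Lemma~\ref{linear ode lemma}, with coefficient $A_{\mathrm{rem}}=u$ decaying exponentially in $\tau$ and with forcing $F$ whose decay rate is quantified by Lemma~\ref{decay of ricci}. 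On the diagonal $i=k$, $F$ decays at rate $t^{2\varepsilon}$; off the diagonal, $\Ko(e_i,\omega^k) = 0$ because $\{e_i\}$ diagonalizes $\Ko$, so only $\sric_{n-1}^\sharp$ and $d\s_{n-1}\otimes\sn\s_{n-1}$ contribute to $F$, and the improved off-diagonal estimates of Lemma~\ref{decay of ricci} deliver the rates $t^{2\varepsilon + 2(p_i-p_1)}$ on $D_+$ and $t^{2\varepsilon}\min\{1,t^{2(p_i-p_k)}\}$ on $D_-$. Lemma~\ref{linear ode lemma} then produces $M_n$ together with the full set of component-wise estimates claimed in the statement.

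Self-consistency, namely that the $\bar\theta_n$ built from $u$ really equals $\tr\bar K_n$, follows by tracing the residual equation: $\tr M_n$ satisfies $\p_\tau v = u v + u + f$, and so does $u$ itself, so the uniqueness part of Lemma~\ref{linear ode lemma} gives $\tr M_n = u$; uniqueness of $\bar K_n$ among decaying solutions follows from the same pair of lemmas. The $r = 1$ time-derivative estimates are then read off directly from \eqref{approximate weingarten map equation}, using the $\p_t$-bounds on the forcing supplied by Lemma~\ref{decay of ricci}. The main point to monitor is the $u\,\Ko$ term: since $\Ko$ is diagonal in $\{e_i\}$, this term is purely diagonal and does \emph{not} contaminate the off-diagonal estimates for $M_n$, even though $u$ only enjoys the weak decay $t^{2\varepsilon}$. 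This is where the diagonalizability hypothesis in Definition~\ref{initial data} is decisive, and it is what makes the sharp off-diagonal bounds in the conclusion attainable.
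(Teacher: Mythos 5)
Your argument follows the same architecture as the paper's proof: pass to $\tau = -\ln t$, solve the trace (Riccati) equation for $u = t\bar\theta_n - 1$ via Lemma~\ref{nonlinear ode lemma}, feed $u$ into a linear system for $M_n := t\bar K_n - \Ko$ and solve via Lemma~\ref{linear ode lemma}, and confirm $\bar\theta_n = \tr\bar K_n$ by tracing and invoking uniqueness. The algebra checks out, and your observation that the $u\Ko$ term is diagonal in $\{e_i\}$ is exactly the reason the off-diagonal improvements are attainable.

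Where the proposal is loose is the closing claim that Lemma~\ref{linear ode lemma} ``produces $M_n$ together with the full set of component-wise estimates.'' That lemma, as stated and as proved (a Banach fixed-point argument in a space $C^m_{\delta,B}$ carrying a single decay exponent), delivers a rate $\delta$ that is \emph{uniform over $\Sigma$}. The off-diagonal exponents $2\varepsilon + 2(p_i - p_1)(x)$ on $D_+$ and $2\varepsilon + \max\{0, 2(p_i - p_k)(y)\}$ on $D_-$ vary with the spatial point, so they cannot be read off from a single application of the lemma. You need one more step. The paper first applies Lemma~\ref{linear ode lemma} to get the uniform $t^{2\varepsilon}$ rate, then writes, for $i \neq k$,
\[
t\bar K_n(e_i,\omega^k)(\tau) = -\int_\tau^\infty \big[ u\, t\bar K_n + t^2\sric_{n-1}^{\sharp} - t^2 d\s_{n-1}\otimes\sn\s_{n-1} \big](e_i,\omega^k)(s)\, ds,
\]
and iterates: each pass gains a factor $t^{2\varepsilon}$ from the $u\, t\bar K_n(e_i,\omega^k)$ term until the Ricci and scalar-field sources dominate. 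Equivalently one can use the variation-of-constants formula with integrating factor $\exp(\int_s^\tau u\,dr)$, which is uniformly bounded since $u$ is absolutely integrable in $\tau$, and read the pointwise rate directly off the forcing. Either fix is short, but one of them is needed; the bare invocation of Lemma~\ref{linear ode lemma} does not by itself yield pointwise-varying decay rates.
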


\begin{proof}
    We begin by defining $\bar \theta_n$. Introduce the time coordinate $\tau = -\ln t$. By taking the trace of Equation~\eqref{approximate weingarten map equation}, we see that $\bar \theta_n$ should satisfy
    \[
    \partial_t \bar \theta_n + \bar{S}_{n-1} + \bar \theta_n^2 = |d\s_{n-1}|_{h_{n-1}}^2 + 3V \circ \s_{n-1},
    \]
    which we can rewrite as
    \begin{equation} \label{thetan equation normalized}
        \partial_{\tau}(e^{-\tau} \bar \theta_n - 1) = (e^{-\tau} \bar \theta_n - 1)^2 + (e^{-\tau} \bar \theta_n - 1) + e^{-2\tau} \bar{S}_{n-1} - e^{-2\tau} |d\s_{n-1}|_{h_{n-1}}^2 - 3e^{-2\tau} V \circ \s_{n-1}.
    \end{equation}
    Note that by Lemma~\ref{decay of ricci}, since $|d\s_{n-1}|_{h_{n-1}}^2 = \tr( d\s_{n-1} \otimes \sn \s_{n-1} )$, the assumptions on $g_{n-1}$ and $\s_{n-1}$ imply
    \[
    t^2 |D^m (t\p_t)^r (\bar{S}_{n-1} - |d\s_{n-1}|_{h_{n-1}}^2 - 3 V \circ \s_{n-1})|_{\ho}  \leq C_{m,r,n} \langle \ln t \rangle^{m+2} t^{2\varepsilon}.
    \]
    Thus, by Lemma~\ref{nonlinear ode lemma}, there is a sufficiently large $\tau_n > 0$ such that we can define $\bar\theta_n$ as the unique solution to \eqref{thetan equation normalized} such that
    \[
    |D^m(e^{-\tau} \bar \theta_n - 1)|_{\ho} \leq C_{m,n} \langle \tau \rangle^{m+2} e^{-2\varepsilon \tau}
    \]
    on $[\tau_n,\infty) \times \Sigma$. Moving on to $\bar K_n$, we can rewrite \eqref{approximate weingarten map equation} as
    \begin{equation} \label{kn equation normalized}
    \begin{split}
        \lie_{\partial_{\tau}} ( e^{-\tau}\bar K_n - \Ko ) &= (e^{-\tau}\bar \theta_n - 1) ( e^{-\tau} K_n - \Ko ) + (e^{-\tau} \bar \theta_n - 1) \Ko + e^{-2\tau} \sric_{n-1}^{\sharp}\\
        &\quad -e^{-2\tau} d\s_{n-1} \otimes \sn \s_{n-1} - e^{-2\tau}(V \circ \s_{n-1})I. 
    \end{split}
    \end{equation}
    So, by Lemmas~\ref{decay of ricci} and \ref{linear ode lemma}, we can define $\bar K_n$ as the unique solution of \eqref{kn equation normalized} such that
    \begin{equation} \label{convergence of kn bar}
        |D^m (e^{-\tau} \bar K_n - \Ko)|_{\ho} \leq C_{m,n} \langle \tau \rangle^{m+2} e^{-2\varepsilon \tau}.
    \end{equation}
    Now need to show that $\bar \theta_n = \tr \bar K_n$. By taking the trace of \eqref{kn equation normalized}, we get
    \[
    \begin{split}
        \partial_{\tau}( e^{-\tau} \tr \bar K_n - 1) &= (e^{-\tau} \bar \theta_n - 1)(e^{-\tau} \tr \bar K_n - 1) + (e^{-\tau} \bar \theta_n - 1) + e^{-2\tau} \bar{S}_{n-1}\\
        &\quad - e^{-2\tau} |d\s_{n-1}|_{h_{n-1}}^2 - 3e^{-2\tau} V \circ \s_{n-1}.
    \end{split}
    \]
    Since $\bar \theta_n$ also solves this equation, by uniqueness in Lemma~\ref{linear ode lemma}, we conclude that $\tr \bar K_n = \bar \theta_n$. Finally, we need to prove that the off-diagonal improvements hold. From \eqref{kn equation normalized} and \eqref{convergence of kn bar}, it follows that for $i \neq k$,
    \[
    \begin{split}
        e^{-\tau} \bar K_n(e_i, \omega^k)(\tau) = - \int_{\tau}^{\infty} &\big( (e^{-s}\bar \theta_n - 1)e^{-s} \bar K_n\\
        &+ e^{-2s} \sric^{\sharp}_{n-1} - e^{-2s} d\s_{n-1} \otimes \sn \s_{n-1} \big)(e_i,\omega^k)(s)ds. 
    \end{split}
    \]
    Using this, we can successively improve on the estimates until we obtain the desired ones. Finally, the estimates for the time derivatives come directly from differentiating \eqref{kn equation normalized}.
\end{proof}

\begin{lemma} \label{kn become closer}
    Let $(\Sigma,\ho,\Ko,\phio,\psio)$ be initial data on the singularity and let $V$ be an admissible potential. Assume that we have metrics $g_{n-1} = -dt \otimes dt + h_{n-1}$ and $g_{n-2} = -dt \otimes dt + h_{n-2}$, and functions $\s_{n-1}$ and $\s_{n-2}$ on $(0,t_{n-1}] \times \Sigma$, for $n \geq 2$, satisfying the assumptions of Lemma~\ref{decay of ricci}. Moreover, assume that 
    \[
    \begin{split}
        | D^m (t\lie_{\p_t})^r(\bar\h_{n-1} - \bar\h_{n-2}) |_{\ho} &\leq C_{m,r,n-1} \langle \ln t \rangle^{m+N_{n-1}} t^{2(n-1)\varepsilon},\\
        |D^m (t\p_t)^r(\bar\Psi_{n-1} - \bar\Psi_{n-2})|_{\ho} &\leq C_{m,r,n-1} \langle \ln t \rangle^{m+N_{n-1}} t^{2(n-1)\varepsilon},\\
        |D^m (\bar\Phi_{n-1} - \bar\Phi_{n-2})|_{\ho} &\leq C_{m,n-1} \langle \ln t \rangle^{m+N_{n-1}} t^{2(n-1)\varepsilon}, 
    \end{split}
    \]
    and for $i \neq k$, $x \in D_+$ and $y \in D_-$,
    \[
    \begin{split}
        \big|D^m (t\p_t)^r\big( (\bar\h_{n-1} - \bar\h_{n-2})(e_i,e_k) \big)\big|_{\ho}(x) &\leq C_{m,r,n-1} \langle \ln t  \rangle^{m+N_{n-1}} t^{2(n-1)\varepsilon + (p_i + p_k - 2p_1)(x)},\\
        \big|D^m (t\p_t)^r\big( (\bar\h_{n-1} - \bar\h_{n-2})(e_i,e_k) \big)\big|_{\ho}(y) &\leq C_{m,r,n-1} \langle \ln t  \rangle^{m+N_{n-1}} t^{2(n-1)\varepsilon + |p_i - p_k|(y)}.
    \end{split}
    \]
    Then, if $\bar K_n$ and $\bar K_{n-1}$ are defined as in Lemma~\ref{definition of kn},
    \[
    |D^m (t\lie_{\p_t})^r(t\bar K_n - t\bar K_{n-1})|_{\ho} \leq C_{m,r,n} \langle \ln t \rangle^{m+N_n} t^{2n\varepsilon},
    \]
    and for $i \neq k$, $x \in D_+$ and $y \in D_-$,
    \[
    \begin{split}
        \big|D^m (t\p_t)^r\big( (t\bar K_n - t\bar K_{n-1})(e_i,\omega^k) \big)\big|_{\ho}(x) &\leq C_{m,r,n} \langle \ln t \rangle^{m+N_n} t^{2n\varepsilon +2(p_i - p_1)(x)},\\
        \big|D^m (t\p_t)^r\big( (t\bar K_n - t\bar K_{n-1})(e_i,\omega^k) \big)\big|_{\ho}(y) &\leq C_{m,r,n} \langle \ln t \rangle^{m+N_n} t^{2n\varepsilon} \min\{ 1, t^{2(p_i - p_k)(y)} \}.
    \end{split}
    \]
\end{lemma}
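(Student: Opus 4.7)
The plan is to derive the difference equation satisfied by $\delta K := \bar K_n - \bar K_{n-1}$, show that it fits the framework of Lemma~\ref{linear ode lemma} with forcing decaying at rate $t^{2n\varepsilon}$, and then recover the off-diagonal improvements by the same integral bootstrap used at the end of the proof of Lemma~\ref{definition of kn}. Subtracting \eqref{kn equation normalized} at level $n-1$ from the same equation at level $n$, and writing $\delta\theta := \bar\theta_n - \bar\theta_{n-1}$, one obtains in the $\tau = -\ln t$ coordinate
\begin{align*}
\lie_{\p_\tau}(e^{-\tau}\delta K) &= (e^{-\tau}\bar\theta_n - 1)(e^{-\tau}\delta K) + (e^{-\tau}\delta\theta)(e^{-\tau}\bar K_{n-1} - \Ko) + (e^{-\tau}\delta\theta)\Ko \\
&\quad + e^{-2\tau}(\sric_{n-1}^\sharp - \sric_{n-2}^\sharp) - e^{-2\tau}(d\s_{n-1}\otimes\sn\s_{n-1} - d\s_{n-2}\otimes\sn\s_{n-2}) \\
&\quad - e^{-2\tau}(V\circ\s_{n-1} - V\circ\s_{n-2})I.
\end{align*}
Tracing this relation produces a scalar counterpart for $e^{-\tau}\delta\theta$, so that together they form a closed linear system whose linear coefficient is $(e^{-\tau}\bar\theta_n - 1) = O(\langle\tau\rangle^{N}e^{-2\varepsilon\tau})$ by Lemma~\ref{definition of kn}, playing the role of $A_{\mathrm{rem}}$.

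The forcing is estimated via Corollary~\ref{estimates of differences} with $f(t) := \langle \ln t\rangle^{N_{n-1}}t^{2(n-1)\varepsilon}$; the closeness hypotheses on $\bar\h_{n-1}-\bar\h_{n-2}$, $\bar\Psi_{n-1}-\bar\Psi_{n-2}$, and $\bar\Phi_{n-1}-\bar\Phi_{n-2}$, together with their off-diagonal improvements on $D_\pm$, are precisely what the corollary requires. This yields
\[
e^{-2\tau}|D^m(\sric_{n-1}^\sharp - \sric_{n-2}^\sharp)|_\ho \leq C_{m,n}\langle\ln t\rangle^{m+N_n} t^{2\mathring\varepsilon + 2(n-1)\varepsilon}
\]
and analogous bounds for the scalar-field and potential contributions (with $\mathring\varepsilon$ replaced by $\varepsilon_V$ in the latter). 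Since $\varepsilon = \min\{\mathring\varepsilon,\varepsilon_V\}$, each forcing term is majorized by $C_{m,n}\langle\ln t\rangle^{m+N_n} t^{2n\varepsilon}$. Applying Lemma~\ref{linear ode lemma} componentwise in the frame $\{e_i\}$ then delivers $|D^m(t\bar K_n - t\bar K_{n-1})|_\ho \leq C_{m,n}\langle\ln t\rangle^{m+N_n}t^{2n\varepsilon}$, and the scalar version gives the same bound for $t\,\delta\theta$.

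For the off-diagonal refinements, I would rewrite the equation in integral form as
\[
(e^{-\tau}\delta K)(e_i,\omega^k)(\tau) = -\int_\tau^\infty e^{\tau-s}\bigl[(e^{-s}\bar\theta_n - 1)(e^{-s}\delta K) + (e^{-s}\delta\theta)(e^{-s}\bar K_{n-1} - \Ko) + F\bigr](e_i,\omega^k)(s)\,ds,
\]
observing that the $(e^{-\tau}\delta\theta)\Ko$ contribution vanishes for $i \neq k$ since $\Ko$ is diagonal in $\{e_i\}$. Inserting the sharper pointwise bounds for the Ricci and scalar-field differences in $D_+$ (factor $t^{2(p_i-p_1)}$) and $D_-$ (factor $\min\{1, t^{2(p_i-p_k)}\}$) from Corollary~\ref{estimates of differences}, together with the off-diagonal bounds for $e^{-\tau}\bar K_{n-1} - \Ko$ from Lemma~\ref{definition of kn}, and iterating this identity a finite number of times transfers the improvements into $\delta K$ itself.

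The main obstacle I anticipate is ensuring that the mixed term $(e^{-\tau}\delta\theta)(e^{-\tau}\bar K_{n-1} - \Ko)(e_i,\omega^k)$ does not spoil the off-diagonal rate in $D_-$: the required $\min\{1, t^{2(p_i-p_k)}\}$ factor must come entirely from $e^{-\tau}\bar K_{n-1} - \Ko$, since $\delta\theta$ carries no such improvement. This reduces to the off-diagonal estimate for $\bar K_{n-1}$ already established in Lemma~\ref{definition of kn}, which in turn traces back to $\gamma_{23}^1 \equiv 0$ on $D_-$.
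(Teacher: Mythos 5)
Your difference equation, the use of Corollary~\ref{estimates of differences} to estimate the forcing at rate $t^{2n\varepsilon}$, and the integral bootstrap for the off-diagonal improvements all match the paper's strategy. However, there is a genuine gap in the step where you invoke Lemma~\ref{linear ode lemma}. You claim the ``closed linear system'' for $(e^{-\tau}\delta K, e^{-\tau}\delta\theta)$ has decaying linear coefficient $(e^{-\tau}\bar\theta_n - 1)$, playing the role of $A_{\mathrm{rem}}$. This is not so: the term $(e^{-\tau}\delta\theta)\Ko$, with $\delta\theta = \tr\delta K$ part of the unknown and $\tr\Ko = 1$, contributes to the linear operator the piece $T \mapsto (\tr T)\Ko$ whose norm does not decay. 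Equivalently, tracing your $\delta K$ equation produces
\[
\partial_\tau(e^{-\tau}\delta\theta) = \big[(e^{-\tau}\bar\theta_n - 1) + (e^{-\tau}\bar\theta_{n-1} - 1) + 1\big](e^{-\tau}\delta\theta) + \text{forcing},
\]
and the $+1$ is not covered by the hypothesis $|D^m A_{\mathrm{rem}}|_\ho \leq C_m e^{-\varepsilon\tau}$ of Lemma~\ref{linear ode lemma}. The paper avoids this by a two-stage argument which you must make explicit: first solve the $\delta\theta$ equation alone with the extra normalization $e^{-2\tau}(\bar\theta_n - \bar\theta_{n-1})$ (not $e^{-\tau}$), which absorbs the $+1$ and leaves the decaying coefficient $(e^{-\tau}\bar\theta_n + e^{-\tau}\bar\theta_{n-1} - 2)$ and forcing $e^{-3\tau}(\bar S_{n-1} - \bar S_{n-2}) + \ldots$; this yields $e^{-\tau}|D^m(\bar\theta_n - \bar\theta_{n-1})|_\ho \leq C_{m,n}\langle\tau\rangle^{m+N_n}e^{-2n\varepsilon\tau}$. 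Only then is the mixed term $e^{-\tau}\delta\theta \cdot e^{-\tau}\bar K_n$ (your two middle terms combined) a \emph{known} inhomogeneity with the right decay, and Lemma~\ref{linear ode lemma} applies to $e^{-\tau}\delta K$ with the genuinely decaying coefficient $(e^{-\tau}\bar\theta_{n-1} - 1)$.

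A minor secondary point: your integral representation for the off-diagonal bootstrap carries a weight $e^{\tau-s}$, but after dropping the $(e^{-\tau}\delta\theta)\Ko$ term (which vanishes for $i \neq k$), the remaining coefficient of $e^{-\tau}\delta K$ decays and there is no $+1$ to absorb, so the correct identity is
\[
(e^{-\tau}\delta K)(e_i,\omega^k)(\tau) = -\int_\tau^\infty\big[(e^{-s}\bar\theta_{n-1} - 1)e^{-s}\delta K + e^{-s}\delta\theta \cdot e^{-s}\bar K_n + F\big](e_i,\omega^k)(s)\,ds,
\]
with no exponential factor, matching the formula used at the end of the proof of Lemma~\ref{definition of kn}.
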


\begin{proof}
    From Corollary~\ref{estimates of differences} we conclude that
    \[
    \begin{split}
        t^2| D^m (t\lie_{\p_t})^r ( \sric_{n-1}^{\sharp} - \sric_{n-2}^{\sharp} ) |_{\ho} &\leq C_{m,r,n} \langle \ln t \rangle^{m+2+N_{n-1}} t^{2n\varepsilon},\\
        t^2|D^m (t\lie_{\p_t})^r ( d\s_{n-1} \otimes \sn \s_{n-1} - d\s_{n-2} \otimes \sn \s_{n-2} )|_{\ho} &\leq C_{m,r,n} \langle \ln t \rangle^{m+3+N_{n-1}} t^{2n\varepsilon},\\
        t^2|D^m (t\p_t)^r( V \circ \s_{n-1} - V \circ \s_{n-2} )|_{\ho} &\leq C_{m,r,n} \langle \ln t \rangle^{m + 1+ N_{n-1}} t^{2n\varepsilon},
    \end{split}
    \]
    along with the corresponding off-diagonal improvements. We deal with $\bar \theta_n - \bar \theta_{n-1}$ first. It satisfies the equation
    \[
    \begin{split}
        \partial_{\tau}\big( e^{-2\tau}(\bar \theta_n - \bar \theta_{n-1}) \big) &= (e^{-\tau}\bar \theta_n + e^{-\tau}\bar \theta_{n-1} - 2)e^{-2\tau}(\bar \theta_n - \bar \theta_{n-1}) + e^{-3\tau}(\bar{S}_{n-1} - \bar{S}_{n-2})\\
        &\quad -e^{-3\tau}( |d\s_{n-1}|_{h_{n-1}}^2 - |d\s_{n-2}|_{h_{n-2}}^2 ) - 3e^{-3\tau}( V \circ \s_{n-1} - V \circ \s_{n-2}).
    \end{split}
    \]
    By Lemma~\ref{linear ode lemma}, we see that $e^{-2\tau}(\bar \theta_n - \bar \theta_{n-1})$ is the unique decaying solution, and the decay is given by the inhomogeneous term; hence,
    \[
    e^{-\tau}| D^m(\bar \theta_n - \bar \theta_{n-1}) |_{\ho} \leq C_{m,n} \langle \tau \rangle^{m+N_{n}} e^{-2n\varepsilon\tau},
    \]
    for a suitable integer $N_n$. Now for $\bar K_n - \bar K_{n-1}$. We have
    \[
    \begin{split}
        \lie_{\partial_{\tau}}\big( e^{-\tau}(\bar K_n - \bar K_{n-1}) \big) &= (e^{-\tau}\bar \theta_{n-1} - 1) e^{-\tau}(\bar K_n - \bar K_{n-1}) + e^{-\tau}( \bar \theta_n - \bar \theta_{n-1} ) e^{-\tau}\bar K_n \\
        &\quad + e^{-2\tau}( \sric_{n-1}^{\sharp} - \sric_{n-2}^{\sharp} )\\
        &\quad -e^{-2\tau}(d\s_{n-1} \otimes \sn \s_{n-1} - d\s_{n-2} \otimes \sn \s_{n-2})\\
        &\quad - e^{-2\tau}( V \circ \s_{n-2} - V \circ \s_{n-2} )I.
    \end{split}
    \]
    Thus similarly as above, by Lemma~\ref{linear ode lemma}, we obtain
    \[
    e^{-\tau}| D^m (\bar K_n - \bar K_{n-1}) |_{\ho} \leq C_{m,n} \langle \tau \rangle^{m+N_{n}} e^{-2n\varepsilon\tau}.
    \]
    The off-diagonal improvements follow similarly as in Lemma~\ref{definition of kn}. The estimates for the time derivatives then follow from repeatedly differentiating the evolution equations.
\end{proof}

\subsection{The induced metric} \label{the induced metric}

\begin{lemma} \label{definition of hn}
    Let $(\Sigma,\ho,\Ko,\phio,\psio)$ be initial data on the singularity. Suppose we have a one parameter family of $(1,1)$-tensors $\bar K_n$ on $(0,t_n] \times \Sigma$ such that
    \[
    | D^m (t\lie_{\p_t})^r(t\bar K_n - \Ko) |_{\ho} \leq C_{m,r,n} \langle \ln t \rangle^{m+2} t^{2\varepsilon},
    \]
    and for $i \neq k$, $x \in D_+$ and $y \in D_-$,
    \[
    \begin{split}
        \big|D^m (t\p_t)^r\big( t\bar K_n(e_i,\omega^k) \big)\big|_{\ho}(x) &\leq C_{m,r,n} \langle \ln t \rangle^{m+2} t^{2\varepsilon +2(p_i - p_1)(x)},\\
        \big|D^m (t\p_t)^r\big( t\bar K_n(e_i,\omega^k) \big)\big|_{\ho}(y) &\leq C_{m,r,n} \langle \ln t \rangle^{m+2} t^{2\varepsilon} \min\{ 1, t^{2(p_i - p_k)(y)} \}.
    \end{split}
    \]
    Then there exists a unique $h_n$ on $(0,t_n] \times \Sigma$ solving
    \begin{equation} \label{approximate metric equation}
        \lie_{\partial_t} h_n(X,Y) = h_n(\bar K_n(X),Y) + h_n(X,\bar K_n(Y)),
    \end{equation}
    for $X, Y \in \mathfrak X(\Sigma)$, such that $h_n$ is symmetric and $\bar\h_n = h_n(t^{-\Ko}(\,\cdot\,),t^{-\Ko}(\,\cdot\,))$ satisfies
    \[
    |D^m (t\lie_{\p_t})^r(\bar\h_n - \ho)|_{\ho} \leq C_{m,r,n} \langle \ln t \rangle^{m+2} t^{2\varepsilon}.
    \]
    Moreover, for $i \neq k$, $x \in D_+$ and $y \in D_-$,
    \[
    \begin{split}
        \big|D^m (t\p_t)^r \big(\bar\h_n(e_i,e_k) \big)\big|_{\ho}(x) &\leq C_{m,r,n} \langle \ln t  \rangle^{m+2} t^{2\varepsilon + (p_i + p_k - 2p_1)(x)},\\
        \big|D^m (t\p_t)^r \big(\bar\h_n(e_i,e_k) \big)\big|_{\ho}(y) &\leq C_{m,r,n} \langle \ln t  \rangle^{m+2} t^{2\varepsilon + |p_i - p_k|(y)}.
    \end{split}
    \]
\end{lemma}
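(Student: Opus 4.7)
The plan is to reformulate \eqref{approximate metric equation} as an ODE for $G := \bar\h_n - \ho$ and solve it by a fixed-point argument in the spirit of Lemmas~\ref{nonlinear ode lemma}--\ref{linear ode lemma}. Setting $A(t) := t^{-\Ko}$ and exploiting $[A,\Ko]=0$ together with $\lie_{\p_t} A = -(\Ko/t)A$, a direct computation from \eqref{approximate metric equation} yields
\[
\lie_{\p_t}\bar\h_n(X,Y) = \bar\h_n(J_n X, Y) + \bar\h_n(X, J_n Y),
\]
where $J_n$ is the $(1,1)$-tensor whose components in the frame $\{e_i\}$ are $J_n(e_i,\omega^\ell) = t^{p_\ell - p_i - 1}(t\bar K_n - \Ko)(e_i,\omega^\ell)$. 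The assumed estimates on $\bar K_n$ translate to $|D^m J_n|_\ho \leq C_{m,n}\langle\ln t\rangle^{m+2} t^{-1+2\varepsilon}$, with off-diagonal improvements $|D^m J_n(e_i,\omega^k)|_\ho(x) \leq C_{m,n}\langle\ln t\rangle^{m+2} t^{-1+2\varepsilon+(p_i+p_k-2p_1)(x)}$ on $D_+$ and $|D^m J_n(e_i,\omega^k)|_\ho(y) \leq C_{m,n}\langle\ln t\rangle^{m+2} t^{-1+2\varepsilon+|p_i-p_k|(y)}$ on $D_-$ for $i\neq k$.

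Writing the equation for $G$ as $\lie_{\p_t} G = F + L_t(G)$ with $F := \ho(J_n\,\cdot\,,\cdot) + \ho(\cdot, J_n\,\cdot\,)$ and $L_t(G) := G(J_n\,\cdot\,,\cdot) + G(\cdot, J_n\,\cdot\,)$, I would recast it as the Volterra integral equation $G(t) = \int_0^t [F(s) + L_s(G(s))]ds$, imposing $G\to 0$ as $t\to 0$. A contraction argument in the Banach space of $C^m$ symmetric $(0,2)$-tensor fields equipped with the norm $\|G\|_{m,\varepsilon} := \sup_{0<t\leq t_n}\langle\ln t\rangle^{-(m+2)} t^{-2\varepsilon}\|G\|_{C^m}$ works: the inequality $\int_0^t s^{-1+2\varepsilon}\langle\ln s\rangle^{m+2}ds \leq C t^{2\varepsilon}\langle\ln t\rangle^{m+2}$ places $\int_0^\cdot F$ in the space with bounded norm, and $G \mapsto \int_0^\cdot L_s(G(s))ds$ becomes contractive after shrinking $t_n$. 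This yields a unique smooth solution $G$ on $(0,t_n]\times\Sigma$; symmetry of $h_n$ follows from the observation that $F$ and $L_t$ preserve symmetric tensors, hence by uniqueness $G = G^T$. Spatial smoothness is obtained as at the end of the proof of Lemma~\ref{nonlinear ode lemma}, and the bound on $t|D^m\lie_{\p_t}\bar\h_n|_\ho$ is read off directly from the equation $\lie_{\p_t}\bar\h_n = F + L_t(G)$.

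The off-diagonal improvements are then obtained by a bootstrap at the pointwise level. For $i\neq k$, evaluating the integral equation at $x\in\Sigma$ gives $G(e_i,e_k)(t,x) = \int_0^t [F(e_i,e_k) + L_s(G)(e_i,e_k)](s,x)\,ds$. The forcing $F(e_i,e_k) = J_n(e_i,\omega^k) + J_n(e_k,\omega^i)$ already enjoys the required off-diagonal decay by the estimates on $J_n$. For the coupling $L_s(G)(e_i,e_k) = \sum_a [G(e_a,e_k)J_n(e_i,\omega^a) + G(e_i,e_a)J_n(e_k,\omega^a)]$, I would split according to whether $a\in\{i,k\}$: when $a\in\{i,k\}$, the diagonal decay of $G$ combined with the off-diagonal decay of $J_n$ gives a contribution of order $t^{-1+4\varepsilon+(p_i+p_k-2p_1)(x)}$ on $D_+$ (respectively $t^{-1+4\varepsilon+|p_i-p_k|(y)}$ on $D_-$); when $a\notin\{i,k\}$, both factors are off-diagonal and the triangle-type inequality $(p_a-p_1)+(p_a+p_i-2p_1)\geq p_i+p_k-2p_1$ on $D_+$ (and $|p_a-p_k|+|p_i-p_a|\geq |p_i-p_k|$ on $D_-$) shows the product is no worse than the forcing. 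Integrating yields the desired pointwise bound on $G(e_i,e_k)$, and the time derivative bound follows from $\lie_{\p_t}\bar\h_n = F + L_t(G)$ once we have these estimates for $G$. The main technical obstacle is precisely this bootstrap: the off-diagonal improvement on $\bar\h_n$ is strictly stronger than the diagonal one and emerges only after carefully pairing the off-diagonal decay of $J_n$ with the diagonal decay of $G$ in each term of the coupling; the integrability of the resulting $t^{-1+2\varepsilon+\cdots}$ exponents is what makes the argument close.
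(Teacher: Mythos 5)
Your overall approach matches the paper's exactly: pass to $\tau=-\ln t$ (or equivalently work directly in $t$ as you do), derive a linear ODE for $G=\bar\h_n-\ho$, solve it by a contraction-mapping/Volterra argument (the paper invokes its Lemma~\ref{linear ode lemma}, which is the same fixed-point argument packaged abstractly), get symmetry by uniqueness, and then bootstrap the off-diagonal improvements by successively integrating the componentwise equations and improving. This is precisely what the paper does. However, two details in your bootstrap step need to be fixed before the argument closes.

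(1) Your description of the coupling terms with $a\in\{i,k\}$ is incomplete. For $a=k$ in the first sum you get $J_n(e_i,\omega^k)G(e_k,e_k)$ — off-diagonal $J_n$ paired with a (near-unity) diagonal $G$ — and that does produce the bound you quote. But for $a=i$ in the first sum (and $a=k$ in the second) you get $J_n(e_i,\omega^i)G(e_i,e_k)$ — \emph{diagonal} $J_n$ paired with the very off-diagonal component $G(e_i,e_k)$ you are trying to estimate — and you never address it. This self-coupling term is exactly where iteration is unavoidable: since $J_n(e_i,\omega^i)=O(t^{-1+2\varepsilon})$, each pass through the integral equation improves the bound on $G(e_i,e_k)$ by only a factor $t^{2\varepsilon}$, and one must iterate until hitting the ceiling set by the forcing and the near-forcing terms. (2) Your $D_+$ triangle inequality $(p_a-p_1)+(p_a+p_i-2p_1)\geq p_i+p_k-2p_1$ reduces to $2p_a\geq p_1+p_k$, which is false for small $p_a$, e.g.\ when $(i,k,a)=(2,3,1)$. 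The inequality you actually need for the term $J_n(e_i,\omega^a)G(e_a,e_k)$ is $(p_i+p_a-2p_1)+(p_a+p_k-2p_1)\geq p_i+p_k-2p_1$, i.e.\ $p_a\geq p_1$, which does always hold; and since the off-diagonal bound on $G(e_a,e_k)$ entering this product is itself only available after some iterations, the cross-coupling is also part of the bootstrap, not a one-shot estimate. Once you spell out the iteration (as the paper does, somewhat informally) and correct the $D_+$ inequality, the argument goes through.
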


\begin{proof}
    By rewriting the equation for $h_n$ in terms of $\bar\h_n$ and $\tau = -\ln t$, we obtain
    \begin{equation*}
    \begin{split}
        \lie_{\partial_{\tau}} \bar\h_n(X,Y) &= -\bar\h_n( e^{-\tau \Ko} \circ ( e^{-\tau}\bar K_n - \Ko ) \circ e^{\tau\Ko}(X), Y )\\
        &\phantom{=} \hspace{3cm} - \bar\h_n(X, e^{-\tau \Ko} \circ ( e^{-\tau}\bar K_n - \Ko ) \circ e^{\tau\Ko}(Y)).
    \end{split} 
    \end{equation*}
    So in terms of the frame $\{e_i\}$,
    \begin{equation} \label{hn equation normalized}
    \begin{split}
        \lie_{\partial_{\tau}}(\bar\h_n - \ho)(e_i,e_k) &= -\sum_\ell e^{(p_i - p_\ell)\tau}( e^{-\tau}\bar K_n - \Ko )(e_i,\omega^\ell)(\bar\h_n - \ho)(e_\ell,e_k)\\
        &\quad -\sum_\ell e^{(p_k - p_\ell)\tau}(e^{-\tau}\bar K_n - \Ko)(e_k,\omega^\ell)(\bar\h_n - \ho)(e_i,e_\ell)\\
        &\quad -e^{(p_i - p_k)\tau}(e^{-\tau} \bar K_n - \Ko)(e_i,\omega^k) - e^{(p_k - p_i)\tau}(e^{-\tau}\bar K_n - \Ko)(e_k,\omega^i).
    \end{split}
    \end{equation}
    Thus, we can use Lemma~\ref{linear ode lemma} to define $\bar\h_n$ as the unique solution to the system such that
    \[
    |D^m(\bar\h_n - \ho)|_{\ho} \leq C_{m,n} \langle \tau \rangle^{m+2} e^{-2\varepsilon\tau}.
    \]
    To verify that $\bar\h_n$ is symmetric, note that
    \[
    \begin{split}
        \partial_{\tau}\big( \bar\h_n(e_i,e_k) - \bar\h_n(e_k,e_i) \big) &= \sum_\ell e^{(p_i - p_\ell)\tau}(e^{-\tau}\bar K_n - \Ko)(e_i,\omega^\ell)\big( \bar\h_n(e_k,e_\ell) - \bar\h_n(e_\ell,e_k) \big)\\
        &\quad + \sum_\ell e^{(p_k - p_\ell)\tau}(e^{-\tau}\bar K_n - \Ko)(e_k,\omega^\ell)\big( \bar\h_n(e_\ell,e_i) - \bar\h_n(e_i,e_\ell) \big).
    \end{split}
    \]
    That is, the antisymmetric part of $\bar\h_n$ satisfies an equation as in Lemma~\ref{linear ode lemma} with $F = 0$, so it must vanish. We conclude that $\bar\h_n$ is symmetric. Now we obtain the improved estimates for the off-diagonal components of $\bar\h_n$. For $i$, $k$ and $\ell$ distinct (no summation over any of them),
    \[
    \begin{split}
        \lie_{\partial_{\tau}} \bar\h_n(e_i,e_k) &= \big( (\Ko - e^{-\tau}\bar K_n)(e_i,\omega^i) + (\Ko - e^{-\tau}\bar K_n)(e_k,\omega^k) \big)\bar\h_n(e_i,e_k)\\
        &\quad - e^{(p_i - p_k)\tau}e^{-\tau}\bar K_n(e_i,\omega^k)\bar\h_n(e_k,e_k) - e^{(p_k - p_i)\tau}e^{-\tau}\bar K_n(e_k,\omega^i)\bar\h_n(e_i,e_i)\\
        &\quad - e^{(p_i - p_{\ell})\tau}e^{-\tau}\bar K_n(e_i,\omega^{\ell})\bar\h_n(e_{\ell},e_k) - e^{(p_k - p_{\ell})\tau}e^{-\tau}\bar K_n(e_k,\omega^{\ell})\bar\h_n(e_i,e_{\ell}).
    \end{split}
    \]
    We make two observations regarding the equation above. First, the terms involving $\bar\h_n(e_i,e_i)$ and $\bar\h_n(e_k,e_k)$, present the decay that we want for $\bar\h_n(e_i,e_k)$. Second, if we already knew the desired estimates to hold for $\bar\h_n(e_\ell,e_k)$ or $\bar\h_n(e_i,e_\ell)$, then the corresponding terms in the equation present better decay than what we want for $\bar\h_n(e_i,e_k)$. Keeping this in mind, similarly as in Lemma~\ref{definition of kn}, we can integrate these individual equations from $\tau$ to $\infty$ to start successively improving on the estimates for the $\bar\h_n(e_i,e_k)$. Note that, given a decay estimate for the $\bar\h_n(e_i,e_k)$, all the terms on the right-hand side of the equation decay faster than the given estimate, except for the ones involving $\bar\h_n(e_i,e_i)$ and $\bar\h_n(e_k,e_k)$. This means that we can continue iterating the improvement process until we achieve decay as in these terms for some $\bar\h_n(e_i,e_k)$. At that point, by our first observation, we are done with the improvements for that particular $\bar\h_n(e_i,e_k)$. Moreover, by our second observation, we can continue the improvement process for the remaining components of $\bar\h_n$ until we achieve the desired result. 
    % Therefore, we obtain
    % %
    % \[
    % \begin{split}
    %     |D^m(\bar\h_n(e_i,e_k))|_{\ho}(x) &\leq C_{m,n} \langle \tau \rangle^{m+2} e^{-(2\varepsilon + (p_i + p_k - 2p_1)(x))\tau},\\
    %     |D^m(\bar\h_n(e_i,e_k))|_{\ho}(y) &\leq C_{m,n} \langle \tau \rangle^{m+2} e^{-(2\varepsilon + |p_i - p_k|(y))\tau}.
    % \end{split}
    % \]
    % %
    Finally, the estimates for the time derivatives can be deduced directly from Equation~\eqref{hn equation normalized}.
\end{proof}

\begin{lemma} \label{hn become closer}
    Let $(\Sigma,\ho,\Ko,\phio,\psio)$ be initial data on the singularity. Suppose we have two one parameter families of $(1,1)$-tensors $\bar K_n$ and $\bar K_{n-1}$ satisfying the assumptions of Lemma~\ref{definition of hn} on $(0,t_n] \times \Sigma$. Moreover, assume that
    \[
    |D^m (t\lie_{\p_t})^r(t\bar K_n - t\bar K_{n-1})|_{\ho} \leq C_{m,r,n} \langle \ln t \rangle^{m+N_{n}} t^{2n\varepsilon},
    \]
    and for $i \neq k$, $x \in D_+$ and $y \in D_-$,
    \[
    \begin{split}
        \big|D^m (t\p_t)^r\big( (t\bar K_n - t\bar K_{n-1})(e_i,\omega^k) \big)|_{\ho}(x) &\leq C_{m,r,n} \langle \ln t \rangle^{m+N_{n}} t^{2n\varepsilon +2(p_i - p_1)(x)},\\
        \big|D^m (t\p_t)^r\big( (t\bar K_n - t\bar K_{n-1})(e_i,\omega^k) \big)|_{\ho}(y) &\leq C_{m,r,n} \langle \ln t \rangle^{m+N_{n}} t^{2n\varepsilon} \min\{ 1, t^{2(p_i - p_k)(y)} \}.
    \end{split}
    \]
    Then, if $h_n$ and $h_{n-1}$ are defined as in Lemma~\ref{definition of hn},
    \[
    | D^m (t\lie_{\partial_t})^r(\bar\h_n - \bar\h_{n-1}) |_{\ho} \leq C_{m,r,n} \langle \ln t \rangle^{m+N_{n}} t^{2n\varepsilon},
    \]
    and for $i \neq k$, $x \in D_+$ and $y \in D_-$,
    \[
    \begin{split}
        \big|D^m (t\p_t)^r \big((\bar\h_n - \bar\h_{n-1})(e_i,e_k) \big)\big|_{\ho}(x) &\leq C_{m,r,n} \langle \ln t  \rangle^{m+N_{n}} t^{2n\varepsilon + (p_i + p_k - 2p_1)(x)},\\
        \big|D^m (t\p_t)^r \big((\bar\h_n - \bar\h_{n-1})(e_i,e_k) \big)\big|_{\ho}(y) &\leq C_{m,r,n} \langle \ln t  \rangle^{m+N_{n}} t^{2n\varepsilon + |p_i - p_k|(y)}.
    \end{split}
    \]
\end{lemma}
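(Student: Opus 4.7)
\textbf{Proof plan for Lemma~\ref{hn become closer}.}

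The plan is to subtract the two instances of the normalized evolution equation \eqref{hn equation normalized} satisfied by $\bar\h_n$ and $\bar\h_{n-1}$ and view the result as an inhomogeneous linear ODE for the difference, then apply Lemma~\ref{linear ode lemma} exactly as in the proof of Lemma~\ref{definition of hn}. Concretely, writing $A_m := e^{-\tau \Ko}\circ(e^{-\tau}\bar K_m - \Ko)\circ e^{\tau\Ko}$ for $m = n,n-1$, subtraction yields
\[
\lie_{\p_\tau}(\bar\h_n - \bar\h_{n-1})(X,Y) = -(\bar\h_n - \bar\h_{n-1})(A_{n-1}X,Y) - (\bar\h_n - \bar\h_{n-1})(X,A_{n-1}Y) + F(X,Y),
\]
where the source term is
\[
F(X,Y) := -\bar\h_n((A_n - A_{n-1})X,Y) - \bar\h_n(X,(A_n - A_{n-1})Y).
\]
In the frame $\{e_i\}$, the operator on the left has coefficients of the form $e^{-\tau}\bar K_{n-1} - \Ko$ (with possibly bad exponential weights $e^{(p_i-p_\ell)\tau}$, but the components that matter are the ones that decay thanks to the estimates in Lemma~\ref{definition of kn}), whereas $F(e_i,e_k)$ is a linear combination of $e^{(p_i-p_\ell)\tau}(e^{-\tau}(\bar K_n - \bar K_{n-1}))(e_i,\omega^\ell)\bar\h_n(e_\ell,e_k)$ and the symmetric counterpart. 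The hypothesis on $t(\bar K_n - \bar K_{n-1})$ and the already-established estimates for $\bar\h_n$ in Lemma~\ref{definition of hn} then give
\[
|D^m F(e_i,e_k)|_\ho \leq C_{m,n}\langle\tau\rangle^{m+N_n}e^{-2n\varepsilon\tau}
\]
for appropriate $N_n$, so Lemma~\ref{linear ode lemma} produces the desired $\bar\h_n - \bar\h_{n-1}$ as the unique decaying solution and yields the first (isotropic) estimate claimed.

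For the refined off-diagonal estimates on $D_\pm$, I would run the same iterative bootstrap used to finish the proof of Lemma~\ref{definition of hn}. Looking at the scalar equation for $(\bar\h_n - \bar\h_{n-1})(e_i,e_k)$ with $i\neq k$, the right-hand side splits into: (a) diagonal self-feedback $[(\Ko - e^{-\tau}\bar K_{n-1})(e_i,\omega^i) + (\Ko - e^{-\tau}\bar K_{n-1})(e_k,\omega^k)](\bar\h_n - \bar\h_{n-1})(e_i,e_k)$, which already has the extra $e^{-2\varepsilon\tau}$ gain; (b) couplings to other off-diagonal components $(\bar\h_n - \bar\h_{n-1})(e_\bullet,e_\bullet)$ with coefficients that carry the desired off-diagonal exponential weights; and (c) source terms from $F$ whose dominant pieces are $e^{(p_i-p_k)\tau}e^{-\tau}(\bar K_n - \bar K_{n-1})(e_i,\omega^k)\bar\h_n(e_k,e_k)$ and $e^{(p_k-p_i)\tau}e^{-\tau}(\bar K_n - \bar K_{n-1})(e_k,\omega^i)\bar\h_n(e_i,e_i)$. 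The hypotheses on $t(\bar K_n - \bar K_{n-1})$ furnish these two source pieces with precisely the exponents $e^{-(2n\varepsilon + (p_i+p_k-2p_1)(x))\tau}$ on $D_+$ and $e^{-(2n\varepsilon + |p_i-p_k|(y))\tau}$ on $D_-$. Integrating from $\tau$ to $\infty$ then gives a first crude improvement; iterating finitely many times (using Lemma~\ref{linear ode lemma} at each pass with the updated decay of the source) upgrades all the off-diagonal $(\bar\h_n - \bar\h_{n-1})(e_i,e_k)$ to the claimed rates. The only novelty compared to Lemma~\ref{definition of hn} is that the source now carries a $t^{2n\varepsilon}$ factor rather than a $t^{2\varepsilon}$ factor; this factor just rides along through the iteration.

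Finally, the time-derivative estimates follow by simply plugging the spatial estimates into the equation satisfied by $\bar\h_n - \bar\h_{n-1}$, since $\lie_{\p_\tau}$ of the difference is given explicitly on the right-hand side in terms of quantities we already control (converting back to $\lie_{\p_t}$ costs the advertised factor of $t^{-1}$). I expect the main obstacle to be bookkeeping the exponents in the off-diagonal bootstrap: one has to check, case by case, that each coupling term in (b) above has a coefficient whose exponent dominates the target exponent for the component being improved, so that the iteration terminates at the claimed rate without losing to a worse off-diagonal component. Once that is verified exactly as in Lemma~\ref{definition of hn}, no further difficulty arises.
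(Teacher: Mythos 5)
Your proposal is correct and follows essentially the same route as the paper: subtract the two instances of the normalized evolution equation for $\bar\h_n$ and $\bar\h_{n-1}$, recognize the result as a linear ODE for the difference with an inhomogeneous source controlled by $t(\bar K_n - \bar K_{n-1})$, apply Lemma~\ref{linear ode lemma}, and then run the same off-diagonal bootstrap as in Lemma~\ref{definition of hn}. The paper performs the split with $\bar K_n$ in the linear coefficient and $\bar\h_{n-1}$ in the source, while you choose $\bar K_{n-1}$ and $\bar\h_n$; this is immaterial since both $\bar K_n,\bar K_{n-1}$ and both $\bar\h_n,\bar\h_{n-1}$ satisfy the same estimates, so the argument is the same.
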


\begin{proof}
    From Equation~\eqref{hn equation normalized},
    \[
    \begin{split}
        \lie_{\partial_{\tau}}(\bar\h_n - \bar\h_{n-1})(e_i,e_k) &= \sum_\ell e^{(p_i-p_{\ell})\tau}(\Ko - e^{-\tau}\bar K_n)(e_i,\omega^\ell)(\bar\h_n - \bar\h_{n-1})(e_\ell,e_k)\\
        &\quad + \sum_\ell e^{(p_k - p_\ell)\tau}(\Ko - e^{-\tau}\bar K_n)(e_k,\omega^\ell)(\bar\h_n - \bar\h_{n-1})(e_i,e_\ell)\\
        &\quad -\sum_\ell e^{(p_i - p_\ell)\tau}(e^{-\tau} \bar K_n - e^{-\tau}\bar K_{n-1})(e_i,\omega^\ell) \bar\h_{n-1}(e_\ell,e_k)\\
        &\quad -\sum_\ell e^{(p_k - p_\ell)\tau}(e^{-\tau}\bar K_n - e^{-\tau}\bar K_{n-1})(e_k,\omega^\ell)\bar\h_{n-1}(e_i,e_\ell).
    \end{split}
    \]
    So, by Lemma~\ref{linear ode lemma}, $\bar\h_n - \bar\h_{n-1}$ has to satisfy
    \[
    |D^m(\bar\h_n - \bar\h_{n-1})|_{\ho} \leq C_{m,n} \langle \tau \rangle^{m+N_{n}} e^{-2n\varepsilon\tau}.
    \]
    The improved estimates for the off-diagonal components are obtained similarly as in the proof of Lemma~\ref{definition of hn}. Then the estimates for the time derivatives follow directly by differentiating the equation above.
\end{proof}

\subsection{The scalar field} \label{the scalar field}

\begin{lemma} \label{definition of phin}
    Let $(\Sigma,\ho,\Ko,\phio,\psio)$ be initial data on the singularity and let $V$ be an admissible potential. Suppose we have a metric $g_{n} = -dt \otimes dt + h_{n}$, and functions $\bar \theta_n$ and $\s_{n-1}$ on $(0,t_n] \times \Sigma$, such that $h_n$ and $\s_{n-1}$ satisfy the assumptions of Lemma~\ref{decay of ricci}, and $\bar\theta_n$ satisfies the estimates
    \[
    |D^m (t\p_t)^r(t\bar \theta_n - 1)|_{\ho} \leq C_{m,r,n} \langle \ln t \rangle^{m+2} t^{2\varepsilon}.
    \]
    Then there is a unique $\s_n$ on $(0,t_n] \times \Sigma$ solving
    \begin{equation} \label{approximate scalar field equation}
        -\p_t^2 \s_n + \Delta_{h_{n}} \s_{n-1} - \bar \theta_n \p_t \s_n = V' \circ \s_{n-1},
    \end{equation}
    such that $\bar\Psi_n = t\p_t \s_n$ and $\bar\Phi_n = \s_n - \bar\Psi_n \ln t$ satisfy
    \[
    \begin{split}
        |D^m (t\p_t)^r(\bar\Psi_{n} - \psio)|_{\ho} &\leq C_{m,r,n} \langle \ln t \rangle^{m+2} t^{2\varepsilon},\\
        |D^m ( \bar\Phi_{n} - \phio )|_{\ho} &\leq C_{m,n} \langle \ln t \rangle^{m+3} t^{2\varepsilon}.
    \end{split}
    \]
\end{lemma}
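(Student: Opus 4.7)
The plan is to reformulate \eqref{approximate scalar field equation} as a first-order linear ODE for $\bar\Psi_n - \psio$, apply Lemma~\ref{linear ode lemma} to produce $\bar\Psi_n$, and then recover $\bar\Phi_n$ (and hence $\s_n$) by integration from $\tau = \infty$, where $\tau := -\ln t$. The reduction works because the right-hand side of \eqref{approximate scalar field equation} involves only $\s_{n-1}$ (not $\s_n$), so the equation is genuinely linear in $\p_t\s_n$.

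Substituting $\p_t\s_n = \bar\Psi_n/t$ and $\p_t^2\s_n = (\p_t\bar\Psi_n)/t - \bar\Psi_n/t^2$ into \eqref{approximate scalar field equation}, multiplying by $-t$, and passing to the $\tau$ variable, I obtain the equivalent equation
\[
\p_\tau(\bar\Psi_n - \psio) = (t\bar\theta_n - 1)(\bar\Psi_n - \psio) + F,
\]
where $F := (t\bar\theta_n - 1)\psio - t^2\Delta_{h_n}\s_{n-1} + t^2\, V'\!\circ\s_{n-1}$. By the hypothesis on $\bar\theta_n$, we have $|D^m(t\bar\theta_n - 1)|_{\ho} \leq C_{m,n}\langle\tau\rangle^{m+2}e^{-2\varepsilon\tau}$, and Lemma~\ref{decay of ricci} applied to $h_n$ and $\s_{n-1}$ gives the same bound (with possibly larger $C_{m,n}$) for each term of $F$, using $\mathring\varepsilon \geq \varepsilon$ and $\varepsilon_V \geq \varepsilon$. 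Lemma~\ref{linear ode lemma} then supplies a unique smooth $\bar\Psi_n - \psio$ with the required decay, and the bound on $t\p_t(\bar\Psi_n - \psio) = -\p_\tau(\bar\Psi_n - \psio)$ is read off the ODE itself.

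To recover $\s_n$, observe that any solution of \eqref{approximate scalar field equation} with $\bar\Psi_n = t\p_t\s_n$ and $\bar\Phi_n = \s_n - \bar\Psi_n\ln t$ satisfies $\p_\tau\bar\Phi_n = \tau\,\p_\tau\bar\Psi_n$ (differentiate the second relation and use $\p_\tau\s_n = -\bar\Psi_n$). I therefore \emph{define}
\[
\bar\Phi_n(\tau,x) := \phio(x) - \int_\tau^\infty s\,\p_s(\bar\Psi_n - \psio)(s,x)\,ds,
\]
which converges absolutely thanks to the estimate on $\p_\tau\bar\Psi_n$ already established, and yields $|D^m(\bar\Phi_n - \phio)|_{\ho} \leq C_{m,n}\langle\tau\rangle^{m+3}e^{-2\varepsilon\tau}$; the extra factor of $\langle\ln t\rangle$ comes from the factor $s$ inside the integrand. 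Setting $\s_n := \bar\Phi_n + \bar\Psi_n\ln t$, a direct computation using $\p_\tau\bar\Phi_n = \tau\,\p_\tau\bar\Psi_n$ gives $t\p_t\s_n = \bar\Psi_n$, so $\bar\Psi_n$ and $\bar\Phi_n$ are indeed the quantities defined in the statement. Since the first-order ODE is equivalent to \eqref{approximate scalar field equation}, this $\s_n$ solves the equation.

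For uniqueness, any competitor $\tilde\s_n$ satisfying the stated estimates has $\tilde{\bar\Psi}_n - \psio$ decaying, so it coincides with $\bar\Psi_n$ by the uniqueness statement in Lemma~\ref{linear ode lemma}; the relation $\p_\tau\tilde{\bar\Phi}_n = \tau\p_\tau\bar\Psi_n$ together with $\tilde{\bar\Phi}_n \to \phio$ as $\tau\to\infty$ then forces $\tilde{\bar\Phi}_n = \bar\Phi_n$. There is no real obstacle here: Lemma~\ref{linear ode lemma} handles only one first-order unknown, and the two asymptotic conditions $\bar\Psi_n \to \psio$, $\bar\Phi_n \to \phio$ at $\tau = \infty$ are exactly what is needed to pin down the (otherwise second-order) problem uniquely.
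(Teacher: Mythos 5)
Your proof is correct and follows essentially the same route as the paper's: pass to $\tau=-\ln t$, reformulate \eqref{approximate scalar field equation} as a first-order linear ODE for $\bar\Psi_n-\psio$, bound the inhomogeneity via Lemma~\ref{decay of ricci}, invoke Lemma~\ref{linear ode lemma}, and then define $\bar\Phi_n$ by integrating $\tau\,\p_\tau\bar\Psi_n$ in from $\tau=\infty$. The additional remarks you include on uniqueness (that the two asymptotic anchoring conditions at $\tau=\infty$ pin down the otherwise second-order problem) and the explicit verification that $t\p_t\s_n=\bar\Psi_n$ are a welcome bit of extra care, but they do not change the substance of the argument.
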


\begin{proof}
    We can rewrite \eqref{approximate scalar field equation} in terms of $\bar\Psi_n$ and $\tau = -\ln t$ to obtain
    \begin{equation} \label{phin equation normalized}
        \p_{\tau} (\bar\Psi_n - \psio ) = (e^{-\tau}\bar \theta_n - 1)( \bar\Psi_n - \psio ) + (e^{-\tau} \bar \theta_n - 1 )\psio - e^{-2\tau} \Delta_{h_{n}} \s_{n-1} + e^{-2\tau} V' \circ \s_{n-1}.
    \end{equation}
    Note that by Lemma~\ref{decay of ricci},
    \[
    t^2 |D^m (t\p_t)^r ( V' \circ \s_{n-1} - \Delta_{h_{n}} \s_{n-1} )|_{\ho} \leq C_{m,r,n}\langle \ln t \rangle^{m+2} t^{2\varepsilon}.
    \]
    By Lemma~\ref{linear ode lemma}, we can thus define $\bar\Psi_n$ as the unique solution of \eqref{phin equation normalized} such that
    \[
    |D^m(\bar\Psi_n - \psio)|_{\ho} \leq C_{m,n} \langle \tau \rangle^{m+2} e^{-2\varepsilon\tau}.
    \]
    Now define
    \[
    \bar\Phi_n := \phio - \int_{\tau}^{\infty} s\p_s \bar\Psi_n(s)ds. 
    \]
    Then
    \[
    |D^m(\bar\Phi_n - \phio)|_{\ho} \leq C_{m,n} \langle \tau \rangle^{m+3} e^{-2\varepsilon\tau},
    \]
    and we can define $\s_n := \bar\Psi_n \ln t + \bar\Phi_n$. Finally, the estimates for the time derivatives follow directly from \eqref{phin equation normalized}.
\end{proof}

\begin{lemma} \label{phin become closer}
    Let $(\Sigma,\ho,\Ko,\phio,\psio)$ be initial data on the singularity and let $V$ be an admissible potential. Suppose we have metrics $g_n = -dt \otimes dt + h_n$ and $g_{n-1} = -dt \otimes dt + h_{n-1}$, and functions $\bar \theta_n$, $\bar \theta_{n-1}$, $\s_{n-1}$ and $\s_{n-2}$ on $(0,t_n] \times \Sigma$, such that $h_n$, $h_{n-1}$, $\s_{n-1}$ and $\s_{n-2}$ satisfy the assumptions of Lemma~\ref{decay of ricci}, and $\bar\theta_n$ and $\bar\theta_{n-1}$ satisfy the estimates
    \[
    |D^m (t\p_t)^r(t\bar \theta_a - 1)|_{\ho} \leq C_{m,r,n} \langle \ln t \rangle^{m+2} t^{2\varepsilon}, 
    \]
    for $a = n-1,n$. Moreover, assume that
    \[
    \begin{split}
        |D^m (t\lie_{\p_t})^r( \bar\h_{n} - \bar\h_{n-1} ) |_{\ho} &\leq C_{m,r,n} \langle \ln t \rangle^{m+N_{n}} t^{2n\varepsilon},\\
        |D^m (t\p_t)^r(\bar\Psi_{n-1} - \bar\Psi_{n-2})|_{\ho} &\leq C_{m,r,n-1} \langle \ln t \rangle^{m+N_{n-1}} t^{2(n-1)\varepsilon},\\
        |D^m (\bar\Phi_{n-1} - \bar\Phi_{n-2})|_{\ho} &\leq C_{m,n-1} \langle \ln t \rangle^{m+N_{n-1}} t^{2(n-1)\varepsilon},\\
        |D^m (t\p_t)^r(t\bar \theta_n - t\bar \theta_{n-1}) |_{\ho} &\leq C_{m,r,n} \langle \ln t \rangle^{m+N_{n}} t^{2n\varepsilon},
    \end{split}
    \]
    and for $i \neq k$, $x \in D_+$ and $y \in D_-$,
    \[
    \begin{split}
        \big|D^m (t\p_t)^r\big( (\bar\h_n - \bar\h_{n-1})(e_i,e_k) \big)\big|_{\ho}(x) &\leq C_{m,r,n} \langle \ln t  \rangle^{m+N_{n}} t^{2n\varepsilon + (p_i + p_k - 2p_1)(x)},\\
        \big|D^m (t\p_t)^r\big( (\bar\h_n - \bar\h_{n-1})(e_i,e_k) \big)\big|_{\ho}(y) &\leq C_{m,r,n} \langle \ln t  \rangle^{m+N_{n}} t^{2n\varepsilon + |p_i - p_k|(y)}.
    \end{split}
    \]
    Then, if $\s_n$ and $\s_{n-1}$ are defined as in Lemma~\ref{definition of phin},
    \[
    \begin{split}
        |D^m (t\p_t)^r(\bar\Psi_n - \bar\Psi_{n-1})|_{\ho} &\leq C_{m,r,n} \langle \ln t \rangle^{m+N_n} t^{2n\varepsilon},\\
        |D^m (\bar\Phi_n - \bar\Phi_{n-1})|_{\ho} &\leq C_{m,n} \langle \ln t \rangle^{m+N_n} t^{2n\varepsilon}.
    \end{split}
    \]
\end{lemma}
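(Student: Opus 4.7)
The plan is to mirror the strategy of Lemmas~\ref{kn become closer} and \ref{hn become closer}: subtract the ODE~\eqref{phin equation normalized} written for $\bar\Psi_n$ and for $\bar\Psi_{n-1}$, identify the result as an equation of the form treated by Lemma~\ref{linear ode lemma} in the variable $\tau = -\ln t$, and conclude from its uniqueness clause. The difference equation takes the form
\[
\p_\tau(\bar\Psi_n - \bar\Psi_{n-1}) = (e^{-\tau}\bar\theta_n - 1)(\bar\Psi_n - \bar\Psi_{n-1}) + F,
\]
with
\[
F = e^{-\tau}(\bar\theta_n - \bar\theta_{n-1})\bar\Psi_{n-1} - e^{-2\tau}(\Delta_{h_n}\s_{n-1} - \Delta_{h_{n-1}}\s_{n-2}) + e^{-2\tau}(V'\circ\s_{n-1} - V'\circ\s_{n-2}).
\]

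Next I would estimate $F$. The first summand is controlled directly by the hypothesis on $\bar\theta_n - \bar\theta_{n-1}$ together with the uniform bound on $\bar\Psi_{n-1}$, contributing $\langle\tau\rangle^{m+N_n} e^{-2n\varepsilon\tau}$. For the other two terms I would apply Corollary~\ref{estimates of differences} with input rate $f(t) = C\langle\ln t\rangle^{N_{n-1}+1} t^{2(n-1)\varepsilon}$, which is the decay of $\s_{n-1} - \s_{n-2} = -\tau(\bar\Psi_{n-1} - \bar\Psi_{n-2}) + (\bar\Phi_{n-1} - \bar\Phi_{n-2})$ read off from the identity $\s_a = -\tau\bar\Psi_a + \bar\Phi_a$ used in the proof of Lemma~\ref{definition of phin}. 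The Corollary then yields decay at rates $t^{2\mathring\varepsilon + 2(n-1)\varepsilon}$ and $t^{2\varepsilon_V + 2(n-1)\varepsilon}$, both of which dominate $t^{2n\varepsilon}$ because $\varepsilon = \min\{\mathring\varepsilon,\varepsilon_V\}$. Absorbing the factor $e^{a|\s_{n-1}-\s_{n-2}|}$ into the constant (valid since $\s_{n-1} - \s_{n-2} \to 0$) then gives $|D^m F|_{\ho} \leq C_{m,n}\langle\tau\rangle^{m+N_n} e^{-2n\varepsilon\tau}$ for a suitably enlarged $N_n$.

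With this estimate in hand, Lemma~\ref{linear ode lemma} applies directly: the scalar coefficient $e^{-\tau}\bar\theta_n - 1$ decays exponentially and $\bar\Psi_n - \bar\Psi_{n-1}$ itself tends to zero (both quantities converge to $\psio$). Its uniqueness statement delivers
\[
|D^m(\bar\Psi_n - \bar\Psi_{n-1})|_{\ho} \leq C_{m,n}\langle\tau\rangle^{m+N_n} e^{-2n\varepsilon\tau},
\]
and the corresponding bound on $t\p_t(\bar\Psi_n - \bar\Psi_{n-1}) = -\p_\tau(\bar\Psi_n - \bar\Psi_{n-1})$ follows by reading it off the ODE itself. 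To pass to $\bar\Phi_n - \bar\Phi_{n-1}$, I would use the integral representation $\bar\Phi_a - \phio = -\int_\tau^\infty s\,\p_s \bar\Psi_a(s)\,ds$ from the proof of Lemma~\ref{definition of phin}; taking the difference and inserting the bound just obtained on $\p_s(\bar\Psi_n - \bar\Psi_{n-1})$ costs at most one additional power of $\langle\tau\rangle$, which can be absorbed into $N_n$. The time-derivative estimate for $\bar\Phi_n - \bar\Phi_{n-1}$ then follows directly from the same identity.

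There is no substantial conceptual obstacle; the exercise is really one of careful bookkeeping of decay rates. The subtle point is that $\s_{n-1} - \s_{n-2}$ only decays at the slower rate $t^{2(n-1)\varepsilon}$ rather than the target $t^{2n\varepsilon}$, reflecting the fact that at stage $n$ the metric has been updated but the scalar field has not. This loss is precisely compensated by the extra factors $t^{2\mathring\varepsilon}$ and $t^{2\varepsilon_V}$ supplied by Corollary~\ref{estimates of differences} when it handles the spatial Laplacian and the potential, and it is exactly this bookkeeping that motivates the definition $\varepsilon = \min\{\mathring\varepsilon, \varepsilon_V\}$.
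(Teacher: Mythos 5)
The proposal is correct and mirrors the paper's proof: it derives the same linear ODE for $\bar\Psi_n - \bar\Psi_{n-1}$ (with a trivially different grouping of the $\bar\theta$ cross term — $(e^{-\tau}\bar\theta_n - 1)$ as coefficient and $\bar\Psi_{n-1}$ in the remainder, versus $(e^{-\tau}\bar\theta_{n-1}-1)$ and $\bar\Psi_n$ in the paper — both algebraically valid), invokes Corollary~\ref{estimates of differences} on $\Delta_{h_n}\s_{n-1} - \Delta_{h_{n-1}}\s_{n-2}$ and $V'\circ\s_{n-1} - V'\circ\s_{n-2}$ with input rate $t^{2(n-1)\varepsilon}$ boosted by $t^{2\mathring\varepsilon}$ and $t^{2\varepsilon_V}$, applies the uniqueness clause of Lemma~\ref{linear ode lemma}, and passes to $\bar\Phi_n-\bar\Phi_{n-1}$ via the integral representation from Lemma~\ref{definition of phin}. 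The only added material is the (accurate) commentary on why the $t^{2(n-1)\varepsilon}$ input rate and the gain from $\varepsilon=\min\{\mathring\varepsilon,\varepsilon_V\}$ combine to give exactly $t^{2n\varepsilon}$.
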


\begin{proof}
    From Equation~\eqref{approximate scalar field equation},
    \begin{equation} \label{psin difference equation}
    \begin{split}
        \p_{\tau}(\bar\Psi_n - \bar\Psi_{n-1}) &= (e^{-\tau} \bar \theta_{n-1} - 1)(\bar\Psi_n - \bar\Psi_{n-1}) + e^{-\tau}(\bar \theta_n - \bar \theta_{n-1}) \bar\Psi_n\\
        &\phantom{=} - e^{-2\tau}(\Delta_{h_n} \s_{n-1} - \Delta_{h_{n-1}} \s_{n-2}) +e^{-2\tau}( V' \circ \s_{n-1} - V' \circ \s_{n-2} ).
    \end{split}
    \end{equation}
    Furthermore, Corollary~\ref{estimates of differences} gives
    \[
    t^2|D^m (t\p_t)^r(\Delta_{h_n} \s_{n-1} - \Delta_{h_{n-1}} \s_{n-2} + V' \circ \s_{n-1} - V' \
    \circ \s_{n-2})|_{\ho} \leq C_{m,r,n} \langle \tau \rangle^{m+N_{n}} t^{2n\varepsilon}.
    \]
    By Lemma~\ref{linear ode lemma}, $\bar\Psi_n - \bar\Psi_{n-1}$ is then the unique decaying solution of \eqref{psin difference equation}, so it has to satisfy
    \[
    |D^m(\bar\Psi_n - \bar\Psi_{n-1})|_{\ho} \leq C_{m,n} \langle \tau \rangle^{m+N_{n}} e^{-2n\varepsilon\tau}.
    \]
    The estimate for the time derivatives follow directly from \eqref{psin difference equation}. Moreover, by definition of $\bar\Phi_n$ and $\bar\Phi_{n-1}$,
    \[
    \bar\Phi_n - \bar\Phi_{n-1} = \int_{\tau}^{\infty} s\p_s(\bar\Psi_{n-1} - \bar\Psi_n)(s)ds,
    \]
    from which the desired estimate follows.
\end{proof}

We are now in a position to construct the sequence of approximate solutions, and we do so in the following proposition. The following two subsections are then devoted to obtaining the remaining estimates required to finish the proof of Theorem~\ref{approximate solutions}.

\begin{proposition} \label{construction of the approximate solutions}
    Let $(\Sigma,\ho,\Ko,\phio,\psio)$ be initial data on the singularity and let $V$ be an admissible potential. Then for every non-negative integer $n$ there is a $t_n > 0$, a one parameter family of Riemannian metrics $h_n$, a one parameter family of $(1,1)$-tensors $\bar K_n$ and a function $\s_n$ on $(0,t_n] \times \Sigma$ such that the following holds. For $n \geq 1$, $h_n$, $\bar K_n$ and $\s_n$ satisfy the equations~\eqref{approximate metric equation}, \eqref{approximate weingarten map equation} and \eqref{approximate scalar field equation} respectively. Moreover, the convergence to initial data estimates of Theorem~\ref{approximate solutions} are satisfied with $K_n$ replaced by $\bar K_n$. Finally, the conclusions of Lemmas~\ref{kn become closer}, \ref{hn become closer} and \ref{phin become closer} hold. 
\end{proposition}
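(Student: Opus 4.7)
The plan is to proceed by induction on $n$. For the base case $n = 0$, I would take $(h_0, \bar K_0, \s_0)$ to be the velocity dominated solution constructed in Subsection~\ref{the velocity dominated solution}, with $\bar K_0 := K_0$. By the direct computations there, $\bar\h_0 = \ho$, $t\bar K_0 = \Ko$, $\bar\Psi_0 = \psio$ and $\bar\Phi_0 = \phio$, so all the convergence estimates of Theorem~\ref{approximate solutions} and their off-diagonal refinements hold trivially with zero on the left-hand side. Equation~\eqref{approximate metric equation} at level $n=0$ is just the defining relation $\lie_{\p_t} h_0 = 2k_0$ for the second fundamental form, while at level $n=0$ Equations~\eqref{approximate weingarten map equation} and \eqref{approximate scalar field equation}---which as stated reference nonexistent level $n-1$ data---are to be interpreted as the VTD identities $\lie_{\p_t}K_0 + \theta_0 K_0 = 0$ and $\p_t^2 \s_0 + \theta_0 \p_t \s_0 = 0$, both verified in the proof of Lemma~\ref{velocity dominated solution}.

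For the inductive step, assume $(h_{n-1}, \bar K_{n-1}, \s_{n-1})$ have been constructed on $(0, t_{n-1}]\times\Sigma$ with all the claimed properties at level $n-1$. The convergence estimates at level $n-1$ will ensure, after possibly shrinking $t_{n-1}$, that the hypotheses of Lemma~\ref{decay of ricci} hold for $h_{n-1}$ and $\s_{n-1}$. I would then invoke, in sequence: Lemma~\ref{definition of kn} to produce $\bar K_n$ solving \eqref{approximate weingarten map equation} together with the convergence estimates (including off-diagonal improvements) for $t\bar K_n - \Ko$; then Lemma~\ref{definition of hn} with this $\bar K_n$ as input, to produce a symmetric $h_n$ solving \eqref{approximate metric equation} together with the estimates for $\bar\h_n - \ho$, shrinking $t_n$ so that the bound on $\bar\h_n - \ho$ guarantees positive definiteness of $h_n$ and hence that $g_n = -dt\otimes dt + h_n$ is Lorentzian; and finally Lemma~\ref{definition of phin} with $h_n$, $\bar\theta_n := \tr \bar K_n$, and $\s_{n-1}$ as inputs, producing $\s_n$ solving \eqref{approximate scalar field equation} together with the estimates for $\bar\Psi_n - \psio$ and $\bar\Phi_n - \phio$. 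At each step the hypotheses of the next lemma will be guaranteed by the conclusions of the previous one combined with the inductive hypothesis.

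The difference estimates will then follow in analogous fashion. For $n \geq 2$, the inductive hypothesis supplies the bounds on $\bar\h_{n-1} - \bar\h_{n-2}$, $\bar\Psi_{n-1} - \bar\Psi_{n-2}$, and $\bar\Phi_{n-1} - \bar\Phi_{n-2}$ needed to apply Lemma~\ref{kn become closer}, yielding the bound for $t\bar K_n - t\bar K_{n-1}$; Lemma~\ref{hn become closer} will then yield the bound for $\bar\h_n - \bar\h_{n-1}$; and Lemma~\ref{phin become closer} will give the bounds for $\bar\Psi_n - \bar\Psi_{n-1}$ and $\bar\Phi_n - \bar\Phi_{n-1}$. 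For $n=1$ the corresponding ``differences'' at level $0$ reduce to the initial data themselves, and the required estimates coincide with the convergence-to-initial-data bounds already delivered by Lemmas~\ref{definition of kn}, \ref{definition of hn}, \ref{definition of phin} (since $2n\varepsilon = 2\varepsilon$ at $n=1$). The main bookkeeping obstacle will be tracking the integers $N_n$ that control the logarithmic factors (they grow by an absolute amount with each iteration) and progressively shrinking $t_n$ at each stage so that the hypotheses of all the lemmas invoked remain in force; this should be routine given the explicit form of the statements.
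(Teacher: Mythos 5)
Your proposal is correct and follows essentially the same approach as the paper's (extremely terse) proof: base the induction on the velocity-dominated solution with $\bar K_0 := K_0$, then iterate Lemmas~\ref{definition of kn}, \ref{definition of hn}, \ref{definition of phin} and the difference Lemmas~\ref{kn become closer}, \ref{hn become closer}, \ref{phin become closer}. You additionally flag the interpretation of equations~\eqref{approximate weingarten map equation}--\eqref{approximate scalar field equation} at level $n=0$, the $n=1$ reduction of the difference estimates, and the need to shrink $t_n$ for positive definiteness of $h_n$ and the $N_n$ bookkeeping; these are details the paper's proof leaves implicit, and your handling of them is sound.
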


\begin{proof}
    Set $t_0 = 1$ and consider the velocity dominated solution, $g_0 = -dt \otimes dt + h_0$ and $\s_0$, associated with the initial data. Set $\bar K_0 := K_0$. We can now construct $(h_n,\bar K_n,\s_n)$ for every $n$ inductively, by using Lemmas~\ref{definition of kn}, \ref{definition of hn} and \ref{definition of phin} with $(h_0,\bar K_0,\s_0)$ as a starting point.
\end{proof}

\subsection{Estimating the error of the approximate Weingarten map} \label{comparing with the weingarten map}

In this subsection we show that $\bar K_n$, as in Proposition~\ref{construction of the approximate solutions}, is indeed an approximate Weingarten map for the $\Sigma_t$ hypersurfaces as $t \to 0$. Furthermore, the approximation becomes better as $n$ increases. The desired convergence to initial data for the actual Weingarten map then follows as a consequence. 

\begin{lemma} \label{error in the approximate weingarten map}
    Let $(\Sigma,\ho,\Ko,\phio,\psio)$ be initial data on the singularity and let $V$ be an admissible potential. Consider the sequence $(h_n, \bar K_n, \s_n)$ given by Proposition~\ref{construction of the approximate solutions}. If $K_n$ is the Weingarten map of the $\Sigma_t$ hypersurfaces with respect to $g_n := -dt \otimes dt + h_n$, then
    \[
    t| D^m (t\lie_{\partial_t})^r(K_n - \bar{K}_n)  |_{\ho} \leq C_{m,r,n} \langle \ln t \rangle^{m+N_{n}} t^{2(n+1)\varepsilon},
    \]
    and for $i \neq k$, $x \in D_+$ and $y \in D_-$,
    \[
    \begin{split}
        t \big|D^m (t\p_t)^r\big((K_n - \bar{K}_n)(e_i,\omega^k) \big)\big|_{\ho}(x) &\leq C_{m,r,n} \langle \ln t \rangle^{m+N_{n}} t^{2(n+1)\varepsilon +2(p_i - p_1)(x)},\\
        t \big|D^m (t\p_t)^r\big((K_n - \bar{K}_n)(e_i,\omega^k) \big)\big|_{\ho}(y) &\leq C_{m,r,n} \langle \ln t \rangle^{m+N_{n}} t^{2(n+1)\varepsilon} \min\{ 1, t^{2(p_i - p_k)(y)} \}.
    \end{split}
    \]
\end{lemma}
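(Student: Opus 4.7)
The starting point is the following purely algebraic observation. By construction, $h_n$ satisfies \eqref{approximate metric equation}, so $\lie_{\p_t} h_n(X,Y) = h_n(\bar K_n X, Y) + h_n(X, \bar K_n Y)$, whereas the actual second fundamental form of the $\Sigma_t$ hypersurfaces in $((0,t_n]\times\Sigma, g_n)$ satisfies $2k_n = \lie_{\p_t} h_n$ and $h_n(K_n X,Y) = k_n(X,Y)$. Comparing yields
\[
h_n((K_n - \bar K_n) X, Y) = -\tfrac{1}{2}\, B_n(X, Y), \qquad B_n(X,Y) := h_n(\bar K_n X, Y) - h_n(X, \bar K_n Y),
\]
so that $K_n - \bar K_n$ is exactly the $h_n$-antisymmetric part of $\bar K_n$ (with a factor of $-1/2$). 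In particular $B_n(e_i, e_i) \equiv 0$, so only the off-diagonal components $B_n(e_i, e_k)$ with $i\neq k$ need to be controlled, which is precisely where the improved off-diagonal estimates for $\bar K_n$ and $h_n$ will come into play. The base case $n=0$ is immediate: for the velocity dominated solution $\bar K_0 = K_0$ is $h_0$-symmetric, so $B_0 \equiv 0$.

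Differentiating $B_n$ in $t$, substituting \eqref{approximate weingarten map equation} and \eqref{approximate metric equation}, and using the very definition of $B_n$ to rewrite the $\bar K_n^2$ contributions as $B_n(\bar K_n X, Y) + B_n(X, \bar K_n Y)$, I would arrive at the linear evolution
\[
\p_t B_n(X,Y) + \bar\theta_n B_n(X,Y) - B_n(\bar K_n X, Y) - B_n(X, \bar K_n Y) = \Xi_n(X,Y),
\]
where $\Xi_n$ is the $h_n$-antisymmetrization of $-\sric_{n-1}^\sharp + d\s_{n-1}\otimes\sn\s_{n-1}$ (the $(V\circ\s_{n-1})I$ piece drops out). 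The key point is that $\sric_{n-1}$ is symmetric as a $(0,2)$-tensor of $h_{n-1}$ and $d\s_{n-1}\otimes d\s_{n-1}$ is symmetric, so the two contributions to $\Xi_n$ can be rewritten via identities such as
\[
h_n(\sric_{n-1}^\sharp X, Y) - h_n(X, \sric_{n-1}^\sharp Y) = (h_n - h_{n-1})(\sric_{n-1}^\sharp X, Y) - (h_n - h_{n-1})(X, \sric_{n-1}^\sharp Y),
\]
and analogously for the scalar-field term. By Lemmas~\ref{hn become closer} and \ref{phin become closer}, together with Corollary~\ref{estimates of differences}, these differences carry an extra factor $t^{2n\varepsilon}$ relative to the naive single-metric bounds of Lemma~\ref{decay of ricci}, with the sharper off-diagonal improvements on $D_+$ and $D_-$ preserved; hence $\Xi_n$ decays at rate $t^{2(n+1)\varepsilon}$ in the required sense.

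I would then pass to the logarithmic time $\tau = -\ln t$ and to a normalized unknown $\widetilde B_n$ obtained by multiplying $B_n(e_i, e_k)$ by the power of $t$ needed to match the target exponent (chosen separately in $D_+$ and $D_-$ to accommodate the two different off-diagonal regimes). In these variables the equation fits the template of Lemma~\ref{linear ode lemma}: the linear coefficient splits into a bounded constant matrix (with entries built from $p_i, p_k, 1$) plus an exponentially decaying remainder, using that $\bar\theta_n - t^{-1}$ and $\bar K_n - t^{-1}\Ko$ decay at rate $t^{2\varepsilon}$, while the inhomogeneity decays as required. Lemma~\ref{linear ode lemma} then produces a unique decaying solution with exactly the asserted bound for $B_n$, which translates into the claimed bound on $K_n - \bar K_n$. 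The estimates for $\lie_{\p_t}(K_n - \bar K_n)$ follow by reading off the evolution equation; those for $\lie_{\p_t}^2(K_n - \bar K_n)$ follow by differentiating once more, using \eqref{approximate weingarten map equation} to eliminate $\lie_{\p_t}\bar K_n$ and Lemma~\ref{second derivatives of hn dual} to eliminate $\p_t^2 h_n^{-1}$.

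The main obstacle is the careful bookkeeping of the three distinct off-diagonal regimes — the $D_+$ exponent $2(p_i-p_1)$, the $D_-$ saturation $\min\{1, t^{2(p_i-p_k)}\}$, and the diagonal $i=k$ case — as they propagate through (i) the linear terms $B_n(\bar K_n\cdot,\cdot) + B_n(\cdot,\bar K_n\cdot)$ arising from the $\bar K_n^2$ contribution, (ii) the mixed-metric antisymmetrization of $\sric_{n-1}^\sharp$, and (iii) the scalar-field term. The saving grace is that everything is written in the frame $\{e_i\}$ that simultaneously diagonalizes $\Ko$ and the leading part of $\bar K_n$, so the linear coefficient on the left-hand side of the equation for $B_n(e_i, e_k)$ is, to leading order, the scalar $(p_i + p_k - 1)/t = -p_\ell/t$ for $\{i,k,\ell\} = \{1,2,3\}$, and the off-diagonal exponents returned by Lemma~\ref{linear ode lemma} match precisely those demanded by the statement.
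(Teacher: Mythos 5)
Your proposal is correct and follows essentially the same route as the paper's proof: both identify $K_n - \bar K_n$ as half the $h_n$-antisymmetric part of $\bar K_n$, derive a linear evolution equation for that antisymmetric part, observe that the $h_{n-1}$-symmetry of $\sric_{n-1}^{\sharp}$ and of $d\s_{n-1}\otimes\sn\s_{n-1}$ converts the inhomogeneity into an $(\bar\h_n-\bar\h_{n-1})$-weighted expression (which is exactly what produces the extra $t^{2n\varepsilon}$ factor), and then close with Lemma~\ref{linear ode lemma} after the normalization $t^{p_\ell}$ --- the paper's $\ca_n := tA_n(t^{-\Ko}\,\cdot\,,t^{-\Ko}\,\cdot\,)$. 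The one detail you leave implicit is that, before invoking the uniqueness clause of Lemma~\ref{linear ode lemma}, one must check that the normalized antisymmetric part decays a priori; the paper does this by rewriting $\ca_n$ in terms of $t\bar K_n - \Ko$ and $\bar\h_n - \ho$.
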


\begin{proof}
    For $X,Y \in \mfx(\Sigma)$, define $A_n$ by
    \[
    A_n(X,Y) := h_n(\bar K_n(X),Y) - h_n(X,\bar K_n(Y)),
    \]
    so that $A_n$ is the antisymmetric part of $\bar K_n$ with respect to $h_n$. Also define $\ca_n$, the expansion normalized version of $A_n$, by
    \[
    \begin{split}
        \mathcal{A}_n(X,Y) &:= tA_n(t^{-\Ko}(X),t^{-\Ko}(Y))\\
        &= h_n( t\bar K_n \circ t^{-\Ko}(X),t^{-\Ko}(Y) ) - h_n(t^{-\Ko}(X), t\bar K_n \circ t^{-\Ko}(Y)).
    \end{split}
    \]
    Then $\ca_n$ satisfies the equation
    \begin{equation} \label{equation for an}
        \begin{split}
        \lie_{\partial_{\tau}} \mathcal{A}_n(X,Y) &= (e^{-\tau} \bar \theta_n - 1) \mathcal{A}_n(X,Y) + \mathcal{A}_n( e^{-\tau\Ko} \circ ( \Ko - e^{-\tau}\bar K_n ) \circ e^{\tau\Ko}(X), Y )\\
        &\quad + \mathcal{A}_n( X, e^{-\tau\Ko} \circ (\Ko - e^{-\tau}\bar K_n) \circ e^{\tau\Ko}(Y) )\\
        &\quad + (\bar\h_n - \bar\h_{n-1})( e^{-\tau\Ko} \circ e^{-2\tau}( \sric_{n-1}^{\sharp} - d\s_{n-1} \otimes \sn \s_{n-1} ) \circ e^{\tau\Ko}(X), Y )\\
        &\quad - (\bar\h_n - \bar\h_{n-1})( X, e^{-\tau\Ko} \circ e^{-2\tau}( \sric_{n-1}^{\sharp} - d\s_{n-1} \otimes \sn \s_{n-1} ) \circ e^{\tau\Ko}(Y) ).
        \end{split}
    \end{equation}
    We want to apply Lemma~\ref{linear ode lemma}, so we need to show that $\mathcal{A}_n$ decays in $\tau$. Indeed, note that
    \[
    \begin{split}
        \mathcal{A}_n(X,Y) &= \bar\h_n( t^{\Ko} \circ ( t\bar K_n - \Ko ) \circ t^{-\Ko}(X), Y ) - \bar\h_n( X, t^{\Ko} \circ ( t\bar K_n - \Ko ) \circ t^{-\Ko}(Y) )\\
        &\quad + (\bar\h_n - \ho)(\Ko(X),Y) - (\bar\h_n - \ho)(X,\Ko(Y))
    \end{split}
    \]
    which clearly decays as we want. Thus, by Lemma~\ref{linear ode lemma}, we obtain
    \[
    |D^m (t\lie_{\p_t})^r \mathcal{A}_n|_{\ho} \leq C_{m,r,n}  \langle \ln t \rangle^{m+N_{n}} t^{2(n+1)\varepsilon},
    \]
    where the estimates for the time derivatives follow directly from \eqref{equation for an}. Moreover, for $i \neq k$, $x \in D_+$ and $y \in D_-$,
    \[
    \begin{split}
        \big|D^m (t\p_t)^r \big(\mathcal{A}_n(e_i,e_k) \big)\big|_{\ho}(x) &\leq C_{m,r,n} \langle \ln t \rangle^{m+N_{n}} t^{2(n+1)\varepsilon +(p_i + p_k - 2p_1)(x)},\\
        \big|D^m (t\p_t)^r \big(\mathcal{A}_n(e_i,e_k) \big)\big|_{\ho}(y) &\leq C_{m,r,n} \langle \ln t \rangle^{m+N_{n}} t^{2(n+1)\varepsilon+ |p_i - p_k|(y)}, 
    \end{split}
    \]
    which follow similarly as in the proof of Lemma~\ref{definition of hn}. Going back to $A_n$, we see that
    \[
    t A_n(e_i,e_k) = t^{p_i + p_k} \mathcal{A}_n(e_i,e_k).
    \]
    By repeatedly taking $t\p_t$ of this equality, we obtain, for $i \neq k$, $x \in D_+$ and $y \in D_-$,
    \[
    \begin{split}
        t\big| D^m (t\p_t)^r \big(A_n(e_i,e_k) \big)\big|_{\ho}(x) &\leq C_{m,r,n} \langle \ln t \rangle^{m+N_{n}} t^{2(n+1)\varepsilon + 2(p_i + p_k - p_1)(x)},\\
        t\big| D^m (t\p_t)^r \big(A_n(e_i,e_k) \big)\big|_{\ho}(y) &\leq C_{m,r,n} \langle \ln t \rangle^{m+N_{n}} t^{2(n+1)\varepsilon + 2p_{\max\{i,k\}}(x)}.
    \end{split}
    \]
    Now we raise an index with $h_n$,
    \[
    \begin{split}
        t|A_n^{\sharp}(e_i,\omega^i)| = t|(h_n)^{i\ell}A_n(e_i,e_\ell)| &\leq C_{n} \langle \ln t \rangle^{N_{n}} t^{2(n+1)\varepsilon} \sum_\ell t^{-2p_{\min\{i,\ell\}} + 2p_{\max\{i,\ell\}}}\\
        &\leq C_{n} \langle \ln t \rangle^{N_{n}} t^{2(n+1)\varepsilon}
    \end{split}
    \]
    (no summation over $i$). Moreover, for $i \neq k$, $x \in D_+$ and $y \in D_-$,
    \[
    t|A_n^{\sharp}(e_i,\omega^k)|(x) = t|(h_n)^{k\ell}A_n(e_i,e_\ell)|(x) \leq C_{n} \langle \ln t \rangle^{N_{n}} t^{2(n+1)\varepsilon + 2(p_i - p_1)(x)}
    \]
    and
    \[
    \begin{split}
        t|A_n^{\sharp}(e_i,\omega^k)|(y) &= t|(h_n)^{k\ell}A_n(e_i,e_\ell)|(y)\\
        &\leq C_{n} \langle \ln t \rangle^{N_{n}} t^{2(n+1)\varepsilon} \sum_\ell t^{-2p_{\min\{k,\ell\}}(y) + 2p_{\max\{i,\ell\}}(y)}\\
        &\leq C_{n} \langle \ln t \rangle^{N_{n}} t^{2(n+1)\varepsilon} \min\{1,t^{2(p_i - p_k)(y)}\}.
    \end{split}
    \]
    As usual, derivatives just introduce factors of $\langle \ln t \rangle$, if they are spatial. Thus,
    \[
    \begin{split}
        t\big|D^m (t\partial_t)^r\big( A_n^{\sharp}(e_i,\omega^i)\big)\big| &\leq C_{m,r,n} \langle \ln t \rangle^{m+N_{n}} t^{2(n+1)\varepsilon},\\
        t\big|D^m (t\partial_t)^r \big(A_n^{\sharp}(e_i,\omega^k)\big)\big|(x) &\leq C_{m,r,n} \langle \ln t \rangle^{m+N_{n}} t^{2(n+1)\varepsilon + 2(p_i - p_1)(x)},\\ 
        t\big|D^m (t\partial_t)^r \big(A_n^{\sharp}(e_i,\omega^k)\big)\big|(y) &\leq C_{m,r,n} \langle \ln t \rangle^{m+N_{n}} t^{2(n+1)\varepsilon} \min\{1,t^{2(p_i-p_k)(y)}\},
    \end{split}
    \]
    for $i \neq k$ (no summation over $i$), $x \in D_+$ and $y \in D_-$.

    Moving on to the second fundamental form,
    \[
    \begin{split}
        h_n( (\bar K_n - K_n)(X), Y ) &= h_n(\bar K_n(X),Y) - \frac{1}{2} \lie_{\partial_t} h_n(X,Y)\\
        &= \frac{1}{2} \Big( h_n(\bar K_n(X),Y) - h_n(X,\bar K_n(Y)) \Big)\\
        &= \frac{1}{2} A_n(X,Y),
    \end{split}
    \]
    hence $\bar K_n - K_n = \frac{1}{2} A_n^{\sharp}$. The lemma follows from the estimates for $A_n^{\sharp}$.
\end{proof}

\subsection{Estimating the error in Einstein's equations} \label{the sequence satisfies the eqs asymptotically}

In this subsection, we show that the sequence constructed in Proposition~\ref{construction of the approximate solutions} indeed consists of approximate solutions to the Einstein--nonlinear scalar field equations with potential $V$ as $t \to 0$. Thus, we finish the proof of Theorem~\ref{approximate solutions}.

\begin{lemma}
    Let $(\Sigma,\ho,\Ko,\phio,\psio)$ be initial data on the singularity and let $V$ be an admissible potential. The sequence $(h_n, \bar K_n, \s_n)$, given by Proposition~\ref{construction of the approximate solutions}, satisfies
    \begin{align} 
    t^2|D^m(\bar{S}_n - \tr \bar K_n^2 + \bar \theta_n^2 - (\p_t \s_n)^2 - |d\s_n|_{h_n}^2 - 2V \circ \s_n)|_{\ho} &\leq C_{m,n} \langle \ln t \rangle^{m+2} t^{2\varepsilon}\label{approximate hamiltonian constraint},\\
    t^2|D^m(\partial_t \bar \theta_n + \tr \bar K_n^2 + (\p_t \s_n)^2 - V \circ \s_n)|_{\ho} &\leq C_{m,n} \langle \ln t \rangle^{m+2} t^{2\varepsilon}\label{approximate theta equation},\\ 
    t|D^m(\diver_{h_n} \bar K_n - d\bar \theta_n - (\p_t \s_n)d\s_n)|_{\ho} &\leq C_{m,n} \langle \ln t \rangle^{m+5} t^{2\varepsilon} \label{approximate momentum constraint}.
    \end{align}
    Moreover, if we define
    \[
    \bar\ce_n := \lie_{\p_t} \bar K_n + \sric_n^\sharp + \bar\theta_n \bar K_n - d\s_n \otimes \sn\s_n - (V \circ \s_n)I,
    \]
    then
    \begin{equation} \label{approximate equations 2}
        t^2|D^m(t\lie_{\partial_t})^r \bar\ce_n|_{\ho} \leq C_{m,r,n} \langle \ln t \rangle^{m+N_{n}} t^{2(n+1)\varepsilon},
    \end{equation}
    and for $i \neq k$, $x \in D_+$ and $y \in D_-$,
    \[
    \begin{split}
        t^2\big|D^m (t\partial_t)^r\big( \bar\ce_n(e_i,\omega^k)\big)\big|_{\ho}(x) &\leq C_{m,r,n} \langle \ln t \rangle^{m+N_{n}} t^{2(n+1)\varepsilon + 2(p_i - p_1)(x)},\\
        t^2\big|D^m (t\partial_t)^r\big( \bar\ce_n(e_i,\omega^k)\big)\big|_{\ho}(y) &\leq C_{m,r,n} \langle \ln t \rangle^{m+N_{n}} t^{2(n+1)\varepsilon} \min\{ 1, t^{2(p_i - p_k)(y)} \}.
    \end{split}
    \]
\end{lemma}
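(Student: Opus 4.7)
The bulk of the content is the estimate \eqref{approximate equations 2}. My plan is to use the defining equation \eqref{approximate weingarten map equation} of $\bar K_n$ to eliminate $\lie_{\p_t} \bar K_n$ from the definition of $\bar\ce_n$, obtaining
\begin{equation*}
    \bar\ce_n = (\sric_n^\sharp - \sric_{n-1}^\sharp) - (d\s_n \otimes \sn\s_n - d\s_{n-1} \otimes \sn\s_{n-1}) - (V\circ\s_n - V\circ\s_{n-1})I.
\end{equation*}
Each difference on the right is then controlled by Corollary~\ref{estimates of differences} applied with $f(t) := C_n \langle \ln t\rangle^{N_n} t^{2n\varepsilon}$, where this bound for $f$ is supplied by Lemmas~\ref{hn become closer} and \ref{phin become closer} (noting that $\s_n - \s_{n-1} = (\bar\Phi_n - \bar\Phi_{n-1}) + (\ln t)(\bar\Psi_n - \bar\Psi_{n-1})$). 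Using $\mathring\varepsilon, \varepsilon_V \geq \varepsilon$, the bounds on $t^2(\sric_n^\sharp - \sric_{n-1}^\sharp)$, on $t^2(d\s_n\otimes\sn\s_n - d\s_{n-1}\otimes\sn\s_{n-1})$, and on $t^2(V\circ\s_n - V\circ\s_{n-1})$ combine to the claimed $t^{2(n+1)\varepsilon}$ rate. The $r=1$ bound for $\lie_{\p_t}\bar\ce_n$ is delivered by the same Corollary, and the off-diagonal improvements in $D_+$ and $D_-$ are inherited directly from it.

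For the Hamiltonian constraint \eqref{approximate hamiltonian constraint}, I multiply the quantity by $t^2$ and rewrite in expansion-normalized variables:
\begin{equation*}
    t^2\bar S_n + (t\bar\theta_n)^2 - \tr(t\bar K_n)^2 - \bar\Psi_n^2 - t^2|d\s_n|^2_{h_n} - 2t^2(V\circ\s_n).
\end{equation*}
The terms $t^2\bar S_n$, $t^2|d\s_n|^2_{h_n}$, and $t^2(V\circ\s_n)$ are of order $\langle\ln t\rangle^{m+2}t^{2\varepsilon}$ by Lemma~\ref{decay of ricci}, while for the middle three the convergence estimates of Proposition~\ref{construction of the approximate solutions} reduce the leading contribution to $1 - \tr\Ko^2 - \psio^2$, which vanishes by Condition~2 of Definition~\ref{initial data}; the remainder is of the same size. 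For the trace equation \eqref{approximate theta equation}, I take the trace of \eqref{approximate weingarten map equation} to eliminate $\p_t\bar\theta_n$, so that after multiplication by $t^2$ the leading contribution of the residual becomes $\tr\Ko^2 + \psio^2 - 1 = 0$, again by Condition~2; the remaining terms are bounded using Lemma~\ref{decay of ricci} together with the convergence estimates from Proposition~\ref{construction of the approximate solutions}.

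For the momentum constraint \eqref{approximate momentum constraint}, my plan is to mirror the calculation carried out in the proof of Lemma~\ref{velocity dominated solution}: expanding $t\,\diver_{h_n}\bar K_n - t(\p_t\s_n)d\s_n$ in the frame $\{e_i\}$ with the Christoffel symbol estimates of Lemma~\ref{decay of ricci} yields, at leading order, the combination $\diver_\ho\Ko - \psio d\phio$, which vanishes by Condition~2. The subleading terms and the $td\bar\theta_n$ contribution are controlled by the convergence estimates from Proposition~\ref{construction of the approximate solutions}, and the extra factor of $\langle\ln t\rangle^2$ relative to the Hamiltonian estimate (producing $\langle\ln t\rangle^{m+4}$) reflects the spatial derivatives appearing both in the divergence and in the $\ln t$-factors built into the Christoffel symbols of $h_n$; cf.\ the connection estimates \eqref{connection estimates}.

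\textbf{Main obstacle.} The momentum constraint is the most delicate piece. Unlike the velocity dominated case, where $\diver_\ho\Ko = \psio d\phio$ enters via an exact computation, I must carry the corrections $t\bar K_n - \Ko$, $\bar\h_n - \ho$, $\bar\Psi_n - \psio$, and $\bar\Phi_n - \phio$ through the divergence. The off-diagonal improvements of Proposition~\ref{construction of the approximate solutions} are essential to control terms that would otherwise be too large, on account of the divergent factors $t^{-2p_\ell}$ appearing in $(h_n)^{\ell a}\bar\Gamma_{ab}^c$; the improved decay of the components of $\bar K_n$ in the directions transverse to the eigendirections of $\Ko$ is precisely what compensates these.
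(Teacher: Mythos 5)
Your proposal is correct and follows essentially the same route as the paper: use \eqref{approximate weingarten map equation} to rewrite $\bar\ce_n$ as a difference of $n$- and $(n-1)$-indexed quantities, then apply Corollary~\ref{estimates of differences} with the rate from Lemmas~\ref{hn become closer} and \ref{phin become closer}; use Condition 2 of Definition~\ref{initial data} to cancel the leading terms in \eqref{approximate hamiltonian constraint}--\eqref{approximate theta equation}; and mirror the $\cm_0 = 0$ computation from Lemma~\ref{velocity dominated solution} for the momentum constraint. The only cosmetic deviation is that for \eqref{approximate theta equation} you take the trace of \eqref{approximate weingarten map equation} directly (yielding $(n-1)$-indexed spatial terms) whereas the paper traces $\bar\ce_n$ and subtracts \eqref{approximate hamiltonian constraint}; both give the stated $t^{2\varepsilon}$ rate, and your observation that the off-diagonal improvements for $t\bar K_n - \Ko$ are indispensable in \eqref{approximate momentum constraint} is accurate.
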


\begin{proof}
    The estimates for $\bar\ce_n$ follow immediately from Corollary~\ref{estimates of differences}, by noting that
    \[
    \bar\ce_n = \sric_n^{\sharp} - \sric_{n-1}^{\sharp} + d\s_{n-1} \otimes \sn \s_{n-1} - d\s_n \otimes \sn \s_n + (V \circ \s_{n-1} - V \circ \s_n)I,
    \]
    which follows from Equation~\eqref{approximate weingarten map equation}.

    For \eqref{approximate hamiltonian constraint}, from Lemma~\ref{decay of ricci}, we already know that
    \[
    t^2|D^m (\bar{S}_n - |d\s_n|_{h_n}^2 - 2V\circ \s_n )|_{\ho} \leq C_{m,n} \langle \ln t \rangle^{m+2} t^{2\varepsilon}.
    \]
    Moreover,
    \[
    \begin{split}
        t^2(\bar \theta_n^2 - \tr \bar K_n^2 - (\p_t \s_n)^2) &= (t \bar \theta_n - 1)^2 + 2(t\bar \theta_n - 1) - \tr(t\bar K_n - \Ko)^2\\
        &\quad - \tr\big( \Ko \circ (t\bar K_n - \Ko) \big) - \tr\big( (t\bar K_n - \Ko) \circ \Ko \big) + (\psio + \bar\Psi_n)( \psio - \bar\Psi_n ),
    \end{split}
    \]
    which can be estimated similarly. 

    For \eqref{approximate theta equation}, by estimating the trace of $\bar\ce_n$, we obtain
    \[
    t^2|D^m( \partial_t \bar \theta_n + \bar{S}_n + \bar \theta_n^2 - |d\s_n|_{h_n}^2 - 3V \circ \s_n )|_{\ho} \leq C_{m,n} \langle \ln t \rangle^{m+N_{n}} t^{2(n+1)\varepsilon}.
    \]
    This together with \eqref{approximate hamiltonian constraint} yields \eqref{approximate theta equation}.

    Finally, \eqref{approximate momentum constraint}. Define $\cd_n := \sn^{(n)} - \sn^{(0)}$, where $\sn^{(n)}$ and $\sn^{(0)}$ are the Levi-Civita connections of $h_n$ and $h_0$ respectively. By Corollary~\ref{estimates of differences}, if $|\alpha| \leq m$, $x \in D_+$ and $y \in D_-$, we obtain
    \begin{equation} \label{difference of connections estimates}
    \begin{split}
        |e_{\alpha} (\cd_n)_{ii}^{\ell}| &\leq C_{m,n} \langle \ln t \rangle^{m+3} t^{2(p_i - p_{\ell}) + 2\varepsilon},\\
        |e_{\alpha} (\cd_n)_{ik}^i| + |e_{\alpha} (\cd_n)_{ik}^k| &\leq C_{m,n} \langle \ln t \rangle^{m+3} t^{2\varepsilon},\\
        |e_{\alpha} (\cd_n)_{ik}^{\ell}|(x) &\leq C_{m,n} \langle \ln t \rangle^{m+3} t^{2(p_1 - p_{\ell})(x) + 2\varepsilon},\\
        |e_{\alpha} (\cd_n)_{ik}^{\ell}|(y) &\leq C_{m,n} \langle \ln t \rangle^{m+3} t^{2(p_2 - p_{\ell})(y) + 2\varepsilon}, 
    \end{split}
    \end{equation}
    where $i$, $k$ and $\ell$ are distinct in the last two inequalities, and there is no summation over $i$ or $k$. Now we compute,
    \begin{align*}
        t \,\diver_{h_n} \bar K_n(e_i) &= t \sn^{(n)}_{e_k} \bar K_n(e_i,\omega^k)\\
        &= e_k\big( t\bar K_n(e_i,\omega^k) \big) - t\bar K_n( \sn^{(n)}_{e_k} e_i, \omega^k ) - t\bar K_n(e_i, \sn^{(n)}_{e_k} \omega^k)\\
        &= e_k\big( t\bar K_n(e_i,\omega^k) \big) - t\bar K_n( \sn^{(0)}_{e_k} e_i, \omega^k ) - t\bar K_n(e_i, \sn^{(0)}_{e_k} \omega^k)\\
        &\quad - t\bar K_n( \cd_n(e_k,e_i), \omega^k ) - t\bar K_n\big( e_i, (\sn^{(n)}_{e_k} - \sn^{(0)}_{e_k}) \omega^k \big)\\
        &= \sn^{(0)}_{e_k} \Ko(e_i,\omega^k) + e_k\big( (t\bar K_n - \Ko)(e_i,\omega^k) \big) - (t\bar K_n - \Ko)(\sn^{(0)}_{e_k} e_i,\omega^k)\\
        &\quad - (t\bar K_n - \Ko)(e_i, \sn^{(0)}_{e_k} \omega^k) - (t\bar K_n - \Ko)(\cd_n(e_k,e_i), \omega^k)\\
        &\quad - (t\bar K_n - \Ko)\big(e_i,(\sn^{(n)}_{e_k} - \sn^{(0)}_{e_k}) \omega^k\big) - \Ko( \cd_n(e_k,e_i), \omega^k )\\
        &\quad - \Ko\big(e_i, (\sn^{(n)}_{e_k} - \sn^{(0)}_{e_k}) \omega^k\big)\\
        &= t \,\diver_{h_0} K_0(e_i) + e_k\big( (t\bar K_n - \Ko)(e_i,\omega^k) \big) - (\bar{\Gamma}_0)_{k i}^\ell (t\bar K_n - \Ko)(e_\ell,\omega^k)\\
        &\quad + (\bar{\Gamma}_0)_{k \ell}^k (t\bar K_n - \Ko)(e_i,\omega^\ell) - (\cd_n)_{k i}^\ell (t\bar K_n - \Ko)(e_\ell, \omega^k)\\
        &\quad + (\cd_n)_{k \ell}^k(t\bar K_n - \Ko)(e_i,\omega^\ell) - \sum_k \Big( p_k (\cd_n)_{k i}^k - p_i (\cd_n)_{k i}^k \Big),
    \end{align*} 
    where $(\bar\Gamma_0)_{ik}^\ell := \omega^\ell(\sn^{(0)}_{e_i} e_k)$. Moreover,
    \[
    (t\p_t \s_n)d\s_n = (t\p_t \s_0)d\s_0 + t\p_t(\s_n - \s_0)d\s_n + t\p_t \s_0 d(\s_n - \s_0).
    \]
    Hence we can use the fact that $t(\diver_{h_0}K_0 - (\p_t\s_0)d\s_0) = \cm_0 = 0$ (See Lemma~\ref{velocity dominated solution} and Proposition~\ref{ricci}), in addition to \eqref{difference of connections estimates} and the estimates for $(\bar\Gamma_0)_{ik}^\ell$ coming from Lemma~\ref{decay of ricci}, to obtain
    \[
    t|D^m( \diver_{h_n} \bar K_n - (\p_t \s_n)d\s_n )|_{\ho} \leq C_{m,n} \langle \ln t \rangle^{m+5} t^{2\varepsilon}.
    \]
    This together with what we know about $\bar\theta_n$ yields \eqref{approximate momentum constraint}.
\end{proof}

The above lemma already implies that $E_n$, as in Theorem~\ref{approximate solutions} and with the appropriate normalization, decays as $t \to 0$. So we only need a way to obtain improvements. In order to do so, we now obtain evolution equations for the relevant quantities by using the fact that the Einstein tensor is divergence free.  

\begin{lemma} \label{evolution equations for the constraint equations}
    Let $\Sigma$ be a manifold, consider a metric $g = -dt \otimes dt + h$ and a function $\s$ on $(0,T] \times \Sigma$ and let $V \in C^\infty(\R)$. Define
    \[
    E := \ric - d\s \otimes d\s - (V \circ \s)g.
    \]
    Also define the one parameter family of $(1,1)$-tensors $\ce$ and the one parameter family of one forms $\cm$ on $\Sigma$ by
    \[
    h(\ce(X),Y) := E(X,Y), \qquad \cm(X) := E(\p_t,X),
    \]
    for $X,Y \in \mfx(\Sigma)$. Then
    \begin{subequations}
    \begin{align}
        \p_t\big( E(\p_t,\p_t) \big) &= -2\theta E(\p_t,\p_t) - 2\tr( \ce \circ K ) + 2\diver_{h} \cm -\p_t( \tr \ce ) + 2( \Box_{g} \s - V' \circ \s ) \p_t \s,\label{einstein tt evolution equation}\\
        \lie_{\p_t} \cm &= -\theta \cm + \diver_{h} \ce + \frac{1}{2}  d\big( E(\p_t,\p_t) - \tr \ce \big) + ( \Box_{g} \s - V' \circ \s ) d\s.\label{momentum constraint evolution equation}
    \end{align}
    \end{subequations}
\end{lemma}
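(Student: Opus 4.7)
The two identities are the timelike and spatial projections of the twice-contracted Bianchi identity applied to the error tensor $E$. Accordingly, the first step is to establish the spacetime divergence identity
\[ \diver_g E = \tfrac{1}{2}\, d(\tr_g E) - (\Box_g \s - V'\circ\s)\, d\s. \]
Writing $G := \ric - \tfrac{1}{2} S g$ for the Einstein tensor and $T := d\s \otimes d\s - (\tfrac{1}{2}|d\s|_g^2 + V \circ \s)\, g$ for the scalar-field stress-energy, a short torsion-free computation yields $\diver_g T = (\Box_g \s - V'\circ\s)\, d\s$. Taking the $g$-trace of the definition of $E$ and rearranging gives the algebraic identity $G = E - \tfrac{1}{2}(\tr_g E)\, g + T$, and then $\diver_g G = 0$ produces the displayed divergence formula.

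Next, I decompose both sides of this identity using the foliation $g = -dt \otimes dt + h$. In adapted coordinates $(t, x^a)$ the only nonvanishing Christoffel symbols of $g$ are $\Gamma^0_{ab} = k_{ab}$, $\Gamma^a_{0b} = K^a{}_b$ and $\Gamma^a_{bc} = \bar\Gamma^a_{bc}$, and $g^{-1} = -\p_t \otimes \p_t + h^{-1}$. Using the algebraic decomposition $\tr_g E = \tr \ce - E(\p_t, \p_t)$ and the identification $(\lie_{\p_t} \cm)(X) = \p_t(\cm(X))$ for $X \in \mfx(\Sigma)$, a direct computation of $(\diver_g E)(Y) = -(\n_{\p_t} E)(\p_t, Y) + h^{ab}(\n_{e_a} E)(e_b, Y)$ at $Y = \p_t$ and at $Y \in \mfx(\Sigma)$ yields
\[ (\diver_g E)(\p_t) = -\p_t E(\p_t,\p_t) - \theta\, E(\p_t,\p_t) - \tr(\ce \circ K) + \diver_h \cm \]
and
\[ (\diver_g E)(X) = -(\lie_{\p_t}\cm)(X) - \theta\, \cm(X) + (\diver_h \ce)(X). \]
In the spatial projection, the $K^a{}_c\cm_a$ terms produced by $\Gamma^a_{0c}$ and $\Gamma^0_{ac}$ enter with opposite signs and cancel, which is what yields the clean form of \eqref{momentum constraint evolution equation}. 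Equating these two projections with the corresponding evaluations of $\tfrac{1}{2} d(\tr_g E) - (\Box_g \s - V'\circ\s) d\s$ and solving for $\p_t E(\p_t,\p_t)$ and $\lie_{\p_t}\cm$ respectively gives \eqref{einstein tt evolution equation} and \eqref{momentum constraint evolution equation}.

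The argument is a standard foliation computation with no analytic difficulty; the only task requiring care is sign bookkeeping. The key point is how the coefficient $2$ arises in front of $\diver_h \cm$, $\tr(\ce\circ K)$, $\theta E(\p_t,\p_t)$ and $(\Box_g\s - V'\circ\s)\p_t\s$ in \eqref{einstein tt evolution equation}: in the identity $(\diver_g E)(\p_t) = \tfrac{1}{2}\p_t(\tr_g E) - (\Box_g \s - V'\circ\s)\p_t\s$, the term $-\p_t E(\p_t,\p_t)$ on the left combines with $-\tfrac{1}{2}\p_t E(\p_t,\p_t)$ extracted from $\tfrac{1}{2}\p_t(\tr_g E)$ on the right to leave $-\tfrac{1}{2}\p_t E(\p_t,\p_t)$, which doubles every remaining term when one isolates $\p_t E(\p_t,\p_t)$.
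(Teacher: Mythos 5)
Your argument is correct and matches the paper's proof in all essentials: you establish the same spacetime divergence identity $\diver_g E = \tfrac{1}{2}d(\tr_g E) - (\Box_g\s - V'\circ\s)d\s$ and then perform the same foliation decomposition, including the cancellation of the $\cm(K(X))$ terms in the spatial projection. The only cosmetic difference is that you derive the divergence identity by packaging $E$ via $G = E - \tfrac{1}{2}(\tr_g E)g + T$ and invoking $\diver_g G = 0$ and $\diver_g T = (\Box_g\s - V'\circ\s)d\s$, whereas the paper computes $\diver_g E$ and $d(\tr_g E)$ directly; the underlying computation is the same.
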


\begin{proof}
    Note that
    \[
    \diver_{g} E = \frac{1}{2} dS - (\Box_{g} \s) d\s - \n_{\n \s} d\s - (V' \circ \s)d\s.
    \]
    Moreover,
    \[
    \tr_{g} E = S - |d\s|_{g}^2 - 4(V \circ \s)
    \]
    and $d( |d\s|_{g}^2 ) = 2\n_{\n \s} d\s$. Thus
    \begin{equation} \label{divergence of einstein scalar field tensor}
        \diver_{g} E = \frac{1}{2} d(\tr_{g}E) + ( V' \circ \s - \Box_{g} \s )d\s.
    \end{equation}
    We compute,
    \[
    \begin{split}
        (V' \circ \s - \Box_g \s) \p_t\s &= \bigg( \diver_{g} E - \frac{1}{2}d(\tr_{g}E) \bigg)(\p_t)\\
        &= - \n_{\p_t} E(\p_t,\p_t) + h^{\ell m} \n_{e_\ell} E(\p_t,e_m) - \frac{1}{2} \p_t \big( - E(\p_t,\p_t) + \tr \ce \big),
    \end{split}
    \]
    and
    \[
    \begin{split}
        h^{\ell m}\n_{e_\ell} E(\p_t,e_m) &= h^{\ell m} \Big( e_\ell\big( \cm(e_m) \big) - E(\n_{e_\ell} \p_t, e_m) - E(\p_t, \n_{e_\ell} e_m) \Big)\\
        &= h^{\ell m} \Big( \sn_{e_\ell} \cm(e_m) - E(K(e_\ell), e_m) - k(e_\ell,e_m) E(\p_t,\p_t) \Big)\\
        &= \diver_h \cm - \tr( \ce \circ K ) - \theta E(\p_t,\p_t).
    \end{split}
    \]
    This yields \eqref{einstein tt evolution equation}. Moreover, if $X \in \mathfrak{X}(\Sigma)$,
    \[
    \begin{split}
        \bigg( \diver_{g} E - \frac{1}{2}d(\tr_{g}E) \bigg)(X) &= -\n_{\p_t} E(\p_t, X) + h^{\ell m}\n_{e_\ell} E(e_m,X)-\frac{1}{2} X\big( -E(\p_t,\p_t) + \tr \ce \big)\\
        &= -\p_t\big( \cm(X) \big) + E(\p_t, \n_{\p_t} X) + h^{\ell m}e_\ell\big( E(e_m,X) \big)\\
        &\quad - h^{\ell m}E( \n_{e_\ell} e_m, X) - h^{\ell m} E( e_m, \n_{e_\ell} X ) + \frac{1}{2} X\big( E(\p_t,\p_t)  - \tr \ce \big)\\
        &= -\p_t\big( \cm(X) \big) + \diver_{h} \ce(X) - \theta \cm(X) + \frac{1}{2} X\big( E(\p_t,\p_t)  - \tr \ce \big),
    \end{split}
    \]
    where we have used that $h^{\ell m}\n_{e_\ell} e_m = \theta \p_t + h^{\ell m}\sn_{e_\ell} e_m$ and that $\n_{\p_t} X = K(X)$. Equation~\eqref{momentum constraint evolution equation} now follows from \eqref{divergence of einstein scalar field tensor}.
\end{proof}

\begin{lemma} \label{the Einstein scalar field tensor decays very fast}
    Let $(\Sigma,\ho,\Ko,\phio,\psio)$ be initial data on the singularity and let $V$ be an admissible potential. The sequence $(h_n, \bar K_n, \s_n)$, given by Proposition~\ref{construction of the approximate solutions}, satisfies
    \begin{equation*}
    \begin{split}
        t^2|D^m( t\lie_{\partial_t})^r\ce_n |_{\ho} &\leq C_{m,r,n} \langle \ln t \rangle^{m+N_n} t^{2(n+1)\varepsilon},\\
        t^2\big|D^m (t\p_t)^r\big(E_n(\partial_t,\partial_t)\big)\big|_{\ho} + t|D^m (t\lie_{\p_t})^r \cm_n|_{\ho} &\leq C_{m,r,n} \langle \ln t \rangle^{m+N_n} t^{2(n+1)\varepsilon},\\
        t^2|D^m (t\p_t)^r( \Box_{g_n} \s_n - V' \circ \s_n )|_{\ho} &\leq C_{m,r,n} \langle \ln t \rangle^{m+N_{n}} t^{2(n+1)\varepsilon},
    \end{split}
    \end{equation*}
    where $g_n = -dt \otimes dt + h_n$. Moreover, for $i \neq k$, $x \in D_+$ and $y \in D_-$,
    \[
    \begin{split}
        t^2 \big|D^m (t\p_t)^r \big(\ce_n(e_i,\omega^k) \big)\big|_{\ho}(x) &\leq C_{m,r,n} \langle \ln t \rangle^{m+N_n} t^{2(n+1)\varepsilon + 2(p_i - p_1)(x)},\\
        t^2 \big|D^m (t\p_t)^r \big(\ce_n(e_i,\omega^k) \big)\big|_{\ho}(y) &\leq C_{m,r,n} \langle \ln t \rangle^{m+N_n} t^{2(n+1)\varepsilon} \min\{1, t^{2(p_i - p_k)(y)}\}.
    \end{split}
    \]
\end{lemma}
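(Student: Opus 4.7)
The plan is to split the proof into two parts. First I would tackle the "easy" estimates on $\ce_n$ and the scalar field equation directly. For $\ce_n$, observe that, from the definition of $\ce_n$ together with equation \eqref{approximate weingarten map equation} defining $\bar K_n$,
\[
\ce_n - \bar\ce_n = \lie_{\p_t}(K_n - \bar K_n) + (\theta_n - \bar\theta_n) K_n + \bar\theta_n (K_n - \bar K_n),
\]
so the claimed bounds on $\ce_n$ (including the off-diagonal improvements on $D_+$ and $D_-$) follow by combining \eqref{approximate equations 2} for $\bar\ce_n$ with the estimates for $K_n - \bar K_n$ provided by Lemma~\ref{error in the approximate weingarten map}. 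For the scalar field equation I substitute \eqref{approximate scalar field equation} into $\Box_{g_n}\s_n - V'\circ\s_n = -\p_t^2\s_n + \Delta_{h_n}\s_n - \theta_n \p_t\s_n - V'\circ\s_n$ to get
\[
\Box_{g_n}\s_n - V'\circ\s_n = (\Delta_{h_n}\s_n - \Delta_{h_n}\s_{n-1}) + (\bar\theta_n - \theta_n)\p_t\s_n + (V'\circ\s_{n-1} - V'\circ\s_n),
\]
and apply Corollary~\ref{estimates of differences} together with Lemmas~\ref{phin become closer} and~\ref{error in the approximate weingarten map}.

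Next I would tackle $E_n(\p_t,\p_t)$ and $\cm_n$. A preliminary bound at rate $t^{2\varepsilon}$ comes from combining \eqref{approximate hamiltonian constraint}--\eqref{approximate momentum constraint} with Lemma~\ref{error in the approximate weingarten map} (to pass from $\bar K_n,\bar\theta_n$ to $K_n,\theta_n$). To improve this to $t^{2(n+1)\varepsilon}$, I introduce $\tau := -\ln t$, $u_n := t^2 E_n(\p_t,\p_t)$ and $v_n := t\cm_n$. Using Lemma~\ref{evolution equations for the constraint equations}, the equations \eqref{einstein tt evolution equation}--\eqref{momentum constraint evolution equation} become
\begin{align*}
    \p_\tau u_n &= -2(1 - t\theta_n)\,u_n + \tilde F_{1,n},\\
    \p_\tau v_n &= -(1 - t\theta_n)\,v_n + \tilde F_{2,n},
\end{align*}
where $\tilde F_{1,n}$, $\tilde F_{2,n}$ denote the rescaled right-hand sides. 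By Lemma~\ref{definition of kn} and Lemma~\ref{error in the approximate weingarten map} one has $1 - t\theta_n = O(\langle\ln t\rangle^N t^{2\varepsilon})$, so Lemma~\ref{linear ode lemma} applies.

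I then bootstrap as follows. Suppose inductively that $|D^m u_n|_\ho + |D^m v_n|_\ho \leq C_{m,n}\langle\ln t\rangle^{m+N}t^\beta$ for some $\beta \in [2\varepsilon, 2(n+1)\varepsilon]$. By the first part of the proof, every contribution to $\tilde F_{j,n}$ coming from $\ce_n$, $\p_t\tr\ce_n$, $\diver_{h_n}\ce_n$ or $\Box_{g_n}\s_n - V'\circ\s_n$ already decays at the target rate $t^{2(n+1)\varepsilon}$. The only remaining contributions are the coupling terms $-2 t^3 \diver_{h_n}\cm_n$ in $\tilde F_{1,n}$ and $\tfrac12 t^2 d E_n(\p_t,\p_t) = \tfrac12\,du_n$ in $\tilde F_{2,n}$. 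For the first, every $h_n^{-1}$ factor is of size at most $t^{-2p_1}$, so after rescaling this term is bounded by $\langle\ln t\rangle^{N+1} t^{\beta+2(1-p_1)} \leq \langle\ln t\rangle^{N+1} t^{\beta+2\mathring\varepsilon}$; for the second we simply keep the bound $\langle\ln t\rangle^{N+1}t^\beta$. Applying Lemma~\ref{linear ode lemma} first to the $u_n$-equation improves $u_n$ to rate $\min(\beta+2\mathring\varepsilon, 2(n+1)\varepsilon)$, and feeding this back into the $v_n$-equation yields the same improvement for $v_n$. Iterating a bounded number of times (at most $\sim n$) reaches the target rate.

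The main obstacle is closing the bootstrap with a \emph{uniform} positive gain per iteration. The key quantitative input is the inequality $1 - p_1 \geq \mathring\varepsilon$ (which follows from $p_1 \leq p_3$ and $1-p_3 \geq \mathring\varepsilon$) and which exactly compensates for the worst-case $t^{-2p_1}$ loss incurred by the $h_n^{-1}$ factors in $\diver_{h_n}\cm_n$. Without this inequality, the coupling between $u_n$ and $v_n$ would prevent any net improvement of the decay rate. The other technical point is carefully propagating the bounds through all spatial regularity levels simultaneously, since each pass of the bootstrap costs a factor of $\langle\ln t\rangle$, and verifying that all the estimates (including the off-diagonal ones for $\ce_n$) survive the iteration.
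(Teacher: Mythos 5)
Your argument is correct and follows essentially the same route as the paper: decomposing $\ce_n$ via $\ce_n - \bar\ce_n = \lie_{\p_t}(K_n - \bar K_n) + (\theta_n - \bar\theta_n) K_n + \bar\theta_n (K_n - \bar K_n)$, handling the scalar field equation algebraically, and then iteratively improving the bounds on $E_n(\p_t,\p_t)$ and $\cm_n$ using the divergence-free identity from Lemma~\ref{evolution equations for the constraint equations} in its expansion-normalized form, with Lemma~\ref{linear ode lemma} supplying uniqueness of decaying solutions at each pass. Two small remarks: your identity for $\Box_{g_n}\s_n - V'\circ\s_n$ correctly retains the term $(\bar\theta_n - \theta_n)\p_t\s_n$ (which the paper silently omits; it is harmless since Lemma~\ref{error in the approximate weingarten map} makes it decay at the target rate), and the worst-case loss in $\diver_{h_n}\cm_n$ actually comes from the contracted Christoffel factor $h_n^{ab}(\bar\Gamma_n)_{ab}^\ell \sim t^{-2p_3}$ rather than from $h_n^{-1}\sim t^{-2p_1}$ as you state; since $1 - p_3 \geq \mathring\varepsilon$ the conclusion is unaffected.
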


\begin{proof}
    We begin with $\ce_n$. Note that
    \[
    \begin{split}
        \ce_n &= \bar\ce_n + \lie_{\partial_t}(K_n - \bar{K}_n) + (\theta_n - \bar{\theta}_n)K_n + \bar \theta_n(K_n - \bar{K}_n),
    \end{split}
    \]
    from which the estimates immediately follow.

    For $\s_n$, just note that
    \[
    \begin{split}
        \Box_{g_n} \s_n - V' \circ \s_n &= -\p_t^2 \s_n + \Delta_{h_n} \s_n - \theta_n \p_t \s_n - V' \circ \s_n\\
        &= \Delta_{h_n}( \s_n - \s_{n-1} ) + (\bar\theta_n - \theta_n)\p_t \s_n + V' \circ \s_{n-1} - V' \circ \s_n, 
    \end{split}
    \]
    and
    \[
    \Delta_{h_n}(\s_n - \s_{n-1}) = (h_n)^{\ell m}( e_\ell e_m - (\bar{\Gamma}_n)_{\ell m}^a e_a )(\s_n - \s_{n-1}),
    \]
    where $(\bar\Gamma_n)_{ik}^\ell := \omega^\ell(\sn^{(n)}_{e_i} e_k)$.

    For the remaining estimates, we write expansion normalized versions of the evolution equations in Lemma~\ref{evolution equations for the constraint equations};
    \begin{subequations} \label{einstein scalar field tensor normalized equations}
    \begin{align}
        \begin{split}
            \p_{\tau} \big(E_n(\p_{\tau},\p_{\tau})\big) &= 2(e^{-\tau} \theta_n - 1) E_n(\p_{\tau},\p_{\tau}) + 2e^{-3\tau} \tr( \ce_n \circ K_n ) - 2e^{-3\tau} \diver_{h_n} \cm_n\\
            &\quad - e^{-2\tau} \p_{\tau}( \tr \ce_n ) + 2e^{-2\tau}( \Box_{g_n} \s_n - V' \circ \s_n ) \p_{\tau} \s_n,
        \end{split}\label{einstein scalar field tensor normalized equations 1}\\
        \begin{split}
            \lie_{\p_{\tau}}( e^{-\tau}\cm_n ) &= (e^{-\tau} \theta_n - 1) e^{-\tau} \cm_n - e^{-2\tau} \diver_{h_n} \ce_n + \frac{1}{2}d\big( 
            e^{-2\tau} \tr \ce_n - E_n(\p_{\tau},\p_{\tau}) \big)\\
            &\quad - e^{-2\tau}( \Box_{g_n} \s_n - V' \circ \s_n ) d\s_n.
        \end{split} \label{einstein scalar field tensor normalized equations 2}
    \end{align}
    \end{subequations}
    Note that from Proposition~\ref{ricci}, Lemma~\ref{error in the approximate weingarten map}, \eqref{approximate theta equation} and \eqref{approximate momentum constraint}, it follows that
    \[
    t^2\big|D^m\big(E_n(\p_t,\p_t)\big)\big|_{\ho} \leq C_{m,n} \langle \ln t \rangle^{m+2} t^{2\varepsilon}, \qquad t|D^m \cm_n|_{\ho} \leq C_{m,n} \langle \ln t \rangle^{m+5} t^{2\varepsilon}.
    \]
    So in order to improve on these estimates, we just need to ensure that the inhomogeneous terms in \eqref{einstein scalar field tensor normalized equations} decay as desired. Indeed,
    \[
    \begin{split}
        t^2 \diver_{h_n} \ce_n(e_i) &= t^2 \sn_{e_\ell}^{(n)} \ce_n(e_i,\omega^\ell)\\
        &= t^2\big( e_\ell( \ce_n(e_i,\omega^\ell)) - (\bar{\Gamma}_n)_{\ell i}^m \ce_n(e_m,\omega^\ell) + (\bar{\Gamma}_n)_{\ell m}^\ell \ce_n(e_i,\omega^m) \big).
    \end{split}
    \]
    Therefore, the off-diagonal improvements on the estimates for $\ce_n$ ensure that 
    \[
    t^2|D^m (t\lie_{\p_t})^r \diver_{h_n} \ce_n|_{\ho} \leq C_{m,r,n} \langle \ln t \rangle^{m+N_n} t^{2(n+1)\varepsilon}. 
    \]
    The rest of the terms are clear. 
    
    Now we are ready to improve on the estimates for $E_n(\p_\tau,\p_\tau)$ and $\cm_n$. First note that there is a potential loss of derivatives in the system \eqref{einstein scalar field tensor normalized equations}, since the right-hand side contains $\diver_{h_n} \cm_n$ and $d( E_n(\p_\tau,\p_\tau) )$. But this is not an issue for us, since we already have estimates for spatial derivatives of all orders for $E_n(\p_\tau,\p_\tau)$ and $\cm_n$. We start by plugging in the estimate for $\cm_n$ into \eqref{einstein scalar field tensor normalized equations 1}, which yields an improvement for $E_n(\p_\tau,\p_\tau)$. Then we can plug the improved estimate into \eqref{einstein scalar field tensor normalized equations 2}, which now yields an improvement for $\cm_n$. We can iterate this process to obtain further improvements, until we are stopped by the terms involving $\ce_n$ and $\s_n$, at which point we will have obtained the desired result without time derivatives. The estimates for the time derivatives then follow directly from \eqref{einstein scalar field tensor normalized equations}. 
\end{proof}

\begin{proof}[Proof of Theorem~\ref{approximate solutions}]
    Take the sequence $(h_n,\bar K_n,\s_n)$ as given by Proposition~\ref{construction of the approximate solutions}. It only remains to show that the estimates for $K_n$, $E_n$ and $\Box_{g_n}\s_n - V' \circ \s_n$ hold. The convergence estimates for $K_n$ follow from Lemma~\ref{error in the approximate weingarten map} and those for $\bar K_n$. Finally, the estimates for $E_n$ and $\Box_{g_n}\s_n - V' \circ \s_n$ follow from Lemma~\ref{the Einstein scalar field tensor decays very fast}. 
\end{proof}

\section{Existence of developments} \label{construction of the solution}

In order to construct the desired solution to Einstein's equations, we want to derive a second order equation for the second fundamental form. Consider a metric $g = -dt \otimes dt + h$ and a function $\s$ on $(0,T] \times \Sigma$ which do not necessarily satisfy the Einstein--nonlinear scalar field equations with potential $V$. As before, define
\[
    E := \ric - d\s \otimes d\s - (V \circ \s)g, \qquad
    h(\ce(X),Y) := E(X,Y), \qquad \cm(X) := E(\p_t,X).
\]
Then Proposition~\ref{ricci} implies that the Weingarten map $K$ satisfies the equation
\begin{equation} \label{k first order equation}
    \lie_{\p_t} K + \sric^{\sharp} + (\tr K) K = d\s \otimes \sn \s + (V \circ \s)I + \ce.
\end{equation}
Now we want to take $\lie_{\p_t}$ of this equation. To that end, first note that
\[
[\p_t, \sn \s] = \sn( \p_t \s ) - 2K(\sn \s).
\]
Next, by the first variation formula for the Ricci tensor (see, for instance, \cite[Equation~(2.31)]{chow_hamiltons_2006}),
\[
\lie_{\p_t} \sric = -\Delta_h k - 2\mathring{\bar{R}} k + \lie_{(\diver_h k)^{\sharp}} h - \sn^2 (\tr K) + \sric \circ k + k \circ \sric,
\]
where we use the notation $h \circ k := (h^{\sharp} \circ k^{\sharp})^{\flat}$ for $h$ and $k$ symmetric covariant 2-tensors, and $\mathring{\bar{R}} k( X,Y ) := \tr_h( k( \bar{R}(\,\cdot\,,X)Y,\,\cdot\, ) )$. To deal with $\mathring{\bar{R}}$, we use that in dimension 3, the Riemann tensor is completely determined by the Ricci tensor by the formula
\begin{equation} \label{curvature in terms of ricci}
    \begin{split}
    h(\bar{R}(X,Y)Z,W) &= -\sric(X,Z)h(Y,W) + \sric(X,W)h(Y,Z)\\
    &\quad -\sric(Y,W)h(X,Z) + \sric(Y,Z)h(X,W)\\
    &\quad -\frac{1}{2}\bar{S}\big( h(X,W)h(Y,Z) - h(X,Z)h(Y,W) \big)
\end{split}
\end{equation}
(see \cite[Equation (1.62)]{chow_hamiltons_2006}). Using this, we conclude that
\[
    \mathring{\bar{R}} k = -k \circ \sric -\sric \circ k + h( \sric, k ) h + (\tr K) \sric -\frac{1}{2} \bar{S}\big( (\tr K) h - k \big).
\]
Therefore,
\[
\begin{split}
    \lie_{\p_t} \sric^{\sharp} &= -\Delta_h K + 3\sric^{\sharp} \circ K + K \circ \sric^{\sharp} - 2\tr( \sric^\sharp \circ K )I\\
    &\quad -2(\tr K) \sric^{\sharp} + \bar{S}\big( (\tr K) I - K \big) + \big( \lie_{(\diver_h k)^{\sharp}} h \big)^{\sharp} - \sn^2 (\tr K)^{\sharp}.
\end{split}
\]
Moving on, note that
\[
\lie_{\cm^{\sharp}} h = \lie_{(\diver_h k)^{\sharp}}h - 2\sn^2 (\tr K) - d(\p_t \s) \otimes d\s - d\s \otimes d(\p_t \s) - 2(\p_t \s) \sn^2 \s.
\]
Now, from \eqref{k first order equation},
\[
\lie_{\p_t}^2 K + \lie_{\p_t} \sric^{\sharp} + \lie_{\p_t}\big( (\tr K) K \big) = \lie_{\p_t} d\s \otimes \sn \s + d\s \otimes [\p_t, \sn \s] + (V' \circ \s)\p_t \s I + \lie_{\p_t} \ce.
\]
By putting together all our previous observations, and using \eqref{k first order equation} again to express $\sric^{\sharp}$ in terms of $K$, $\s$ and $\ce$, we conclude that $K$ satisfies the equation
\begin{equation} \label{k second order equation}
    \lie_{\p_t}^2 K - \Delta_h K + \sn^2 (\tr K)^{\sharp} + F_1(K) + F_2(K) + F_3(\s) = G(E),
\end{equation}
where
\begin{align*}
    F_1(K) &:= -4(\tr K) K^2 + 2(\tr K) (\tr K^2) I + 3 (\tr K)^2 K - (\tr K)^3 I,\\
    \begin{split}
        F_2(K) &:= -3\lie_{\p_t} K \circ K - K \circ \lie_{\p_t} K + 2\tr( \lie_{\p_t} K \circ K ) I + 2 (\tr K) \lie_{\p_t} K\\
        &\quad - (\p_t (\tr K)) (\tr K) I + (\p_t (\tr K)) K + \lie_{\p_t}\big( (\tr K) K \big),
    \end{split}\\
        F_3(\s) &:= H(\s) + 2(\p_t \s) \sn^2 \s^{\sharp} + (V \circ \s)\big( K - (\tr K) I \big) - (V' \circ \s) \p_t \s I,\\
    \begin{split}
        H(\s) &:= 3 (d\s \circ K) \otimes \sn \s + 3d\s \otimes K(\sn \s) - 2\tr\big( (d\s \circ K) \otimes \sn \s \big)I\\
        &\quad - 2(\tr K) d\s \otimes \sn \s + |d\s|_h^2\big( (\tr K) I - K \big),
    \end{split}\\
    \begin{split}
        G(E) &:= \lie_{\p_t} \ce - 3 \ce \circ K - K \circ \ce + 2\tr( \ce \circ K )I\\
        &\quad + 2(\tr K) \ce + (\tr \ce)\big( (\tr K) I - K \big) - ( \lie_{\cm^{\sharp}} h )^{\sharp}.
    \end{split}
\end{align*}
The motivation for grouping the terms like this is the following. $F_1(K)$ consists of sums of contractions of $K \otimes K \otimes K$; $F_2(K)$ consists of sums of contractions of tensor products of $\lie_{\p_t} K$ and $K$; $F_3(\s)$ contains all of the scalar field terms, so $F_3(\s) = 0$ if there is no scalar field and the potential vanishes; $H(\s)$ consists of sums of contractions of tensor products of $h^{-1}$, $d\s \otimes d\s$ and $K$; and $G(E)$ contains all the terms involving $\ce$ and $\cm$, so that if Einstein's equations are satisfied, then $G(E) = 0$.

The problem with \eqref{k second order equation}, is that it is not a wave equation because of the term $\sn^2 (\tr K)^{\sharp}$. In order to deal with this, we think of $\theta = \tr K$ as a variable and introduce an additional evolution equation for it. Therefore, we set out to construct the solution to Einstein's equations by solving the system
\begin{subequations} \label{the system}
    \begin{align}
    \p_t \theta &= -\tr K^2 - (\p_t \s)^2 + V \circ \s,\label{main equation for theta}\\
    \lie_{\p_t} h(X,Y) &= h(K(X),Y) + h(X,K(Y)),\label{main equation for h}\\
    -\lie_{\p_t}^2 K + \Delta_h K &= \sn^2 \theta^{\sharp} + F_1(K) + F_2(K) + F_3(\s),\label{main equation for k}\\
    -\p_t^2 \s + \Delta_h \s &= \theta \p_t \s + V' \circ \s.\label{main equation for phi}
\end{align}
\end{subequations}
Note that as a consequence of Proposition~\ref{ricci}, Equation~\eqref{main equation for theta} is the evolution equation that the mean curvature $\theta$ would satisfy if the Einstein--nonlinear scalar field equations were satisfied. 

\begin{remark}
    In what follows, we will mostly use $L^p$ and Sobolev norms associated with the metric $h$. Therefore, for simplicity in the notation, we will write $\|\cdot\|_{L^p} = \|\cdot\|_{L^p(\Sigma_t,h)}$ and similarly for Sobolev norms. See Appendix~\ref{appendix} below for our conventions regarding norms of tensors.
\end{remark}

Given initial data on the singularity $(\Sigma,\ho,\Ko,\phio,\psio)$ and an admissible potential $V$, the plan is to control the difference between the solution to \eqref{the system} and an appropriate approximate solution as in Theorem~\ref{approximate solutions}. For that purpose, introduce
\[
\deh := h - h_n, \quad \deh^{-1} := h^{-1} - h_n^{-1}, \quad \dk := K - K_n, \quad \dt := \theta - \tr K_n, \quad \dep := \s - \s_n.
\]
The main energy to be controlled is
\begin{equation*}
    \begin{split}
        \se_s(t) &:= \sum_{m = 0}^{s-1} t^{2(m+1)} \| \sn^m \lie_{\p_t} \dk \|_{L^2}^2 + \sum_{m = 0}^s t^{2m} \| \sn^m \dk \|_{L^2}^2 + \sum_{m = 0}^{s+1} t^{2m} \| \sn^m \dt \|_{L^2}^2\\
        &\quad + \sum_{m = 0}^s t^{2m} \| \sn^m \p_t \dep \|_{L^2}^2 + \sum_{m = 0}^{s+1} t^{2(m-1)} \| \sn^m \dep \|_{L^2}^2\\
        &\quad + \sum_{m = 0}^{s+1} t^{2(m-1)} \big( \| \sn^m \deh \|_{L^2}^2 + \| \sn^m \deh^{-1} \|_{L^2}^2 \big)
    \end{split}
\end{equation*}
for $s \geq 5$. Now we state the main existence result that we will obtain for solutions to \eqref{the system}.

\begin{theorem} \label{global existence}
    For every $s \geq 5$ and every sufficiently large positive integer $N$ (depending only on $s$, the initial data and the potential), there is an $n_{N,s}$ such that for every $n \geq n_{N,s}$ there is a $T_{N,s,n} > 0$ such that the following holds. There is a $C^3 \times C^2 \times C^2 \times C^3$ solution $(h,K,\theta,\varphi)$ to \eqref{the system} on $(0,T_{N,s,n}] \times \Sigma$ satisfying the estimate
    \begin{equation} \label{main main energy estimate}
        \se_s(t) \leq t^{2N + 2s}
    \end{equation}
    for $t \in (0,T_{N,s,n}]$. Moreover, if $g := -dt \otimes dt + h$, then $K$ is the Weingarten map of the $\Sigma_t$ hypersurfaces, $\theta = \tr K$ and $(g,\s)$ is a solution to the Einstein--nonlinear scalar field equations with potential $V$.
\end{theorem}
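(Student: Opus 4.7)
The plan is to treat \eqref{the system} as a quasilinear wave–transport system, take one of the approximate solutions $(h_n,K_n,\tr K_n,\s_n)$ from Theorem~\ref{approximate solutions} with $n\geq n_{N,s}$ large, prescribe initial data at some small time $t_0 \in (0,t_n]$ agreeing with the approximate solution, and run the solution backwards in time. By standard local existence for symmetric hyperbolic/transport systems with sufficient regularity, this produces $(h,K,\theta,\s)$ on some interval $[t_0 - \delta_0,t_0]$; the real work is to extend the solution all the way to $t=0$ while preserving the bound \eqref{main main energy estimate}.

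The core step is a backward-in-time energy estimate for the differences $(\deh,\deh^{-1},\dk,\dt,\dep)$. Differentiating $\ce_s$ in $t$ and commuting $\sn^m$ through \eqref{the system}, one obtains schematically
\[
-\frac{d\ce_s}{dt} \leq \frac{C}{t}\,\ce_s + C\sqrt{\ce_s}\cdot t^{2(n+1)\varepsilon - M_s},
\]
where $M_s$ depends only on $s$ and the inhomogeneous term records the Einstein-equation errors $E_n$, $\cm_n$ and $\Box_{g_n}\s_n - V'\circ\s_n$ supplied by Theorem~\ref{approximate solutions}. Since $C/t$ is not integrable at $0$, a direct Grönwall argument fails, but because $n$ may be chosen as large as desired, the inhomogeneous contribution can be made to beat any fixed power of $t$. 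The plan is therefore a bootstrap: assuming $\ce_s(t)\leq t^{2N+2s}$, choose $n$ so that $2(n+1)\varepsilon - M_s$ exceeds $2N+2s$ by a sufficient margin, then integrate the differential inequality from $t_0$ backwards to conclude the improved estimate and extend the solution to $(0,T_{N,s,n}]$.

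The main technical obstacle is the potential loss of derivatives inherent in \eqref{the system}: the wave equation \eqref{main equation for k} contains $\sn^2 \theta^\sharp$, while the transport equation \eqref{main equation for theta} does not directly give $\theta$ one spatial derivative more than $K$. To circumvent this, I will work with a modified energy in which the top-order piece $\|\sn^{s+1}\theta\|_{L^2}^2$ is replaced using the approximately satisfied momentum constraint $\diver_h K - d\theta \approx (\p_t\s)d\s$; the resulting elliptic estimate bounds $\sn^{s+1}\theta$ in terms of $\sn^{s+1}K$ plus controllable lower-order terms, so the modified energy is equivalent to $\ce_s$ and no longer suffers the loss. This is where the structure of the approximate solutions as near-solutions to the full constraint system (and not merely to the evolution equations) enters in an essential way, and it is the step that I expect to be the most delicate; it is the adaptation of the Fournodavlos–Luk strategy \cite{fournodavlos_asymptotically_2023} to our setting, complicated by the presence of the scalar field and by the use of the non-coordinate frame $\{e_i\}$.

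Once $(h,K,\theta,\s)$ has been produced with the desired energy bound, three identifications remain. First, $K$ is the Weingarten map of the $\Sigma_t$ hypersurfaces iff the antisymmetric part $A(X,Y):=h(K(X),Y)-h(X,K(Y))$ vanishes; since $\lie_{\p_t} h = h(K\cdot,\cdot)+h(\cdot,K\cdot)$, the tensor $A$ satisfies a homogeneous first-order ODE along $\p_t$, and since $(h,K)$ is close to $(h_n,K_n)$ Lemma~\ref{error in the approximate weingarten map} gives $A\to 0$ as $t\to 0$, forcing $A\equiv 0$. Second, the identity $\theta=\tr K$ follows by deriving from \eqref{main equation for theta} and the trace of \eqref{main equation for k} (together with Proposition~\ref{ricci}) a closed homogeneous evolution equation for $\theta-\tr K$ and invoking its smallness at $t\to 0$. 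Third, for the Einstein equations, define $E:=\ric - d\s\otimes d\s -(V\circ\s)g$ and the associated $\ce,\cm$; by construction of \eqref{main equation for k} together with the first two identifications, $\ce$ vanishes identically, while $\cm$ and $E(\p_t,\p_t)$ satisfy the homogeneous system of Lemma~\ref{evolution equations for the constraint equations}. Comparing with the corresponding quantities for the approximate solution, which vanish to order $t^{2(n+1)\varepsilon}$ by Theorem~\ref{approximate solutions}, and applying a uniqueness argument in the spirit of Lemma~\ref{linear ode lemma} to this propagation system, forces $E\equiv 0$ and concludes the proof.
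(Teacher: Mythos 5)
Your overall architecture (energy estimates against an approximate solution, then identifying the Weingarten map, $\theta=\tr K$, and propagating the constraints) matches the paper's, but there are two gaps that would make the argument as stated fail.

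First, the direction of time. You propose to fix one $t_0$, impose data there, and run \emph{backward} to $t=0$. For the energy inequality $\tfrac{d}{dt}\ce_s \geq -\tfrac{C}{t}\ce_s - f$, Gr\"onwall backwards from $t_0$ gives $\ce_s(t)\leq t^{-C}\int_t^{t_0}r^C f(r)\,dr$. Since $f$ decays as $t\to 0$ (it comes from Theorem~\ref{approximate solutions}), the integral is dominated by its upper limit $t_0$, and one obtains a bound of the form $\ce_s(t)\lesssim (t_0/t)^C t_0^{\text{const}}$, which blows up as $t\to 0$ for any fixed $t_0$. No choice of $n$ removes the $t^{-C}$ growth. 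The paper instead solves \emph{forward} on $[t_i,T_{N,s,n}]$ with $\ce_s(t_i)=0$ for a sequence $t_i\to 0$; the forward Gr\"onwall factor is $(t/r)^C$ with $r\in[t_i,t]$, and the resulting bound $\ce_s(t)\lesssim t^{2N+2s+2}$ is independent of $t_i$. The solution on all of $(0,T_{N,s,n}]$ is then extracted as an Arzel\`a--Ascoli limit of this sequence. Without some version of this limiting step, there is no solution on $(0,T_{N,s,n}]$ at all: the interval $(0,t_0]$ is never the domain of a single backward solution.

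Second, the fix for the loss of derivatives. Using the momentum constraint $\diver_h K - d\theta = (\p_t\s)\,d\s$ to express $\sn^{s+1}\theta$ does not improve the derivative count: $d\theta$ is expressed through $\diver_h K$, i.e.\ one derivative of $K$, so $\sn^{s+1}\theta$ still requires $s+1$ derivatives of $K$, while $\ce_s$ controls only $s$. Moreover, the momentum constraint is not among the equations of the reduced system \eqref{the system}, and you only verify constraint propagation \emph{after} the energy estimate has been closed, so it is not available as an identity at this stage. The paper's modified top-order quantities (e.g.\ $\Delta_h\sn^{s-1}\dt + 2\tr((\dk+K_n)\circ\lie_{\p_t}\sn^{s-1}\dk) + 2\p_t(\dep+\s_n)\Delta_h\sn^{s-1}\dep$, plus analogues for $\deh$, $\deh^{-1}$) are engineered so that the dangerous terms $\Delta_h\sn^{s-1}\dk$ and $\Delta_h\sn^{s-1}\p_t\dep$ cancel exactly after applying $\lie_{\p_t}$; this mechanism is independent of the momentum constraint and is the actual content of the ``modified energy'' step.

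A smaller inaccuracy: you write that $\ce$ ``vanishes identically'' by construction of \eqref{main equation for k}. This is not the case. Solving \eqref{main equation for k} (together with $\theta=\tr K$ and $K$ being the Weingarten map) gives $G(E)=0$, which is a combination of $\ce$, $\cm$ and their derivatives, not $\ce=0$. In the paper, $\ce$ is coupled to $\cm$ in a transport--wave system (derived from $\lie_{\p_t}$ of the defining relation for $\ce$ and from $\diver_g E$), which again exhibits a potential loss of derivatives and is handled with a modified energy and a Gr\"onwall argument; $\ce\equiv 0$ and $\cm\equiv 0$ are conclusions, not inputs.
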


Theorem~\ref{global existence} is a direct consequence of Propositions~\ref{global existence proposition}--\ref{vanishing of the einstein scalar field tensor} below. In order to establish Theorem~\ref{global existence}, the first step is to obtain local solutions to \eqref{the system} by setting as initial data the one induced by the approximate solutions. This is given by the following lemma, which follows from \cite[Corollary 4 and Section 5]{ringstrom_local_2024}.

\begin{lemma} \label{local existence}
    Let $n$ be a non-negative integer and $t_n$ as in Theorem~\ref{approximate solutions}. For every $t_0 \in (0,t_n]$, there is a $\delta > 0$ such that there is a unique smooth solution $(h,K,\theta,\varphi)$ to \eqref{the system} on the interval $[t_0,t_0 + \delta]$ satisfying
\[
\begin{split}
    &\theta(t_0) = \tr K_n(t_0), \quad h(t_0) = h_n(t_0), \quad K(t_0) = K_n(t_0), \quad \lie_{\p_t} K(t_0) = \lie_{\p_t} K_n(t_0),\\
    &\s(t_0) = \s_n(t_0), \quad \p_t \s(t_0) = \p_t \s_n(t_0).
\end{split}
\]
Moreover, $h$ is symmetric.
\end{lemma}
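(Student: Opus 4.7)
The plan is to recognize \eqref{the system} as a coupled wave--transport system: the equations for $h$ and $\theta$ are first-order transport equations (pointwise ODEs in $t$) with smooth right-hand sides, while the equations for $K$ and $\s$ are quasilinear wave equations whose principal part $-\p_t^2 + \Delta_h$ is strictly hyperbolic as long as $h$ remains Riemannian. Since $t_0>0$ and the induced data $h_n(t_0)$, $K_n(t_0)$, $\lie_{\p_t}K_n(t_0)$, $\tr K_n(t_0)$, $\s_n(t_0)$, $\p_t\s_n(t_0)$ are smooth on the closed manifold $\Sigma$ with $h_n(t_0)$ positive definite, a short-time existence argument should apply away from the singularity.

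Concretely, I would run a Picard iteration in sufficiently high Sobolev regularity. Given an iterate $(h^{(i)},K^{(i)},\theta^{(i)},\s^{(i)})$, first integrate the transport equations to produce $(h^{(i+1)},\theta^{(i+1)})$, and then solve the linear wave equations for $(K^{(i+1)},\s^{(i+1)})$ with coefficients built from $h^{(i+1)}$ and source terms obtained by freezing the nonlinear terms $F_1(K^{(i)})$, $F_2(K^{(i)})$, $F_3(\s^{(i)})$ and $\sn^2\theta^{(i+1)\sharp}$. Standard energy estimates for linear wave equations on the background $g^{(i+1)}=-dt\otimes dt+h^{(i+1)}$ give uniform bounds on a sufficiently short interval $[t_0,t_0+\delta]$, and a contraction in a weaker Sobolev norm yields a limit solving \eqref{the system}. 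Symmetry of $h$ is automatic: the antisymmetric part $\alpha(X,Y):=h(X,Y)-h(Y,X)$ satisfies a linear homogeneous transport equation with vanishing initial datum, so $\alpha\equiv 0$. Positivity of $h$ is preserved on a short interval by continuity, and smoothness follows by running the existence argument at each Sobolev level and invoking uniqueness across regularities to identify the resulting solutions.

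The main obstacle I expect is the loss of derivatives coming from the term $\sn^2\theta^\sharp$ on the right-hand side of \eqref{main equation for k}: the transport equation for $\theta$ only propagates the regularity of its source, itself controlled by $K$ and $\s$, so $\sn^2\theta$ sits two Sobolev orders below $K$, and naive wave energy estimates for $K$ fail to close. The standard remedy, and precisely what \cite[Corollary 4 and Section 5]{ringstrom_local_2024} is tailored to accommodate, is to work with a staggered hierarchy in which the transport variables $(h,\theta)$ are given two more Sobolev derivatives of initial regularity than the wave variables $(K,\s)$; smoothness of the approximate data makes this assignment free. The lemma then follows by checking that the hypotheses of that reference are satisfied: the wave principal part is uniformly hyperbolic on a neighborhood of $t_0$, the lower-order terms are smooth functions of the unknowns and their first derivatives, and the initial data is smooth on the closed manifold $\Sigma$.
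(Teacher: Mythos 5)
Your plan correctly identifies the wave--transport structure of \eqref{the system}, the Picard-iteration strategy with energy estimates, and — most importantly — the derivative-loss threat posed by $\sn^2\theta^\sharp$ in \eqref{main equation for k}. The symmetry argument for $h$ (the antisymmetric part solves a linear homogeneous transport equation with zero initial datum, hence vanishes) is correct and mirrors the paper's treatment of the antisymmetric part of the full Weingarten map in Proposition~\ref{k is the weingarten map}. You also defer to the same reference \cite{ringstrom_local_2024} that the paper cites; indeed, the paper's entire proof of this lemma is that citation.

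Where your explanation goes astray is the claimed mechanism for resolving the loss of derivatives. A staggered hierarchy of \emph{initial} regularity does not help here, because the extra regularity is not propagated: the right-hand side of \eqref{main equation for theta} is an algebraic (zeroth-order) expression in $K$, $\p_t\s$ and $\s$, so time-integration delivers $\theta(t)\in H^s$ whenever the current iterate of $(K,\p_t\s)$ lies in $H^s$, no matter how smooth $\theta(t_0)$ was. Feeding $\sn^2\theta^\sharp\in H^{s-2}$ into the wave estimate for $K$ then still returns $K\in H^{s-1}$, a loss of one order per iteration, and a naive Picard scheme with passively staggered data does not close. The genuine remedy is structural rather than regularity-theoretic: one exploits the fact that $\p_t\theta$ (and $\lie_{\p_t}h$, $\lie_{\p_t}h^{-1}$) is zeroth order in the wave variables to build modified top-order quantities whose $\lie_{\p_t}$-evolution cancels the problematic top-order cross terms, and then recovers the true top-order Sobolev norms via elliptic estimates. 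This is exactly the device the paper introduces in the form of $\widetilde E_{s+1}[\dt]$, $\widetilde E_{s+1}[\deh]$, $\widetilde E_{s+1}[\deh^{-1}]$ and exploits in Subsection~\ref{controlling the energy with the modified energy}, and is the analogue of what the cited local-existence result has to implement.
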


Now, by a bootstrap argument, we must show that the local solutions given by Lemma~\ref{local existence} can be extended to a uniform existence time which is independent of $t_0$. For that purpose, we now introduce the \emph{bootstrap assumptions},
\begin{subequations} \label{bootstrap assumptions}
    \begin{align} 
        |\bar\h(e_i,e_k) - \delta_{ik}| &\leq t^{\varepsilon + |p_i - p_k|},\label{first bootstrap assumption}\\
        \|\deh\|_{H^{s+1}} + \|\deh^{-1}\|_{H^{s+1}} + \|\dk\|_{H^s} + \|\lie_{\p_t} \dk\|_{H^{s-1}} &\leq t^{5/2},\label{second bootstrap assumption}\\
        \|\dt\|_{H^{s+1}} + \|\dep\|_{H^{s+1}} + \|\p_t \dep\|_{H^s} &\leq t^{5/2},\label{third bootstrap assumption}
    \end{align}
\end{subequations}
for a fixed $s \geq 5$. Now we state our bootstrap improvement result.

\begin{theorem} \label{bootstrap theorem}
    For every $s \geq 5$ and every sufficiently large positive integer $N$ (depending only on $s$, the initial data and the potential), there is an $n_{N,s}$ such that for every $n \geq n_{N,s}$ there is a $T_{N,s,n} \in (0,t_n]$ such that the following holds. Let $(h,K,\theta,\s)$ be the solution to \eqref{the system} in $[t_0,t_b]$, with $t_b \leq T_{N,s,n}$, with initial data at $t_0$ as in Lemma~\ref{local existence}. Furthermore, assume that the bootstrap assumptions \eqref{bootstrap assumptions} hold on $[t_0,t_b]$. Then
    \begin{equation} \label{main energy estimate}
        \se_s(t) \leq t^{2N + 2s}
    \end{equation}
    for $t \in [t_0,t_b]$. In particular, the bootstrap assumptions are improved.
\end{theorem}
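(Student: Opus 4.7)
The plan is to derive a differential inequality for $\ce_s(t)$ (or a modified but equivalent energy $\tilde\ce_s$) and integrate it from the starting slice $t=t_0$ up to $t\leq t_b$, exploiting two features: (i) by Lemma~\ref{local existence} the solution $(h,K,\theta,\s)$ agrees with the $n$-th approximate solution $(h_n,K_n,\tr K_n,\s_n)$ at $t=t_0$, so $\ce_s(t_0)=0$; and (ii) by Theorem~\ref{approximate solutions} the error tensors $E_n$, $\cm_n$, $\bar\ce_n$ and $\Box_{g_n}\s_n-V'\circ\s_n$ decay as $t^{2(n+1)\varepsilon}$ (up to logs), so by taking $n$ large the inhomogeneity in the evolution equation for the differences can be made as small as we wish.

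First I would subtract the equations satisfied by the approximate solution from \eqref{the system} to obtain evolution equations for $\deh,\deh^{-1},\dt,\dep$ (transport-type) and for $\dk$ (wave-type of the form $-\lie_{\p_t}^2\dk+\Delta_h\dk=\sn^2\dt^\sharp+\mathrm{nonlinear}+G(E_n)$). Near the singularity the coefficients of these equations behave like $\theta\sim 1/t$, $K\sim \Ko/t$, so each energy identity will produce a formally borderline $\tfrac{C}{t}\ce_s$ term on the right-hand side, together with an inhomogeneous term controlled by the approximate-solution errors. The pointwise bootstrap assumption \eqref{first bootstrap assumption} is used throughout to keep $h$ uniformly comparable to $h_n$ in the frame $\{e_i\}$, so that norms taken with $h$ and with $\ho$ are interchangeable up to constants, and so that factors like $h^{-1}$ and $\det h$ can be bounded in $C^0$.

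The central difficulty is the apparent loss of derivatives: closing the natural wave energy for $\dk$ at order $s$ requires controlling $s{+}1$ derivatives of $\dt$, yet the transport equation \eqref{main equation for theta} for $\theta$ does not obviously give back this extra derivative. Following the plan outlined in the introduction, I would introduce a modified energy $\tilde\ce_s$ adding cross terms between $\lie_{\p_t}\dk$ and $\sn\dt$ designed so that the top-order spatial derivative $\sn^{s+1}\dt$ cancels against the corresponding term generated when integrating by parts in $\int \langle\sn^{s-1}\lie_{\p_t}\dk,\sn^{s-1}\sn^2\dt^\sharp\rangle$; the constraint $\theta=\tr K$ (which is preserved because of \eqref{main equation for theta} and the trace of \eqref{main equation for k}) is what makes this cancellation available. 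A Poincaré-type/elliptic argument using the principal part $-\lie_{\p_t}^2+\Delta_h$ of the $\dk$-equation then shows $\tilde\ce_s$ and $\ce_s$ are comparable for $T$ small.

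After that, standard commutator estimates, Sobolev embedding (valid because $s\geq 5$ and $\dim\Sigma=3$), and the product rule yield
\[
\frac{d}{dt}\tilde\ce_s(t) \;\leq\; \frac{C}{t}\,\tilde\ce_s(t) \;+\; C\,t^{2(n+1)\varepsilon-1}\langle\ln t\rangle^{N_n}\,t^{s}\sqrt{\tilde\ce_s(t)},
\]
where the inhomogeneous term bundles all contributions of $G(E_n)$, $\bar\ce_n$ and $\Box_{g_n}\s_n-V'\circ\s_n$, suitably weighted by the same $t$-factors appearing in $\ce_s$. Integrating this from $t_0$ to $t$ with $\tilde\ce_s(t_0)=0$ and absorbing the $C/t$ factor as $t^{-2C}$ gives $\tilde\ce_s(t)\leq C\,t^{2(n+1)\varepsilon+2s-\eta}$ for any small $\eta>0$ (after lowering $T$). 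Choosing $n_{N,s}$ so that $(n+1)\varepsilon\geq N+1$ for $n\geq n_{N,s}$ yields the desired bound \eqref{main energy estimate} after shrinking $T_{N,s,n}$ a little more. Finally, since $\sqrt{\ce_s(t)}\leq t^{N+s}$ controls each of $\|\dk\|_{H^s},\|\lie_{\p_t}\dk\|_{H^{s-1}},\|\deh\|_{H^{s+1}},\|\deh^{-1}\|_{H^{s+1}},\|\dt\|_{H^{s+1}},\|\dep\|_{H^{s+1}},\|\p_t\dep\|_{H^s}$, and $N+s>5/2$ once $N\geq 1$ and $s\geq 5$, the bootstrap assumptions \eqref{second bootstrap assumption}--\eqref{third bootstrap assumption} are strictly improved. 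Assumption \eqref{first bootstrap assumption} is improved by combining the $H^{s+1}$ bound on $\deh$ with Sobolev embedding and the convergence $\bar\h_n\to\ho$ from Theorem~\ref{approximate solutions}, provided $T$ is small enough. The main obstacle, beyond careful tracking of the many tensorial weights, is the construction and justification of the modified energy that removes the derivative loss while remaining equivalent to $\ce_s$.
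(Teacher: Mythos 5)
Your overall strategy matches the paper's: derive a differential inequality for a modified energy $\widetilde\ce_s$ with $\widetilde\ce_s(t_0)=0$, exploit the decay $t^{2(n+1)\varepsilon}$ of the approximate-solution errors by taking $n$ large, and integrate; the Gr\"onwall step is fine once $n$ is large enough that the source exponent beats the borderline $\tfrac{C}{t}$ coefficient (the paper implements this by multiplying by $t^{-2N-2s}$ and choosing $N\geq 2C^2$), and your bootstrap-improvement argument at the end is the same as the paper's.

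The gap is the construction of $\widetilde\ce_s$. The mechanism you sketch---cross terms between $\lie_{\p_t}\dk$ and $\sn\dt$ made to cancel via the constraint $\theta=\tr K$---cannot work as stated, for two reasons. First, $\theta=\tr K$ is not available during the bootstrap: in \eqref{the system} the variable $\theta$ is independent of $K$, the constraint is established only a posteriori in Proposition~\ref{theta is the mean curvature}, and with the initial data of Lemma~\ref{local existence} one computes $\p_t(\tr K-\theta)(t_0)=-E_n(\p_t,\p_t)(t_0)\neq 0$, so $\tr K-\theta$ is small but not identically zero on $[t_0,t_b]$. Second, even granting $\theta=\tr K$, one only trades $\sn^{s+1}\dt$ for $\tr\,\sn^{s+1}\dk$, and $s+1$ derivatives of $\dk$ are equally uncontrolled by $\ce_s$, so the loss persists. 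The paper's modified quantity instead replaces $\|\sn^{s+1}\dt\|_{L^2}^2$ (and the analogues for $\deh$, $\deh^{-1}$) by $\|\Delta_h\sn^{s-1}\dt + 2\tr\big((\dk+K_n)\circ\lie_{\p_t}\sn^{s-1}\dk\big) + 2\p_t(\dep+\s_n)\Delta_h\sn^{s-1}\dep\|_{L^2}^2$: the correction terms are built from $\lie_{\p_t}\sn^{s-1}\dk$ so that, upon differentiating in $t$, the wave equation \eqref{main equation for k} converts $\lie_{\p_t}^2\sn^{s-1}\dk$ into $\Delta_h\sn^{s-1}\dk$ and the uncontrolled $(s+1)$-derivative contributions cancel exactly without any appeal to $\theta=\tr K$. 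Equivalence with $\ce_s$ then follows from the elliptic estimate of Lemma~\ref{elliptic estimates} together with the lower-order bounds, yielding the comparison $\ce_s\leq(C+C_nt^\varepsilon)\widetilde\ce_s$ that your argument needs.
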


\begin{corollary} \label{global existence corollary}
    Let $N$, $s$ and $n$ be as in Theorem~\ref{bootstrap theorem}. Then the solution to \eqref{the system} given by Lemma~\ref{local existence} can be extended to all of $[t_0,T_{N,s,n}] \times \Sigma$. Moreover, \eqref{main energy estimate} holds for all $t \in [t_0,T_{N,s,n}]$.
\end{corollary}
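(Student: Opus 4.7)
The plan is to run a standard continuity argument, using Lemma~\ref{local existence} to locally continue the solution and Theorem~\ref{bootstrap theorem} to show that, as long as the solution exists, the bootstrap assumptions \eqref{bootstrap assumptions} are strictly improved. First I would verify that the bootstrap assumptions hold strictly at $t = t_0$. By construction of the initial data in Lemma~\ref{local existence}, the differences $\deh$, $\deh^{-1}$, $\dk$, $\lie_{\p_t}\dk$, $\dt$, $\dep$, $\p_t \dep$ all vanish at $t_0$, so the left-hand sides of \eqref{second bootstrap assumption}--\eqref{third bootstrap assumption} are zero at $t_0$. For \eqref{first bootstrap assumption}, since $\bar\h(t_0) = \bar\h_n(t_0)$, Theorem~\ref{approximate solutions} provides the bound $|\bar\h_n(e_i,e_k) - \delta_{ik}| \leq C_n \langle \ln t \rangle^2 t^{2\varepsilon + |p_i - p_k|}$, which is strictly smaller than $t^{\varepsilon + |p_i - p_k|}$ provided $T_{N,s,n}$ is taken small enough (absorbing the logarithmic factor and a factor of $t^\varepsilon$).

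Next, define
\[
\mathcal{I} := \{\, t_b \in [t_0,T_{N,s,n}] : \text{the solution extends to } [t_0,t_b] \text{ and \eqref{bootstrap assumptions} hold on } [t_0,t_b] \,\}.
\]
By Lemma~\ref{local existence} and the previous step, $\mathcal{I}$ contains an interval of the form $[t_0, t_0 + \delta']$ for some $\delta' > 0$. Let $T_* := \sup \mathcal{I}$. I would then argue by contradiction, assuming $T_* < T_{N,s,n}$. On $[t_0, T_*)$ the bootstrap assumptions hold, so Theorem~\ref{bootstrap theorem} applies and yields $\ce_s(t) \leq t^{2N + 2s}$ there. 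Unpacking the weights in $\ce_s$, this gives $\|\sn^m \deh\|_{L^2} \leq t^{N+s-m+1}$ for $m \leq s+1$, and analogous bounds for $\dk$, $\lie_{\p_t}\dk$, $\dt$, $\dep$, $\p_t \dep$, with the smallest guaranteed power of $t$ (at top order) being $t^N$. Taking $N$ large enough (e.g.~$N \geq 3$) strictly improves \eqref{second bootstrap assumption} and \eqref{third bootstrap assumption} on $[t_0, T_*)$.

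For the pointwise bound \eqref{first bootstrap assumption}, I would decompose $\bar\h(e_i,e_k) - \delta_{ik} = (\bar\h_n(e_i,e_k) - \delta_{ik}) + t^{-(p_i+p_k)}\deh(e_i,e_k)$ and use Sobolev embedding $H^s \hookrightarrow L^\infty$ (valid on the closed $3$-manifold $\Sigma$ for $s \geq 5$) together with the energy estimate to obtain $\|\deh\|_{L^\infty} \leq C t^{N+s-1}$. Combined with the bound on $\bar\h_n - \delta_{ik}$ from Theorem~\ref{approximate solutions} and the fact that $p_i + p_k \leq 2$, choosing $N$ sufficiently large (depending only on $s$ and the initial data) gives a strict improvement of \eqref{first bootstrap assumption} on $[t_0,T_*)$. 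By continuity, these strict inequalities persist at $t = T_*$ and in a small right-neighborhood of $T_*$. A renewed application of Lemma~\ref{local existence}, with initial time chosen slightly before $T_*$ and initial data the current state of the solution, extends the solution past $T_*$ while preserving (by continuity) the strict form of \eqref{bootstrap assumptions}. This contradicts the maximality of $T_*$, so $T_* = T_{N,s,n}$ and \eqref{main energy estimate} follows from a final application of Theorem~\ref{bootstrap theorem} on the full interval $[t_0, T_{N,s,n}]$.

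The main obstacle is to quantify precisely how large $N$ must be to close the pointwise improvement of \eqref{first bootstrap assumption}, since the weight $t^{-(p_i + p_k)}$ and the logarithmic factors in the Theorem~\ref{approximate solutions} estimates must both be absorbed with room to spare into the $t^N$ gain obtained from the energy estimate via Sobolev embedding; the minor bookkeeping point that the local existence in Lemma~\ref{local existence} must be re-applied with general (not velocity-dominated) initial data is standard and poses no difficulty, since the regularity dictated by the bootstrap assumptions is more than enough to invoke it.
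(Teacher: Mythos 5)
Your proposal is correct and takes essentially the same approach as the paper: the paper phrases the continuity argument as an open-and-closed (connectedness) argument for the set $\mathcal{A}$ of times up to which the solution extends and the bootstrap holds, whereas you phrase it as a contradiction argument about $T_* := \sup \mathcal{I}$; these are equivalent. Two small remarks on efficiency: (i) the fact that the bootstrap assumptions are strictly improved is already part of the conclusion of Theorem~\ref{bootstrap theorem} (``In particular, the bootstrap assumptions are improved''), so you do not need to re-derive the improvements of \eqref{second bootstrap assumption}--\eqref{third bootstrap assumption} from the energy weights, nor the pointwise improvement of \eqref{first bootstrap assumption} from Sobolev embedding---those computations are inside the proof of that theorem, and if you do re-derive the pointwise one you should use the $t$-weighted Sobolev embedding $\|T\|_{L^\infty}\leq Ct^{-5/2}\|T\|_{H^2}$ rather than an unweighted one; (ii) to go past $T_*$ you need a uniform lower bound on the local existence time for initial data near $T_*$, and the cleanest way to get it is the paper's route: the energy estimate gives uniform Sobolev bounds on $[t_0,T_*)$, hence a uniform $C^2$ bound by Sobolev embedding ($s\geq5$), which by the cited local existence result allows the solution to be continued to $[t_0,T_*]$ and then a short interval beyond. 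You gesture at this, but it is the one place where ``standard and poses no difficulty'' should be made precise.
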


The continuation criterion required to obtain Corollary~\ref{global existence corollary} comes from \cite[Corollary~4]{ringstrom_local_2024}. Once we have established Corollary~\ref{global existence corollary}, we proceed to prove Theorem~\ref{global existence} as follows. We fix a sequence of positive times $t_i \to 0$ with corresponding sequence of solutions to \eqref{the system} on $[t_i,T_{N,s,n}] \times \Sigma$. Then, by the Arzelà-Ascoli theorem and by passing to a subsequence if necessary, we obtain convergence to a solution to \eqref{the system} which is defined on $(0,T_{N,s,n}] \times \Sigma$. Finally, we show that by an appropriate choice of the parameters $N$ and $n$, the limit solution is in fact a solution to the Einstein--nonlinear scalar field equations with potential $V$. 

In order to prove Theorem~\ref{bootstrap theorem}, we will also make use of a modified energy. But before we can introduce it, we need some definitions.

\begin{definition}
    Let $T$ be a one parameter family of tensors on $\Sigma$. Then we define the \emph{basic energy} by
    \[
    \mathbb E[T] := \int_{\Sigma_t} |\lie_{\p_t} T|_h^2 + |\sn T|_h^2 + t^{-2}|T|_h^2 \mu,
    \]
    where $\mu$ denotes the volume form of $h$. Furthermore, define the \emph{$m$-th order energy} by ${\mathbb E_m[T] := \mathbb E[\sn^m T]}$.
\end{definition}

\begin{definition} \label{weird symmetrization}
    Let $T$ be a $(1,r)$-tensor on $\Sigma$ with $r \geq 2$. For $X,Y \in \mfx(\Sigma)$, define
    \[
    \begin{split}
        \cs[T](X,Y) &:= \frac{1}{2}\Big( T(X,Y) + T(Y,X) + h^{\ell m}h(T(X,e_\ell),Y)e_m\\
        &\quad + h^{\ell m} h(T(Y,e_\ell),X)e_m - h^{\ell m} h(T(e_\ell,X),Y)e_m - h^{\ell m}h(T(e_\ell,Y),X)e_m \Big),
    \end{split}
    \]
    where the vector fields are inserted into the \emph{last two} covariant entries of $T$. 
\end{definition}

\begin{definition}
    Let $T$ be a $(1,r)$-tensor on $\Sigma$ with $r \geq 1$. Define $\tr_i T$ to be the contraction of $T$ obtained by contracting with the $i$-th covariant entry.
\end{definition}

We define the \emph{modified top order quantities} for $\dt$, $\deh$ and $\deh^{-1}$ by
\begin{align*}
    \begin{split}
        \widetilde{\mathbb E}_{s+1}[\dt] &:= \int_{\Sigma_t} |\Delta_h \sn^{s-1} \dt + 2\tr_s\big( (\dk + K_n) \circ \lie_{\p_t} \sn^{s-1} \dk \big)\\
        &\hspace{1cm} + 2\p_t(\dep + \s_n) \Delta_h \sn^{s-1} \dep|_h^2 \mu,
    \end{split}\\
    \begin{split}
        \widetilde{\mathbb E}_{s+1}[\deh] &:=  \int_{\Sigma_t} |\Delta_h \sn^{s-1} \deh - h(\lie_{\p_t} \sn^{s-1} \dk,\,\cdot\,) - h(\,\cdot\,,\lie_{\p_t} \sn^{s-1} \dk)\\
        &\hspace{1cm} + \deh( \cs[\lie_{\p_t} \sn^{s-1}\dk],\,\cdot\, ) + \deh(\,\cdot\,, \cs[\lie_{\p_t} \sn^{s-1}\dk] )|_h^2 \mu,
    \end{split}\\
    \begin{split}
        \widetilde{\mathbb E}_{s+1}[\deh^{-1}] &:= \int_{\Sigma_t} | \Delta_h \sn^{s-1} \deh^{-1} + (\lie_{\p_t} \sn^{s-1} \dk)(e_a)\otimes\big( h^{-1}(\omega^a,\,\cdot\,) + h^{-1}(\,\cdot\,,\omega^a) \big)\\
        &\hspace{1cm} - \cs[\lie_{\p_t} \sn^{s-1}\dk](e_a) \otimes \big(\deh^{-1}(\omega^a,\,\cdot\,) + \deh^{-1}(\,\cdot\,,\omega^a)\big)|_h^2 \mu.
    \end{split}
\end{align*}
Also, define the \emph{modified energy} $\widetilde \se_s(t)$ by
\begin{equation*}
    \begin{split}
        \widetilde \se_s(t) &:= \sum_{m=0}^{s-1} t^{2(m+1)} \mathbb E_m[\dk] + \sum_{m=0}^s t^{2m} \mathbb E_m[\dep]\\
        &\quad + \sum_{m=0}^s t^{2m}\|\sn^m \dt\|_{L^2}^2 + \sum_{m=0}^s t^{2(m-1)} \big( \|\sn^m \deh\|_{L^2}^2 + \|\sn^m \deh^{-1}\|_{L^2}^2 \big)\\
        &\quad + t^{2(s+1)} \widetilde{\mathbb E}_{s+1}[\dt] + t^{2s}\big( \widetilde{\mathbb E}_{s+1}[\deh] + \widetilde{\mathbb E}_{s+1}[\deh^{-1}] \big).
    \end{split}
\end{equation*}
The purpose of the modified energy is to deal with the fact that Equation~\eqref{main equation for k} seems to lead to a loss of derivatives. To be precise, assume that we want to estimate $m$ derivatives of $K$. Regarding $\theta$, we would need control of $m+1$ derivatives. But Equation~\eqref{main equation for theta} does not seem to give control of $m+1$ derivatives of $\theta$ given control of $m$ derivatives of $K$. Turning our attention to $\s$, due to the second term in $F_3(\s)$ we would need control of $m+1$ derivatives of $\s$. Looking at \eqref{main equation for phi}, we see that commuting derivatives with $\Delta_h \s$ gives rise to terms involving $m+1$ derivatives of $h$. But turning to Equation~\eqref{main equation for h}, we see that we have the same issue as with $\theta$. The issue between $h$ and $K$ also arises from commuting derivatives with $\Delta_h K$, since it contains $2$ derivatives of $h$. The modified top order quantities are designed to avoid this issue, making it possible to estimate the modified energy. Then, by using elliptic estimates, we show that the modified energy in fact controls the main energy.

Subsections~\ref{preliminary estimates}--\ref{controlling the energy with the modified energy} are devoted to the proof of Theorem~\ref{bootstrap theorem}. In Subsection~\ref{preliminary estimates}, we obtain all the estimates that are required in preparation for the energy estimates. In Subsection~\ref{energy estimates}, we obtain the necessary energy estimates for \eqref{the system}. The conclusion of the proof of Theorem~\ref{bootstrap theorem} is then found in Subsection~\ref{controlling the energy with the modified energy}. Subsections~\ref{finishing the construction} and \ref{the constructed solution solves einsteins eqs} then comprise the proof of Theorem~\ref{global existence}. In Subsection~\ref{the constructed solution is smooth} we prove that there is a \emph{smooth} solution such that $\K$, $\Phi$ and $\Psi$ converge as required. Finally, in Subsection~\ref{asymptotics for the expansion normalized metric}, we obtain convergence of $\h$, thus finishing the proof of Theorem~\ref{main existence theorem}.

\subsection{Preliminary estimates} \label{preliminary estimates}

From now on, and until the end of Subsection~\ref{controlling the energy with the modified energy}, we assume that the hypotheses of Theorem~\ref{bootstrap theorem} hold. Therefore, we have a solution $(h,K,\theta,\s)$ to \eqref{the system}, as in Lemma~\ref{local existence}, defined on $[t_0,t_b] \times \Sigma$, which satisfies the bootstrap assumptions \eqref{bootstrap assumptions}. Note that \eqref{first bootstrap assumption} implies that
\begin{equation*} 
    |(h - h_0)_{ik}| \leq t^{2p_{\max\{i,k\}} + \varepsilon}, \quad |\det h - t^2| \leq Ct^{2+\varepsilon}, \quad |(h - h_0)^{ik}| \leq Ct^{-2p_{\min\{i,k\}} + \varepsilon}
\end{equation*}
for $t \in [t_0,t_b]$; see the proof of Corollary~\ref{estimates of differences}. Finally, fix $s \geq 5$.

\begin{remark}
    Until the end of Subsection~\ref{the constructed solution solves einsteins eqs}, the constants $C$ and $C_n$ will be allowed to depend on $s$, in addition to the initial data, the potential, and $n$ in the case of $C_n$. Importantly, $C$ is not allowed to depend on $n$. 
\end{remark}

\begin{remark}
    We work with local solutions to \eqref{the system} as given by Lemma~\ref{local existence}. In particular, for the time being, we do not know $K$ to be the Weingarten map of the $\Sigma_t$ hypersurfaces, nor do we know $\theta$ to coincide with either $\tr K$ or the mean curvature. 
\end{remark}

\begin{lemma} \label{dual metric equivalence}
    There is a $T > 0$ such that if \eqref{first bootstrap assumption} holds for $t_b < T$, then there is a constant $C$, depending only on $\varepsilon$, such that for $\alpha \in T_p^* \Sigma$
    \[
    C^{-1} t^{-2p_1} |\alpha|_{\ho}^2 \leq |\alpha|_h^2 \leq C t^{-2p_3} |\alpha|_{\ho}^2.
    \]
\end{lemma}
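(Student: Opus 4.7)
The plan is to prove the equivalence by a direct pointwise computation in the frame $\{e_i\}$. I would begin by translating \eqref{first bootstrap assumption} into pointwise bounds on the matrix components $h_{ik} := h(e_i,e_k)$. Since $t^{-\Ko}(e_i) = t^{-p_i} e_i$, one has $\bar\h(e_i,e_k) = t^{-p_i-p_k} h_{ik}$, so \eqref{first bootstrap assumption} becomes
\[
|h_{ik} - t^{p_i + p_k}\delta_{ik}| \leq t^{p_i + p_k + \varepsilon + |p_i - p_k|} = t^{2p_{\max\{i,k\}} + \varepsilon},
\]
which is exactly the first estimate in \eqref{asymptotics for h}. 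The remaining bounds in \eqref{asymptotics for h}---on $\det h - t^2$ and on the dual components $h^{ik} - t^{-2p_i}\delta^{ik}$---are then recovered by expanding $\det h$ and $\adj(h)$ as polynomials in the $h_{ik}$, provided $T$ is chosen small enough that, say, $|\det h| \geq \tfrac12 t^2$.

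Next, I would write $\alpha = \alpha_i \omega^i$ in the dual frame and use $|\alpha|_\ho^2 = \sum_i \alpha_i^2$ (since $\{e_i\}$ is $\ho$-orthonormal). Expanding
\[
|\alpha|_h^2 = h^{ik}\alpha_i\alpha_k = \sum_i t^{-2p_i} \alpha_i^2 + \sum_{i,k}\bigl(h^{ik} - t^{-2p_i}\delta^{ik}\bigr)\alpha_i\alpha_k,
\]
the leading diagonal sum is comparable to $|\alpha|_\ho^2$ up to the extremal weights $t^{-2p_i}$ dictated by the ordering $p_1 < p_2 < p_3$. The error term is bounded, via Cauchy--Schwarz and the dual estimate in \eqref{asymptotics for h}, by a constant multiple of $t^\varepsilon$ times the leading sum.

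Taking $T$ small enough that this multiplicative error is absorbed into the equivalence constant then yields the lemma. There is no significant obstacle here; the statement is essentially a pointwise consequence of \eqref{first bootstrap assumption}, reduced via the adjugate formula and elementary linear algebra. The only bookkeeping subtlety is keeping track of the correct extremal powers of $t$ given the ordering of the $p_i$ and that $t \in (0,1)$.
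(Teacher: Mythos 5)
Your proposal is correct and takes essentially the same route as the paper: both write $|\alpha|_h^2$ as the diagonal $h_0$-part $\sum_i t^{-2p_i}\alpha_i^2$ plus an error controlled via the bound on $h^{ik}-t^{-2p_i}\delta^{ik}$ from \eqref{asymptotics for h}, absorb the error into the leading term for $T$ small, and read off the extremal weights $t^{-2p_1}$, $t^{-2p_3}$ from the ordering of the $p_i$.
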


\begin{proof}
    Let $\alpha_i := \alpha(e_i)$ and $\alpha \neq 0$, so that $\alpha_i \neq 0$ for some $i$, then
    \[
    \begin{split}
        |\alpha|_h^2 = h^{\ell m} \alpha_\ell \alpha_m &= (h_0)^{\ell m}\alpha_\ell \alpha_m + (h - h_0)^{\ell m} \alpha_\ell \alpha_m\\
        &= \sum_k t^{-2p_k} \alpha_k^2 \left( 1 + \frac{(h-h_0)^{\ell m} \alpha_\ell \alpha_m}{\sum_k t^{-2p_k} \alpha_k^2} \right). 
    \end{split}
    \]
    Moreover, by using
    \[
    \frac{|\alpha_i|}{\sqrt{\sum_k t^{-2p_k}\alpha_k^2}} \leq t^{p_i},
    \]
    we have
    \[
    \begin{split}
        \left| \frac{(h-h_0)^{\ell m} \alpha_\ell \alpha_m}{\sum_k t^{-2p_k} \alpha_k^2} \right| \leq \sum_{\ell,m} t^{p_\ell + p_m} |(h - h_0)^{\ell m}| \leq Ct^\varepsilon. 
    \end{split}
    \]
    The result follows by choosing $T$ such that $C T^\varepsilon < 1$.
\end{proof}

\begin{lemma}[Sobolev embedding]
    Let $T$ be a tensor on $\Sigma$, then
    \[
    \|T\|_{L^\infty} \leq C t^{-5/4} \|T\|_{W^{1,4}}, \qquad \|T\|_{L^4} \leq C t^{-5/4} \|T\|_{H^1}.
    \]
    In particular, 
    \[
    \|T\|_{L^\infty} \leq C t^{-5/2} \|T\|_{H^2}.
    \]
\end{lemma}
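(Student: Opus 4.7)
The plan is to reduce the statement to the standard Sobolev embeddings on the fixed closed three-manifold $(\Sigma,\ho)$ and then carefully track the powers of $t$ that arise when converting between $h$-norms and $\ho$-norms. Working with the fixed background $\ho$ gives access to embedding constants that are independent of $t$, and the factor $t^{-5/4}$ encodes the degeneration of the one-parameter family $h$ as $t \to 0$.

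I would begin by assembling two auxiliary comparisons. First, the bootstrap assumption \eqref{first bootstrap assumption} together with \eqref{asymptotics for h} gives $\sqrt{\det h} \in [c, C] t$, so $\mu_h \in [c, C] t\, \mu_{\ho}$; consequently $\|f\|_{L^p(h)}$ and $t^{1/p}\|f\|_{L^p(\Sigma,\mu_{\ho})}$ are equivalent for scalar quantities. Second, Lemma~\ref{dual metric equivalence} and the dual statement for vectors (proved identically via the spectral decomposition of $h$ in the $\ho$-orthonormal frame $\{e_i\}$) show that for a tensor of fixed type, the pointwise $h$- and $\ho$-norms differ by a bounded power of $t$; under the constraint $\tr\Ko = 1$ and the eigenvalue bounds implicit in Definition~\ref{initial data}, the worst-case exponent is no worse than $t^{-1}$.

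Next, for a tensor $T$ on $\Sigma$ I would invoke the standard Sobolev embedding $W^{1,4}(\Sigma,\ho)\hookrightarrow L^\infty(\Sigma,\ho)$ on the compact three-manifold $(\Sigma,\ho)$, whose constant depends only on $\ho$, and convert each norm. The pointwise comparison yields $\|T\|_{L^\infty(h)} \leq C\|T\|_{L^\infty(\ho)}$ up to a bounded power of $t$, and the volume comparison gives $\|T\|_{L^4(\ho)} \leq Ct^{-1/4}\|T\|_{L^4(h)}$. The derivative term is handled by writing $\sn = D + (\sn-D)$, where the connection difference $(\sn-D)T$ is algebraic in $T$ with coefficients given by $\bar\Gamma - \bar\Gamma_0$; by the estimates of Subsection~\ref{estimates for ricci etc} these Christoffel symbol differences are controlled, so the difference term may be absorbed into a bounded multiple of $\|T\|_{L^4(h)}$. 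Converting $\|\sn T\|_{L^4(\ho)}$ to $\|\sn T\|_{L^4(h)}$ then costs $t^{-1/4}$ from the volume and up to $t^{-1}$ from the pointwise norm, for a total of $t^{-5/4}$. Combining the estimates yields the first inequality. The second inequality follows from exactly the same argument applied to the embedding $H^1(\Sigma,\ho)\hookrightarrow L^4(\Sigma,\ho)$, and the third is obtained by iterating the first two together with the elementary embedding $H^2\hookrightarrow W^{1,4}$ in three dimensions.

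The main obstacle is the careful bookkeeping of $t$-powers for tensors of arbitrary type, since $h$ degenerates anisotropically at rates $t^{2p_i}$ for distinct exponents $p_i$. The key observation is that, under the constraints imposed on the $p_i$ by Definition~\ref{initial data}, the worst-case pointwise conversion between $|T|_h$ and $|T|_\ho$ costs at most a factor of $t^{-1}$, which combined with the $t^{-1/4}$ from the volume yields the uniform $t^{-5/4}$ loss, independent of tensor type (modulo a type-dependent multiplicative constant absorbed into $C$).
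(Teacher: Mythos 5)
Your approach has a genuine gap, and the paper avoids it with a trick you haven't reproduced.

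The central problem is your claim that ``the worst-case pointwise conversion between $|T|_h$ and $|T|_\ho$ costs at most a factor of $t^{-1}$.'' This is false once the rank of $T$ exceeds one. For a $(0,r)$-tensor, writing things in the frame $\{e_i\}$ one finds
\[
|T|_h^2 \approx \sum t^{-2(p_{i_1}+\cdots+p_{i_r})}\,T_{i_1\cdots i_r}^2, \qquad |T|_\ho^2 = \sum T_{i_1\cdots i_r}^2,
\]
so the smallest weight is $t^{-2rp_1}$ and hence $|T|_\ho \leq C t^{rp_1}|T|_h$. When $p_1<0$ this is $t^{-r|p_1|}$, which for $r\geq 2$ can be substantially worse than $t^{-1}$ (take $T=h$: $|h|_h=\sqrt 3$ while $|h|_\ho\sim t^{2p_1}$). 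So the loss grows with the tensor rank and cannot be packaged into a universal $t^{-1}$. Your bookkeeping also needs the same conversion for $\sn T$, which has one more covariant index than $T$, compounding the problem. A second, independent issue: you assert that the Christoffel-symbol difference $\sn - D$ is ``controlled'' by the estimates of Subsection~\ref{estimates for ricci etc} and can be absorbed into a bounded multiple of $\|T\|_{L^4(h)}$. In fact Lemma~\ref{decay of ricci} gives bounds like $|\bar\Gamma_{ii}^\ell| \lesssim \langle\ln t\rangle\, t^{2(p_i-p_\ell)}$, which blow up as $t\to 0$ (e.g.\ $i=1$, $\ell=3$). These terms are not bounded and cannot simply be absorbed; they are anisotropic in the indices, exactly the sort of structure your argument would have to track precisely.

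The paper's proof sidesteps both difficulties with a reduction to scalars that you have not reproduced. It first proves the estimate for a \emph{function} $f$, where the only conversion needed is for the one-form $df$ (for which Lemma~\ref{dual metric equivalence} gives $|df|_\ho \leq Ct^{-1}|df|_h$ since $p_1 > -1$) and the volume form ($\det h \sim t^2$). This yields the scalar version with the announced $t^{-5/4}$ loss. It then applies that scalar inequality to $f_\delta = \sqrt{|T|_h^2+\delta}$ and lets $\delta\to 0$: by Kato's inequality $|df_\delta|_h \leq |\sn T|_h$, so one obtains the tensor estimate with the \emph{same} constant, no tensor-type-dependent conversion and no connection difference ever appearing. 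That reduction is the essential idea missing from your proposal; without it, the direct tensorial argument does not close.
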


\begin{proof}
    First consider a function $f$. By Sobolev embedding,
    \[
    \sup_{\Sigma} |f| \leq C \|f\|_{W^{1,4}(\Sigma,\ho)} = C \Big( \int_{\Sigma} f^4 \mathring{\mu} \Big)^{1/4} + C \Big( \int_{\Sigma} |df|_{\ho}^4 \mathring{\mu}, \Big)^{1/4}
    \]
    where $\mathring{\mu}$ is the volume form of $\ho$. Also
    \[
    |df|_{\ho}^4 \leq Ct^{-4} |df|_h^4, \quad C^{-1} t^2 \leq \det h \leq C t^2.
    \]
    Hence, since $\mu = \sqrt{\det h} \mathring{\mu}$, where $\mu$ denotes the volume form of $h$,
    \[
    \sup_{\Sigma} |f| \leq C \Big( \int_{\Sigma} f^4 t^{-1} \mu \Big)^{1/4} + C \Big( \int_{\Sigma} t^{-5} |df|_h^4 \mu \Big)^{1/4} \leq C t^{-5/4} \|f\|_{W^{1,4}}.
    \]
    Similarly,
    \[
    \begin{split}
        \|f\|_{L^4} \leq C t^{1/4} \|f\|_{L^4(\Sigma,\ho)} &\leq Ct^{1/4} \|f\|_{H^1(\Sigma,\ho)}\\
        &\leq Ct^{1/4} \left[ \Big( \int_{\Sigma} f^2 t^{-1}\mu \Big)^{1/2} + \Big( \int_{\Sigma} t^{-2}|df|_h^2 t^{-1}\mu \Big)^{1/2} \right]\\
        &\leq t^{-5/4} \|f\|_{H^1}.
    \end{split}
    \]
    For a tensor $T$, we apply the already obtained inequalities to $f_\delta = \sqrt{|T|_h^2 + \delta}$ for $\delta > 0$ and then let $\delta \to 0$.
\end{proof}

\begin{lemma} \label{estimates in tensor components}
    There is a $T_n > 0$ small enough and a constant $C_{n,m}$, depending only on the initial data and the potential, such that if $t_b \leq T_n$ and $T$ is a $(q,r)$-tensor on $\Sigma$, then
    \[
    | (\sn^{(n)})^m T|_{h_n}^2 \leq C_{n,m} \langle \ln t \rangle^{2m} t^{2m(-1 + \varepsilon)} \sum t^{-2(p_{i_1} + \cdots + p_{i_r})} t^{2(p_{k_1} + \cdots + p_{k_q})} \big(e_{\alpha} T^{k_1 \cdots k_q}_{i_1 \cdots i_r} \big)^2
    \]
    and
    \[
    \begin{split}
        &\sum t^{-2(p_{\alpha_1} + \cdots + p_{\alpha_\ell})} t^{-2(p_{i_1} + \cdots + p_{i_r})} t^{2(p_{k_1} + \cdots + p_{k_q})}\big(e_{\alpha} T_{i_1 \cdots i_r}^{k_1 \dots k_q}\big)^2\\
        &\hspace{4cm}\leq C_{n,m} \sum_{a=0}^m \langle \ln t \rangle^{2(m-a)} t^{2(m-a)(-1+\varepsilon)} |(\sn^{(n)})^a T|_{h_n}^2,
    \end{split}
    \]
    where the sums are over all indices and all $\alpha$ with $|\alpha| = \ell \leq m$, and the indices refer to the components of $T$ in terms of the frame $\{e_i\}$.
\end{lemma}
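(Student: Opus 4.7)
The plan is to prove both inequalities by induction on $m$. The key inputs are the component-wise estimates for $h_n$, $h_n^{-1}$, and the Christoffel symbols of $\sn^{(n)}$ provided by Lemma~\ref{decay of ricci} applied to $(h_n,\s_n)$. The hypotheses of Lemma~\ref{decay of ricci} are met by construction of the approximate solutions (Proposition~\ref{construction of the approximate solutions} and Lemmas~\ref{definition of kn}--\ref{definition of phin}), once $T_n$ is taken small enough. Crucially, $h_n$ and $h_n^{-1}$ are diagonally dominated in the frame $\{e_i\}$ to leading order: $(h_n)_{ik}=t^{p_i+p_k}\bar\h_n(e_i,e_k)$ with $\bar\h_n\to\ho$, so up to constants depending only on the initial data and $n$,
\[
C^{-1}\sum_{i,k}t^{2(p_k-p_i)}(T^k{}_i)^2\;\le\;|T|_{h_n}^2\;\le\;C\sum_{i,k}t^{2(p_k-p_i)}(T^k{}_i)^2
\]
for any $(1,1)$-tensor $T$, with the analogous statement for general $(q,r)$-tensors. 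This handles the case $m=0$ of both inequalities, since then $(\alpha_1,\ldots,\alpha_\ell)$ is empty on the left of the second inequality, and the zeroth-order term $|T|_{h_n}^2$ on the right of the first inequality is precisely the weighted sum of components.

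For the inductive step of the first inequality, I would write
\[
(\sn^{(n),m+1}T)^{k_1\cdots k_q}_{\alpha_0\alpha_1\cdots\alpha_m,\,i_1\cdots i_r}
\;=\;e_{\alpha_0}(\sn^{(n),m}T)^{k_1\cdots k_q}_{\alpha_1\cdots\alpha_m,\,i_1\cdots i_r}
\;+\;\text{($\bar\Gamma^{(n)}$-terms)},
\]
apply the $m=0$ estimate to the $(q,r+m+1)$-tensor $\sn^{(n),m+1}T$, and expand each derivative iteratively into a sum of terms $e_\beta T$ (for $|\beta|\le m+1$) multiplied by products of $\bar\Gamma^{(n)}$ and their $e$-derivatives. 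Each derivative index $\alpha_j$ contributes a factor $t^{-2p_{\alpha_j}}\le t^{-2p_3}$ in the norm computation, which accounts for the prefactor $t^{-2mp_3}$. The Christoffel contributions are controlled by the three-way case distinction in Lemma~\ref{decay of ricci}: the $O(\langle\ln t\rangle)$ bound for $\bar\Gamma^{(n)}_{ik}{}^i$ and $\bar\Gamma^{(n)}_{ik}{}^k$, the $O(\langle\ln t\rangle\,t^{2(p_i-p_\ell)})$ bound for $\bar\Gamma^{(n)}_{ii}{}^\ell$, and the improved off-diagonal bounds in $D_+$ and $D_-$ for $\bar\Gamma^{(n)}_{ik}{}^\ell$ with $i,k,\ell$ distinct. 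Each Christoffel factor contributes one power of $\langle\ln t\rangle$, giving the $\langle\ln t\rangle^{2m}$ prefactor.

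For the second inequality, I would invert the expansion: starting from the Leibniz-type identity and solving recursively for $e_{\alpha_0}T$ gives
\[
e_{\alpha_0}T^{k_1\cdots k_q}_{i_1\cdots i_r}
\;=\;(\sn^{(n)}T)^{k_1\cdots k_q}_{\alpha_0,\,i_1\cdots i_r}
\;-\;\text{($\bar\Gamma^{(n)}\cdot T$-terms)},
\]
and iterating, one expresses $e_\alpha T$ for $|\alpha|=\ell$ as a sum of $\sn^{(n),a}T$ for $a\le\ell$ multiplied by products of $\bar\Gamma^{(n)}$ and their $e$-derivatives. Weighting by $t^{-2(p_{\alpha_1}+\cdots+p_{\alpha_\ell})}$ and squaring, the first inequality (already proved for orders $\le m$) bounds the contribution of each $\sn^{(n),a}T$, while the Christoffel factors provide the $\langle\ln t\rangle^{2(m-a)}t^{-2(m-a)p_3}$ factors in the sum on the right.

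The main obstacle is the index bookkeeping in the inductive step: verifying that, after one expands the Christoffel terms and computes the $h_n$-norm, the powers of $t$ produced by the Christoffel bounds combine with the weights $t^{-2(p_{\alpha_j}+p_{i_b})}$ and $t^{2p_{k_a}}$ in such a way that no term exceeds $C_{n,m}\langle\ln t\rangle^{2m}t^{-2mp_3}$ times the claimed weighted sum of $(e_\beta T^{k}{}_i)^2$. The delicate cases are those with $i,k,\ell$ all distinct, where one must invoke both the improved bound $|\bar\Gamma^{(n)}_{ik}{}^\ell|(x)\le C\langle\ln t\rangle t^{2(p_1-p_\ell)(x)}$ in $D_+$ and the vanishing of $\gamma^1_{23}$ in $D_-$ (which underlies the bound $|\bar\Gamma^{(n)}_{ik}{}^\ell|(y)\le C\langle\ln t\rangle t^{2(p_2-p_\ell)(y)}$), together with $\tr\Ko=1$, to close the exponent count. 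Once this is verified at the level of a single Christoffel contribution, the full expansion follows by straightforward induction.
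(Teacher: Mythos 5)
Your proposal mirrors the structure of the paper's proof: establish the $m=0$ equivalence from diagonal dominance of $h_n$, expand $\sn^{(n),m}T$ into Christoffel products times $e_\beta T$, and control each Christoffel factor using the $\bar\Gamma$-estimates from Lemma~\ref{decay of ricci}. This is also how the paper argues. However, your central claim — that the exponent count closes ``at the level of a single Christoffel contribution'' once one invokes the $D_\pm$ case distinction together with $\tr\Ko=1$ — does not in fact hold, and this is where the argument gets stuck.

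Concretely, the per-factor estimate you would need is $t^{p_\ell-p_i-p_k}\,|\bar\Gamma^{(n)\ell}_{ik}| \leq C\langle\ln t\rangle\, t^{-p_3}$ for all index triples. Take $i,k,\ell$ all distinct in $D_+$. Since $p_i+p_k+p_\ell = \tr\Ko = 1$, the weight is $t^{p_\ell-p_i-p_k} = t^{2p_\ell - 1}$, and combining with the $D_+$ bound $|\bar\Gamma^{(n)\ell}_{ik}(x)| \leq C\langle\ln t\rangle\, t^{2(p_1-p_\ell)(x)}$ gives $t^{p_\ell-p_i-p_k}|\bar\Gamma^{(n)\ell}_{ik}| \lesssim \langle\ln t\rangle\, t^{2p_1-1}$. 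But $2p_1 - 1 = p_1 - p_2 - p_3$, and since $p_1 < p_2$ everywhere, one has $p_1 - p_2 - p_3 < -p_3$. So the exponent count does \emph{not} close: the factor is worse than $t^{-p_3}$ by exactly $t^{p_1-p_2} \to \infty$. This is not a slack in the $D_+$ Christoffel estimate either, because in $D_+$ one has $\bar\Gamma^{(n)1}_{23} \to \tfrac{1}{2}\gamma^1_{23}$ (see the explicit formula for $\bar\Gamma^{(0)}$ in the proof of Lemma~\ref{velocity dominated solution}), and $\gamma^1_{23}$ is only forced to vanish in $D_-$, not $D_+$. Testing the first inequality at $m=1$ with $h_0$, $T = T^2_1\,\omega^1\otimes e_2$, and a point $x\in D_+$ where $\gamma^1_{23}(x)\neq 0$ and the first-order $e_\alpha T^2_1$ vanish, the left side picks up $t^{-2p_3}\bigl(\bar\Gamma^{(0)1}_{23}\bigr)^2(T^2_1)^2 \to t^{-2p_3}\tfrac14(\gamma^1_{23})^2(T^2_1)^2$, while the right side is $C\langle\ln t\rangle^2\, t^{-2p_3}\, t^{2(p_2-p_1)}(T^2_1)^2 \to 0$ faster by the growing factor $t^{2(p_1-p_2)}$. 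So ``closing the exponent count'' in $D_+$ by the route you describe is not possible, and the scheme you outline for the inductive step cannot be completed as written. You would need to track a genuinely weaker bound in which the Christoffel contributions coming from the $(i,k,\ell)$-distinct terms in $D_+$ carry an extra $t^{p_1-p_2}$ loss per factor, and then re-examine whether the subsequent applications of the lemma in the paper can absorb that loss.
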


\begin{proof}
    For notational simplicity, let us for this proof drop the $n$ when referring to $\sn^{(n)}$. So we write $\sn$ and $\bar{\Gamma}_{ik}^{\ell}$ instead of $\sn^{(n)}$ and $(\bar{\Gamma}_n)_{ik}^{\ell}$. Also, we focus on the case when $T$ is a $(1,1)$-tensor, since the general case works in the same way but requires more notation.

    First consider the case with no derivatives,
    \[
    \begin{split}
        |T|_{h_n}^2 &= (h_n)^{ik}(h_n)_{\ell m} T_i^\ell T_k^m\\
        &= (h_0)^{ik}(h_0)_{\ell m} T_i^\ell T_k^m + (h_n-h_0)^{ik}(h_0)_{\ell m} T_i^\ell T_k^m + (h_n)^{ik}(h_n-h_0)_{\ell m} T_i^\ell T_k^m\\
        &= \sum_{i,\ell} t^{-2p_i} t^{2p_\ell}(T_i^\ell)^2 \left( 1 + \frac{(h_n -h_0)^{ik}(h_0)_{\ell m} T_i^\ell T_k^m + (h_n)^{ik}(h_n-h_0)_{\ell m} T_i^\ell T_k^m}{\sum_{i,\ell} t^{-2p_i} t^{2p_\ell}(T_i^\ell)^2} \right)
    \end{split}
    \]
    and
    \[
    \left|\frac{(h_n -h_0)^{ik}(h_n)_{\ell m} T_i^\ell T_k^m + (h_n)^{ik}(h_n-h_0)_{\ell m} T_i^\ell T_k^m}{\sum_{i,\ell} t^{-2p_i} t^{2p_\ell}(T_i^\ell)^2} \right| \leq C_nt^\varepsilon.
    \]
    Hence there is a small enough $T_n$, depending only on the initial data, the potential and $n$, such that
    \[
    C^{-1} \sum_{i,\ell} t^{-2p_i} t^{2p_\ell}(T_i^\ell)^2 \leq |T|_{h_n}^2 \leq C \sum_{i,\ell} t^{-2p_i} t^{2p_\ell}(T_i^\ell)^2.
    \]
    This proves the case with no derivatives.

    Now with derivatives. For the rest of this proof, we allow the constants $C$ and $C_n$ to depend on $m$. We have
    \[
    \sn^m_{e_{i_1}, \ldots, e_{i_m}} T(e_k,\omega^{\ell}) = \sum \pm (e_{\alpha_1} \bar{\Gamma}) \cdots (e_{\alpha_r} \bar{\Gamma})(e_{\beta} T),
    \]
    where the sum is over appropriate multiindices such that $|\alpha_1| + \cdots + |\alpha_r| + |\beta| + r = m$, and we omit the indices since the exact contractions are not important. Then, by using the case with no derivatives,
    \[
    \begin{split}
        |\sn^m T|_{h_n}^2 &\leq C \sum t^{-2(p_{i_1} + \cdots + p_{i_m})} t^{-2p_k} t^{2p_{\ell}} \big( \sn^m_{e_{i_1}, \ldots , e_{i_m}} T(e_k,\omega^{\ell}) \big)^2\\
        &\leq C \sum t^{-2(p_{i_1} + \cdots + p_{i_m})} t^{-2p_k} t^{2p_{\ell}} \sum (e_{\alpha_1} \bar{\Gamma})^2 \cdots (e_{\alpha_r} \bar{\Gamma})^2(e_{\beta} T)^2.
    \end{split}
    \]
    Now, regarding the indices, note that every non-contracted index corresponds to a $t^{\pm 2p_i}$ factor, with $+$ if it is an upper index and with $-$ if it is a lower index. Moreover, if an index is contracted, we can multiply by $1 = t^{2p_i} t^{-2p_i}$ to introduce a corresponding power of $t$ for each contracted index. We conclude that it is enough to estimate objects of the form
    \[
    t^{-(p_{\alpha_1} + \cdots + p_{\alpha_q})} t^{p_{\ell} - p_i - p_k} |e_{\alpha} \bar{\Gamma}_{ik}^{\ell}|,
    \]
    where $|\alpha| = q$. But then Lemma~\ref{decay of ricci} implies that $t^{p_{\ell} - p_i - p_k} |e_{\alpha} \bar{\Gamma}_{ik}^{\ell}| \leq C_n \langle \ln t \rangle^{q+1} t^{-1+\varepsilon}$, thus
    \[
    t^{-(p_{\alpha_1} + \cdots + p_{\alpha_q})} t^{p_{\ell} - p_i - p_k} |e_{\alpha} \bar{\Gamma}_{ik}^{\ell}| \leq C_n \langle \ln t \rangle^{q+1} t^{(q+1)(-1+\varepsilon)}.
    \]
    Hence
    \[
    |\sn^m T|_{h_n}^2 \leq C_n \langle \ln t \rangle^{2m} t^{2m(-1+\varepsilon)} \sum t^{-2p_i}t^{2p_k}\big( e_{\alpha} T_i^k \big)^2,
    \]
    where the sum is over $i$, $k$ and every multiindex $\alpha$ with $|\alpha| = q \leq m$.

    For the second inequality, note that for $|\alpha| = m$,
    \[
    \begin{split}
        &\sum t^{-2(p_{\alpha_1} + \cdots + p_{\alpha_m)}} t^{-2p_i} t^{2p_k} \big(e_{\alpha} T_i^k\big)^2\\
        &\leq C \sum t^{-2(p_{\alpha_1} + \cdots + p_{\alpha_m)}} t^{-2p_i} t^{2p_k} \left(\sn^m_{e_{\alpha_1},\ldots,e_{\alpha_m}} T(e_i,\omega^k)^2 + \sum (e_{\beta_1} \bar \Gamma)^2 \cdots (e_{\beta_r} \bar \Gamma)^2 (e_{\gamma}T)^2 \right),
    \end{split}
    \]
    where the inner sum is over appropriate multiindices such that $|\beta_1| + \cdots + |\beta_r| + |\gamma| + r = m$ and $r \geq 1$, so that $|\gamma| \leq m-1$. Note that the first term, taken with the sum outside, is equal to $|\sn^m T|_{h_0}^2$ and the second term can be treated similarly as we did before. Hence
    \[
    \begin{split}
        &\sum t^{-2(p_{\alpha_1} + \cdots + p_{\alpha_m})} t^{-2p_i} t^{2p_k} \big(e_{\alpha} T_i^k\big)^2\\
        &\hspace{0.5cm}\leq C|\sn^m T|_{h_0}^2 + C_n\sum_{|\gamma| \leq m-1} \langle \ln t \rangle^{2(m-|\gamma|)} t^{2(m-|\gamma|)(-1+\varepsilon)} t^{-2(p_{\gamma_1} + \cdots + p_{\gamma_q})} t^{-2p_i} t^{2p_k} \big(e_{\gamma} T_i^k\big)^2\\
        &\hspace{0.5cm}\leq C|\sn^m T|_{h_n}^2 + C_n\sum_{r=0}^{m-1} \Big( \langle \ln t \rangle^{2(m-r)} t^{2(m-r)(-1+\varepsilon)} \sum_{|\gamma| = r} t^{-2(p_{\gamma_1} + \cdots + p_{\gamma_r})} t^{-2p_i} t^{2p_k} \big(e_\gamma T_i^k\big)^2 \Big) 
    \end{split}
    \]
    and we have reduced it to the case with $m-1$ derivatives. Since we have it for zero derivatives, the result follows by induction.
\end{proof}

\begin{lemma} \label{sup norm bootstrap estimates}
    The following estimates hold,
    \[
    \begin{split}
        \|\deh\|_{W^{s-1,\infty}} + \|\deh^{-1}\|_{W^{s-1,\infty}} + \|\dk\|_{W^{s-2,\infty}} + \|\lie_{\p_t} \dk\|_{W^{s-3,\infty}} &\leq C,\\
        \|\dt\|_{W^{s-1,\infty}} + \|\dep\|_{W^{s-1,\infty}} + \|\p_t \dep\|_{W^{s-2,\infty}} &\leq C.
    \end{split}
    \]
\end{lemma}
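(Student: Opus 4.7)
The plan is to obtain all of these bounds by combining the bootstrap assumptions \eqref{bootstrap assumptions} with the Sobolev embedding $\|T\|_{L^\infty} \leq Ct^{-5/2}\|T\|_{H^2}$ already established. The key point is that the bootstrap assumptions contain two ``spare'' derivatives worth of regularity compared to what appears in the sup-norm estimates, and each application of Sobolev embedding costs precisely the $t^{5/2}$ decay that the bootstrap assumptions provide.

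Concretely, for every non-negative integer $m$ and every one-parameter family of tensors $T$ on $\Sigma$, applying the Sobolev inequality to $\sn^r T$ for $0 \leq r \leq m$ yields
\[
\|\sn^r T\|_{L^\infty} \leq C t^{-5/2} \|\sn^r T\|_{H^2} \leq C t^{-5/2} \|T\|_{H^{m+2}},
\]
so $\|T\|_{W^{m,\infty}} \leq C t^{-5/2} \|T\|_{H^{m+2}}$. I would now apply this with $m = s-1$ to $\deh$ and $\deh^{-1}$, using \eqref{second bootstrap assumption} to conclude $\|\deh\|_{W^{s-1,\infty}} + \|\deh^{-1}\|_{W^{s-1,\infty}} \leq C t^{-5/2} \cdot t^{5/2} = C$; with $m = s-2$ to $\dk$, again using \eqref{second bootstrap assumption}; with $m = s-3$ to $\lie_{\p_t} \dk$, using the $H^{s-1}$ bound from \eqref{second bootstrap assumption}; and analogously with $m = s-1$ and $m = s-2$ to the scalar-field quantities $\dt$, $\dep$ and $\p_t \dep$ via \eqref{third bootstrap assumption}. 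In each case the number of spare Sobolev derivatives is exactly two, which is what is needed.

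There is essentially no obstacle to this argument; it is purely a bookkeeping exercise, provided one is careful that the norms are computed with respect to $h$ (consistent with our convention), and that the constant $C$ in the Sobolev embedding is independent of the bootstrap (which is clear from its proof). The only minor point worth checking is that the indices match: the bootstrap supplies $H^{s+1}$ control of $\deh$, $\deh^{-1}$, $\dt$, $\dep$; $H^{s}$ control of $\dk$, $\p_t \dep$; and $H^{s-1}$ control of $\lie_{\p_t} \dk$, which, after losing two Sobolev degrees and gaining a $t^{-5/2}$ factor offset by the $t^{5/2}$ in the bootstrap, gives precisely $W^{s-1,\infty}$, $W^{s-2,\infty}$, and $W^{s-3,\infty}$ bounds of order $1$, as claimed.
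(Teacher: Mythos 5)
Your proposal is correct and follows exactly the route the paper intends: the paper's proof is the one-liner ``This follows from \eqref{bootstrap assumptions} and Sobolev embedding.'' Your more detailed accounting (applying $\|\sn^r T\|_{L^\infty} \leq Ct^{-5/2}\|\sn^r T\|_{H^2}$ for $r$ up to the claimed order, noting that the two lost Sobolev degrees and the $t^{-5/2}$ factor are exactly compensated by the extra two derivatives and the $t^{5/2}$ decay in \eqref{second bootstrap assumption} and \eqref{third bootstrap assumption}) is just the explicit version of the same argument.
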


\begin{proof}
    This follows from \eqref{bootstrap assumptions} and Sobolev embedding.
\end{proof}

\begin{lemma} \label{difference of connections bound}
    If $\cd := \sn - \sn^{(n)}$, then for $m \leq s-2$,
    \[
    \|\cd\|_{W^{m,\infty}} \leq C\big( \|\deh\|_{W^{m+1,\infty}} + \|\deh^{-1}\|_{W^{m,\infty}} \big),
    \]
    and for $m \leq s$,
    \[
    \|\cd\|_{H^m} \leq C\big( \|\deh\|_{H^{m+1}} + \|\deh^{-1}\|_{H^m} \big).
    \]
\end{lemma}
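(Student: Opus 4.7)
The plan is to derive a tensorial identity expressing $\cd$ as a bilinear combination of $h^{-1}$ and $\sn^{(n)}\deh$, and then obtain the two inequalities by a standard Moser-type argument using the bootstrap assumptions. Since $\sn h = 0$ and $\sn^{(n)} h_n = 0$, for $X,Y,Z \in \mfx(\Sigma)$ one has
\[
(\sn^{(n)}_X \deh)(Y,Z) = (\sn^{(n)}_X h)(Y,Z) = h(\cd(X,Y),Z) + h(Y,\cd(X,Z)),
\]
and combining cyclically in $X,Y,Z$ together with the symmetry of $\cd$ in its two lower slots (both connections being torsion free) yields
\[
\cd(X,Y) = \tfrac{1}{2} h^{-1}\!\Big[(\sn^{(n)}_X \deh)(Y,\cdot) + (\sn^{(n)}_Y \deh)(X,\cdot) - (\sn^{(n)}_{(\cdot)} \deh)(X,Y)\Big].
\]
Writing $h^{-1} = h_n^{-1} + \deh^{-1}$ exhibits $\cd$ as a schematic product $(h_n^{-1}+\deh^{-1}) \ast \sn^{(n)}\deh$, in which $h_n^{-1}$ and all its higher $\sn^{(n)}$-derivatives are bounded by constants depending only on the initial data, the potential, and $n$.

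For the $W^{m,\infty}$ estimate with $m \leq s-2$, I would apply $\sn^{(n)}$ iteratively to the above identity and distribute derivatives by the Leibniz rule. By Lemma~\ref{sup norm bootstrap estimates}, the bootstrap assumptions imply $\|\deh\|_{W^{s-1,\infty}}+\|\deh^{-1}\|_{W^{s-1,\infty}} \leq C$, so in every resulting product each factor not carrying the top-order derivative is bounded by a constant. When the top derivative lands on $\deh$ the cost is $\|\deh\|_{W^{m+1,\infty}}$, and when it lands on $\deh^{-1}$ the cost is $\|\deh^{-1}\|_{W^{m,\infty}}$; summing up gives the claimed sum. The equivalences $\sn \leftrightarrow \sn^{(n)}$ and $|\cdot|_h \leftrightarrow |\cdot|_{h_n}$ introduce only multiplicative constants thanks to \eqref{first bootstrap assumption} together with Lemmas~\ref{dual metric equivalence} and \ref{estimates in tensor components}.

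For the $H^m$ estimate with $m \leq s$, I would invoke a Moser-type product inequality on the three-dimensional manifold $\Sigma$: for $m \geq 2$, $\|fg\|_{H^m} \leq C(\|f\|_{L^\infty}\|g\|_{H^m} + \|f\|_{H^m}\|g\|_{L^\infty})$, while the cases $m = 0, 1$ follow directly from the identity together with Sobolev embedding. Taking $f = h^{-1}$ and $g = \sn^{(n)}\deh$, the bootstrap assumptions and Lemma~\ref{sup norm bootstrap estimates} give $\|h^{-1}\|_{L^\infty}, \|\sn^{(n)}\deh\|_{L^\infty} \leq C$, together with $\|h^{-1}\|_{H^m} \leq C + \|\deh^{-1}\|_{H^m}$ and $\|\sn^{(n)}\deh\|_{H^m} \leq C\|\deh\|_{H^{m+1}}$; combining these yields the required bound. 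The main technical obstacle is to ensure that converting between $\sn$- and $\sn^{(n)}$-based norms does not cost powers of $t$, which I would handle by an induction on $m$ in which the estimate on $\cd$ at lower orders is fed back into the conversion, so that the transition acquires only constants depending on $s$, $n$, the initial data, and the potential.
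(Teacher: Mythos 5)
Your Koszul identity for $\cd$ is correct, but it is the mirror image of the one the paper uses, and that choice matters. The paper applies the Koszul formula with $\sn$ (the connection of $h$) to $h_n$, contracting with $h_n^{-1}$, then rewrites $h_n^{-1} = h^{-1} - \deh^{-1}$ and $\sn_X h_n = -\sn_X\deh$ using $\sn h = 0$. Since also $\sn h^{-1} = 0$, one has $\sn^r(h^{-1}-\deh^{-1}) = -\sn^r\deh^{-1}$ for $r\geq 1$, so the resulting formula for $\cd$ is written entirely with the same connection $\sn$ that defines the target norms, and taking $\sn^m$ is a pure Leibniz expansion:
\[
|\sn^m\cd|_h \leq C\sum_{r=0}^m |\sn^r(h^{-1}-\deh^{-1})|_h\,|\sn^{m-r+1}\deh|_h.
\]
Both claims then follow by placing, in each term, whichever factor has order $\leq s-2$ (resp.\ $\leq s-1$) in $L^\infty$ via Lemma~\ref{sup norm bootstrap estimates} and the other in $L^\infty$ (resp.\ $L^2$) — no change of connection, no interpolation.

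Your formula, being $\sn^{(n)}$-covariant, runs into an obstruction your sketch does not resolve. Writing $\sn^{(n)}\deh = \sn\deh + $ (a contraction of $\cd$ with $\deh$) and applying $\sn^m$ produces on the right the top-order term $h^{-1}$ contracted with $\sn^m\cd\otimes\deh$, and since $\sn^m\cd$ is exactly what you want to bound, induction on $m$ cannot close the argument; one must absorb this term to the left using the smallness of $|\deh|_h$ coming from \eqref{second bootstrap assumption}. Appealing instead to the $\sn\leftrightarrow\sn^{(n)}$ norm equivalence is circular: that equivalence is Lemma~\ref{comparison of h and hn norms}, whose proof uses the present lemma (and the lemmas you cite, \ref{dual metric equivalence} and \ref{estimates in tensor components}, relate $h$ to $\ho$ and $h_n$-norms to frame components — they do not compare $\sn$ with $\sn^{(n)}$). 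Finally, the Moser product inequality on $(\Sigma,h_t)$ has Gagliardo--Nirenberg constants depending on the metric, hence on $t$ (witness the $t^{-5/2}$ in the Sobolev lemma), whereas the paper's direct H\"older splitting — available because $s\geq 5$, $m\leq s$, and $\deh,\deh^{-1}$ are controlled in $L^\infty$ up to order $s-1$ — avoids interpolation entirely.
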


\begin{proof}
    For $X,Y \in \mfx(\Sigma)$ and $\alpha \in \Omega^1(\Sigma)$, we have
    \[
    \begin{split}
        \cd( X, Y, \alpha ) &= -\frac{1}{2}(h_n^{-1})(\alpha,\omega^{\ell})\big( \sn_X h_n(Y,e_{\ell}) + \sn_Y h_n(X,e_{\ell}) - \sn_{e_{\ell}} h_n(X,Y) \big)\\
        &= \frac{1}{2}( h^{-1} - \deh^{-1} )(\alpha,\omega^{\ell})\big( \sn_X \deh(Y,e_{\ell}) + \sn_Y \deh(X,e_{\ell}) - \sn_{e_{\ell}} \deh(X,Y) \big),
    \end{split}
    \]
    implying
    \[
    |\sn^m \cd|_h \leq C \sum_{r = 0}^m |\sn^r( h^{-1} - \deh^{-1} )|_h |\sn^{m-r}( \sn \deh )|_h.
    \]
    The result follows by Lemma~\ref{sup norm bootstrap estimates} and \eqref{second bootstrap assumption}.
\end{proof}

\begin{lemma} \label{comparison of h and hn norms}
    Let $T$ be a tensor on $\Sigma$. For $m \leq s-1$,
    \[
    C^{-1} \|T\|_{W^{m,\infty}(\Sigma,h_n)} \leq \|T\|_{W^{m,\infty}} \leq C \|T\|_{W^{m,\infty}(\Sigma,h_n)},
    \]
    and for $m \leq s+1$,
    \[
    C^{-1} \|T\|_{H^m(\Sigma,h_n)} \leq \|T\|_{H^m} \leq C \|T\|_{H^m(\Sigma,h_n)}.
    \]
\end{lemma}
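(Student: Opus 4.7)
The plan is to compare both norms to a common benchmark given by weighted sums of frame components of derivatives of $T$. Lemma~\ref{estimates in tensor components} already provides a two-sided bound of $|(\sn^{(n)})^j T|_{h_n}^2$ by such weighted sums (up to log-$t$ losses), and the key step is to establish an analogous two-sided bound with $\sn$ and $h$ in place of $\sn^{(n)}$ and $h_n$.

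For the no-derivative case, the bootstrap assumption \eqref{first bootstrap assumption}, in the form \eqref{asymptotics for h}, is exactly the input needed so that the proof of Lemma~\ref{estimates in tensor components} (with zero derivatives) goes through with $h$ in place of $h_n$: the components $h_{ik}$ and $h^{ik}$ differ from $(h_0)_{ik}$ and $h_0^{ik}$ by a factor of size $t^\varepsilon$ relative to the leading $t^{\pm 2p_\bullet}$ behaviour, which is precisely the estimate used in Lemma~\ref{dual metric equivalence} and extends to general tensors by the same contraction-by-contraction calculation. Combined with the $h_n$-version of Lemma~\ref{estimates in tensor components}, this immediately gives the pointwise equivalence $|S|_h \asymp |S|_{h_n}$ for any tensor $S$.

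For higher-order derivatives, I would write $\sn = \sn^{(n)} + \cd$ and expand $\sn^m T$ as $(\sn^{(n)})^m T$ plus finitely many contracted products of $(\sn^{(n)})^j T$, $j < m$, with factors of $\cd$ and its $\sn^{(n)}$-covariant derivatives of order at most $m-1$. The pointwise equivalence of $|\,\cdot\,|_h$ and $|\,\cdot\,|_{h_n}$ from the previous step converts these products to $h_n$-norms, and Lemma~\ref{difference of connections bound}, together with Lemma~\ref{sup norm bootstrap estimates} in the $W^{m,\infty}$-case and the bootstrap \eqref{second bootstrap assumption} in the $H^m$-case, bounds the $\cd$-factors by a constant. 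This yields $\|T\|_{W^{m,\infty}} \leq C\|T\|_{W^{m,\infty}(\Sigma,h_n)}$ and the analogous $H^m$-bound, with the reverse inequalities following by the symmetric expansion $\sn^{(n)} = \sn - \cd$. For the $H^m$-equivalence, one additionally notes that the volume forms $\mu$ and $\mu_n$ are comparable: by \eqref{asymptotics for h} applied to both $h$ and $h_n$, each of $\det h$ and $\det h_n$ is equivalent to $t^2$, so the $L^2$-integrals over $\Sigma$ agree up to a multiplicative constant.

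The main obstacle is keeping track of the regularity of the connection difference: the derivative comparison at order $m$ requires $\cd$ bounded in $W^{m-1,\infty}$ (respectively, $H^{m-1}$), which Lemma~\ref{difference of connections bound} supplies only for $m-1 \leq s-2$ (respectively, $m-1 \leq s$); these are precisely the ranges $m \leq s-1$ and $m \leq s+1$ appearing in the statement.
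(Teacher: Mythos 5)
Your higher-derivative argument is essentially the paper's: expand $\sn = \sn^{(n)} + \cd$, use the zero-derivative pointwise equivalence to convert tensor norms, and control $\cd$ via Lemma~\ref{difference of connections bound} and the bootstrap. The regularity thresholds $m\leq s-1$ and $m\leq s+1$ are correctly traced to the ranges of Lemma~\ref{difference of connections bound}, and the volume-form comparability remark is a valid observation that the paper leaves implicit. Two points are worth noting, however. For the base case $m=0$, your route via frame components and Lemma~\ref{estimates in tensor components} works but is indirect; the paper simply observes that Lemma~\ref{sup norm bootstrap estimates} gives $|\deh|_h + |\deh^{-1}|_h \leq C$, which immediately yields the pointwise equivalence $|S|_h\asymp|S|_{h_n}$. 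More significantly, the ``benchmark comparison'' announced in your opening paragraph would not survive past $m=0$: the two inequalities of Lemma~\ref{estimates in tensor components} are one-sided, involve different weighted sums on the two sides, and carry the time-dependent factors $\langle\ln t\rangle^{2m}t^{-2mp_3}$, so chaining them cannot produce the required $t$-uniform constant. You in fact abandon that plan for $m\geq 1$ in favor of the connection-difference expansion, which is the right move, but the preamble overpromises.

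The genuine gap is in the $H^m$ estimate. The phrase ``bounds the $\cd$-factors by a constant'' elides the split that makes it close. The non-leading terms have the schematic form $\sn^a\cd \otimes \sn^b T$ with $a + b = m-1$. For $a \leq s-2$ you can put $\sn^a\cd$ in $L^\infty$ (Lemma~\ref{difference of connections bound} together with the bootstrap) and $\sn^b T$ in $L^2$, as you describe. But $\|\cd\|_{W^{a,\infty}}$ is controlled only for $a\leq s-2$, whereas $a$ can reach $m-1\leq s$. For $a \in\{s-1,s\}$, which occurs exactly when $m\geq s$, the roles must be reversed: $\sn^a\cd$ goes in $L^2$ via $\|\cd\|_{H^a}\leq Ct^{5/2}$, and $\sn^b T$ goes in $L^\infty$ by Sobolev embedding at a cost of $t^{-5/2}$, with $b+2\leq m-1$ since $s\geq 5$. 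The two powers of $t$ cancel exactly, and this cancellation is essential: if $\|\cd\|_{H^a}$ is only recorded as $O(1)$ rather than as $O(t^{5/2})$, the Sobolev loss $t^{-5/2}$ survives and the estimate does not close. The paper writes this dichotomy out explicitly; your proof needs to do the same.
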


\begin{proof}
    Note that Lemma~\ref{sup norm bootstrap estimates} implies $|\deh|_h + |\deh^{-1}|_h \leq C$. Hence there is a constant $C$ such that
    \[
    C^{-1} |T|_{h_n} \leq |T|_h \leq C |T|_{h_n},
    \]
    which deals with the case with no derivatives.

    Now with derivatives. As in the proof of Lemma~\ref{estimates in tensor components}, for notational clarity, we focus on the case when $T$ is a $(1,1)$-tensor. Let $X,Y \in \mfx(\Sigma)$ and $\alpha \in \Omega^1(\Sigma)$, then
    \[
    \big( \sn_X - \sn_X^{(n)} \big) T(Y,\alpha) = -\cd( X,Y,\omega^{\ell} ) T(e_{\ell},\alpha) + \cd(X,e_{\ell},\alpha)T(Y,\omega^{\ell}),
    \]
    implying
    \[
    |\sn^m T|_h \leq |\sn^{m-1}( \sn^{(n)} T )|_h + \sum_{a+b=m-1} 2|\sn^a \cd|_h |\sn^b T|_h.
    \]
    There is a similar inequality for $|(\sn^{(n)})^m T|_{h_n}$. For the $L^{\infty}$ norms, this reduces the result to the case with $m-1$ derivatives, as long as $m-1 \leq s-2$ by Lemma~\ref{difference of connections bound}. Since we already proved the case with no derivatives, by induction we conclude that
    \[
    C^{-1} \|T\|_{W^{m,\infty}(\Sigma,h_n)} \leq \|T\|_{W^{m,\infty}} \leq C \|T\|_{W^{m,\infty}(\Sigma,h_n)}
    \]
    for $m\leq s-1$.

    Now for the $L^2$ norms. We only focus on one direction of the inequality, since the other is similar. We have
    \[
    |\sn^m T|_h^2 \leq C|\sn^{m-1}( \sn^{(n)} T )|_h^2 +  \sum_{a+b=m-1} C|\sn^a \cd|_h^2 |\sn^b T|_h^2.
    \]
    Focusing on the terms in the sum, we have two options,
    \[
    \begin{split}
        \int_{\Sigma_t} |\sn^a \cd|_h^2 |\sn^b T|_h^2 \mu &\leq \Big(\sup_{\Sigma} |\sn^a \cd|_h \Big)^2 \|T\|_{H^b}^2,\\
        \int_{\Sigma_t} |\sn^a \cd|_h^2 |\sn^b T|_h^2 \mu &\leq \|\cd\|_{H^a}^2 \Big( \sup_{\Sigma} |\sn^b T|_h \Big)^2 \leq C t^{-5} \|\cd\|_{H^a}^2 \|T\|_{H^{b+2}}^2.
    \end{split}
    \]
    As for the $L^{\infty}$ norms, if $a \leq s-2$, we can use the first inequality to reduce the corresponding terms to the case with $m-1$ derivatives. On the other hand, if $a \geq s-1$ we use the second inequality. In that case, $b = m-1-a \leq m-s$. Since $s \geq 5$, we see that $b+2 \leq m-1$, and hence $\|T\|_{H^{b+2}} \leq \|T\|_{H^{m-1}}$. But we still need to control $\cd$, which requires $a \leq s$ by Lemma~\ref{difference of connections bound}. Hence, as long as $m \leq s+1$, we have reduced it to the case with $m-1$ derivatives. Since we already have it for no derivatives, the result follows.
\end{proof}

\begin{lemma} \label{technical estimate 2}
    If $T_1$ and $T_2$ are tensors on $\Sigma$, then
    \[
    \| \sn^s T_1 \otimes  T_2 \|_{L^2} \leq C\big(\| \sn^{s-1}(\sn^{(n)} T_1)\|_{L^{\infty}} + \|T_1\|_{W^{s-1,\infty}} \big) \|T_2\|_{L^2} + C\|T_1\|_{L^{\infty}} \|T_2\|_{H^2}.
    \]
    Moreover,
    \[
    \begin{split}
        \|\sn^{s+1}T_1 \otimes T_2\|_{L^2} &\leq C\big( \|\sn^{s-1}(\sn^{(n)} \sn^{(n)} T_1)\|_{L^\infty} + \|\sn^{(n)} T_1\|_{W^{s-1,\infty}} \big)\|T_2\|_{L^2}\\
        &\quad + C\|\sn^{(n)} T_1\|_{L^\infty} \|T_2\|_{H^2} + \|T_1\|_{W^{s-1,\infty}} \|T_2\|_{H^2}.
    \end{split}
    \]
\end{lemma}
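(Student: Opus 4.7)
The lemma is a Moser-type estimate for the tensor $(\sn^s T_1) \otimes T_2$, and the plan is to rewrite $\sn^s T_1$ in terms of iterated $\sn^{(n)}$-derivatives at the cost of commutator terms involving the connection difference $\cd := \sn - \sn^{(n)}$, and then distribute the factors between $L^{\infty}$ and $L^2$ according to the number of derivatives falling on $\cd$.

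For the first inequality, iterating $\sn T = \sn^{(n)} T + \cd \cdot T$ (schematic, the dot denoting the contraction determined by the type of $T$) in the outer derivative produces the decomposition
\[
\sn^s T_1 = \sn^{s-1}\bigl(\sn^{(n)} T_1\bigr) + \sum_{a=0}^{s-1} c_a\,(\sn^a \cd) \cdot (\sn^{s-1-a} T_1).
\]
The leading piece, tensored with $T_2$ and estimated by Hölder, gives the $\|\sn^{s-1}(\sn^{(n)} T_1)\|_{L^{\infty}}\|T_2\|_{L^2}$ contribution. For the commutator sum I would split into two regimes: for $a\le s-2$, Lemma~\ref{difference of connections bound} together with the bootstrap assumption \eqref{second bootstrap assumption} yields $\|\sn^a \cd\|_{L^{\infty}}\le C$, so the corresponding terms are controlled by $C\|T_1\|_{W^{s-1,\infty}}\|T_2\|_{L^2}$. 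The remaining top-order term $a=s-1$ only admits the $L^2$ bound $\|\sn^{s-1}\cd\|_{L^2}\le C(\|\deh\|_{H^s}+\|\deh^{-1}\|_{H^{s-1}})\le C\,t^{5/2}$ from the same lemma; I would estimate it via
\[
\|(\sn^{s-1}\cd)\cdot T_1 \otimes T_2\|_{L^2} \le \|\sn^{s-1}\cd\|_{L^2}\|T_1\|_{L^{\infty}}\|T_2\|_{L^{\infty}} \le C\,t^{5/2}\cdot Ct^{-5/2}\|T_1\|_{L^{\infty}}\|T_2\|_{H^2},
\]
where the Sobolev embedding $\|T_2\|_{L^{\infty}}\le Ct^{-5/2}\|T_2\|_{H^2}$ and the bootstrap smallness are designed to cancel, producing exactly the $C\|T_1\|_{L^{\infty}}\|T_2\|_{H^2}$ term on the right-hand side.

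The second inequality is obtained by pushing the same argument one derivative higher. Writing
\[
\sn^{s+1} T_1 = \sn^{s-1}\bigl(\sn^{(n)}\sn^{(n)} T_1\bigr) + \sn^{s-1}\bigl(\sn^{(n)}(\cd T_1) + \cd\,\sn^{(n)} T_1 + \cd\cdot\cd T_1\bigr),
\]
the leading piece produces $\|\sn^{s-1}(\sn^{(n)}\sn^{(n)} T_1)\|_{L^{\infty}}\|T_2\|_{L^2}$. Leibniz expansion of the two singly-nonlinear terms $\cd\,\sn^{(n)} T_1$ and $\sn^{(n)}(\cd T_1)$, combined with the same $L^{\infty}/L^2$ dichotomy on $\cd$ as above, contributes the two terms $\|\sn^{(n)} T_1\|_{W^{s-1,\infty}}\|T_2\|_{L^2}$ and $\|\sn^{(n)} T_1\|_{L^{\infty}}\|T_2\|_{H^2}$. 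The last piece $\cd\cdot\cd T_1$ is Leibniz-expanded into products $(\sn^a\cd)(\sn^b\cd)(\sn^c T_1)$ with $a+b+c=s-1$; if both $\cd$ factors carry at most $s-2$ derivatives the bootstrap furnishes $L^{\infty}$ bounds on them and the contribution is absorbed into $\|T_1\|_{W^{s-1,\infty}}\|T_2\|_{L^2}$, while if one factor carries $s-1$ derivatives one repeats the $L^2\times L^{\infty}\times L^{\infty}$--Sobolev trick to obtain precisely the $\|T_1\|_{W^{s-1,\infty}}\|T_2\|_{H^2}$ contribution.

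The main obstacle is the top-order commutator: unlike the lower-order terms, $\sn^{s-1}\cd$ and $\sn^s\cd$ cannot be placed in $L^{\infty}$ on the basis of Lemma~\ref{difference of connections bound}, and one is forced to pair their $L^2$ norm with an $L^{\infty}$ norm of $T_2$. The precise balance between the bootstrap smallness $Ct^{5/2}$ and the Sobolev loss $Ct^{-5/2}$, which requires $s\ge 5$ throughout, is exactly what forces the $\|T_2\|_{H^2}$ norm (rather than $\|T_2\|_{L^2}$) on these parts of the right-hand side.
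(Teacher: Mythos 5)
Your proposal is correct and follows essentially the same route as the paper: you decompose $\sn^s T_1$ into the leading piece $\sn^{s-1}(\sn^{(n)} T_1)$ plus commutator terms $(\sn^a\cd)\cdot(\sn^{s-1-a}T_1)$, place $\sn^a\cd$ in $L^\infty$ for $a\le s-2$ using Lemma~\ref{difference of connections bound} with the bootstrap assumptions, and for the top commutator $a=s-1$ trade the $L^2$ smallness $\|\sn^{s-1}\cd\|_{L^2}\lesssim t^{5/2}$ against the Sobolev loss $\|T_2\|_{L^\infty}\lesssim t^{-5/2}\|T_2\|_{H^2}$, which is exactly the paper's computation. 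For the second estimate the paper merely says ``the other inequality is similar,'' and your argument (pushing the same decomposition one $\sn^{(n)}$-derivative higher) is the intended one.
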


\begin{proof}
    We compute,
    \[
    \begin{split}
        &\int_{\Sigma_t} |\sn^s T_1|_h^2 |T_2|_h^2 \mu\\
        &\qquad \leq C\int_{\Sigma_t} |\sn^{s-1}( \sn^{(n)} T_1 )|_h^2 |T_2|_h^2 \mu + C\sum_{c+d = s-1} \int_{\Sigma_t} |\sn^c \cd|_h^2 |\sn^d T_1|_h^2 |T_2|_h^2 \mu\\
        &\qquad \leq C\big(\| \sn^{s-1}(\sn^{(n)} T_1)\|_{L^{\infty}}^2 + \|T_1\|_{W^{s-1,\infty}}^2 \big) \|T_2\|_{L^2}^2 + C \int_{\Sigma_t} |\sn^{s-1} \cd|_h^2 |T_1|_h^2 |T_2|_h^2 \mu\\
        &\qquad \leq C\big(\| \sn^{s-1}(\sn^{(n)} T_1)\|_{L^{\infty}}^2 + \|T_1\|_{W^{s-1,\infty}}^2 \big) \|T_2\|_{L^2}^2 + C\|\sn^{s-1} \cd\|_{L^2}^2 \|T_1\|_{L^{\infty}}^2 \|T_2\|_{L^{\infty}}^2\\
        &\qquad \leq C\big(\| \sn^{s-1}(\sn^{(n)} T_1)\|_{L^{\infty}}^2 + \|T_1\|_{W^{s-1,\infty}}^2 \big) \|T_2\|_{L^2}^2 + C\|T_1\|_{L^{\infty}}^2 \|T_2\|_{H^2}^2.
    \end{split}
    \]
    The other inequality is similar.
\end{proof}

\begin{lemma} \label{product sobolev estimate}
    If $T_1$ and $T_2$ are tensors on $\Sigma$, then for $m \geq 3$,
    \[
    \|\sn^m( T_1 \otimes T_2 )\|_{L^2} \leq Ct^{-5/2} \|T_1\|_{H^m} \|T_2\|_{H^m}.
    \]
\end{lemma}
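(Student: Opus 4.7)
The plan is a standard Moser-type product estimate, reducing to Hölder's inequality plus the Sobolev embeddings already established earlier in the section.

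By the Leibniz rule for $\sn$, we have
\[
\sn^m(T_1 \otimes T_2) = \sum_{a+b=m} c_{a,b}\, \sn^a T_1 \otimes \sn^b T_2
\]
for some combinatorial constants $c_{a,b}$, and it suffices to bound each summand in $L^2$. I would split the analysis into three regimes depending on $a$:

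If $a \le m-2$, then $b \ge 2$ so I put $\sn^a T_1$ in $L^\infty$ and $\sn^b T_2$ in $L^2$: by the Sobolev embedding $\|U\|_{L^\infty}\le Ct^{-5/2}\|U\|_{H^2}$,
\[
\|\sn^a T_1 \otimes \sn^b T_2\|_{L^2} \le \|\sn^a T_1\|_{L^\infty}\|\sn^b T_2\|_{L^2}
\le Ct^{-5/2}\|T_1\|_{H^{a+2}}\|T_2\|_{H^b}
\le Ct^{-5/2}\|T_1\|_{H^m}\|T_2\|_{H^m}.
\]
Symmetrically, if $a=m$ (so $b=0$), I put $T_2$ in $L^\infty$ and $\sn^m T_1$ in $L^2$, obtaining the same bound using $m\ge 3\ge 2$. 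Finally, if $a=m-1$ (so $b=1$) I use the $L^4$-Sobolev embedding $\|U\|_{L^4}\le Ct^{-5/4}\|U\|_{H^1}$ together with Hölder:
\[
\|\sn^{m-1}T_1\otimes \sn T_2\|_{L^2}
\le \|\sn^{m-1}T_1\|_{L^4}\|\sn T_2\|_{L^4}
\le Ct^{-5/4}\|T_1\|_{H^m}\cdot Ct^{-5/4}\|T_2\|_{H^2}
\le Ct^{-5/2}\|T_1\|_{H^m}\|T_2\|_{H^m},
\]
where the last step uses $m\ge 3\ge 2$.

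Summing the finitely many summands in the Leibniz expansion yields the claimed inequality. There is no real obstacle here: the hypothesis $m\ge 3$ is exactly what ensures that in each splitting $a+b=m$ at least one of $a,b$ is $\ge 2$ in the first/third regimes, or that both are $\ge 1$ with $H^2$-control available for the lower-order factor in the middle regime; all Sobolev embeddings needed are already recorded in Subsection~\ref{preliminary estimates}, and the $t^{-5/2}$ scaling comes directly from the fact that $\sqrt{\det h}$ behaves like $t$ uniformly under the bootstrap assumption~\eqref{first bootstrap assumption}.
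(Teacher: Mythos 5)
Your proof is correct and is essentially the intended one: the paper's own proof consists of the single line ``This follows from Sobolev embedding,'' and your Leibniz decomposition into the three regimes $a\le m-2$, $a=m-1$, $a=m$, estimated via the $L^\infty$-$H^2$ and $L^4$-$H^1$ embeddings from Subsection~\ref{preliminary estimates}, is exactly the argument being elided.
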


\begin{proof}
    This follows from Sobolev embedding.
\end{proof}

\begin{lemma} \label{estimates for inhomogeneous terms}
    Define
    \[
    \begin{split}
        I_{\theta_n} &:= \p_t \theta_n + \tr K_n^2 + (\p_t \s_n)^2 - V \circ \s_n\\
        I_{K_n} &:= \lie_{\p_t}^2 K_n - \Delta_{h_n} K_n + (\sn^{(n)})^2 \theta_n^{\sharp} + F_1(K_n) + F_2(K_n) + F_3(\s_n).
    \end{split}
    \]
    Given a positive integer $N$, there is an $n_{N,s}$ such that if $n \geq n_{N,s}$, then
    \[
    \sum_{m = 0}^{s+1} \sum_{r=0}^2 t^m \|\p_t^rI_{\theta_n}\|_{H^m} + \sum_{m = 0}^{s-1} \sum_{r=0}^1 t^m \|\lie_{\p_t}^rI_{K_n}\|_{H^m} \leq C_n t^{N+s}.
    \]
\end{lemma}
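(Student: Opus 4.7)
The starting point is to recognize $I_{\theta_n}$ and $I_{K_n}$ as explicit algebraic combinations of the Einstein error tensor $E_n$ of the $n$-th approximate solution, whose decay is governed by Theorem~\ref{approximate solutions}. Indeed, the derivation of \eqref{k first order equation} and \eqref{k second order equation}, together with the evolution equation for $\theta$ coming from Proposition~\ref{ricci}, was carried out for an arbitrary pair $(g,\s)$ with $g = -dt\otimes dt + h$. Applying these identities to $(g_n,\s_n)$ with $\theta_n := \tr K_n$ yields
\[
I_{\theta_n} = -E_n(\p_t,\p_t), \qquad I_{K_n} = G(E_n),
\]
where $G(E_n)$ is the explicit expression introduced below \eqref{k second order equation}, built from $\ce_n$, $\lie_{\p_t}\ce_n$, $\cm_n$, $\sn^{(n)}\cm_n^{\sharp}$ (via $(\lie_{\cm_n^{\sharp}}h_n)^{\sharp}$), $d(E_n(\p_t,\p_t))$, and algebraic products with $K_n$ and $h_n^{\pm 1}$. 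Thus the task reduces to controlling these quantities in Sobolev norms with respect to $h$.

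The next step is to convert the frame-based pointwise bounds from Theorem~\ref{approximate solutions},
\[
t^2|D^m E_n(\p_t,\p_t)|_{\ho} + \ssum_{r=0}^1 t^{r+2}|D^m\lie_{\p_t}^r\ce_n|_{\ho} + t|D^m\cm_n|_{\ho} \leq C_{m,n}\langle\ln t\rangle^{m+N_n} t^{2(n+1)\varepsilon},
\]
into bounds on $(\sn^{(n)})^m$ of the corresponding tensors, measured in $h_n$. For a scalar the passage from $D^m$ to $(\sn^{(n)})^m$ only introduces spatial Christoffel symbols whose $\ho$-norms are under control by Lemma~\ref{decay of ricci}. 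For tensor-valued objects one invokes the second inequality in Lemma~\ref{estimates in tensor components} to dominate the weighted sum of frame components by $h_n$-norms of $(\sn^{(n)})^a T$ with $a \leq m$, at the cost of a fixed negative power of $t$ (bounded by $-s p_3$) and additional logarithmic factors. Integrating against $\mu_n = \sqrt{\det h_n}\,\mathring{\mu} \approx t\,\mathring{\mu}$ then converts $L^\infty$-type bounds in $\ho$ into $L^2$-type bounds with respect to $h_n$; equivalently into $H^m(\Sigma,h_n)$ bounds with respect to $\sn^{(n)}$. Finally, the bootstrap regime is such that $h$ and $h_n$ are uniformly equivalent, so Lemma~\ref{comparison of h and hn norms} lets us pass from $H^m(\Sigma,h_n)$ to $H^m$ at the cost of a constant. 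Assembling these steps yields bounds of the schematic form
\[
t^m\|I_{\theta_n}\|_{H^m} + t^m\|I_{K_n}\|_{H^m} \leq C_{m,n}\langle\ln t\rangle^{m+N_n} t^{2(n+1)\varepsilon - A_s}
\]
for some fixed exponent $A_s$ depending only on $s$ and on $p_3$, where the loss $A_s$ accounts for the conversion of tensor norms, the factor $t^{-2}$ from $I_{\theta_n} = -E_n(\p_t,\p_t)$, the factor $t^{-2}$ (respectively $t^{-1}$) from the worst terms in $G(E_n)$, and the volume element.

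The last step is purely algebraic: given $s$ and $N$, pick $n_{N,s}$ large enough that
\[
2(n+1)\varepsilon \geq N + s + A_s + 1
\]
for every $n \geq n_{N,s}$. Then the polynomial growth absorbs the logarithmic factors $\langle\ln t\rangle^{m+N_n}$ on a sufficiently small time interval $(0,t_n]$, and summing over $m \leq s+1$ (respectively $m \leq s-1$) produces the desired bound
\[
\ssum_{m=0}^{s+1} t^m\|I_{\theta_n}\|_{H^m} + \ssum_{m=0}^{s-1} t^m\|I_{K_n}\|_{H^m} \leq C_n t^{N+s}.
\]
The main technical obstacle is the bookkeeping of the negative powers of $t$ introduced by the various conversions (pointwise-to-$L^2$, $\ho$-norm to $h_n$-norm, and the intrinsic normalization of the Einstein error in Theorem~\ref{approximate solutions}), since these must all be bounded by a finite $A_s$ independent of $n$; this is exactly why the strong decay rate $t^{2(n+1)\varepsilon}$ with $n$ arbitrarily large is essential, and is the reason $n_{N,s}$ must be chosen depending on both $N$ and $s$.
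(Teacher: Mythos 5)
Your proposal is correct and matches the paper's own argument: identify $I_{\theta_n} = -E_n(\p_t,\p_t)$ and $I_{K_n} = G(E_n)$, then use Theorem~\ref{approximate solutions} together with Lemmas~\ref{estimates in tensor components} and~\ref{comparison of h and hn norms} to convert the pointwise frame estimates into $H^m$ bounds, absorbing the fixed losses in powers of $t$ by choosing $n_{N,s}$ large. One small slip: $G(E)$ does not contain $d(E(\p_t,\p_t))$; it is built only from $\ce$, $\lie_{\p_t}\ce$, $\tr\ce$, $(\lie_{\cm^\sharp}h)^\sharp$, and contractions with $K$ and $h^{\pm 1}$, but this is harmless since the claimed extra term is even better controlled.
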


\begin{proof}
    Note that
    \[
    I_{\theta_n} = -E_n(\p_t,\p_t), \qquad I_{K_n} = G(E_n).
    \]
    The result follows from Lemmas~\ref{comparison of h and hn norms}, \ref{estimates in tensor components} and Theorem~\ref{approximate solutions}.
\end{proof}

\begin{lemma} \label{kn estimates}
    We have
    \[
    t^{r+1}\|\lie_{\p_t}^r K_n\|_{L^{\infty}} \leq C + C_nt^{\varepsilon}, \qquad \sum_{m=1}^{s-1} t^{m+r+1} \|\sn^m \lie_{\p_t}^r K_n\|_{L^{\infty}} \leq C_n \langle \ln t \rangle t^{\varepsilon}
    \]
    and
    \[
    \sum_{m=0}^{s-1}\big( t^{m+2} \|\sn^m( \sn^{(n)} K_n )\|_{L^{\infty}} + t^{m+3}\|\sn^m( \sn^{(n)} \sn^{(n)} K_n )\|_{L^{\infty}} \big) \leq C_n \langle \ln t \rangle t^{\varepsilon},
    \]
    for $r = 0,1$.
\end{lemma}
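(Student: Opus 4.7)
\bigskip

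\noindent\textbf{Proof plan.} The strategy is to reduce every norm appearing in the lemma to component-wise estimates in the frame $\{e_i\}$, where the detailed convergence estimates from Theorem~\ref{approximate solutions} apply directly, and then reassemble into $h$-norms via Lemmas~\ref{comparison of h and hn norms} and \ref{estimates in tensor components}. Throughout, one may first replace $\|\cdot\|_{L^\infty}$ by $\|\cdot\|_{L^\infty(\Sigma,h_n)}$ at the cost of a universal constant.

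First I will handle the pointwise estimate $t^{r+1}\|\lie_{\p_t}^r K_n\|_{L^\infty}$. The upper bound in Lemma~\ref{estimates in tensor components} gives
\[
t^{2(r+1)}|\lie_{\p_t}^r K_n|_{h_n}^2 \leq C_n \ssum_{i,k} t^{-2p_i+2p_k}\bigl[t^{r+1}\lie_{\p_t}^r K_n(e_i,\omega^k)\bigr]^2.
\]
The diagonal components $i=k$ give $[\lie_{\p_t}^r(tK_n)(e_i,\omega^i) - \delta_{r,0}p_i]^2$ plus $\delta_{r,0}p_i^2$; by Theorem~\ref{approximate solutions}, these are bounded by $C + C_n t^{2\varepsilon}$. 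For the off-diagonal components $i \neq k$, on $D_+$ the improved estimate yields $t^{-2p_i+2p_k}\cdot C_n t^{4\varepsilon+4(p_i-p_1)} = C_n t^{4\varepsilon+2(p_i+p_k-2p_1)} \leq C_n t^{4\varepsilon}$ (since $p_i,p_k \geq p_1 > 0$ on $D_+$), and on $D_-$ the $\min\{1,t^{4(p_i-p_k)}\}$ factor cancels the exponent sign mismatch, again yielding $\leq C_n t^{4\varepsilon}$. Since $D_+ \cup D_- = \Sigma$, this establishes the first estimate.

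For the higher-order estimates $t^{m+r+1}\|\sn^m \lie_{\p_t}^r K_n\|_{L^\infty}$ with $m \geq 1$, the same pair of lemmas gives
\[
|\sn^m \lie_{\p_t}^r K_n|_{h_n}^2 \leq C_n \langle \ln t \rangle^{2m} t^{-2mp_3}\! \sum_{|\alpha|\leq m}\! t^{-2(p_{\alpha_1}+\cdots+p_{\alpha_{|\alpha|}})}t^{-2p_i+2p_k}\bigl[e_\alpha \lie_{\p_t}^r K_n(e_i,\omega^k)\bigr]^2,
\]
and one checks that the weighted components $t^{r+1}e_\alpha(\lie_{\p_t}^r K_n)(e_i,\omega^k)$ admit estimates of the same off-diagonal type as before, up to factors $\langle \ln t \rangle^{|\alpha|+2}$ coming from Theorem~\ref{approximate solutions}. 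The diagonal term $p_i\delta_{ik}$ from $\Ko$ is now differentiated, so it contributes only the $C_n t^{2\varepsilon}$ remainder (this is exactly why the constant $C$ disappears from the right-hand side when $m \geq 1$). The negative power $t^{-2mp_3}$ is absorbed into the available off-diagonal decay exactly as in the case $m=0$.

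Finally, for the mixed estimates $t^{m+2}\|\sn^m(\sn^{(n)}K_n)\|_{L^\infty}$ and $t^{m+3}\|\sn^m(\sn^{(n)}\sn^{(n)}K_n)\|_{L^\infty}$, I will expand the inner connections in the frame:
\[
\sn^{(n)}_{e_a}K_n(e_i,\omega^k) = e_a[K_n(e_i,\omega^k)] - (\bar\Gamma_n)_{ai}^b K_n(e_b,\omega^k) + (\bar\Gamma_n)_{ab}^k K_n(e_i,\omega^b),
\]
and similarly for the second inner connection. Lemma~\ref{decay of ricci}, applied to $(h_n,\varphi_n)$ (whose assumptions hold by Theorem~\ref{approximate solutions} after shrinking $t_n$), provides the required behaviour of the $(\bar\Gamma_n)_{ik}^\ell$, and the same bookkeeping as above shows that each term in the expansion, once weighted by the proper powers of $t$ encoded in Lemma~\ref{estimates in tensor components}, produces at worst $C_n\langle \ln t \rangle^{N}t^{2\varepsilon}$. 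The main obstacle is purely combinatorial: one must verify that in each of the three regions (diagonal, off-diagonal on $D_+$, off-diagonal on $D_-$) the positive exponent coming from the off-diagonal improvement in Theorem~\ref{approximate solutions} dominates the negative exponents produced by raising and lowering indices and by the $t^{-2p_3}$ factors from Lemmas~\ref{decay of ricci} and \ref{estimates in tensor components}; the identity $p_1+p_2+p_3=1$ together with $p_1 > 0$ on $D_+$ and $\gamma_{23}^1 \equiv 0$ on $D_-$ is what makes the accounting work, in exactly the same pattern as in the proof of Lemma~\ref{decay of ricci}.
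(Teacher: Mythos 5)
Your approach for the $r=0$ case and the higher-order/mixed estimates is in the same spirit as the paper's (reduce to frame components via Lemmas~\ref{comparison of h and hn norms} and \ref{estimates in tensor components}, then use the off-diagonal improvements from Theorem~\ref{approximate solutions} to absorb the $t^{-2(p_i-p_k)}$ weights and the $t^{-2mp_3}$ prefactor). For $r=1$, however, the paper takes a genuinely different route: it uses the evolution equation
\[
\lie_{\p_t}K_n = -\sric_n^\sharp - \theta_n K_n + d\s_n\otimes\sn\s_n + (V\circ\s_n)I + \ce_n,
\]
isolates $\theta_n K_n \sim t^{-1}K_n$ as the only $O(t^{-2})$ term, and controls the remaining terms via Lemma~\ref{decay of ricci} and Theorem~\ref{approximate solutions}. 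Your alternative---using the $\lie_{\p_t}(tK_n-\Ko)$ estimates from Theorem~\ref{approximate solutions} directly---is also workable and arguably more direct, but as written it contains an algebraic slip that matters.

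The slip: you write the diagonal terms as ``$[\lie_{\p_t}^r(tK_n)(e_i,\omega^i) - \delta_{r,0}p_i]^2$ plus $\delta_{r,0}p_i^2$''. This is not what $t^{r+1}\lie_{\p_t}^r K_n(e_i,\omega^i)$ equals. Since $\lie_{\p_t}(tK_n) = K_n + t\lie_{\p_t}K_n$, one has
\[
t^{2}\lie_{\p_t}K_n(e_i,\omega^i) = t\,\lie_{\p_t}(tK_n - \Ko)(e_i,\omega^i) - (tK_n-\Ko)(e_i,\omega^i) - p_i,
\]
so the $p_i$ term is \emph{not} killed by $\delta_{r,0}$ when $r=1$; the constant $C$ in the lemma's first bound is present for both $r=0$ and $r=1$, precisely because of this $-p_i$. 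Your stated decomposition drops it, and if one literally takes ``the diagonal term gives $[\lie_{\p_t}(tK_n)(e_i,\omega^i)]^2$'' and applies Theorem~\ref{approximate solutions} (which bounds $t|\lie_{\p_t}(tK_n)(e_i,\omega^i)|$, not $|\lie_{\p_t}(tK_n)(e_i,\omega^i)|$), one gets a blow-up $t^{-2+4\varepsilon}$ rather than a bounded quantity. The fix is just the correct identity above, after which the argument closes.

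Two smaller remarks. First, your displayed component estimate for $m\geq 1$ carries both a $t^{-2mp_3}$ prefactor \emph{and} a $t^{-2(p_{\alpha_1}+\cdots)}$ weight; these come from the forward and reverse inequalities of Lemma~\ref{estimates in tensor components} respectively and should not be combined---and since $p_1$ may be negative on $D_-$, the extra $t^{-2(p_\alpha)}$ factor is not automatically $\geq 1$, so it cannot simply be inserted as a ``harmless'' over-estimate. Use only the forward inequality: the gain that eliminates the constant $C$ for $m\geq 1$ is the factor $t^{2m(1-p_3)} = t^{2m}\cdot t^{-2mp_3}$ (with $1-p_3 \geq \mathring\varepsilon > 0$), not a decay of the differentiated $\Ko$ term itself, which merely contributes $e_\alpha p_i$, a bounded constant. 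Second, the claim that $C$ ``disappears'' for $m\geq 1$ is correct, but the mechanism you cite (``$p_i\delta_{ik}$ is now differentiated, so it contributes only the $C_nt^{2\varepsilon}$ remainder'') mis-states the reason: the contribution is $O(1)$, and the decay comes from the weight $t^{2m(1-p_3)}$.
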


\begin{proof}
    We have,
    \[
    t|K_n|_h \leq |\Ko|_h + |tK_n - \Ko|_h.
    \]
    Note that, by Lemmas~\ref{comparison of h and hn norms} and \ref{estimates in tensor components}, $|\cdot|_h$ is equivalent to $|\cdot|_{h_0}$, thus $|\Ko|_h \leq C$. Moreover, the off-diagonal improvements ensure that the second term on the right-hand side is bounded by $C_nt^\varepsilon$. The remaining estimates for $K_n$ follow from Lemmas~\ref{comparison of h and hn norms} and \ref{estimates in tensor components}. 

    For $\lie_{\p_t} K_n$, we have
    \[
    \lie_{\p_t} K_n = -\sric_n^{\sharp} - \theta_n K_n + d\s_n \otimes \sn \s_n + (V \circ \s_n)I + \ce_n.
    \]
    For the estimate without derivatives, note that the only term that is of order $t^{-2}$ is $\theta_n K_n$; as above, the off-diagonal improvements allow us to conclude that the rest are of order $t^{-2+\varepsilon}$. The estimates with derivatives follow again by Lemmas~\ref{comparison of h and hn norms} and \ref{estimates in tensor components}.
\end{proof}

\begin{lemma} \label{phin estimates}
    We have
    \[
    \begin{split}
        \sum_{m=0}^{s-1} t^{m+1}\|\sn^m d\s_n\|_{L^{\infty}} &\leq C_n \langle \ln t \rangle t^\varepsilon,\\
        \sum_{m=0}^{s-1} \big( t^{m+2}\|\sn^m( \sn^{(n)} d\s_n )\|_{L^{\infty}} + t^{m+3}\|\sn^m( \sn^{(n)} \sn^{(n)} d\s_n )\|_{L^\infty} \big) &\leq C_nt^\varepsilon
    \end{split}
    \]
    and
    \[
    t|\p_t \s_n| \leq C + C_nt^\varepsilon, \qquad \sum_{m=0}^{s-1} \big( t^{m+2}\|\sn^m d(\p_t \s_n)\|_{L^{\infty}} + t^{m+3}\|\sn^m( \sn^{(n)} d(\p_t\s_n) )\|_{L^\infty} \big) \leq C_n t^\varepsilon.
    \]
\end{lemma}

\begin{proof}
    For $\p_t \s_n$ without derivatives, just note that $t\p_t \s_n = \psio + (\bar\Psi_n - \psio)$. The estimates with derivatives follow from Lemmas~\ref{comparison of h and hn norms} and \ref{estimates in tensor components}.
\end{proof}

\begin{lemma} \label{estimates for the curvature}
    For $m \leq s-1$, 
    \[
    t^{m+2}\|\sn^m \sric_n\|_{L^{\infty}} \leq C_n t^\varepsilon.
    \]
    Moreover,
    \[
    \sum_{m=0}^{s-3} \|\sn^m(\bar R - \bar R_n)\|_{L^\infty} \leq C, \qquad \sum_{m=0}^{s-1} \|\sn^m(\bar R - \bar R_n)\|_{L^2} \leq C t^{5/2}.
    \]
\end{lemma}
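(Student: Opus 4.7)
The two bounds are handled separately. For the estimate on $\sric_n$, the key observation is that the approximate solution $(h_n,\s_n)$ supplied by Theorem~\ref{approximate solutions} satisfies the hypotheses of Lemma~\ref{decay of ricci} once $t_b$ is taken small enough that $\bar\h_n(e_i,e_i) \geq \eta > 0$ holds. Applying that lemma with $h = h_n$ and $\s = \s_n$ yields
\[
t^2|D^m \sric_n^\sharp|_{\ho} \leq C_{n,m}\langle \ln t\rangle^{m+2} t^{2\mathring\varepsilon}.
\]
Since the frame $\{e_i\}$ is $\ho$-orthonormal, this controls each frame derivative $e_\alpha(\sric_n^\sharp)^k_i$ up to order $m$. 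Lemma~\ref{estimates in tensor components} then converts that pointwise frame-component control into a bound on $|(\sn^{(n)})^a \sric_n|_{h_n}$ for $a\leq m$ (raising and lowering with $h_n$ being isometric), and Lemma~\ref{comparison of h and hn norms} upgrades the $h_n$-norm to the $h$-norm. Since $\mathring\varepsilon \geq \varepsilon$ and the logarithmic powers are absorbed by shrinking $t_b$ further, the first estimate follows.

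For $\bar R - \bar R_n$ I would exploit the Palatini-type identity
\[
(\bar R - \bar R_n)_{ijk}{}^\ell = \sn^{(n)}_i \cd_{jk}^\ell - \sn^{(n)}_j \cd_{ik}^\ell + \cd_{im}^\ell \cd_{jk}^m - \cd_{jm}^\ell \cd_{ik}^m,
\]
where $\cd := \sn - \sn^{(n)}$. Taking $\sn^m$ of both sides, expanding each $\sn$ as $\sn^{(n)} + \cd$, and comparing norms via Lemma~\ref{comparison of h and hn norms} reduces everything to estimating $(\sn^{(n)})^{m+1}\cd$ and products involving $\cd$ and its $\sn^{(n)}$-derivatives. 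Lemma~\ref{difference of connections bound} together with the bootstrap assumption \eqref{second bootstrap assumption} gives
\[
\|\cd\|_{H^{m+1}} \leq C\big(\|\deh\|_{H^{m+2}} + \|\deh^{-1}\|_{H^{m+1}}\big) \leq Ct^{5/2}
\]
for $m \leq s-1$, which yields the $L^2$ bound. The quadratic terms $\cd \otimes \cd$ are handled by Lemma~\ref{product sobolev estimate} and produce $Ct^{5/2}$ or better. The $L^\infty$ estimate for $m \leq s-3$ follows from Sobolev embedding, which costs a factor $t^{-5/2}$ and exactly cancels the gain $t^{5/2}$ from $\deh$, producing the stated uniform bound $C$; the loss of two derivatives $s-1 \to s-3$ reflects exactly this Sobolev step.

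The main obstacle is purely bookkeeping: one must verify that the chain connecting $\deh$ (which carries the $t^{5/2}$ improvement from the bootstrap) to $\bar R - \bar R_n$ (via $\cd$, Lemma~\ref{comparison of h and hn norms}, the Palatini identity, and Sobolev embedding) preserves the expected power of $t$ and does not introduce $n$-dependence into the final constant. The latter is automatic because the lemmas invoked in the second part (Lemmas~\ref{difference of connections bound}, \ref{comparison of h and hn norms}, \ref{technical estimate 2}, \ref{product sobolev estimate}) all produce constants depending only on $s$ and the bootstrap bounds.
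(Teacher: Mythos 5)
Your proposal is correct and follows essentially the same route as the paper. The paper cites Lemmas~\ref{comparison of h and hn norms} and \ref{estimates in tensor components} for the $\sric_n$ bound (which implicitly rest on the component estimates from Lemma~\ref{decay of ricci}, as you make explicit), and uses the same Palatini-type identity for $\bar R - \bar R_n$ — written with $\sn_X\cd$ rather than your $\sn^{(n)}_X\cd$, but the two forms differ only by reshuffling the quadratic $\cd\otimes\cd$ terms and lead to the same estimates via Lemmas~\ref{difference of connections bound} and \ref{sup norm bootstrap estimates} together with the bootstrap assumption.
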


\begin{proof}
    The estimate for $\sric_n$ follows from Lemmas~\ref{comparison of h and hn norms} and \ref{estimates in tensor components}. In order to control $\bar R - \bar R_n$, note that
    \[
    \bar R(X,Y)Z = \bar R_n(X,Y)Z + \sn_X \cd(Y,Z) - \sn_Y \cd(X,Z) + \cd( Y,\cd(X,Z) ) - \cd( X, \cd(Y,Z) ).
    \]
    The result follows from Lemma~\ref{difference of connections bound}.
\end{proof}

\begin{definition}
    Let $X$ and $Y$ be one parameter families of vector fields on $\Sigma$. Define $\lie_{\p_t} \sn$ by
    \[
    (\lie_{\p_t} \sn)(X,Y) := [\p_t,\sn_X Y] - \sn_{[\p_t,X]} Y - \sn_X [\p_t,Y].
    \]
    Note that $\lie_{\p_t} \sn$ defines a one parameter family of $(1,2)$-tensors on $\Sigma$.
\end{definition}

\begin{lemma} \label{commutator formula}
    We have $\lie_{\p_t} \sn = \cs[\sn K]$ (recall Definition~\ref{weird symmetrization}).
\end{lemma}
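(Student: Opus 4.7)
The strategy is to verify the identity pointwise by evaluating both sides on time-independent vector fields $X,Y,Z \in \mfx(\Sigma)$ (trivially extended in the time direction), which suffices since $\lie_{\p_t}\sn$ is tensorial in $X$ and $Y$. For such fields $[\p_t, X] = [\p_t, Y] = [\p_t, Z] = 0$, so $(\lie_{\p_t} \sn)(X,Y) = [\p_t, \sn_X Y]$. Define the symmetric $(0,2)$-tensor $\tilde k$ by $2\tilde k(X,Y) := h(K(X),Y) + h(X,K(Y))$, so that the evolution equation for $h$ in \eqref{the system} reads $\lie_{\p_t} h = 2\tilde k$.

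The plan is then to differentiate the Koszul formula
\[
2 h(\sn_X Y, Z) = X h(Y,Z) + Y h(X,Z) - Z h(X,Y) + h([X,Y],Z) - h([X,Z],Y) - h([Y,Z],X)
\]
with respect to $\p_t$. Since the brackets $[X,Y]$, $[X,Z]$, $[Y,Z]$ are time-independent, every $\p_t$ on the right produces a factor $2\tilde k$ in place of $h$, while on the left one picks up $4\tilde k(\sn_X Y, Z) + 2h([\p_t, \sn_X Y], Z)$ from the time-dependence of $\sn_X Y$. Expanding $X(\tilde k(Y,Z))$, etc., via metric compatibility of $\sn$ and writing $[X,Y] = \sn_X Y - \sn_Y X$, all terms of the form $\tilde k(\sn_\bullet\,\bullet, \bullet)$ cancel pairwise (using the symmetry of $\tilde k$), leaving
\[
h((\lie_{\p_t}\sn)(X,Y), Z) = (\sn_X \tilde k)(Y,Z) + (\sn_Y \tilde k)(X,Z) - (\sn_Z \tilde k)(X,Y).
\]

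To conclude, metric compatibility yields $(\sn_W \tilde k)(A,B) = \tfrac{1}{2}(h((\sn_W K)(A), B) + h(A, (\sn_W K)(B)))$, so the right-hand side splits into six $h$-inner products with $Z$. One matches these against $h(\cs[\sn K](X,Y), Z)$, computed from Definition~\ref{weird symmetrization}, by means of the elementary identity $h^{\ell m}h(S(e_\ell), U) h(e_m, Z) = h(S(Z), U)$, which is valid for any $C^\infty(\Sigma)$-linear map $S$ since $h^{\ell m}h(e_m, Z) = Z^\ell$. A termwise comparison then gives equality for every $Z$, from which the identity of vector fields follows. The main obstacle is the bookkeeping in the differentiation of the Koszul formula; once the symmetric combination $\tilde k$ is isolated and the Christoffel-type cancellations are carried out, the identification with $\cs[\sn K]$ via Definition~\ref{weird symmetrization} is routine.
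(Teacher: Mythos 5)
Your proof is correct and follows essentially the same route as the paper: both differentiate the Koszul formula in $t$ with time-independent $X,Y,Z$, rearrange to obtain $2h((\lie_{\p_t}\sn)(X,Y),Z) = (\sn_X \lie_{\p_t}h)(Y,Z) + (\sn_Y \lie_{\p_t}h)(X,Z) - (\sn_Z \lie_{\p_t}h)(X,Y)$, and then substitute the evolution equation for $h$ and raise an index to identify $\cs[\sn K]$. You simply spell out the intermediate cancellations and the final matching against Definition~\ref{weird symmetrization} in more detail than the paper does.
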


\begin{proof}
    Let $X, Y, Z \in \mathfrak X(\Sigma)$. By taking $\p_t$ of the Koszul formula for $\sn$, we obtain
    \[
    \begin{split}
        &2\lie_{\p_t} h(\sn_X Y,Z) + 2h(( \lie_{\p_t} \sn )(X,Y),Z)\\
        &\hspace{3cm} = X( \lie_{\p_t} h(Y,Z) ) + Y(\lie_{\p_t} h(X,Z)) - Z(\lie_{\p_t} h(X,Y))\\
        &\hspace{3cm} \quad -\lie_{\p_t} h(X,[Y,Z]) - \lie_{\p_t} h(Y,[X,Z]) + \lie_{\p_t} h(Z,[X,Y]),
    \end{split}
    \]
    implying
    \[
    2h(( \lie_{\p_t} \sn )(X,Y),Z) = \sn_X \lie_{\p_t} h(Y,Z) + \sn_Y \lie_{\p_t} h(X,Z) - \sn_Z \lie_{\p_t} h(X,Y).
    \]
    The result follows by using the equation for $h$ and raising an index.
\end{proof}

\begin{lemma} \label{commutator formula 2}
    Let $T$ be a one parameter family of $(q,r)$-tensors on $\Sigma$, then
    \[
    \begin{split}
        &([\lie_{\p_t}, \sn]T)(X,Y_1,\ldots,Y_r,\alpha_1,\ldots,\alpha_q)\\
        &\hspace{3cm}= -\sum_i T(Y_1,\ldots,(\lie_{\p_t} \sn)(X,Y_i),\ldots,Y_r,\alpha_1,\ldots,\alpha_q)\\
        &\hspace{3cm}\quad + \sum_k T(Y_1,\ldots,Y_r,\alpha_1,\ldots,\alpha_k((\lie_{\p_t} \sn)(X,\,\cdot\,)),\ldots,\alpha_q).
    \end{split}
    \]
\end{lemma}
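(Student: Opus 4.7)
The identity is tensorial in all its arguments, so it suffices to verify it pointwise after extending the given vectors $X, Y_1, \ldots, Y_r$ and one-forms $\alpha_1, \ldots, \alpha_q$ at a point to \emph{time-independent} vector fields and one-forms in a neighborhood. With that extension we have $[\p_t, X] = [\p_t, Y_i] = 0$ and $\lie_{\p_t} \alpha_k = 0$, and for any one-parameter family of tensors $S$ with time-independent arguments one has $(\lie_{\p_t} S)(\cdots) = \p_t(S(\cdots))$. The strategy is then to apply $\p_t$ to the defining formula for $\sn T$ and to keep track of which terms are time-dependent, the only such terms being $\sn_X Y_i$ and $\sn_X \alpha_k$.

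Starting from
\[
(\sn T)(X, Y_1, \ldots, \alpha_q) = X(T(Y_1, \ldots, \alpha_q)) - \ssum_i T(\ldots, \sn_X Y_i, \ldots) - \ssum_k T(\ldots, \sn_X \alpha_k, \ldots),
\]
I would differentiate in $t$. The $X$ term gives $X((\lie_{\p_t} T)(Y_1, \ldots, \alpha_q))$ since $X$ is time-independent. Each summand splits into a piece where $\p_t$ hits $T$ (producing $(\lie_{\p_t} T)(\ldots, \sn_X Y_i, \ldots)$ or its one-form analog) and a piece where $\p_t$ hits the time-dependent $\sn_X Y_i$ or $\sn_X \alpha_k$. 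The crucial identities are
\[
\lie_{\p_t}(\sn_X Y_i) = (\lie_{\p_t} \sn)(X, Y_i), \qquad \lie_{\p_t}(\sn_X \alpha_k) = -\alpha_k\bigl((\lie_{\p_t} \sn)(X, \,\cdot\,)\bigr).
\]
The first follows directly from the definition of $\lie_{\p_t} \sn$ given in the previous lemma together with $[\p_t, X] = [\p_t, Y_i] = 0$. The second is obtained by differentiating the dual identity $(\sn_X \alpha_k)(Z) = X(\alpha_k(Z)) - \alpha_k(\sn_X Z)$ at a time-independent $Z$: the $X(\alpha_k(Z))$ contribution is a time-independent function, and only the $\alpha_k(\sn_X Z)$ term survives, producing the stated minus sign.

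Collecting terms, the pieces where $\p_t$ lands on $T$ reassemble into exactly
\[
X((\lie_{\p_t} T)(Y_1, \ldots, \alpha_q)) - \ssum_i (\lie_{\p_t} T)(\ldots, \sn_X Y_i, \ldots) - \ssum_k (\lie_{\p_t} T)(\ldots, \sn_X \alpha_k, \ldots) = (\sn \lie_{\p_t} T)(X, Y_1, \ldots, \alpha_q),
\]
while the pieces where $\p_t$ lands on $\sn_X Y_i$ or $\sn_X \alpha_k$ produce, respectively, $-\ssum_i T(\ldots, (\lie_{\p_t} \sn)(X, Y_i), \ldots)$ and $+\ssum_k T(\ldots, \alpha_k((\lie_{\p_t} \sn)(X, \,\cdot\,)), \ldots)$. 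Moving $(\sn \lie_{\p_t} T)(X, \ldots)$ to the left-hand side gives the claimed formula. There is no genuine obstacle; the only point demanding care is the sign in the one-form identity, which combines with the minus sign already standing in front of the $\sn_X \alpha_k$ sum to yield the plus sign attached to the final $\alpha_k((\lie_{\p_t} \sn)(X, \,\cdot\,))$ contribution.
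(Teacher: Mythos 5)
Your proof is correct, and it is precisely the "direct computation" the paper alludes to without spelling out: extend the arguments to time-independent fields, differentiate the Leibniz formula for $\sn T$ in $t$, identify $[\p_t, \sn_X Y_i] = (\lie_{\p_t}\sn)(X,Y_i)$ via the definition of $\lie_{\p_t}\sn$, and dualize to get the sign-flipped one-form identity. The sign bookkeeping for the one-form slots is handled correctly, and the tensoriality argument justifying the time-independent extension is sound.
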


\begin{proof}
    This follows by a direct computation.
\end{proof}

\begin{lemma} \label{commutator estimates}
    Let $T$ be a one parameter family of tensors on $\Sigma$. Then
    \[
    t^{m+2}\|\sn^m([\lie_{\p_t},\sn]T)\|_{L^2} \leq C_n\langle \ln t \rangle t^\varepsilon \sum_{a=0}^m t^{a} \|\sn^aT\|_{L^2}
    \]
    for $m\leq s-1$. Moreover,
    \[
    t^{m+1}\|[\lie_{\p_t},\sn^m]T\|_{L^2} \leq C_n\langle \ln t \rangle t^\varepsilon \sum_{a=0}^{m-1} t^a\|\sn^a T\|_{L^2}
    \]
    for $m \leq s$.
\end{lemma}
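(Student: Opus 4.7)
The plan is to start from the identity in the preceding lemma and, combined with Lemma~\ref{commutator formula} (which gives $\lie_{\p_t}\sn = \cs[\sn K]$), to observe that
$$[\lie_{\p_t},\sn]T = \cs[\sn K]\ast T,$$
where $\ast$ denotes a finite sum of contractions of one factor of $\sn K$ with $T$ (no derivatives falling on $T$). Applying the Leibniz rule for $\sn$ then yields schematically
$$\sn^m([\lie_{\p_t},\sn]T) = \sum_{a+b=m}\binom{m}{a}\,\sn^{a+1}K\ast\sn^b T.$$

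For the first estimate I would bound each summand separately, splitting into two regimes. In the \emph{low-derivative} regime $a\leq s-3$, the preliminary estimates on $K_n$ together with $\|\sn^{a+1}\dk\|_{L^\infty}\leq C$ from Lemma~\ref{sup norm bootstrap estimates} give $\|\sn^{a+1}K\|_{L^\infty}\leq C_n t^{\varepsilon-a-2}$, so
$$\|\sn^{a+1}K\ast\sn^b T\|_{L^2} \leq C_n t^{\varepsilon-a-2}\|\sn^b T\|_{L^2},$$
and multiplying by $t^{m+2}=t^{a+b+2}$ produces the desired term $C_n t^\varepsilon\,t^b\|\sn^b T\|_{L^2}$. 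In the \emph{high-derivative} regime $a\in\{s-2,s-1\}$, one has $b\leq 2$, and I split $K=K_n+\dk$. For the $K_n$ piece I extend the $L^\infty$ bound to all orders of derivatives (at the cost of $n$-dependent constants, via Theorem~\ref{approximate solutions} and Lemmas~\ref{comparison of h and hn norms}, \ref{estimates in tensor components}) and argue exactly as before. For the $\dk$ piece I instead use the bootstrap estimate $\|\sn^{a+1}\dk\|_{L^2}\leq\|\dk\|_{H^s}\leq t^{5/2}$ together with the three-dimensional Sobolev embedding $\|\sn^b T\|_{L^\infty}\leq Ct^{-5/2}\sum_{r=0}^2\|\sn^{b+r}T\|_{L^2}$. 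The $t^{5/2}$ smallness in the bootstrap absorbs the $t^{-5/2}$ Sobolev loss, and the weight $t^{m+2}$ with $m\geq s-2\geq 3$ is strong enough that $t^{m+2}\leq t^{\varepsilon+b+r}$ for each $r\leq 2$, so this contribution is dominated by $C_n t^\varepsilon\sum_c t^c\|\sn^c T\|_{L^2}$ with $c\leq m$.

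The second estimate follows from the first by iteration. From $\lie_{\p_t}\sn^m = \sn\circ\lie_{\p_t}\sn^{m-1}+[\lie_{\p_t},\sn]\circ\sn^{m-1}$ one obtains inductively
$$[\lie_{\p_t},\sn^m] = \sum_{r=0}^{m-1}\sn^{m-1-r}\circ[\lie_{\p_t},\sn]\circ\sn^r.$$
Applying the first estimate with $T$ replaced by $\sn^r T$ and $m$ replaced by $m-1-r$ (valid since $m-1-r\leq s-1$ whenever $m\leq s$), multiplying by $t^r$ to match the left-hand side weight $t^{m+1}$, and summing in $r$ yields the stated bound $C_n t^\varepsilon\sum_{a=0}^{m-1}t^a\|\sn^a T\|_{L^2}$.

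The main obstacle is the high-derivative case of the first estimate right at the boundary $a=s-1$, where one uses the bootstrap assumption on $\dk$ at its maximal regularity. The key mechanism is that the $t^{5/2}$ smallness of $\dk$ in $H^s$ compensates exactly for the $t^{-5/2}$ loss in the three-dimensional Sobolev embedding, and the hypothesis $s\geq 5$ provides enough surplus power of $t$ to absorb all remaining weights and still recover the $t^\varepsilon$ gain.
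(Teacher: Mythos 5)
Your overall strategy for both estimates closely mirrors the paper's: expand $\sn^m([\lie_{\p_t},\sn]T)$ via the Leibniz rule into terms of the schematic form $\sn^{a+1}K\ast\sn^b T$, split into a low-derivative and a high-derivative regime, and for the second estimate reduce to the first via $[\lie_{\p_t},\sn^m]=\sum_r \sn^{m-1-r}\circ[\lie_{\p_t},\sn]\circ\sn^r$. Your low-derivative regime and your treatment of the $\dk$ piece in the high regime (putting $\sn^{a+1}\dk$ in $L^2$ via the bootstrap and paying the $t^{-5/2}$ Sobolev loss on $\sn^b T$) are sound, and the iteration for the second estimate is correct.

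However, there is a gap in the $K_n$ piece of the high-derivative regime. You propose to ``extend the $L^\infty$ bound to all orders of derivatives'' of $K_n$ via Theorem~\ref{approximate solutions} and Lemmas~\ref{comparison of h and hn norms}, \ref{estimates in tensor components}, and then ``argue exactly as before'', i.e.\ place $\sn^{a+1}K_n$ in $L^\infty$. When $m=s-1$ and $a=s-1$ this requires $\|\sn^s K_n\|_{L^\infty}$, which those lemmas do not supply: Theorem~\ref{approximate solutions} and Lemma~\ref{estimates in tensor components} control $\sn^{(n)}$-derivatives of $K_n$, and the conversion to $\sn$-derivatives in $L^\infty$ is Lemma~\ref{comparison of h and hn norms}, whose $W^{m,\infty}$ equivalence holds only for $m\le s-1$ (the obstruction being that $\cd=\sn-\sn^{(n)}$ is controlled in $W^{m,\infty}$ only for $m\le s-2$ by Lemma~\ref{difference of connections bound}). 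The paper sidesteps this by decomposing $\sn K=\sn^{(n)}K_n+\cd\ast K_n+\sn\dk$ \emph{before} applying $\sn^m$, so the $K_n$-related contribution is $\sn^m(\sn^{(n)}K_n\otimes T)$, involving only $\sn^a(\sn^{(n)}K_n)$ for $a\le s-1$ (for which the preliminary $L^\infty$ bounds do hold), and then routes the $\cd\otimes K_n\otimes T$ and $\sn\dk\otimes T$ pieces through Lemma~\ref{product sobolev estimate}. Your argument can be repaired by adopting that decomposition, or alternatively by invoking Lemma~\ref{technical estimate 2} (which you did not cite) to handle $\|\sn^s K_n\ast T\|_{L^2}$ without ever needing $\|\sn^s K_n\|_{L^\infty}$.
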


\begin{proof}
    First consider the case $m \leq s-3$. Then
    \[
    |\sn^m( [\lie_{\p_t}, \sn]T )|_h \leq C\sum_{a+b=m}|\sn^{a+1} K|_h |\sn^b T|_h.
    \]
    Therefore, we can write $K = K_n + \dk$ and estimate the corresponding terms in $L^{\infty}$ to obtain the result. If $m = s-2, s-1$, we need to proceed differently. In that case
    \[
    |\sn^m( [\lie_{\p_t},\sn]T )|_h \leq C|\sn^m(\sn^{(n)} K_n \otimes T)|_h + C|\sn^m( \cd \otimes K_n \otimes T )|_h + C|\sn^m( \sn \dk \otimes T )|_h,
    \]
    hence
    \[
    \begin{split}
        \|\sn^m([\lie_{\p_t},\sn]T)\|_{L^2} &\leq C_n \langle \ln t \rangle \sum_{a+b=m}t^{-a-2+\varepsilon}\|\sn^b T\|_{L^2}\\
        &\quad + Ct^{-5/2}\big(\|\cd\|_{H^m}\|K_n \otimes T\|_{H^m} + \|\dk\|_{H^{m+1}}\|T\|_{H^m}\big)\\
        &\leq C_n \langle \ln t \rangle t^{-m-2+\varepsilon} \sum_{a=0}^m t^a \|\sn^a T\|_{L^2} + C\|K_n \otimes T\|_{H^m} + C\|T\|_{H^m},
    \end{split}
    \]
    where we have used Lemma~\ref{product sobolev estimate}. The $K_n \otimes T$ term can now be estimated similarly as in the previous case to obtain the result.

    For the second estimate,
    \[
    [\lie_{\p_t},\sn^m]T = [\lie_{\p_t},\sn] \sn^{m-1}T + \cdots + \sn^{m-a-1}( [\lie_{\p_t},\sn] \sn^aT ) + \cdots + \sn^{m-1}( [\lie_{\p_t},\sn]T ),
    \]
    implying
    \[
    \begin{split}
        \|[\lie_{\p_t},\sn^m]T\|_{L^2} &\leq C_n \langle \ln t \rangle \sum_{a=0}^{m-1} \left( t^{-m+a-1+\varepsilon} \sum_{b=0}^{m-a-1} t^b \|\sn^{a+b} T\|_{L^2} \right)\\
        &\leq C_n \langle \ln t \rangle t^{-m-1+\varepsilon} \sum_{a=0}^{m-1} t^a \|\sn^a T\|_{L^2},
    \end{split}
    \]
    which is what we wanted to prove.
\end{proof}

\begin{lemma} \label{estimate for difference of hessians}
    Let $T_1$ and $T_2$ be tensors on $\Sigma$ of the same type, and let $\delta T := T_1 - T_2$. Then
    \[
    \begin{split}
        &\big\|\sn^m\big( \sn^2 T_1^\sharp - (\sn^{(n)})^2 T_2^\sharp \big)\big\|_{L^2}\\
        &\hspace{2cm}\leq \|\sn^{m+2}\delta T\|_{L^2} + C\sum_{a+b=m} \| \sn^a\deh^{-1} \otimes \sn^b (\sn^{(n)} \sn^{(n)} T_2)\|_{L^2}\\
        &\hspace{2cm}\quad +C\sum_{a+b=m+1} \|\sn^a \cd \otimes \sn^b T_2\|_{L^2} + C\sum_{a+b=m} \| \sn^a \cd \otimes \sn^b (\sn^{(n)} T_2)\|_{L^2},
    \end{split}
    \]
    where $C$ depends only on $m$, and $\sharp$ is applied to $\sn^2$ and $(\sn^{(n)})^2$ according to Definition~\ref{raising an index}. In particular, if $T_1$ and $T_2$ are scalars, then the third term on the right-hand side of the inequality vanishes.
\end{lemma}

\begin{proof}
    For a tensor $T$, we use the notation $\cd T(X) = \cd_X T = \sn_X T - \sn^{(n)}_X T$. Then $\cd T$ is a tensor with one more degree of covariance than $T$, which consists of a sum of contractions of tensor products of $\cd$ and $T$. For $X \in \mathfrak X(\Sigma)$ and $\alpha \in \Omega^1(\Sigma)$, we have
    \begin{align*}
        &\big( \sn^2 T_1^\sharp - (\sn^{(n)})^2 T_2^\sharp \big)(X,\alpha)\\
        &= h^{-1}(\alpha,\omega^i) \sn^2_{X,e_i} T_1 - (h_n)^{-1}(\alpha,\omega^i)(\sn^{(n)})^2_{X,e_i} T_2\\
        &= \deh^{-1}(\alpha,\omega^i)(\sn^{(n)})^2_{X,e_i}T_2 + h^{-1}(\alpha,\omega^i)\big( \sn^2_{X,e_i} T_1 - (\sn^{(n)})^2_{X,e_i} T_2 \big)\\
        &= \deh^{-1}(\alpha,\omega^i)(\sn^{(n)})^2_{X,e_i}T_2 + h^{-1}(\alpha,\omega^i)\big( \sn^2_{X,e_i} \delta T + \sn_X (\cd T_2)(e_i) + \cd_X( \sn^{(n)} T_2 )(e_i) \big).
    \end{align*}
    Note that, if $T_1$ and $T_2$ are scalars, then the second term inside the parentheses in the last line vanishes. The result follows. 
\end{proof}

\subsection{Estimates for polynomial terms}

When computing the energy estimates for the system \eqref{the system}, it will repeatedly happen that we need to estimate terms which are polynomial in the $\delta$ variables, the approximate solution and $\cd$. In order to facilitate these calculations, we explain how to estimate the polynomial terms here.

\begin{table}[ht]
\centering
\captionsetup{width=13cm}
\begin{tabular}{ | c | c | c | }
 \hline
 Object & \makecell{Number of derivatives\\ controlled in $L^\infty$} & Bound in $W^{m,\infty}$ norm \\ 
 \hline\hline
 $K_n$ & $s-1$ & $( C + C_n\langle \ln t \rangle t^\varepsilon ) t^{-m-1}$ \\
 \hline
 $\lie_{\p_t} K_n$ & $s-1$ & $( C + C_n\langle \ln t \rangle t^\varepsilon ) t^{-m-2}$\\
 \hline
 $\sn^{(n)} K_n$ & $s-1$ & $C_n\langle \ln t \rangle t^{-m-2+\varepsilon}$\\
 \hline
 $\sn^{(n)}\sn^{(n)} K_n$ & $s-1$ &$C_n\langle \ln t \rangle t^{-m-3+\varepsilon}$\\
 \hline
 $d\s_n$ & $s-1$ & $C_n\langle \ln t \rangle t^{-m+\varepsilon}$\\
 \hline
 $\p_t\s_n$ & $s$ & $( C + C_n t^\varepsilon ) t^{-m-1}$\\
 \hline
 $\sn^{(n)} d\s_n$ & $s-1$ & $C_n\langle \ln t \rangle t^{-m-2+\varepsilon}$\\
 \hline
 $\sn^{(n)}\sn^{(n)} d\s_n$ & $s-1$ & $C_n\langle \ln t \rangle t^{-m-3+\varepsilon}$\\
 \hline
 $\sn^{(n)} d(\p_t\s_n)$ & $s-1$ & $C_n t^{-m-3+\varepsilon}$\\
 \hline
 $\deh, \deh^{-1}, \dt, \dep$ & $s-1$ & $C$\\
 \hline
 $\dk, \p_t\dep, \cd$ & $s-2$ & $C$\\
 \hline
 $\lie_{\p_t} \dk$ & $s-3$ & $C$\\
 \hline
\end{tabular}
\caption{This table summarizes the relevant information about the objects of interest, which we obtained in the previous subsection, see Lemmas~\ref{sup norm bootstrap estimates}, \ref{difference of connections bound}, \ref{kn estimates} and \ref{phin estimates}. The second column refers to the number of $\sn$ derivatives for which we have $L^\infty$ estimates. In the third column, the value of $m$ is assumed to be less or equal than the corresponding value in the second column.}
\label{table}
\end{table}

Assume that we want to estimate a term of the form
\[
\sn^{a_1} T_1 \otimes \cdots \otimes \sn^{a_r} T_r
\]
in $L^2$, where $T_i$ for $1 \leq i \leq r$ denotes any of the objects listed in Table~\ref{table}, and there is at least one $i$ such that $T_i$ is one of the $\delta$ objects or $\cd$. We assume that it is possible to single out one index $1 \leq \xi \leq r$, such that $T_\xi$ is one of the $\delta$ objects or $\cd$, in such a way that for $i \neq \xi$, $a_i$ is less or equal than the value in the second column of Table~\ref{table} which corresponds to the object $T_i$. Moreover, we assume that $a_\xi \leq s$ if $T_\xi = \cd$ and, otherwise, $a_\xi$ is less or equal than the number of derivatives of $T_\xi$ appearing in the energy $\se_s$. 

The idea for estimating the term of interest is as follows. We want to estimate every factor, except for $T_\xi$, in $L^\infty$, while $\|\sn^{a_\xi} T_\xi\|_{L^2}$ is estimated in terms of the energy. Note that $\cd$ does not appear in the energy, so in the case $T_\xi = \cd$, we use Lemma~\ref{difference of connections bound} first to estimate $\cd$ in terms of $\deh$ and $\deh^{-1}$. The important information to keep track of is then: the total number of spatial derivatives present, this includes the instances of $\sn^{(n)}$ and $d$ which may occur in the $T_i$ themselves; the total number of time derivatives present; and the total number of times $K_n$ appears. We now introduce appropriate counters for these quantities,
\begin{align*}
    \zeta &:= \text{total number of spatial derivatives,}\\
    \tau &:= \text{total number of time derivatives},\\
    \kappa &:= \text{total number of appearances of $K_n$}.
\end{align*}
Additionally, define $\lambda$ by
\[
\lambda := 
\begin{cases}
    1, \quad \text{if} \; T_\xi = \p_t \dep, \dep, \deh, \deh^{-1};\\
    0, \quad  \text{otherwise}.
\end{cases}
\]
Then, using the information in Table~\ref{table} and the definition of $\se_s$, we see that
\begin{equation} \label{polynomial estimate}
    t^{\zeta + \tau + \kappa} \|\sn^{a_1} T_1 \otimes \cdots \otimes \sn^{a_r} T_r\|_{L^2} \leq \big(C+C_n\langle \ln t \rangle t^\varepsilon\big) t^\lambda \se_s^{1/2}(t).
\end{equation}
This takes care of most of the polynomial terms that we will come across below, but there are two special cases. We show that these special cases can, in fact, be estimated in the same way as in \eqref{polynomial estimate}. The first special case is
\[
\sn^s T_1 \otimes \sn^{a_2} T_2,
\]
where $T_1$ is one of $K_n$, $d\s_n$, $\sn^{(n)} d\s_n$ or $d(\p_t\s_n)$; $T_2$ is one of the $\delta$ objects or $\cd$; and $a_2$ is either $0$ or $1$. For this type of term, we need to use Lemma~\ref{technical estimate 2}. Introduce $\lambda$ and the counters $\kappa$ and $\tau$ as above. Then, letting $b$ denote the number of spatial derivatives in $T_1$ itself and $\tau_2$ the number of time derivatives in $T_2$, we see that
\[
\begin{split}
    &t^{s+b+a_2+\kappa+\tau} \|\sn^s T_1 \otimes \sn^{a_2} T_2\|_{L^2}\\
    &\leq Ct^{s+b+a_2+\kappa+\tau}\Big[ \big( \|\sn^{s-1}(\sn^{(n)}T_1)\|_{L^\infty} + \|T_1\|_{W^{s-1,\infty}} \big)\|\sn^{a_2}T_2\|_{L^2} + \|T_1\|_{L^\infty}\|\sn^{a_2}T_2\|_{H^2} \Big]\\
    &\leq \big( C + C_n \langle \ln t \rangle t^\varepsilon \big)( t^{a_2+\tau_2}\|\sn^{a_2}T_2\|_{L^2} + t^{s+a_2+\tau_2}\|\sn^{a_2}T_2\|_{H^2} )\\
    &\leq \big(C + C_n \langle \ln t \rangle t^\varepsilon\big)t^\lambda\se_s^{1/2}(t),
\end{split}
\]
which is consistent with \eqref{polynomial estimate}. The second special case is 
\[
\sn^{s+1}K_n \otimes T_2,
\]
where $T_2$ is either $\dk$, $\deh$ or $\deh^{-1}$. Once again we need to use Lemma~\ref{technical estimate 2},
\[
\begin{split}
    t^{s+2}\|\sn^{s+1}K_n \otimes T_2\|_{L^2} &\leq Ct^{s+2}\Big[ \big( \|\sn^{s-1}( \sn^{(n)}\sn^{(n)}K_n )\|_{L^\infty} + \|\sn^{(n)}K_n\|_{W^{s-1,\infty}} \big)\|T_2\|_{L^2}\\
    &\quad \hspace{1.5cm} + \|\sn^{(n)}K_n\|_{L^\infty}\|T_2\|_{H^2} + \|K_n\|_{W^{s-1,\infty}} \|T_2\|_{H^2} \Big]\\
    &\leq \big(C+C_n\langle \ln t \rangle t^\varepsilon\big)( \|T_2\|_{L^2} + t^2\|T_2\|_{H^2} )\\
    &\leq \big(C+C_n\langle \ln t \rangle t^\varepsilon\big) t^\lambda \se_s^{1/2}(t),
\end{split}
\]
which is again consistent with \eqref{polynomial estimate}.

\subsection{Energy estimates} \label{energy estimates}

We begin this subsection by obtaining energy estimates for general transport and wave equations. 

\begin{proposition} \label{transport estimate}
    Let $T$ be a one parameter family of tensors on $\Sigma$, then
    \[
    \frac{d}{dt} \|T\|_{L^2}^2 \leq \frac{1}{t}(C + C_nt^\varepsilon) \|T\|_{L^2}^2 + 2\int_{\Sigma_t} h(\lie_{\p_t} T,T) \mu.
    \]
    In particular,
    \[
    \frac{d}{dt} \|T\|_{L^2}^2 \leq t\|\lie_{\p_t} T\|_{L^2}^2 + \frac{1}{t}( C + C_nt^\varepsilon ) \|T\|_{L^2}^2.
    \]
\end{proposition}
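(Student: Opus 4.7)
The plan is a direct differentiation under the integral sign. Writing $\|T\|_{L^2}^2 = \int_{\Sigma_t} |T|_h^2\,\mu$, I would compute
\[
\frac{d}{dt}\|T\|_{L^2}^2 \;=\; \int_{\Sigma_t} \p_t\bigl(|T|_h^2\bigr)\,\mu \;+\; \int_{\Sigma_t} |T|_h^2\,\p_t\mu.
\]
Using the evolution equation $\lie_{\p_t} h(X,Y) = h(K(X),Y) + h(X,K(Y))$ from system \eqref{the system}, one gets $\p_t\mu = (\tr K)\mu$ and
\[
\p_t\bigl(|T|_h^2\bigr) \;=\; 2\,h(\lie_{\p_t}T,T) \;+\; \mathcal{R}(K,T),
\]
where $\mathcal{R}(K,T)$ is a sum of contractions of $\lie_{\p_t}h$ and $\lie_{\p_t}h^{-1}$ against $T\otimes T$ (one such contraction for each covariant/contravariant slot of $T$). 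In any event, $|\mathcal{R}(K,T)| \leq C\,|K|_h\,|T|_h^2$ with $C$ depending only on the type $(q,r)$ of $T$, and likewise $|(\tr K)|T|_h^2| \leq C|K|_h|T|_h^2$.

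Next I would bound $|K|_h$ pointwise. Splitting $K=K_n+\dk$, the estimate $t^{r+1}\|\lie_{\p_t}^r K_n\|_{L^\infty}\leq C+C_nt^\varepsilon$ (with $r=0$) gives $\|K_n\|_{L^\infty}\leq (C+C_nt^\varepsilon)/t$, while Lemma~\ref{sup norm bootstrap estimates} yields $\|\dk\|_{L^\infty}\leq C$. Since $t\leq 1$, we obtain $\|K\|_{L^\infty}\leq (C+C_nt^\varepsilon)/t$. Substituting into the identity above,
\[
\frac{d}{dt}\|T\|_{L^2}^2 \;\leq\; 2\int_{\Sigma_t} h(\lie_{\p_t}T,T)\,\mu \;+\; \frac{C+C_nt^\varepsilon}{t}\,\|T\|_{L^2}^2,
\]
which is the first inequality.

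For the second inequality, I would apply Cauchy--Schwarz followed by the elementary bound $2ab\leq ta^2+t^{-1}b^2$ to the cross term:
\[
2\int_{\Sigma_t} h(\lie_{\p_t}T,T)\,\mu \;\leq\; 2\,\|\lie_{\p_t}T\|_{L^2}\|T\|_{L^2} \;\leq\; t\,\|\lie_{\p_t}T\|_{L^2}^2 + \frac{1}{t}\|T\|_{L^2}^2,
\]
and then absorb the $t^{-1}\|T\|_{L^2}^2$ into the existing zero-order term. I do not anticipate a genuine obstacle here; the only care that is needed is the bookkeeping of the $\mathcal{R}(K,T)$ terms for tensors $T$ of arbitrary type, which is handled uniformly by the pointwise bound $|\mathcal{R}(K,T)|\leq C|K|_h|T|_h^2$ with $C$ depending only on $(q,r)$.
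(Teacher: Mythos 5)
Your proof is correct and follows essentially the same approach as the paper: differentiate under the integral, isolate $2\,h(\lie_{\p_t}T,T)$, bound the remaining metric- and volume-variation terms pointwise by $C|K|_h|T|_h^2$, and use the $L^\infty$ bound $\|K\|_{L^\infty}\leq(C+C_nt^\varepsilon)/t$ (which the paper uses implicitly). The paper makes the pointwise step $\p_t|T|_h^2\leq C|K|_h|T|_h^2+2h(\lie_{\p_t}T,T)$ rigorous by freezing a $\p_t$-parallel orthonormal frame at a reference time $r$, but this is only a more explicit version of your schematic $\mathcal{R}(K,T)$ bookkeeping.
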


\begin{proof}
    We need to estimate the time derivative of $|T|_h^2$ for a one parameter family of tensors $T$ on $\Sigma$. First note that the dual metric $h^{-1}$ satisfies the equation
    \[
    \lie_{\p_t} h^{-1}(\alpha,\beta) = - h^{-1}(K(\,\cdot\,,\alpha), \beta) -h^{-1}( \alpha, K(\,\cdot\,,\beta) ),
    \]
    for $\alpha, \beta \in \Omega^1(\Sigma)$. Fix $r \in [t_0,t_b]$, let $\{E_i\}$ be an orthonormal frame for $h(r)$ with dual frame $\{\theta^i\}$ and extend it to $[t_0,t_b] \times \Sigma$ by requiring that $[\p_t,E_i] = 0$. As usual, for simplicity, we work with a time dependent $(1,1)$-tensor $T$. Then $T = T_i^k \theta^i \otimes E_k$ and
    \[
    |T|_h^2 = h^{-1}(\theta^i,\theta^k) h(E_\ell,E_m) T_i^\ell T_k^m,
    \]
    implying
    \[
    \begin{split}
        \p_t |T|_h^2 &= \big( -h^{-1}(K(\,\cdot\,,\theta^i),\theta^k) - h^{-1}( \theta^i, K(\,\cdot\,,\theta^k) ) \big)h(E_\ell,E_m)T_i^\ell T_k^m\\
        &\quad + h^{-1}(\theta^i,\theta^k)\big( h(K(E_\ell),E_m) + h(E_\ell,K(E_m)) \big)T_i^\ell T_k^m + 2h(\lie_{\p_t} T,T).
    \end{split}
    \]
    If we evaluate this expression at $t = r$, we obtain
    \[
    \begin{split}
        \p_t |T|_h^2(r) &= \sum_{i,k,\ell}( -K_r(E_k,\theta^i) - K_r(E_i,\theta^k) ) T_i^\ell(r) T_k^\ell(r)\\
        &\quad + \sum_{i,\ell,m}( K_r(E_\ell,\theta^m) + K_r(E_m,\theta^\ell) ) T_i^\ell(r) T_i^m(r) + 2h(\lie_{\p_t}T,T)(r).
    \end{split}
    \]
    Now, after a few applications of the Cauchy-Schwarz inequality, we conclude that there is a constant $C$ such that
    \[
    \p_t |T|_h^2(r) \leq C|K|_h(r) |T|_h^2(r) + 2h(\lie_{\p_t}T,T)(r).
    \]
    Since $r$ was arbitrary, the estimate holds for all $t \in [t_0,t_b]$.

    Now for the $L^2$ norm. Note that $\p_t \sqrt{\det h} = (\tr K)\sqrt{\det h}$. Hence
    \[
    \begin{split}
        \frac{d}{dt} \|T\|_{L^2}^2 = \frac{d}{dt} \int_{\Sigma_t} |T|_h^2 \sqrt{\det h} \mathring \mu &= \int_{\Sigma_t} \Big( \p_t|T|_h^2 + (\tr K)|T|_h^2 \Big) \mu\\
        &\leq \frac{1}{t}(C + C_nt^\varepsilon) \int_{\Sigma_t} |T|_h^2 \mu + 2\int_{\Sigma_t} h(\lie_{\p_t}T,T)\mu,
    \end{split}
    \]
    thus finishing the proof.
\end{proof}

\begin{proposition} \label{wave estimate}
    Let $T$ and $F$ be one parameter families of tensors on $\Sigma$ such that \[-\lie_{\p_t}^2 T + \Delta_h T = F,\] then
    \[
    \frac{d}{dt} \mathbb E[T] \leq \frac{1}{t}( C + C_nt^\varepsilon )\mathbb E[T] + t\|F\|_{L^2}^2.
    \]
\end{proposition}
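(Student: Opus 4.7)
The plan is to differentiate $E[T]$ term by term and exploit the cancellation between the wave equation and integration by parts, handling the remaining error terms with the transport estimate (Proposition~\ref{transport estimate}) and the commutator bound (Lemma~\ref{commutator estimates}).

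First, applying Proposition~\ref{transport estimate} to the one parameter family $\lie_{\p_t} T$ gives
\[
\frac{d}{dt}\|\lie_{\p_t}T\|_{L^2}^2 \leq \frac{1}{t}(C + C_n t^{\varepsilon})\|\lie_{\p_t}T\|_{L^2}^2 + 2\int_{\Sigma_t} h(\lie_{\p_t}^2 T,\lie_{\p_t}T)\mu,
\]
and the wave equation rewrites the last integral as $2\int_{\Sigma_t} h(\Delta_h T,\lie_{\p_t}T)\mu - 2\int_{\Sigma_t} h(F,\lie_{\p_t}T)\mu$. Next, applying Proposition~\ref{transport estimate} to $\sn T$ produces
\[
\frac{d}{dt}\|\sn T\|_{L^2}^2 \leq \frac{1}{t}(C+C_n t^{\varepsilon})\|\sn T\|_{L^2}^2 + 2\int_{\Sigma_t} h(\sn \lie_{\p_t} T,\sn T)\mu + 2\int_{\Sigma_t} h([\lie_{\p_t},\sn]T,\sn T)\mu.
\]
Since $\Sigma$ is closed, integration by parts turns the $\sn \lie_{\p_t} T$ integral into $-2\int_{\Sigma_t} h(\lie_{\p_t}T,\Delta_h T)\mu$, which cancels the $\Delta_h T$ contribution coming from the wave equation. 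This cancellation is the whole point of the energy $E[T]$ and is the only step that uses the equation in an essential way.

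It remains to control the three leftover contributions. The forcing term is handled by Cauchy--Schwarz and Young's inequality,
\[
\Big|2\int_{\Sigma_t} h(F,\lie_{\p_t}T)\mu\Big| \leq t\|F\|_{L^2}^2 + \frac{1}{t}\|\lie_{\p_t}T\|_{L^2}^2 \leq t\|F\|_{L^2}^2 + \frac{1}{t}E[T].
\]
For the commutator term, the $m=0$ case of Lemma~\ref{commutator estimates} gives $\|[\lie_{\p_t},\sn]T\|_{L^2} \leq C_n t^{-2+\varepsilon}\|T\|_{L^2}$, so Cauchy--Schwarz followed by Young yields
\[
\Big|2\int_{\Sigma_t} h([\lie_{\p_t},\sn]T,\sn T)\mu\Big| \leq C_n t^{-1+\varepsilon}\big(t^{-2}\|T\|_{L^2}^2 + \|\sn T\|_{L^2}^2\big) \leq \frac{C_n t^{\varepsilon}}{t}E[T].
\]
Finally, for the $t^{-2}\|T\|_{L^2}^2$ piece of $E[T]$, we differentiate directly, use the second form of Proposition~\ref{transport estimate}, and bound $t^{-2}(t\|\lie_{\p_t}T\|_{L^2}^2) = t^{-1}\|\lie_{\p_t}T\|_{L^2}^2 \leq t^{-1}E[T]$ while $t^{-3}\|T\|_{L^2}^2 \leq t^{-1}E[T]$.

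Summing the three contributions produces the claimed estimate $\frac{d}{dt}E[T]\leq \frac{1}{t}(C+C_n t^{\varepsilon})E[T] + t\|F\|_{L^2}^2$. There is no real obstacle here: the algebraic cancellation takes care of the top order term, and the only subtle piece is making sure the commutator $[\lie_{\p_t},\sn]T$ scales with the correct power of $t$, which is exactly the content of Lemma~\ref{commutator estimates} derived from $\lie_{\p_t}\sn = \cs[\sn K]$ and the approximate-solution estimates for $K$.
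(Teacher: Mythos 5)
Your proof is correct and follows essentially the same route as the paper's: apply the transport estimate (Proposition~\ref{transport estimate}) to $\lie_{\p_t}T$ and $\sn T$, substitute the wave equation, integrate by parts to cancel the top-order Laplacian contribution, control the forcing and commutator terms with Cauchy--Schwarz, Young, and Lemma~\ref{commutator estimates}, and handle the $t^{-2}\|T\|_{L^2}^2$ piece directly. The only cosmetic difference is in exactly how Young's inequality is weighted when absorbing the commutator term, which does not change the argument.
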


\begin{proof}
    We compute,
    \[
    \begin{split}
        \frac{d}{dt} \|\lie_{\p_t}T\|_{L^2}^2 &\leq \frac{1}{t}(C + C_nt^{\varepsilon}) \|\lie_{\p_t}T\|_{L^2}^2 + 2\int_{\Sigma_t} h(\lie_{\p_t}^2 T, \lie_{\p_t}T)\mu\\
        &= \frac{1}{t}(C + C_nt^{\varepsilon}) \|\lie_{\p_t}T\|_{L^2}^2 + 2\int_{\Sigma_t} \Big( h(\Delta_h T, \lie_{\p_t}T) - h(F,\lie_{\p_t}T) \Big) \mu\\
        &= \frac{1}{t}(C + C_nt^{\varepsilon}) \|\lie_{\p_t}T\|_{L^2}^2 - 2\int_{\Sigma_t} h(\sn T,\sn \lie_{\p_t}T) \mu - 2\int_{\Sigma_t} h(F, \lie_{\p_t}T) \mu.
    \end{split}
    \]
    Furthermore,
    \[
        \frac{d}{dt} \|\sn T\|_{L^2}^2 \leq \frac{1}{t}(C + C_nt^\varepsilon)\|\sn T\|_{L^2}^2 + 2\int_{\Sigma_t} h(\lie_{\p_t} \sn T, \sn T) \mu.
    \]
    Putting together these two estimates yields
    \[
    \begin{split}
        &\frac{d}{dt}\big( \|\lie_{\p_t}T\|_{L^2}^2 + \|\sn T\|_{L^2}^2 \big)\\
        &\leq \frac{1}{t}(C + C_nt^\varepsilon) \big(\|\lie_{\p_t} T\|_{L^2}^2 + \|\sn T\|_{L^2}^2\big) + 2\int_{\Sigma_t} h([\lie_{\p_t},\sn]T, \sn T)\mu - 2\int_{\Sigma_t} h(F, \lie_{\p_t}T) \mu\\
        &\leq \frac{1}{t}(C + C_nt^\varepsilon) \big(\|\lie_{\p_t} T\|_{L^2}^2 + \|\sn T\|_{L^2}^2\big) + t\|[\lie_{\p_t},\sn]T\|_{L^2}^2 + t\|F\|_{L^2}^2\\
        &\leq \frac{1}{t}(C + C_nt^\varepsilon)\mathbb E[T] + t\|F\|_{L^2}^2,
    \end{split}
    \]
    where we have used the commutator estimate in the last inequality. Finally,
    \[
    \begin{split}
        \frac{d}{dt} \big( t^{-2}\|T\|_{L^2}^2 \big) &\leq -2t^{-3}\|T\|_{L^2}^2 + t^{-2}\left( t\|\lie_{\p_t}T\|_{L^2}^2 + \frac{1}{t}(C + C_nt^\varepsilon)\|T\|_{L^2}^2 \right)\\
        &\leq \frac{1}{t}(C + C_nt^\varepsilon)\mathbb E[T],
    \end{split}
    \]
    which together with the previous estimate yields the result.
\end{proof}

Now we move on to the energy estimates for $\dk$ and $\dep$. Note that, as a consequence of \eqref{the system}, they satisfy the equations
\begin{subequations} \label{wave equations for delta objects}
\begin{align} 
\begin{split}
    -\lie_{\p_t}^2 \dk + \Delta_h \dk &= F_1(K) - F_1(K_n) + F_2(K) - F_2(K_n) + F_3(\s) - F_3(\s_n)\\
    &\quad  + \sn^2 \theta^\sharp - (\sn^{(n)})^2 \theta_n^\sharp + (\Delta_{h_n} - \Delta_{h}) K_n + I_{K_n}, 
\end{split}\label{delta k equation}\\
\begin{split}
    -\p_t^2 \dep + \Delta_h \dep &= (\Delta_{h_n} - \Delta_h) \s_n + \theta \p_t \s - \theta_n \p_t \s_n\\
    &\quad + V' \circ \s - V' \circ \s_n + (V' \circ \s_n - \Box_{g_n} \s_n), \label{delta phi equation}
\end{split}
\end{align}
\end{subequations}
where $I_{K_n}$ was introduced in Lemma~\ref{estimates for inhomogeneous terms}.

\begin{proposition} \label{energy estimate for delta k and delta phi}
    Given a positive integer $N$, there is an $n_{N,s}$ such that for $n \geq n_{N,s}$, 
    \[
    \frac{d}{dt}\left(\sum_{m=0}^{s-1} t^{2(m+1)}\mathbb E_m[\dk] + \sum_{m=0}^s t^{2m}\mathbb E_m[\dep] \right) \leq \frac{1}{t}\big(C + C_n\langle \ln t \rangle t^\varepsilon\big) \se_s(t) + C_nt^{2N+2s+1}.
    \]
\end{proposition}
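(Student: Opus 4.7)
The plan is to apply the wave estimate of Proposition~\ref{wave estimate} to $\sn^m \dk$ for $0 \leq m \leq s-1$ and to $\sn^m \dep$ for $0 \leq m \leq s$, multiply the resulting inequalities by $t^{2(m+1)}$ and $t^{2m}$ respectively, add the contribution from the derivative of the weight, and sum. To this end, commute $\sn^m$ through \eqref{delta k equation} and \eqref{delta phi equation} to obtain
\[
-\lie_{\p_t}^2 \sn^m \dk + \Delta_h \sn^m \dk = F_m^K, \qquad -\p_t^2 \sn^m \dep + \Delta_h \sn^m \dep = F_m^\varphi,
\]
where $F_m^K$ and $F_m^\varphi$ collect the original right-hand sides together with the commutator terms $[\lie_{\p_t}^2,\sn^m]\dk$, $[\Delta_h,\sn^m]\dk$ (and analogously for $\dep$). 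The commutators are controlled by Lemma~\ref{commutator estimates}, producing bounds of the schematic form $C_n t^{\varepsilon-m-2}\sqrt{\ce_s(t)}$, which after multiplication by $t \cdot t^{2(m+1)}$ fit inside $\tfrac{1}{t}C_n t^{\varepsilon}\ce_s(t)$.

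Next, I would estimate the remaining contributions to $F_m^K$ piece by piece. The polynomial differences $F_i(K)-F_i(K_n)$ for $i=1,2$ and $F_3(\s)-F_3(\s_n)$ are telescoped using the algebraic identity \eqref{difference of monomials} applied contraction by contraction: each summand is written as a single factor of $\dk$, $\lie_{\p_t}\dk$, $\dep$, or $\p_t \dep$ times products of $K_n$, $\lie_{\p_t}K_n$, $\s_n$, $\p_t\s_n$, the latter being pointwise-controlled by the preliminary estimates of Subsection~\ref{preliminary estimates}, and the resulting products are estimated in $H^m$ via Lemma~\ref{product sobolev estimate} combined with Lemma~\ref{sup norm bootstrap estimates}. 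The connection-difference pieces $(\Delta_h - \Delta_{h_n})K_n$ and $(\sn^2 - (\sn^{(n)})^2)\theta_n^\sharp$ are expressed in terms of $\cd = \sn - \sn^{(n)}$ and bounded via Lemma~\ref{difference of connections bound}, turning them into $\deh$- and $\deh^{-1}$-contributions of $\ce_s$. For $F_m^\varphi$ the potential differences are handled by writing
\[
V'\circ\s - V'\circ\s_n = \dep\int_0^1 V''\bigl(r\s + (1-r)\s_n\bigr)\,dr
\]
and invoking the admissibility of $V$ together with Lemma~\ref{sup norm bootstrap estimates}. The inhomogeneous terms $I_{K_n}$ and $\Box_{g_n}\s_n - V'\circ\s_n$ are rendered negligible by choosing $n$ large enough, via Lemma~\ref{estimates for inhomogeneous terms} and Theorem~\ref{approximate solutions}, contributing to the $C_n t^{2N+2s+1}$ remainder.

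The delicate point, and the main obstacle, is the term $\sn^2 \theta^\sharp$ on the right-hand side of \eqref{delta k equation}: writing $\theta = \dt + \tr K_n$, the $\dt$-contribution produces a forcing term of size $\|\sn^{m+2}\dt\|_{L^2}$ inside $F_m^K$. Upon applying Proposition~\ref{wave estimate} with weight $t^{2(m+1)}$, this contributes
\[
t^{2(m+1)}\cdot t \,\|\sn^{m+2}\dt\|_{L^2}^2 = t^{-1}\cdot t^{2(m+2)}\|\sn^{m+2}\dt\|_{L^2}^2,
\]
which for $m \leq s-1$ is bounded by $t^{-1}\ce_s(t)$ thanks to the inclusion in $\ce_s$ of $t^{2(s+1)}\|\sn^{s+1}\dt\|_{L^2}^2$. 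This is precisely the role of carrying two extra derivatives of $\dt$ in the definition of $\ce_s$; analogously, the $\Delta_h\s_n$-type forcing in $F_m^\varphi$ is absorbed using the $\dt$- and $\dep$-slots of $\ce_s$ at the appropriate order.

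Combining all of these estimates and using the Leibniz rule ${\tfrac{d}{dt}(t^{2\alpha}E_m[\,\cdot\,]) = 2\alpha t^{2\alpha-1}E_m[\,\cdot\,] + t^{2\alpha}\tfrac{d}{dt}E_m[\,\cdot\,]}$, every contribution is absorbed into either $\tfrac{1}{t}(C+C_n t^\varepsilon)\ce_s(t)$ or, after taking $n \geq n_{N,s}$ large enough so that the bound on $I_{K_n}$ from Lemma~\ref{estimates for inhomogeneous terms} dominates any fixed power of $t$, into $C_n t^{2N+2s+1}$, yielding the claimed inequality.
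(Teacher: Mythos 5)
Your overall strategy matches the paper's: apply $\sn^m$ to \eqref{delta k equation} and \eqref{delta phi equation}, invoke Proposition~\ref{wave estimate}, control the forcing terms, absorb the inhomogeneous pieces into $C_n t^{2N+2s+1}$ by choosing $n$ large (Lemma~\ref{estimates for inhomogeneous terms}), and observe that the $\sn^{m+2}\dt$ contribution from the $\sn^2\theta^\sharp$ piece is exactly what the two extra $\dt$-derivatives in $\ce_s$ are there to absorb. That weight bookkeeping ($t^{2(m+1)}\cdot t\,\|\sn^{m+2}\dt\|_{L^2}^2 = t^{-1}\cdot t^{2(m+2)}\|\sn^{m+2}\dt\|_{L^2}^2 \leq t^{-1}\ce_s$ for $m\leq s-1$) is correct and is the central observation.

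There is one genuine technical gap at the top order $m = s-1$. When you Leibniz $\sn^{s-1}$ across $(\Delta_h - \Delta_{h_n})K_n$, a term of the schematic form $\cd\otimes\sn^s K_n$ appears, and similarly $\p_t\dep\,\sn^2\s_n^\sharp$ produces $\sn^s\s_n$-type factors. You propose to handle such products via Lemma~\ref{product sobolev estimate} together with Lemma~\ref{sup norm bootstrap estimates}, but neither is the right tool here: Lemma~\ref{product sobolev estimate} would place the background factor in $H^m$, i.e.\ force a bound on $\|K_n\|_{H^s}$, which the scheme has no reason to control favourably (and which in any case would cost extra inverse powers of $t$ that ruin the $\tfrac{1}{t}(C+C_n t^\varepsilon)\ce_s$ target); and the $L^\infty$ estimates of Subsection~\ref{preliminary estimates} only reach $\sn^{s-1}$ of such objects, since $\|\sn^m(\sn^{(n)}K_n)\|_{L^\infty}$ and $\|\sn^m K_n\|_{L^\infty}$ are controlled only for $m\leq s-1$. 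The paper resolves exactly this via Lemma~\ref{technical estimate 2}, which replaces the offending $\|\sn^s T_1\|_{L^\infty}$ factor by $\|\sn^{s-1}(\sn^{(n)}T_1)\|_{L^\infty}$ plus lower-order pieces at the cost of putting the $\delta$-factor in $H^2$. Without invoking that lemma (or reproving something equivalent), the estimate at $m = s-1$ does not close.
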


\begin{proof}
    Applying $\sn^m$ to \eqref{wave equations for delta objects} yields
    \begin{align*}
    \begin{split}
        -\lie_{\p_t}^2 \sn^m \dk + \Delta_h \sn^m \dk &= \sn^m\big( F_1(K) - F_1(K_n) \big) + \sn^m\big( F_2(K) - F_2(K_n ) \big)\\
        &\quad + \sn^m\big( F_3(\s) - F_3(\s_n) \big) + \sn^m I_{K_n} + \sn^m\big( \sn^2 \theta^\sharp - (\sn^{(n)})^2\theta_n^\sharp \big)\\
        &\quad + \sn^m (\Delta_{h_n} - \Delta_{h})K_n - [\lie_{\p_t}^2,\sn^m]\dk + [\Delta_h,\sn^m]\dk,
    \end{split}\\
    \begin{split}
        -\p_t^2 \sn^m \dep + \Delta_h \sn^m \dep &= \sn^m\big((\Delta_{h_n} - \Delta_h) \s_n\big) + \sn^m(\theta \p_t \s - \theta_n \p_t \s_n) \\
        &\quad + \sn^m (V' \circ \s - V' \circ \s_n) + \sn^m(V' \circ \s_n - \Box_{g_n} \s_n)\\
        &\quad - [\lie_{\p_t}^2,\sn^m]\dep + [\Delta_h,\sn^m]\dep.
    \end{split}
    \end{align*}
    We want to apply Proposition~\ref{wave estimate} to these equations. For that purpose, we now proceed to estimate all the terms on the right-hand side. We mainly focus on the equation for $\dk$, since the one for $\dep$ is similar but simpler.

    \paragraph{Step 1: Estimating $(\Delta_h - \Delta_{h_n})K_n$.} By Lemma~\ref{estimate for difference of hessians} with $T_1 = T_2 = K_n$,
    \[
    \begin{split}
        \|\sn^m(\Delta_h - \Delta_{h_n})K_n\|_{L^2} &\leq C\sum_{a+b=m} \big(\|\sn^a \deh^{-1} \otimes \sn^b ( \sn^{(n)} \sn^{(n)} K_n)\|_{L^2}\\
        &\quad + \|\sn^a \cd \otimes \sn^b ( \sn^{(n)} K_n )\|_{L^2}\big) + C\sum_{a+b=m+1} \|\sn^a \cd \otimes \sn^b K_n \|_{L^2}.
    \end{split}
    \]
    Applying \eqref{polynomial estimate}, we obtain
    \[
    t^{m+1}\|\sn^m(\Delta_h - \Delta_{h_n})K_n\|_{L^2} \leq \frac{1}{t}\big(C + C_n \langle \ln t \rangle t^\varepsilon\big) \se_s(s)^{1/2}.
    \]
    \paragraph{Step 2: Estimating $\sn^2 \theta^\sharp - (\sn^{(n)})^2 \theta_n^\sharp$.} From Lemma~\ref{estimate for difference of hessians},
    \[
    \begin{split}
        \|\sn^m(\sn^2 \theta^\sharp - (\sn^{(n)})^2 \theta_n^\sharp)\|_{L^2} &\leq \|\sn^{m+2} \dt\|_{L^2} + C\sum_{a+b=m} \| \sn^a \deh^{-1} \otimes \sn^b (\sn^{(n)} d\theta_n)\|_{L^2}\\
        &\quad + C\sum_{a+b=m} \|\sn^a \cd \otimes \sn^b d\theta_n \|_{L^2}. 
    \end{split}
    \]
    Then \eqref{polynomial estimate} yields
    \[
    t^{m+1}\|\sn^m(\sn^2 \theta^\sharp - (\sn^{(n)})^2 \theta_n^\sharp)\|_{L^2} \leq \frac{1}{t}\big(C + C_n \langle \ln t \rangle t^\varepsilon\big)\se_s(t)^{1/2}.
    \]

    \paragraph{Step 3: Estimating the $F_1$ terms.} Since this term consists of a sum of contractions of $K \otimes K \otimes K - K_n \otimes K_n \otimes K_n$, then
    \[
    \begin{split}
        \big\|\sn^m\big( F_1(K) - F_1(K_n) \big)\big\|_{L^2} &\leq C \sum_{a+b+c=m} \big(\|\sn^a K_n \otimes \sn^b K_n \otimes \sn^c \dk \|_{L^2}\\
        &\quad + \|\sn^a K_n \otimes \sn^b \dk \otimes \sn^c \dk \|_{L^2} + \|\sn^a \dk \otimes \sn^b \dk \otimes \sn^c \dk \|_{L^2}\big).
    \end{split}
    \]
    From \eqref{polynomial estimate}, we conclude that 
    \[
    t^{m+1}\big\|\sn^m\big( F_1(K) - F_1(K_n) \big)\big\|_{L^2} \leq \frac{1}{t}\big(C + C_n \langle \ln t \rangle t^\varepsilon\big)\se_s(t)^{1/2}.
    \]

    \paragraph{Step 4: Estimating the $F_2$ terms.} We have
    \[
    \begin{split}
        &\big\|\sn^m\big(F_2(K) - F_2(K_n)\big)\big\|_{L^2}\\
        &\leq C\sum_{a+b=m}\big( \|\sn^a \lie_{\p_t} K_n \otimes \sn^b \dk\|_{L^2} + \|\sn^a \lie_{\p_t} \dk \otimes \sn^b K_n\|_{L^2} + \|\sn^a \lie_{\p_t} \dk \otimes \sn^b \dk\|_{L^2}\big). 
    \end{split}
    \]
    \eqref{polynomial estimate} then implies
    \[
    t^{m+1}\big\|\sn^m\big(F_2(K) - F_2(K_n)\big)\big\|_{L^2} \leq \frac{1}{t}\big(C+C_n \langle \ln t \rangle t^\varepsilon\big) \se_s(t)^{1/2}.
    \]

    \paragraph{Step 5: Estimating the $F_3$ terms.} First consider $H$. We have
    \[
    \begin{split}
        &\big\|\sn^m\big(H(\s) - H(\s_n)\big)\big\|_{L^2}\\
        &\leq C\big( \|\sn^m(d\s \otimes d\s \otimes \dk)\|_{L^2} + \|\sn^m(d\s \otimes d\dep \otimes K_n)\|_{L^2}\\
        &\quad + \|\sn^m(d\dep \otimes d\s_n \otimes K_n)\|_{L^2} + \|\sn^m( \deh^{-1} \otimes d\s_n \otimes d\s_n \otimes K_n )\|_{L^2}\big),
    \end{split}
    \]
    hence, substituting $\s = \s_n + \dep$ and applying \eqref{polynomial estimate}, we obtain
    \[
    t^{m+1}\big\|\sn^m\big(H(\s) - H(\s_n)\big)\big\|_{L^2} \leq \frac{1}{t}\big(C+C_n \langle \ln t \rangle t^\varepsilon\big) \se_s(t)^{1/2}.
    \]
    Now for $\p_t \s \sn^2 \s^\sharp - \p_t\s_n(\sn^{(n)})^2 \s_n^\sharp$,
    \[
    \p_t \s \sn^2 \s^\sharp - \p_t\s_n(\sn^{(n)})^2 \s_n^\sharp = \p_t \dep \sn^2 \s_n^\sharp + \p_t \dep \sn^2\dep^\sharp + \p_t \s_n\big( \sn^2\s^\sharp - (\sn^{(n)})^2 \s_n^\sharp \big).
    \]
    For the third term on the right-hand side, first we extract $\p_t\s_n$ in $L^\infty$ and apply Lemma~\ref{estimate for difference of hessians}. Then we can apply \eqref{polynomial estimate} to the result to get
    \[
    t^{m+1}\big\|\sn^m\big(\p_t \s \sn^2 \s^\sharp - \p_t\s_n(\sn^{(n)})^2 \s_n^\sharp\big)\big\|_{L^2} \leq \frac{1}{t}\big(C+C_n \langle \ln t \rangle t^\varepsilon\big) \se_s(t)^{1/2}.
    \]
    We continue with the potential terms. Consider $(V \circ \s)K - (V \circ \s_n)K_n$, then
    \[
    (V \circ \s)K - (V \circ \s_n)K_n = (V \circ \s)\dk + \int_0^1 V'\big(r\s + (1-r)\s_n\big)dr\, \dep K_n.
    \]
    It follows from \eqref{polynomial estimate} that
    \[
    t^{m+1}\big\|\sn^m\big( (V \circ \s)K - (V \circ \s_n)K_n \big)\big\|_{L^2} \leq \frac{1}{t}\big(C + C_n \langle \ln t \rangle t^{\varepsilon}\big) \se_s(t)^{1/2}.
    \]
    The other terms involving the potential can be treated in a similar way, so altogether we obtain
    \[
    t^{m+1}\big\|\sn^m\big( F_3(\s) - F_3(\s_n) \big)\big\|_{L^2} \leq \frac{1}{t}\big( C + C_n \langle \ln t \rangle t^\varepsilon \big) \se_s(t)^{1/2}.
    \]

    \paragraph{Step 6: Estimating the commutators.} For $[\lie_{\p_t}^2, \sn^m]\dk$, we have
    \[
    [\lie_{\p_t}^2, \sn^m]\dk = [\lie_{\p_t},\sn^m]\lie_{\p_t}\dk + \lie_{\p_t}\big( [\lie_{\p_t},\sn^m]\dk \big).
    \]
    For the first term, by Lemma~\ref{commutator estimates},
    \[
    t^{m+1}\| [\lie_{\p_t},\sn^m]\lie_{\p_t}\dk \|_{L^2} \leq C_nt^\varepsilon \sum_{a=0}^{m-1} t^a \|\sn^a \lie_{\p_t} \dk\|_{L^2} \leq C_n \langle \ln t \rangle t^{-1+\varepsilon} \se_s(t)^{1/2}.
    \]
    For the second term, recall that $[\lie_{\p_t},\sn^m]\dk = \sum_{a+b=m-1} \sn^a\big( [\lie_{\p_t},\sn]\sn^b \dk \big)$. Hence, by Lemmas~\ref{commutator formula} and \ref{commutator formula 2}, it is enough to estimate $\lie_{\p_t}$ of $\sum_{a+b=m-1} h \otimes h^{-1}\otimes \sn^{a+1} K \otimes \sn^b \dk$. After commuting $\lie_{\p_t}$ with $\sn^{a+1}$ and $\sn^b$, and using Lemmas~\ref{commutator formula} and \ref{commutator formula 2} again, we obtain
    \[
    \begin{split}
        \big\|\lie_{\p_t}\big( [\lie_{\p_t},\sn^m] \dk\big)\big\|_{L^2} &\leq C\sum_{a+b+c=m-1} \|\sn^{a+1} K \otimes \sn^b K \otimes \sn^c \dk\|_{L^2}\\
        &\quad + C\sum_{a+b=m-1} \big( \|\sn^{a+1} \lie_{\p_t} K \otimes \sn^b \dk\|_{L^2} + \|\sn^{a+1} K \otimes \sn^b \lie_{\p_t} \dk\|_{L^2}\big).
    \end{split}
    \]
    So we can substitute $K = K_n + \dk$ and apply \eqref{polynomial estimate} to obtain
    \[
    \|[\lie_{\p_t}^2, \sn^m]\dk\|_{L^2} \leq \frac{1}{t}\big( C + C_n \langle \ln t \rangle t^\varepsilon \big) \se_s(t)^{1/2}.
    \]
    For the other commutator, note that $[\Delta_h, \sn^m]\dk$ consists of a sum of contractions and $h$-traces of $\sum_{a+b=m} \sn^a( \bar R \otimes \sn^b \dk )$. Moreover, in dimension 3 the curvature tensor is completely determined by the Ricci tensor; recall \eqref{curvature in terms of ricci}. Hence, by Lemma~\ref{estimates for the curvature} and Sobolev embedding,
    \[
    \begin{split}
        &t^{m+1}\|[\Delta_h,\sn^m]\dk\|_{L^2}\\
        &\hspace{2cm} \leq Ct^{m+1}\sum_{a+b=m} \big(\|\sn^a \sric_n\|_{L^\infty} \|\sn^b \dk\|_{L^2} + \|\sn^a(\sric - \sric_n) \otimes \sn^b \dk\|_{L^2}\big)\\
        &\hspace{2cm} \leq C_nt^{-1+\varepsilon} \se_s(t)^{1/2} + Ct^{m+1}\sum_{\substack{a+b=m\\ a \geq s-2}} t^{-5/2} \|\sn^a(\sric - \sric_n)\|_{L^2} \|\sn^b \dk\|_{H^2}\\
        &\hspace{2cm} \leq C_nt^{-1+\varepsilon} \se_s(t)^{1/2}.
    \end{split}
    \]

    \paragraph{Step 7: Putting everything together.} By Lemma~\ref{estimates for inhomogeneous terms}, there is an $n$ large enough such that
    \begin{equation} \label{wave estimate for delta k}
        t^{m+1}\|-\lie_{\p_t}^2 \sn^m \dk + \Delta_h \sn^m \dk\|_{L^2} \leq \frac{1}{t}\big(C + C_n \langle \ln t \rangle t^\varepsilon\big) \se_s(t)^{1/2} + C_nt^{N+s+1}.
    \end{equation}
    Now we can apply Proposition~\ref{wave estimate} to obtain
    \[
    \begin{split}
        &\frac{d}{dt}\big( t^{2(m+1)}\mathbb E_m[\dk] \big)\\
        &\hspace{1cm}\leq \frac{2(m+1)}{t} t^{2(m+1)}\mathbb E_m[\dk]\\
        &\hspace{1cm}\quad + t^{2(m+1)}\bigg( \frac{1}{t}\big(C + C_n\langle \ln t \rangle t^\varepsilon\big)\mathbb E_m[\dk] + t\|-\lie_{\p_t} \sn^m \dk + \Delta_h \sn^m \dk\|_{L^2}^2  \bigg)\\
        &\hspace{1cm}\leq \frac{1}{t}\big(C + C_n\langle \ln t \rangle t^\varepsilon\big)\se_s(t) + C_nt^{2N+2s+3},
    \end{split}
    \]
    where we have used Lemma~\ref{commutator estimates} to estimate 
    \[
    t^{m+1}\| \lie_{\p_t} \sn^m \dk\|_{L^2} \leq t^{m+1}\big( \|\sn^m \lie_{\p_t}\dk\|_{L^2} + \|[\lie_{\p_t},\sn^m]\dk\|_{L^2} \big) \leq \big(C + C_n\langle \ln t \rangle t^\varepsilon\big)\se_s(t)^{1/2}.
    \]

    \paragraph{Step 8: Estimate for $\dep$.} The main difference with the case for $\dk$, is that now $m \leq s$. However, $\dep$ is a scalar. So the first covariant derivative becomes the differential $d$. This means that there is one derivative less to worry about in the commutators. Therefore, all the terms can be estimated similarly to the case for $\dk$, by repeatedly using \eqref{polynomial estimate}, so we omit the details. By Theorem~\ref{approximate solutions}, there is then an $n$ large enough such that for $m \leq s$,
    \[
    \frac{d}{dt}\big( t^{2m}\mathbb E_m[\dep] \big) \leq \frac{1}{t}\big(C + C_n\langle \ln t \rangle t^\varepsilon\big)\se_s(t)^{1/2} + C_nt^{2N+2s+1},
    \]
    thus finishing the proof.
\end{proof}

We continue with the estimates for $\dt$, $\deh$ and $\deh^{-1}$. As a consequence of \eqref{the system}, they satisfy the equations
\begin{subequations} \label{transport equations for delta objects}
\begin{align}
    \begin{split}
        \p_t \dt &= -\tr(\dk^2) - 2\tr(K_n \circ \dk) - (\p_t \dep)^2\\
        &\quad - 2\p_t \s_n \p_t \dep + V \circ \s - V \circ \s_n - I_{\theta_n},
    \end{split}\label{delta theta equation}\\
    \begin{split}
        \lie_{\p_t} \deh(X,Y) &= h(\dk(X),Y) + h(X,\dk(Y))\\
        &\quad + \deh(K_n(X),Y) + \deh(X,K_n(Y)),
    \end{split}\label{delta h equation}\\
    \begin{split}
        \lie_{\p_t} \deh^{-1}(\alpha,\beta) &= -h^{-1}(\dk(\,\cdot\,,\alpha),\beta) - h^{-1}(\alpha,\dk(\,\cdot\,,\beta))\\
        &\quad - \deh^{-1}(K_n(\,\cdot\,,\alpha),\beta) - \deh^{-1}(\alpha,K_n(\,\cdot\,,\beta)), \label{delta h dual equation}
    \end{split}
\end{align}
\end{subequations}
for $X,Y \in \mfx(\Sigma)$ and $\alpha, \beta \in \Omega^1(\Sigma)$. We begin by obtaining control of the less than top order derivatives.

\begin{proposition} \label{energy estimate for delta theta and delta h}
    For every positive integer $N$, there is an $n_{N,s}$ large enough such that, for $n \geq n_{N,s}$ and $m \leq s$,
    \[
    \begin{split}
        &\frac{d}{dt} \left( t^{2m} \|\sn^m \dt\|_{L^2}^2 + t^{2(m-1)}\big( \|\sn^m \deh\|_{L^2}^2 + \|\sn^m \deh^{-1}\|_{L^2}^2 \big) \right)\\
        &\hspace{6cm} \leq \frac{1}{t}\big(C + C_n \langle \ln t \rangle t^\varepsilon\big) \se_s(t) + C_n t^{2N+2s+1}.
    \end{split}
    \]
\end{proposition}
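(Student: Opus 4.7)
The plan is to apply $\sn^m$ to each of \eqref{delta theta equation}--\eqref{delta h dual equation} and then invoke Proposition~\ref{transport estimate} on $\sn^m\dt$, $\sn^m\deh$, and $\sn^m\deh^{-1}$. For each of these tensors, we decompose $\lie_{\p_t}\sn^m T = \sn^m(\lie_{\p_t} T) + [\lie_{\p_t},\sn^m] T$. By Lemma~\ref{commutator estimates}, after multiplication by the weight dictated by $\ce_s(t)$, the commutator contributes at most $\frac{1}{t}(C+C_n t^\varepsilon)\ce_s(t)$. It therefore suffices to bound $\sn^m$ of the right-hand side of each equation in $L^2$ by $t^{-m-1}(C+C_n t^\varepsilon)\ce_s(t)^{1/2} + C_n t^{N+s}$ in the $\dt$ case, and by $t^{-m}(C+C_n t^\varepsilon)\ce_s(t)^{1/2}$ in the $\deh$ and $\deh^{-1}$ cases; the asymmetry in the exponent is exactly what the weights $t^{2m}$ versus $t^{2(m-1)}$ are designed to absorb.

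For \eqref{delta theta equation}, Leibniz expand each product. The leading term is $\tr(K_n\circ \dk)$, blowing up like $t^{-1}$: when at most $s-1$ derivatives fall on $K_n$ we use $t^{a+1}\|\sn^a K_n\|_{L^\infty}\leq C+C_n t^\varepsilon$ and place $\sn^b \dk$ in $L^2$ (controlled by $\ce_s(t)^{1/2}$). The terms $\tr(\dk^2)$, $(\p_t\dep)^2$ and $\p_t\s_n\,\p_t\dep$ are handled analogously using the bootstrap-derived Sobolev bounds of Lemma~\ref{sup norm bootstrap estimates}. For the potential difference, write $V\circ\s - V\circ\s_n = \big(\int_0^1 V'(r\s+(1-r)\s_n)\,dr\big)\,\dep$ and Leibniz expand, controlling the integrand in $L^\infty$ via the bootstrap assumptions. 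The only purely inhomogeneous contribution is $I_{\theta_n}$, which by Lemma~\ref{estimates for inhomogeneous terms} is of size $C_n t^{N+s-m}$ in the relevant Sobolev norm; after squaring and multiplying by $t^{2m+1}$ this yields the claimed $C_n t^{2N+2s+1}$ term.

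For \eqref{delta h equation} and \eqref{delta h dual equation}, the RHS terms $\deh\circ K_n$ and $\deh^{-1}\circ K_n$ blow up like $t^{-1}$, which is precisely compensated by the weight $t^{2(m-1)}$; the remaining terms $h\circ \dk$ and $h^{-1}\circ\dk$ are lower order. Leibniz expansion, together with the componentwise bounds on $h$ and $h^{-1}$ from Lemma~\ref{estimates in tensor components}, the bounds on $K_n$ just recalled, and $\ce_s(t)^{1/2}$ control of $\sn^b \dk$, yields the required estimate. There is no inhomogeneous forcing for these two quantities. Finally, combining with Proposition~\ref{transport estimate}, differentiating the weights $t^{2m}$ and $t^{2(m-1)}$ (which produces extra $\frac{1}{t}\ce_s$-type terms), and choosing $n\geq n_{N,s}$ large enough, gives the claimed differential inequality.

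The main obstacle is the top-order case $m=s$, where a term of the Leibniz expansion places all $s$ derivatives on $K_n$ (or on $\s_n$, in the scalar-field contributions), for which no direct $L^\infty$ estimate is available. This is resolved via Lemma~\ref{technical estimate 2}, which trades $\sn^s K_n$ in $L^2$ against $\deh$ (respectively $\dk$) in $H^2$; the hypothesis $s\geq 5$ ensures these $H^2$ norms are controlled by $\ce_s(t)$. A secondary care point is the bookkeeping between the $t^{2m}$ weight on $\dt$ and the $t^{2(m-1)}$ weight on $\deh$, $\deh^{-1}$: the lower weight on the latter is exactly what allows the $t^{-1}$ blowup of $K_n$ in the corresponding transport equations to close without loss.
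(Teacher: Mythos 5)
Your proposal is correct and follows essentially the same route as the paper: Leibniz-expand the transported equations, apply Proposition~\ref{transport estimate}, handle the commutator via Lemma~\ref{commutator estimates}, place one factor in $L^2$ (controlled by $\ce_s$) and the rest in $L^\infty$, use Lemma~\ref{technical estimate 2} to trade the top-order $\sn^s K_n$ factor for an $H^2$ norm, treat the potential via the fundamental-theorem-of-calculus identity, and invoke Lemma~\ref{estimates for inhomogeneous terms} for $I_{\theta_n}$. The one place you go beyond the paper is in writing out the $\deh$, $\deh^{-1}$ cases explicitly and explaining why the weight $t^{2(m-1)}$ absorbs the $t^{-1}$ blowup of $K_n$ in those transport equations — the paper merely asserts those cases are ``similar but simpler.''
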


\begin{proof}
    We focus on $\dt$, since the case for $\deh$ and $\deh^{-1}$ is similar but simpler. From \eqref{delta theta equation}, we have
    \[
    \begin{split}
        \lie_{\p_t} \sn^m \dt &= -\sn^m \tr(\dk^2) - 2\sn^m \tr(K_n \circ \dk) - 2\sn^m (\p_t \s_n \p_t \dep) - \sn^m( \p_t \dep )^2\\
        &\quad + \sn^m( V \circ \s - V \circ \s_n ) - \sn^m I_{\theta_n} + [\lie_{\p_t},\sn^m]\dt.
    \end{split}
    \]
    We estimate the terms on the right-hand side by applying \eqref{polynomial estimate} and Lemma~\ref{commutator estimates}. From Lemma~\ref{estimates for inhomogeneous terms}, we conclude that there is an $n$ large enough such that
    \[
    t^m\|\lie_{\p_t} \sn^m \dt\|_{L^2} \leq \frac{1}{t}\big(C + C_n \langle \ln t \rangle t^\varepsilon\big) \se_s(t)^{1/2} + C_nt^{N+s}.
    \]
    The estimate for $\dt$ then follows from Proposition~\ref{transport estimate}.
\end{proof}

Finally, we obtain estimates for the modified top order quantities. As we shall see later, these actually give us control for all top order derivatives of $\dt$, $\deh$ and $\deh^{-1}$; see Lemma~\ref{control of energy with modified energy} below.

\begin{lemma} \label{modified energy in terms of the energy}
    The following estimate holds,
    \[
    t^{2(s+1)}\widetilde{\mathbb E}_{s+1}[\dt] + t^{2s}\big( \widetilde{\mathbb E}_{s+1}[\deh] + \widetilde{\mathbb E}_{s+1}[\deh^{-1}] \big) \leq \big(C + C_n\langle \ln t \rangle t^\varepsilon\big)\se_s(t).
    \]
\end{lemma}

\begin{proof}
    By the definitions of $\widetilde{\mathbb E}_{s+1}[\dt]$, $\widetilde{\mathbb E}_{s+1}[\deh]$ and $\widetilde{\mathbb E}_{s+1}[\deh^{-1}]$,
    \[
    \begin{split}
        \widetilde{\mathbb E}_{s+1}[\dt] &\leq C\big( \|\sn^{s+1} \dt\|_{L^2}^2 + \|\dk + K_n\|_{L^\infty}^2 \|\sn^{s-1} \lie_{\p_t} \dk\|_{L^2}^2\\
        &\quad + \|\dk + K_n\|_{L^\infty}^2 \|[\lie_{\p_t},\sn^{s-1}] \dk\|_{L^2}^2 + \|\p_t(\dep + \s_n)\|_{L^\infty}^2 \|\sn^{s+1}\dep\|_{L^2}^2 \big),\\
        \widetilde{\mathbb E}_{s+1}[\deh] &\leq C\big( \|\sn^{s+1}\deh\|_{L^2}^2 + (1+\|\deh\|_{L^\infty}^2)(\|\sn^{s-1} \lie_{\p_t} \dk\|_{L^2}^2 + \|[\lie_{\p_t},\sn^{s-1}]\dk\|_{L^2}^2) \big),\\
        \widetilde{\mathbb E}_{s+1}[\deh^{-1}] &\leq C\big( \|\sn^{s+1}\deh^{-1}\|_{L^2}^2 + (1+\|\deh^{-1}\|_{L^\infty}^2)(\|\sn^{s-1} \lie_{\p_t} \dk\|_{L^2}^2 + \|[\lie_{\p_t},\sn^{s-1}]\dk\|_{L^2}^2) \big).
    \end{split}
    \]
    The result follows by Lemma~\ref{commutator estimates}.
\end{proof}

\begin{proposition} \label{estimates for modified top order quantities}
    For every positive integer $N$, there is an $n_{N,s}$ such that, for $n \geq n_{N,s}$,
    \[
    \frac{d}{dt} \left( t^{2(s+1)} \widetilde{\mathbb E}_{s+1}[\dt] + t^{2s}\big( \widetilde{\mathbb E}_{s+1}[\deh] + \widetilde{\mathbb E}_{s+1}[\deh^{-1}] \big) \right) \leq \frac{1}{t}\big( C + C_n \langle \ln t \rangle t^\varepsilon \big) \se_s(t) + C_nt^{2N + 2s + 1}.
    \]
\end{proposition}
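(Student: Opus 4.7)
The plan is to differentiate each modified top order quantity and verify that the corrections are precisely the terms needed so that, after applying the wave equations \eqref{delta k equation} and \eqref{delta phi equation}, the potentially top-order-losing terms cancel exactly. Writing $\widetilde E_{s+1}[\dt] = \|A_1\|_{L^2}^2$, $\widetilde E_{s+1}[\deh] = \|A_2\|_{L^2}^2$, $\widetilde E_{s+1}[\deh^{-1}] = \|A_3\|_{L^2}^2$ with $A_j$ the integrand in each definition, Proposition~\ref{transport estimate} reduces matters to proving
\[
t^{s+1} \|\lie_{\p_t} A_1\|_{L^2} + t^{s} \|\lie_{\p_t} A_2\|_{L^2} + t^{s} \|\lie_{\p_t} A_3\|_{L^2} \le \tfrac{1}{t}(C + C_n t^\varepsilon)\ce_s(t)^{1/2} + C_n t^{N+s},
\]
after which one applies the Cauchy--Schwarz inequality exactly as in the proofs of Propositions~\ref{energy estimate for delta k and delta phi} and~\ref{energy estimate for delta theta and delta h}.

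The cancellation mechanism is as follows. For $A_1$, computing $\lie_{\p_t}(\Delta_h \sn^{s-1}\dt)$ via \eqref{delta theta equation} produces at top order $-2\tr(K\circ\Delta_h\sn^{s-1}K) - 2\p_t\s\,\Delta_h\sn^{s-1}(\p_t\s)$, modulo commutators of $\lie_{\p_t}$ with $\Delta_h\sn^{s-1}$. Time-differentiating the two correction terms in $A_1$ and commuting $\lie_{\p_t}$ past $\sn^{s-1}$ (using Lemma~\ref{commutator formula} and Lemma~\ref{commutator estimates}), one obtains $2\tr(K\circ\lie_{\p_t}^2\sn^{s-1}\dk) + 2\p_t\s\,\p_t^2\sn^{s-1}\dep$ at top order. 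The wave equations \eqref{delta k equation} and \eqref{delta phi equation}, commuted with $\sn^{s-1}$, then convert these to $2\tr(K\circ\Delta_h\sn^{s-1}\dk) + 2\p_t\s\,\Delta_h\sn^{s-1}\dep$ plus derivative-order $\le s$ terms. Splitting $K = K_n + \dk$ and $\s = \s_n + \dep$ and using the equations satisfied by $K_n$ and $\s_n$ up to the inhomogeneities $I_{K_n}$ and $\Box_{g_n}\s_n - V'\circ\s_n$ yields exact cancellation of the dangerous top-order pieces. The corrections in $A_2$ and $A_3$ play the same role with respect to the transport equations \eqref{delta h equation}, \eqref{delta h dual equation}: the $-h(\lie_{\p_t}\sn^{s-1}\dk,\cdot) - h(\cdot,\lie_{\p_t}\sn^{s-1}\dk)$ piece in $A_2$ cancels the $h(\Delta_h\sn^{s-1}\dk,\cdot) + h(\cdot,\Delta_h\sn^{s-1}\dk)$ generated by $\lie_{\p_t}\Delta_h\sn^{s-1}\deh$, while the $\deh(\cs[\lie_{\p_t}\sn^{s-1}\dk],\cdot)$ pieces absorb the subleading $\cs[\sn K]$-shaped contributions coming from $[\lie_{\p_t},\sn]$; analogously for $A_3$.

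After the cancellation every surviving term in $\lie_{\p_t} A_j$ is either a product of at most $s$ derivatives of the basic quantities $\dk,\lie_{\p_t}\dk,\dep,\p_t\dep,\dt,\deh,\deh^{-1}$ paired with coefficients involving $K_n,\theta_n,\s_n,h_n$ and their derivatives, or else an inhomogeneity of the approximate solution. The former are handled by the preliminary estimates of Subsection~\ref{preliminary estimates}, most importantly Lemmas~\ref{estimates in tensor components}, \ref{comparison of h and hn norms}, \ref{technical estimate 2}, \ref{product sobolev estimate} and \ref{commutator estimates}, together with Sobolev embedding, just as in Propositions~\ref{energy estimate for delta k and delta phi}--\ref{energy estimate for delta theta and delta h}; the spatial curvature commutators $[\Delta_h,\sn^{s-1}]$ are controlled through \eqref{curvature in terms of ricci} and Lemma~\ref{estimates for the curvature}. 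The inhomogeneous terms are handled by Lemma~\ref{estimates for inhomogeneous terms} upon choosing $n$ sufficiently large, producing the $C_n t^{2N+2s+1}$ contribution.

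The main obstacle is the bookkeeping of the top-order cancellation, in particular verifying that the explicit forms of $A_1$, $A_2$, $A_3$ cancel not only the leading $\Delta_h\sn^{s-1}$-level terms but also the next-to-leading terms generated by $[\lie_{\p_t},\sn^{s-1}]$ and $[\lie_{\p_t}^2,\sn^{s-1}]$. Lemma~\ref{commutator formula} identifies these commutators up to one derivative loss with $\cs[\sn K]\otimes(\cdot)$-type expressions, and the $\cs[\lie_{\p_t}\sn^{s-1}\dk]$ pieces in $A_2, A_3$ are designed precisely to absorb these; checking this algebraically and confirming that the residual terms all drop by one derivative (so that they are controlled by $\ce_s(t)^{1/2}$ rather than requiring $s{+}1$ derivatives of $\dk$ or $\dep$) is where the detailed verification lies.
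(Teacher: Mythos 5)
Your overall plan matches the paper's: differentiate each modified integrand in time, apply Proposition~\ref{transport estimate}, verify that the derivative-losing top-order pieces cancel, and estimate what remains using Lemma~\ref{commutator formula}, Lemma~\ref{commutator estimates}, \eqref{wave estimate for delta k}, and Lemma~\ref{estimates for inhomogeneous terms}. Your account of the cancellation for the $\dk$ correction in $\widetilde E_{s+1}[\dt]$, and of the corrections in $\widetilde E_{s+1}[\deh]$ and $\widetilde E_{s+1}[\deh^{-1}]$, is accurate and mirrors the paper's calculation.

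There is, however, a concrete misstep in your description of the $\dep$ cancellation inside $\widetilde E_{s+1}[\dt]$. The correction is $2\p_t(\dep+\s_n)\,\Delta_h\sn^{s-1}\dep$, with $\Delta_h$, not $\lie_{\p_t}$, acting on $\sn^{s-1}\dep$. Applying $\lie_{\p_t}$ therefore gives at top order
\[
2\p_t(\dep+\s_n)\,\Delta_h\sn^{s-1}\p_t\dep \;+\; 2\p_t^2(\dep+\s_n)\,\Delta_h\sn^{s-1}\dep \;+\; 2\p_t(\dep+\s_n)\,[\lie_{\p_t},\Delta_h\sn^{s-1}]\dep,
\]
and it is the first of these that cancels the dangerous contribution from $\lie_{\p_t}\Delta_h\sn^{s-1}\dt$ directly, with no use of the wave equation \eqref{delta phi equation} in the cancellation itself. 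Your claim that time-differentiation of the correction produces $2\p_t\s\,\p_t^2\sn^{s-1}\dep$, which \eqref{delta phi equation} then converts to $2\p_t\s\,\Delta_h\sn^{s-1}\dep$, does not follow from the stated definition; and even granted it, $\Delta_h\sn^{s-1}\dep$ would not cancel the term $\Delta_h\sn^{s-1}\p_t\dep$ that actually needs to go, so the argument as written would not close. The asymmetry between the $\dk$ correction (which uses $\lie_{\p_t}\sn^{s-1}\dk$, since $\Delta_h\sn^{s-1}\dk$ is \emph{not} controlled by $\ce_s$) and the $\dep$ correction (which uses $\Delta_h\sn^{s-1}\dep$, which \emph{is} controlled via $\|\sn^{s+1}\dep\|_{L^2}$) is deliberate; \eqref{delta phi equation} enters only afterwards to estimate $\p_t^2(\dep+\s_n)$ in the surviving term. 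Once the $\dep$ correction is read correctly, the rest of your argument goes through exactly as in the paper.
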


\begin{proof}
    We begin with $\dt$. The idea is to use Proposition~\ref{transport estimate}. To that end, we compute,
    \[
    \lie_{\p_t} \Delta_h \sn^{s-1} \dt = -2\tr_s\big( (\dk + K_n) \circ \Delta_h \sn^{s-1} \dk \big) - 2\p_t(\dep + \s_n) \Delta_h \sn^{s-1} \p_t \dep + \cdots
    \]
    where $\cdots$ denotes terms that can be estimated by applying \eqref{polynomial estimate} and Lemma~\ref{commutator estimates}. Note that the problem with estimating the right-hand side of this equality is that $\Delta_h \sn^{s-1} \dk$ and $\Delta_h \sn^{s-1} \p_t \dep$ cannot be estimated in terms of $\se_s$. However,
    \[
    \begin{split}
        \lie_{\p_t}\big( (\dk + K_n) \circ \lie_{\p_t} \sn^{s-1} \dk \big) &= \lie_{\p_t}(\dk + K_n) \circ \lie_{\p_t}\sn^{s-1} \dk + (\dk + K_n) \circ \Delta_h \sn^{s-1} \dk\\
        &\quad + (\dk + K_n) \circ (\lie_{\p_t}^2 - \Delta_h) \sn^{s-1} \dk,
    \end{split}
    \]
    and moreover
    \[
    \begin{split}
        \lie_{\p_t}\big( \p_t(\dep + \s_n) \Delta_h \sn^{s-1} \dep \big) &= \p_t^2(\dep + \s_n) \Delta_h \sn^{s-1} \dep + \p_t(\dep + \s_n) \Delta_h \sn^{s-1} \p_t \dep\\
        &\quad + \p_t(\dep + \s_n)[\lie_{\p_t},\Delta_h \sn^{s-1}] \dep.
    \end{split}
    \]
    Putting everything together, we see that the problematic terms cancel out and we obtain
    \[
    \begin{split}
        &\big\|\lie_{\p_t}\big( \Delta_h \sn^{s-1} \dt + 2\tr_s\big( (\dk + K_n) \circ \lie_{\p_t} \sn^{s-1} \dk \big) + 2\p_t(\dep + \s_n) \Delta_h \sn^{s-1} \dep\big)\big\|_{L^2}\\
        &\hspace{0.5cm} \leq 2\|\lie_{\p_t}(\dk + K_n)\|_{L^\infty} \|\lie_{\p_t} \sn^{s-1} \dk\|_{L^2} + 2\|\dk + K_n\|_{L^\infty}  \|(\lie_{\p_t}^2 - \Delta_h)\sn^{s-1}\dk\|_{L^2}\\
        &\hspace{0.5cm}\quad + 2\|\p_t^2(\dep + \s_n)\|_{L^\infty} \|\Delta_h \sn^{s-1}\dep\|_{L^2} + 2\|\p_t(\dep + \s_n)\|_{L^\infty} \|[\lie_{\p_t},\Delta_h \sn^{s-1}] \dep\|_{L^2} + \cdots.
    \end{split}
    \]
    After using the evolution equations \eqref{delta phi equation} and \eqref{approximate scalar field equation} satisfied by $\dep$ and $\s_n$ to substitute the second time derivatives, we can use \eqref{wave estimate for delta k}, \eqref{polynomial estimate}, Lemma~\ref{commutator estimates} and Lemma~\ref{estimates for inhomogeneous terms} to estimate the right-hand side of this inequality. We conclude that 
    \[
    \begin{split}
        &t^{s+1}\|\lie_{\p_t}\big( \Delta_h \sn^{s-1} \dt + 2\tr_s\big( (\dk + K_n) \circ \lie_{\p_t} \sn^{s-1} \dk \big) + 2\p_t(\dep + \s_n) \Delta_h \sn^{s-1} \dep\big)\|_{L^2}\\
        &\hspace{8cm} \leq \frac{1}{t}\big(C + C_n\langle \ln t \rangle t^\varepsilon\big)\se_s(t)^{1/2} + C_nt^{N+s}.
    \end{split}
    \]
    The estimate for $\widetilde{\mathbb E}_{s+1}[\dt]$ now follows from Proposition~\ref{transport estimate} and Lemma~\ref{modified energy in terms of the energy}.

    We continue with $\deh$. We compute
    \[
    \begin{split}
        \lie_{\p_t} \Delta_h \sn^{s-1} \deh(X,Y) &= h(\Delta_h \sn^{s-1} \dk(X),Y) + h(X,\Delta_h \sn^{s-1}\dk(Y))\\
        &\quad + [\lie_{\p_t},\Delta_h \sn^{s-1}]\deh(X,Y) + \cdots,
    \end{split}
    \]
    where $\cdots$ denotes terms that can be estimated by applying \eqref{polynomial estimate}. Note that the issue with estimating the commutator $[\lie_{\p_t},\Delta_h \sn^{s-1}]\deh$, is that it contains the term $\Delta_h \sn^{s-2}([\lie_{\p_t},\sn]\deh)$, which contains $s+1$ derivatives of $K$. Since $\lie_{\p_t}\sn = \cs[\sn K]$, by Lemma~\ref{commutator formula}, we see that
    \[
    \Delta_h \sn^{s-1}([\lie_{\p_t},\sn]\deh) = -\deh(\cs[\Delta_h \sn^{s-1} K],\,\cdot\,) - \deh(\,\cdot\,,\cs[\Delta_h \sn^{s-1} K]) + \cdots,
    \]
    where $\cdots$ denotes terms which contain up to $s$ derivatives of $K$, hence can be treated with \eqref{polynomial estimate}. Thus
    \[
    \begin{split}
        \lie_{\p_t} \Delta_h \sn^{s-1} \deh &= h(\Delta_h \sn^{s-1} \dk,\,\cdot\,) + h(\,\cdot\,,\Delta_h \sn^{s-1}\dk)\\
        &\quad -\deh(\cs[\Delta_h \sn^{s-1} \dk],\,\cdot\,) - \deh(\,\cdot\,,\cs[\Delta_h \sn^{s-1}\dk]) + \cdots, 
    \end{split}
    \]
    where the terms in $\cdots$ can be estimated with \eqref{polynomial estimate} and Lemma~\ref{commutator estimates}. We keep going,
    \[
    \begin{split}
        &\lie_{\p_t}\big(h(\lie_{\p_t} \sn^{s-1} \dk,\,\cdot\,) + h(\,\cdot\,,\lie_{\p_t} \sn^{s-1} \dk)\big)\\
        &\hspace{0.5cm} = \lie_{\p_t}h( \lie_{\p_t}\sn^{s-1} \dk,\,\cdot\, ) + \lie_{\p_t}h( \,\cdot\,,\lie_{\p_t}\sn^{s-1} \dk ) + h(\Delta_h \sn^{s-1}\dk,\,\cdot\,) + h(\,\cdot\,,\Delta_h \sn^{s-1}\dk)\\
        &\hspace{0.5cm} \quad + h((\lie_{\p_t}^2 - \Delta_h) \sn^{s-1}\dk,\,\cdot\,) + h(\,\cdot\,,(\lie_{\p_t}^2 - \Delta_h) \sn^{s-1}\dk).
    \end{split}
    \]
    For the remaining terms, note that $\lie_{\p_t} \cs[T] = \cs[\lie_{\p_t}T] + \cf(K \otimes T)$, where $\cf(K \otimes T)$ is a sum of contractions of $K \otimes T$ with some indices raised and lowered with $h$. Hence
    \[
    \begin{split}
        &\lie_{\p_t}\big(\deh( \cs[\lie_{\p_t} \sn^{s-1}\dk],\,\cdot\, ) + \deh(\,\cdot\,, \cs[\lie_{\p_t} \sn^{s-1}\dk] )\big)\\
        &\hspace{3cm} = \lie_{\p_t}\deh( \cs[\lie_{\p_t} \sn^{s-1}\dk],\,\cdot\, ) + \lie_{\p_t}\deh( \,\cdot\,,\cs[\lie_{\p_t} \sn^{s-1}\dk] )\\
        &\hspace{3cm}\quad + \deh(\cs[ (\lie_{\p_t}^2 - \Delta_h)\sn^{s-1}\dk ],\,\cdot\,) + \deh(\,\cdot\,,\cs[ (\lie_{\p_t}^2 - \Delta_h)\sn^{s-1}\dk ])\\
        &\hspace{3cm}\quad + \deh(\cs[  \Delta_h\sn^{s-1}\dk ],\,\cdot\,) + \deh(\,\cdot\,,\cs[  \Delta_h\sn^{s-1}\dk ])\\
        &\hspace{3cm}\quad + \deh( \cf(K \otimes \lie_{\p_t} \sn^{s-1}\dk),\,\cdot\, ) + \deh(\,\cdot\,, \cf(K \otimes \lie_{\p_t} \sn^{s-1}\dk) ).
    \end{split}
    \]
    Note that again, after putting everything together, the problematic terms cancel out. We can then use \eqref{polynomial estimate}, \eqref{wave estimate for delta k}, Lemma~\ref{commutator estimates} and Proposition~\ref{transport estimate} as before to obtain the result. The proof for $\deh^{-1}$ is similar.
\end{proof}

We can summarize what we have done so far in the following estimate for the modified energy.

\begin{proposition}
    For every positive integer $N$ there is an $n_{N,s}$ such that for $n \geq n_{N,s}$,
    \[
    \frac{d}{dt} \widetilde \se_s(t) \leq \frac{1}{t}\big(C + C_n\langle \ln t \rangle t^\varepsilon\big) \se_s(t) + C_nt^{2N + 2s + 1}.
    \]
\end{proposition}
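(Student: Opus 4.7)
The proposition is essentially a bookkeeping statement: the modified energy $\widetilde \ce_s(t)$ is, by construction, precisely the sum of the quantities whose time derivatives have been bounded in the three preceding propositions. So the plan is to simply add together the estimates of Propositions~\ref{energy estimate for delta k and delta phi}, \ref{energy estimate for delta theta and delta h}, and \ref{estimates for modified top order quantities}.

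More specifically, I would proceed as follows. First, inspect the definition of $\widetilde \ce_s(t)$ and identify its six contributions: the terms $\sum_{m=0}^{s-1} t^{2(m+1)} E_m[\dk]$ and $\sum_{m=0}^s t^{2m} E_m[\dep]$, controlled by Proposition~\ref{energy estimate for delta k and delta phi}; the lower-order terms $\sum_{m=0}^s t^{2m}\|\sn^m \dt\|_{L^2}^2$ and $\sum_{m=0}^s t^{2(m-1)}(\|\sn^m \deh\|_{L^2}^2 + \|\sn^m \deh^{-1}\|_{L^2}^2)$, controlled by Proposition~\ref{energy estimate for delta theta and delta h} (applied for each $m \leq s$); and the top-order terms $t^{2(s+1)} \widetilde E_{s+1}[\dt]$ and $t^{2s}(\widetilde E_{s+1}[\deh] + \widetilde E_{s+1}[\deh^{-1}])$, controlled by Proposition~\ref{estimates for modified top order quantities}.

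Next, choose $n_{N,s}$ to be the maximum of the indices required by these three propositions, and take $n \geq n_{N,s}$. For such $n$, each of the three propositions provides a bound of the form
\[
\frac{d}{dt}(\text{piece}) \leq \frac{1}{t}(C + C_n t^{\varepsilon})\, \ce_s(t) + C_n t^{2N + 2s + 1}.
\]
Summing the (finitely many, with number depending only on $s$) contributions gives exactly
\[
\frac{d}{dt} \widetilde \ce_s(t) \leq \frac{1}{t}(C + C_n t^{\varepsilon})\, \ce_s(t) + C_n t^{2N + 2s + 1},
\]
after relabeling the constants $C$, $C_n$ (absorbing the $s$-dependent multiplicities into $C$, which is allowed since $s$ is fixed and $C$ may depend on it).

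There is no real obstacle here since all the hard work has already been done; the only thing to verify is that the ranges of $m$ used in Proposition~\ref{energy estimate for delta theta and delta h} cover exactly the lower-order pieces of $\widetilde \ce_s(t)$ (they do, as the top-order $m=s+1$ components of $\dt$, $\deh$, $\deh^{-1}$ are replaced in $\widetilde \ce_s$ by the modified quantities $\widetilde E_{s+1}$), and that the right-hand side of each individual estimate is controlled by $\ce_s(t)$ rather than $\widetilde \ce_s(t)$, which is fine since the estimates are stated in terms of $\ce_s(t)$ throughout. Hence the proof is essentially a one-line summation.
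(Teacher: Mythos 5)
Your proof is correct and matches exactly how the paper intends the result to be established: the proposition is prefaced by ``We can summarize what we have done so far in the following estimate for the modified energy,'' and no separate argument is given beyond the implicit summation of Propositions~\ref{energy estimate for delta k and delta phi}, \ref{energy estimate for delta theta and delta h} (over $m=0,\ldots,s$) and \ref{estimates for modified top order quantities}, with $n_{N,s}$ taken to be the largest of the three thresholds and constants absorbed. Your check that the three propositions together account for every summand of $\widetilde\ce_s$, and that each right-hand side is already phrased in terms of $\ce_s$, is precisely the bookkeeping the paper leaves to the reader.
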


\subsection{The bootstrap improvement result} \label{controlling the energy with the modified energy}

Now we verify that, in fact, the modified energy controls the main energy. After that, we can finally conclude the proofs of Theorem~\ref{bootstrap theorem} and Corollary~\ref{global existence corollary}.

\begin{lemma} \label{elliptic estimates}
    If $T$ is a tensor on $\Sigma$, then
    \[
    \|\sn^2T\|_{L^2}^2 \leq 2\|\Delta_h T\|_{L^2}^2 + C_nt^{-2+\varepsilon} \|\sn T\|_{L^2}^2 + C_nt^{-4+2\varepsilon} \|T\|_{L^2}^2.
    \]
\end{lemma}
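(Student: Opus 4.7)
The plan is to run the standard Bochner-type identity relating $\|\sn^2 T\|_{L^2}$ to $\|\Delta_h T\|_{L^2}$, picking up curvature commutators, and then to control those commutators with the decay estimates for $\bar R$ already established in Lemmas~\ref{decay of ricci} and \ref{estimates for the curvature}. First I would write, using that $h^{bd}$ is symmetric and $[\sn_b,\sn_d]$ is antisymmetric in $(b,d)$ (so $h^{bd}\sn_d\sn_b T = \Delta_h T$), and integrating by parts twice on the closed manifold $\Sigma_t$,
\[
\|\sn^2 T\|_{L^2}^2 = -\int_{\Sigma_t} h^{ac}h^{bd}\langle \sn_b T,\,\sn_a\sn_c\sn_d T\rangle_h\,\mu = \|\Delta_h T\|_{L^2}^2 - \int_{\Sigma_t} h^{bd}\langle \sn_b T,\,[\Delta_h,\sn_d]T\rangle_h\,\mu,
\]
where I have used $h^{ac}\sn_a\sn_c\sn_d T = \Delta_h\sn_d T = \sn_d\Delta_h T + [\Delta_h,\sn_d]T$ and one further integration by parts on the first summand.

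Next I would expand the commutator schematically as $[\Delta_h,\sn_d]T = \bar R\star\sn T + \sn\bar R\star T$, where the contractions are determined by the tensor type of $T$. The $\bar R\star\sn T$ contribution is directly estimated by $C\|\bar R\|_{L^\infty}\|\sn T\|_{L^2}^2$. For the $\sn\bar R\star T$ contribution, rather than trying to bound $\|\sn\bar R\|_{L^\infty}$, I would integrate by parts once more, transferring the derivative off $\bar R$, which schematically gives
\[
\int_{\Sigma_t} \sn T\star \sn\bar R\star T\,\mu = -\int_{\Sigma_t} \bar R\star \sn^2 T\star T\,\mu - \int_{\Sigma_t} \bar R\star \sn T\star\sn T\,\mu.
\]
Both resulting terms involve only $\bar R$ itself (no derivative), and are thus bounded by $C\|\bar R\|_{L^\infty}\bigl(\|\sn^2 T\|_{L^2}\|T\|_{L^2} + \|\sn T\|_{L^2}^2\bigr)$. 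Using Young's inequality $ab\le \tfrac12 a^2+\tfrac12 b^2$ on the cross term,
\[
C\|\bar R\|_{L^\infty}\|\sn^2 T\|_{L^2}\|T\|_{L^2} \le \tfrac12\|\sn^2 T\|_{L^2}^2 + C\|\bar R\|_{L^\infty}^2\|T\|_{L^2}^2,
\]
and absorbing $\tfrac12\|\sn^2 T\|_{L^2}^2$ into the left-hand side accounts for the factor $2$ in front of $\|\Delta_h T\|_{L^2}^2$.

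Finally, I would invoke the bounds on the curvature. Since $\sric_n$ satisfies $t^2\|\sric_n\|_{L^\infty}\le C_n t^\varepsilon$ (Lemma~\ref{estimates for the curvature}) and in three dimensions the Riemann tensor is a linear combination of Ricci/scalar curvature and metric (see \eqref{curvature in terms of ricci}), combined with $\|\bar R-\bar R_n\|_{L^\infty}\le C$ (Lemma~\ref{estimates for the curvature}) and the equivalence of $h$- and $h_n$-norms (Lemma~\ref{comparison of h and hn norms}), one obtains $\|\bar R\|_{L^\infty}\le C_n t^{-2+\varepsilon}$ under the bootstrap assumptions. Substituting this bound yields $C_n t^{-2+\varepsilon}\|\sn T\|_{L^2}^2$ for the first type of commutator contribution and $C_n t^{-4+2\varepsilon}\|T\|_{L^2}^2$ for the squared-$\bar R$ contribution, exactly matching the claimed inequality. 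The only mildly delicate point is organizing the commutators so that the derivative of the curvature never needs to be estimated pointwise; the second integration by parts above is what makes this possible, and it is also what forces the constant $2$ (as opposed to $1$) in front of $\|\Delta_h T\|_{L^2}^2$.
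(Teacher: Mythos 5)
Your proof is correct and is essentially the paper's argument, organized slightly differently. The paper starts from $\|\Delta_h T\|_{L^2}^2$, integrates by parts, and commutes indices of $\sn^3 T$ and then of $\sn^2 T$ in an order that keeps $\sn\bar R$ from ever appearing; you start from $\|\sn^2 T\|_{L^2}^2$, expand the commutator $[\Delta_h,\sn_d]T$ (which a priori contains $\sn\bar R\star T$), and then remove that term with one extra integration by parts. Both routes land on the same Bochner-type identity with the same two curvature error contributions, $\int \bar R\star\sn T\star\sn T$ and $\int \bar R\star\sn^2 T\star T$, and conclude with the same curvature bound from Lemma~\ref{estimates for the curvature} (via \eqref{curvature in terms of ricci}) and Young's inequality. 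One small quibble: the factor $2$ is not ``forced by'' your extra integration by parts per se — the paper's computation also produces the cross term $\|\bar R\|_{L^\infty}\|\sn^2 T\|_{L^2}\|T\|_{L^2}$ and incurs the same factor $2$ from Young; it is that cross term, not the particular route to it, that makes the $2$ appear.
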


\begin{proof}
    For $X, Y \in \mfx(\Sigma)$, we use the notation $\bar R(X,Y)T = \sn^2_{X,Y}T - \sn^2_{Y,X}T$. We compute,
    \[
    \begin{split}
        \|\Delta_h T\|_{L^2}^2 &= \int_{\Sigma_t} h(\Delta_h T, \Delta_h T)\mu = -\int_{\Sigma_t} h(\sn T, \sn \Delta_h T) \mu\\
        &= -\int_{\Sigma_t} h^{ik} h^{\ell m} h(\sn_{e_i} T,\sn^3_{e_k,e_\ell,e_m} T) \mu\\
        &= -\int_{\Sigma_t} h^{ik}h^{\ell m} h(\sn_{e_i}T, \sn^3_{e_\ell,e_k,e_m}T ) \mu - \int_{\Sigma_t} h^{ik} h^{\ell m} h(\sn_{e_i}T,\bar R(e_k,e_\ell) \sn T(e_m))\mu\\
        &= \int_{\Sigma_t} h^{ik}h^{\ell m} h(\sn^2_{e_\ell,e_i} T, \sn^2_{e_k,e_m} T) \mu - \int_{\Sigma_t} h^{\ell m} h(\sn T, \bar R(\,\cdot\,,e_\ell)\sn T(e_m))\mu\\
        &= \int_{\Sigma_t} h(\sn^2 T, \sn^2 T) \mu + \int_{\Sigma_t} h^{ik} h^{\ell m} h(\sn^2_{e_\ell,e_i} T, \bar R(e_k,e_m)T)\mu\\
        &\quad - \int_{\Sigma_t} h^{\ell m} h(\sn T, \bar R(\,\cdot\,,e_\ell)\sn T(e_m))\mu\\
        &= \|\sn^2 T\|_{L^2}^2 - \int_{\Sigma_t} h(\sn^2 T, \bar R(\,\cdot\,,\,\cdot\,)T)\mu - \int_{\Sigma_t} h^{\ell m} h(\sn T, \bar R(\,\cdot\,,e_\ell) \sn T(e_m))\mu.
    \end{split}
    \]
    Hence
    \[
    \begin{split}
        \|\sn^2 T\|_{L^2}^2 &\leq \|\Delta_h T\|_{L^2}^2 + C\|\bar R\|_{L^\infty} \|\sn^2 T\|_{L^2} \|T\|_{L^2} + C\|\bar R\|_{L^\infty} \|\sn T\|_{L^2}^2\\
        &\leq \|\Delta_h T\|_{L^2}^2 + C_nt^{-2+\varepsilon} \|\sn^2 T\|_{L^2} \|T\|_{L^2} + C_nt^{-2+\varepsilon} \|\sn T\|_{L^2}^2,
    \end{split}
    \]
    where we have used Lemma~\ref{estimates for the curvature} and that $\bar R$ is completely determined by $\sric$. The result follows by Young's inequality. 
\end{proof}

\begin{lemma} \label{control of energy with modified energy}
    We have
    \[
    \se_s(t) \leq (C + C_n t^\varepsilon) \widetilde \se_s(t).
    \]
\end{lemma}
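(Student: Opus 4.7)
The plan is to match each term in $\ce_s(t)$ against a corresponding piece of $\widetilde{\ce}_s(t)$. The terms split naturally into two groups: the ``lower order" terms (indices $m \leq s$ for $\dk, \deh, \deh^{-1}$, plus $m\leq s+1$ for $\dep$, plus $m\le s$ for $\dt$), which come directly out of the basic energies $E_m[\dk]$, $E_m[\dep]$ and the explicit $L^2$ sums already included in $\widetilde{\ce}_s$; and the three genuinely top-order terms $t^{2(s+1)}\|\sn^{s+1}\dt\|_{L^2}^2$, $t^{2s}\|\sn^{s+1}\deh\|_{L^2}^2$ and $t^{2s}\|\sn^{s+1}\deh^{-1}\|_{L^2}^2$, which must be extracted from the modified top-order quantities via elliptic estimates.

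For the first group, I unwind the definition $E_m[T]=\int |\lie_{\p_t}T|_h^2+|\sn T|_h^2+t^{-2}|T|_h^2\,\mu$. The $t^{-2}|T|_h^2$ piece of $t^{2(m+1)}E_m[\dk]$ gives $t^{2m}\|\sn^m\dk\|_{L^2}^2$ for $m\leq s-1$, while the $|\sn T|_h^2$ piece gives $t^{2(m+1)}\|\sn^{m+1}\dk\|_{L^2}^2$, covering the full range $m=0,\dots,s$ of $\|\sn^m\dk\|_{L^2}^2$ needed by $\ce_s$. The same trick handles all required derivatives of $\dep$. To convert $\|\lie_{\p_t}\sn^m\dk\|_{L^2}$ (appearing in $E_m[\dk]$) into $\|\sn^m\lie_{\p_t}\dk\|_{L^2}$ (demanded by $\ce_s$), I invoke Lemma~\ref{commutator estimates}: the difference is bounded by $C_n t^{-(m+1)+\varepsilon}\sum_{a\le m-1}t^a\|\sn^a\dk\|_{L^2}$, and after multiplication by $t^{m+1}$ and squaring, the error is $C_n t^\varepsilon\widetilde{\ce}_s$. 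An analogous commutator argument handles $\sn^m\p_t\dep$ vs.\ $\p_t\sn^m\dep$.

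For the second group, the key identity is Lemma~\ref{elliptic estimates} applied to $T=\sn^{s-1}\dt$, $\sn^{s-1}\deh$ and $\sn^{s-1}\deh^{-1}$, which yields
\[
\|\sn^{s+1}T\|_{L^2}^2 \leq 2\|\Delta_h \sn^{s-1}T\|_{L^2}^2 + C_nt^{-2+\varepsilon}\|\sn^sT\|_{L^2}^2 + C_nt^{-4+2\varepsilon}\|\sn^{s-1}T\|_{L^2}^2.
\]
The last two terms are already controlled by $\widetilde{\ce}_s$ (with an extra $t^\varepsilon$ factor after multiplying by the appropriate weight). For $\|\Delta_h \sn^{s-1}T\|_{L^2}^2$, I use the definition of $\widetilde E_{s+1}[T]$ and the triangle inequality: e.g.\
\[
\|\Delta_h\sn^{s-1}\dt\|_{L^2}^2 \leq 2\widetilde{E}_{s+1}[\dt] + C\|(\dk{+}K_n)\circ\lie_{\p_t}\sn^{s-1}\dk\|_{L^2}^2 + C\|\p_t(\dep{+}\s_n)\Delta_h\sn^{s-1}\dep\|_{L^2}^2.
\]
The bootstrap assumptions and Sobolev embedding give $\|\dk\|_{L^\infty}+\|\deh\|_{L^\infty}+\|\p_t\dep\|_{L^\infty}\leq C$, while Theorem~\ref{approximate solutions} gives $t\|K_n\|_{L^\infty}+t\|\p_t\s_n\|_{L^\infty}\leq C$. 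Hence the error terms are bounded by $Ct^{-2}\|\lie_{\p_t}\sn^{s-1}\dk\|_{L^2}^2+Ct^{-2}\|\sn^{s+1}\dep\|_{L^2}^2$, and after multiplication by $t^{2(s+1)}$ these are dominated by $t^{2((s-1)+1)}E_{s-1}[\dk]+t^{2s}E_s[\dep]\leq \widetilde{\ce}_s$. The argument for $\deh$ and $\deh^{-1}$ is identical, using that $|h(\lie_{\p_t}\sn^{s-1}\dk,\cdot)|_h=|\lie_{\p_t}\sn^{s-1}\dk|_h$ and $|\deh(\cs[\lie_{\p_t}\sn^{s-1}\dk],\cdot)|_h\leq C\|\deh\|_{L^\infty}|\lie_{\p_t}\sn^{s-1}\dk|_h$.

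The main obstacle is purely bookkeeping: each $t$-weight must match precisely so that the cross-terms coming from the modifier in $\widetilde E_{s+1}[T]$ are absorbed, not by themselves, but by the lower-order $E_m[\dk]$ and $E_m[\dep]$ contributions to $\widetilde{\ce}_s$ (with their different weight conventions). Combining the two groups yields the desired inequality $\ce_s(t)\leq (C+C_nt^\varepsilon)\widetilde{\ce}_s(t)$.
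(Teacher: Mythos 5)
Your proposal is correct and follows essentially the same approach as the paper: commute $\lie_{\p_t}$ past $\sn^m$ via Lemma~\ref{commutator estimates} to match the basic energies, and recover the top-order spatial derivatives $\|\sn^{s+1}\dt\|_{L^2}$, $\|\sn^{s+1}\deh\|_{L^2}$, $\|\sn^{s+1}\deh^{-1}\|_{L^2}$ from the modified top-order quantities using Lemma~\ref{elliptic estimates} and the triangle inequality, absorbing the modifier terms into the lower-order parts of $\widetilde\ce_s$ via $L^\infty$ bounds on $\dk+K_n$, $\deh$, and $\p_t(\dep+\s_n)$. The paper presents the same argument more tersely; your added weight-bookkeeping and the explicit $L^\infty$ bounds correctly fill in the steps the paper leaves implicit.
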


\begin{proof}
    First note that in terms of time derivatives, the energy contains $\|\sn^m \lie_{\p_t} \dk\|_{L^2}$, whereas the modified energy contains $\|\lie_{\p_t} \sn^m \dk\|_{L^2}$. We estimate, by Lemma~\ref{commutator estimates},
    \[
    \begin{split}
        t^{2(m+1)} \|\sn^m \lie_{\p_t} \dk\|_{L^2}^2 &\leq 2t^{2(m+1)} \|\lie_{\p_t} \sn^m \dk\|_{L^2}^2 + 2t^{2(m+1)} \|[\lie_{\p_t},\sn^m]\dk\|_{L^2}^2\\
        &\leq 2t^{2(m+1)} \mathbb E_m[\dk] + C_n\langle \ln t \rangle t^{2\varepsilon} \sum_{a=0}^{m-1} t^{2a}\|\sn^a \dk\|_{L^2}^2\\
        &\leq (C + C_n t^\varepsilon) \widetilde \se_s(t),
    \end{split}
    \]
    and similarly for $\dep$. Now we need to verify that $\|\sn^{s+1} \dt\|_{L^2}$ is indeed controlled by the corresponding modified top order quantity. Indeed, by Lemma~\ref{elliptic estimates},
    \[
    \begin{split}
        t^{2(s+1)}\|\sn^{s+1} \dt\|_{L^2}^2 &\leq 2t^{2(s+1)} \|\Delta_h \sn^{s-1} \dt\|_{L^2}^2 + C_nt^{2s+\varepsilon} \|\sn^s \dt\|_{L^2}^2\\
        &\quad + C_nt^{2(s-1) + 2\varepsilon} \|\sn^{s-1} \dt\|_{L^2}^2\\
        &\leq Ct^{2(s+1)}\Big( \widetilde{\mathbb E}_{s+1}[\dt] + \big\|\tr_s\big( (\dk + K_n) \circ \lie_{\p_t} \sn^{s-1} \dk \big)\big\|_{L^2}^2\\
        &\quad + \|\p_t(\dep + \s_n) \Delta_h \sn^{s-1} \dep\|_{L^2}^2  \Big) + C_nt^\varepsilon \widetilde \se_s(t)\\
        &\leq (C + C_nt^\varepsilon) \widetilde \se_s(t).
    \end{split}
    \]
    The estimates for $\deh$ and $\deh^{-1}$ are similar.
\end{proof}

\begin{proof}[Proof of Theorem \ref{bootstrap theorem}]
    Since $\widetilde \se_s(t_0) = 0$ and since, for $n$ large enough,
    \[
    \frac{d}{dt} \left( t^{-2N-2s} \widetilde \se_s(t) \right) \leq \frac{-2N-2s}{t} t^{-2N-2s} \widetilde \se_s(t) + t^{-2N-2s} \frac{1}{t}\big(C + C_n \langle \ln t \rangle t^\varepsilon\big) \se_s(t) + C_nt,
    \]
    it follows that
    \[
    t^{-2N-2s} \widetilde \se_s(t) \leq -\int_{t_0}^t \frac{2N+2s}{r} r^{-2N-2s} \widetilde \se_s(r)dr + \int_{t_0}^t \frac{C + C_n\langle \ln r \rangle r^\varepsilon}{r} r^{-2N-2s} \se_s(r)dr + C_nt^2.
    \]
    Now Lemma~\ref{control of energy with modified energy} yields
    \[
    \begin{split}
        \frac{t^{-2N-2s}}{C+C_nt^\varepsilon} \se_s(t) &\leq -\int_{t_0}^t \frac{2N+2s}{r(C + C_nr^\varepsilon)} r^{-2N-2s} \se_s(r)dr\\
        &\quad + \int_{t_0}^t \frac{C + C_n\langle \ln r \rangle r^\varepsilon}{r} r^{-2N-2s} \se_s(r)dr + C_nt^2.
    \end{split}
    \]
    At this point we fix the constants. Choose $N$ such that $N \geq 2C^2$. Next, choose $n_{N,s}$ such that the estimate above holds for all $n \geq n_{N,s}$. Now fix $n$ and choose $T_{N,s,n}$ such that $C_n \langle \ln t \rangle t^\varepsilon \leq C$ for all $t \in (0,T_{N,s,n}]$, then
    \[
    \frac{t^{-2N-2s}}{2C} \se_s(t) \leq \frac{-N}{C} \int_{t_0}^t r^{-2N-2s-1}\se_s(r)dr + 2C\int_{t_0}^t r^{-2N-2s-1} \se_s(t)dr + C_nt^2.
    \]
    Note that by our choice of $N$, the total contribution of the two integrals is non-positive. Therefore, by choosing $T_{N,s,n}$ smaller if necessary,
    \[
    \se_s(t) \leq 2C C_nt^{2N+2s+2} \leq t^{2N+2s}.
    \]
    In particular, this improves on the bootstrap assumptions~\eqref{second bootstrap assumption} and \eqref{third bootstrap assumption}.

    It remains to show that \eqref{first bootstrap assumption} is also improved. We have
    \[
    |(\bar\h - \bar\h_n)(e_i,e_k)| = t^{-p_i-p_k}|\deh(e_i,e_k)| \leq C|\deh|_h \leq Ct^{-5/2} \|\deh\|_{H^2} \leq Ct^{N+s-1-5/2}.
    \]
    Thus, by choosing $N$ larger if necessary, we conclude that
    \[
    |\bar\h(e_i,e_k) - \delta_{ik}| \leq |(\bar\h - \bar\h_n)(e_i,e_k)| + |(\bar\h_n - \ho)(e_i,e_k)| \leq C_n\langle \ln t \rangle^2 t^{2\varepsilon + |p_i-p_k|},
    \]
    which improves over \eqref{first bootstrap assumption} after taking $T_{N,s,n}$ smaller if necessary.
\end{proof}

\begin{proof}[Proof of Corollary~\ref{global existence corollary}]
Define the set $\mathcal{A}$ as the set of $t \in [t_0,T_{N,s,n}]$ such that the solution given by Lemma~\ref{local existence} extends to $[t_0,t]$ and satisfies the bootstrap assumptions \eqref{bootstrap assumptions} there. By choice of initial data, if $t_0$ is small enough, then $\mathcal{A}$ is non-empty. Moreover, $\mathcal{A}$ is connected by definition. We first show that it is open. If $t \in \mathcal{A}$, we can solve \eqref{the system} by setting as initial data at $t$ the one induced by the solution; see \cite[Corollary 4]{ringstrom_local_2024}. We thus obtain that there is a $\delta > 0$ such that the solution extends to $[t_0,t+\delta)$. Moreover, by Theorem~\ref{bootstrap theorem}, since \eqref{main energy estimate} improves on the bootstrap assumptions, by taking $\delta$ smaller if necessary, we can ensure that the bootstrap assumptions hold on $[t_0,t+\delta)$. Hence $\mathcal{A}$ is open. It remains to show that it is closed. Assume that $t$ is in the closure of $\mathcal{A}$, then the bootstrap assumptions hold in all of $[t_0,t)$. Thus there is a uniform bound on the corresponding Sobolev norms of the solution on $[t_0,t)$. Moreover, since $s \geq 5$, by Sobolev embedding there is a uniform bound on the $C^3$ norm of the solution on $[t_0,t)$. Therefore, by \cite[Corollary 4]{ringstrom_local_2024}, the solution can be extended to $[t_0,t]$. Note that, by continuity, the bootstrap assumptions still hold on $[t_0,t]$. We conclude that $\mathcal{A}$ is closed, which means $\mathcal{A} = [t_0,T_{N,s,n}]$. Finally, Theorem~\ref{bootstrap theorem} implies that \eqref{main energy estimate} holds in all of $[t_0,T_{N,s,n}]$.
\end{proof}

\subsection{Construction of the solution} \label{finishing the construction}

In this subsection we construct the solution of Theorem~\ref{global existence}. We begin by obtaining some uniform estimates which are required for the proof, after which the construction of the solution follows. Then we finish this subsection by proving that, for the constructed solution, $K$ is in fact the Weingarten map of the $\Sigma_t$ hypersurfaces. 

\begin{lemma} \label{uniform bounds for the error terms}
    Let $T_{N,s,n}$ be as in Theorem~\ref{bootstrap theorem} and $0 < t_1 < t_2 \leq T_{N,s,n}$. Let $t_0 \in (0,t_1]$ and $(h,K,\theta,\s)$ the solution to \eqref{the system} on $[t_0,T_{N,s,n}]$ given by Corollary~\ref{global existence corollary}. Then for $N$ large enough,
    \[
    \sum_{m+r \leq 4} \big(|D^m \lie_{\p_t}^r \deh|_\ho + |D^m \lie_{\p_t}^r \deh^{-1}|_\ho + |D^m \p_t^r \dep|_\ho\big) + \sum_{m+r\leq3} \big(|D^m \lie_{\p_t}^r \dk|_\ho + |D^m \p_t^r \dt|_\ho\big) \leq 1
    \]
    for all $t \in [t_1,t_2]$.
\end{lemma}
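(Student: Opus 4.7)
The plan is to extract pointwise bounds from the energy estimate $\ce_s(t)\le t^{2N+2s}$ of Corollary~\ref{global existence corollary} and translate them into the $\ho$-pointwise estimates of the statement, using the system~\eqref{the system} to trade time derivatives for spatial ones. On the compact interval $[t_1,t_2]\subset(0,T_{N,s,n}]$, every negative power of $t$ appearing in the weights or in Sobolev embedding is bounded by a constant depending on $t_1,t_2$, so by choosing $N$ sufficiently large (depending on $s$, $n$, the data, $t_1$ and $t_2$) every pointwise quantity can be made as small as we wish, uniformly in the initial time $t_0$.

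First, Sobolev embedding applied to $\ce_s$, exactly as in Lemma~\ref{sup norm bootstrap estimates}, yields
\[
\|\deh\|_{W^{s-1,\infty}} + \|\deh^{-1}\|_{W^{s-1,\infty}} + \|\dep\|_{W^{s-1,\infty}} + \|\dk\|_{W^{s-2,\infty}} + \|\dt\|_{W^{s-1,\infty}} \le C\,t^{N-C},
\]
with analogous estimates for $\lie_{\p_t}\dk$ (up to order $s-3$) and $\p_t\dep$ (up to order $s-2$), the norms being taken with respect to $h$ and $\sn$. On $[t_1,t_2]$ and for $N$ large enough, the right-hand side is at most $1$. Since $s\ge 5$, this already covers every spatial order $m\le 4$ for $\deh,\deh^{-1},\dep$ and $m\le 3$ for $\dk,\dt$, together with one $\lie_{\p_t}$ applied to $\dk$ up to order $2$ and to $\dep$ up to order $3$. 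By the bootstrap assumption \eqref{first bootstrap assumption} and the smoothness of $h_n$, the metrics $h$ and $\ho$ are uniformly equivalent on $[t_1,t_2]$, so $|\cdot|_h$ and $|\cdot|_{\ho}$ are comparable with constants depending on $t_1,t_2$; and since $\sn-D$ is a polynomial expression in the first $D$-derivatives of $h$ and $h^{-1}$, it is uniformly bounded along with all of its $D$-derivatives. Iterating the identity $\sn T = DT + (\sn-D)\star T$ thus rewrites each $\sn^a$-expression as a $D^a$-expression with bounded coefficients.

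The remaining mixed derivatives are produced by the system. Equations~\eqref{delta theta equation}, \eqref{delta h equation} and \eqref{delta h dual equation} express $\p_t\dt$, $\lie_{\p_t}\deh$ and $\lie_{\p_t}\deh^{-1}$ algebraically in $\dk,\dep,\deh,\deh^{-1}$, the approximate quantities $h_n,K_n,\s_n$, and the error $I_{\theta_n}$ bounded by Lemma~\ref{estimates for inhomogeneous terms}; similarly the wave equation~\eqref{delta k equation} expresses $\lie_{\p_t}^2\dk$ in terms of $\Delta_h\dk$, $\sn^2\dt^\sharp$, lower-order spatial quantities and $I_{K_n}$, and the equation for $\dep$ in \eqref{the system} expresses $\p_t^2\dep$ in the same form. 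Iterating these identities, every $D^m\lie_{\p_t}^r$ or $D^m\p_t^r$ appearing in the statement is rewritten as a finite sum of pure spatial expressions of total order at most $s-1$, each bounded by $1$ by the previous step after taking $N$ larger if necessary. The main bookkeeping point is that each $\lie_{\p_t}$ passed through a wave equation consumes two spatial orders and each one passed through a transport equation consumes one, so the spatial demand remains within the regularity delivered by the energy for $s\ge 5$, $m+r\le 4$ and $m+r\le 3$.
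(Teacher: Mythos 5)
Your argument follows the paper's strategy at a high level: Sobolev embedding to obtain pointwise spatial bounds from the energy, then the system of equations to deduce the mixed time-derivative bounds. However, your treatment of the passage from $h$-geometric quantities to $\ho$-geometric ones has a genuine gap.

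You assert that on $[t_1,t_2]$ the metrics $h$ and $\ho$ are ``uniformly equivalent with constants depending on $t_1,t_2$'' and accordingly choose $N$ ``depending on $s$, $n$, the data, $t_1$ and $t_2$.'' This is not what the lemma requires and not how it is used. In the proof of Proposition~\ref{global existence proposition}, one fixes $N$, $s$, $n$ once and for all and then applies the lemma on the family of intervals $[t_I,T_{N,s,n}]$ with $t_I\to 0$ to feed Arzelà--Ascoli; if $N$ were allowed to depend on $t_1$, one would need $N\to\infty$ as $t_I\to 0$ and the argument would collapse. So the threshold on $N$ must depend only on $s$, the initial data and the potential.

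The correct observation --- and the one the paper's proof invokes via Lemma~\ref{estimates in tensor components} --- is that the comparison between the $\ho$-frame components $e_\alpha T^{k}_{i}$ (hence $|D^m T|_{\ho}$) and $h_n$-covariant-derivative norms is controlled by factors $t^{-C}\langle\ln t\rangle^{C}$ whose exponents depend only on the initial data, not on the interval. Combined with Lemma~\ref{comparison of h and hn norms} (relating $\sn$- and $\sn^{(n)}$-norms) and the $t^{2N+2s}$ decay of the energy, this gives bounds of the form $|D^m T|_{\ho}\le C\,t^{N-C'}$ with $C'$ uniform in $t$; taking $N>C'$ and $T_{N,s,n}$ smaller then yields the conclusion for every $0<t_1<t_2\le T_{N,s,n}$ simultaneously. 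Your translation from $\sn$ to $D$ needs the same uniform-in-$t$ polynomial control, furnished by the $\bar\Gamma$ estimates of Lemma~\ref{decay of ricci}, rather than compactness of $[t_1,t_2]$. The rest of your argument --- in particular the iteration through the equations to convert $\lie_{\p_t}^r$ into spatial derivatives --- matches the paper's.
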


\begin{proof}
    When there are no time derivatives, or only one time derivative for $\dk$ and $\dep$, it follows from Lemma~\ref{estimates in tensor components} and Sobolev embedding, after taking $N$ large enough and $T_{N,s,n}$ smaller if necessary. The estimates for the remaining time derivatives can then be deduced directly from the equations~\eqref{wave equations for delta objects} and \eqref{transport equations for delta objects}.
\end{proof}

\begin{proposition} \label{global existence proposition}
    For every sufficiently large positive integer $N$, there is an $n_{N,s}$ such that for every $n \geq n_{N,s}$ there is a $T_{N,s,n} > 0$ such that the following holds. There is a $C^3 \times C^2 \times C^2 \times C^3$ solution $(h,K,\theta,\varphi)$ to \eqref{the system} on $(0,T_{N,s,n}] \times \Sigma$ satisfying the estimate
    \begin{equation} \label{main main energy estimate 2}
        \se_s(t) \leq t^{2N + 2s}
    \end{equation}
    for $t \in (0,T_{N,s,n}]$.
\end{proposition}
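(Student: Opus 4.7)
The plan is to obtain the global (in the sense of all the way to $t = 0$) solution as a subsequential limit of the local solutions of Corollary~\ref{global existence corollary} as their lower time endpoint tends to zero. Fix $s$, choose $N$ sufficiently large, and let $n \geq n_{N,s}$ and $T_{N,s,n}$ be as in Corollary~\ref{global existence corollary}. Pick a sequence $t_{0,j} \searrow 0$ with $t_{0,j} \in (0,T_{N,s,n})$, and for each $j$ let $(h_j,K_j,\theta_j,\s_j)$ be the solution to \eqref{the system} on $[t_{0,j},T_{N,s,n}] \times \Sigma$ produced by Corollary~\ref{global existence corollary} with initial data at $t_{0,j}$ induced by the $n$-th approximate solution as in Lemma~\ref{local existence}. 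By that corollary each such solution satisfies the energy bound $\ce_s(t) \leq t^{2N+2s}$ throughout its interval of existence.

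Next, fix an arbitrary $a \in (0,T_{N,s,n})$. For all $j$ large enough that $t_{0,j} \leq a$, the bootstrap assumptions \eqref{bootstrap assumptions} continue to hold on $[a,T_{N,s,n}]$, so Lemma~\ref{uniform bounds for the error terms} applies uniformly in $j$ and supplies pointwise bounds (with respect to $\ho$) on $\deh_j$, $\deh_j^{-1}$, $\dk_j$, $\dt_j$, $\dep_j$ and their mixed spatial and Lie derivatives of total order up to three or four. Combining with the smooth bounds on the approximate solution $(h_n,K_n,\s_n)$ from Theorem~\ref{approximate solutions}, we deduce uniform $C^3$ bounds for $(h_j,K_j,\theta_j,\s_j)$ on $[a,T_{N,s,n}] \times \Sigma$. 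The Arzelà--Ascoli theorem then produces a subsequence converging in $C^2([a,T_{N,s,n}] \times \Sigma)$ to a limit $(h,K,\theta,\s)$; this regularity is enough to pass to the limit in each term of \eqref{the system}, so the limit is a $C^2$ solution of the system on $[a,T_{N,s,n}] \times \Sigma$, with $h$ symmetric. A standard diagonal argument over a sequence $a_k \searrow 0$ yields a solution defined on all of $(0,T_{N,s,n}] \times \Sigma$.

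Finally, the energy bound must be transferred to the limit. Fix $t \in (0,T_{N,s,n}]$ and choose $a < t$; along the subsequence above, the uniform bound $\ce_s(t) \leq t^{2N+2s}$ holds for every $j$ with $t_{0,j} \leq a$, and each summand of $\ce_s(t)$ is the squared $L^2$ norm at time $t$ of some finite-order covariant derivative of $\deh_j$, $\dk_j$, $\dt_j$, $\dep_j$ or their $\lie_{\p_t}$-derivatives. By Lemma~\ref{comparison of h and hn norms} one can reformulate these norms with respect to the fixed metric $\ho$ with uniform constants, and the uniform Sobolev bounds imply weak $H^{s+1}$-compactness of the sequence along each fixed time slice; the $C^2$ convergence already obtained forces the weak limit to coincide with the strong limit. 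Lower semicontinuity of the $L^2$ norm under weak convergence then gives $\ce_s(t) \leq t^{2N+2s}$ for the limit. The only genuinely delicate step in the whole argument is this final passage, since one must reconcile the metric-dependent norms appearing in the definition of $\ce_s$ with the fixed-metric norms used in the compactness argument; once the equivalence from Lemma~\ref{comparison of h and hn norms} is in hand, this becomes routine.
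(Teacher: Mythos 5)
Your proposal is correct and follows essentially the same strategy as the paper: extract a limit from the solutions of Corollary~\ref{global existence corollary} via Arzelà--Ascoli using the uniform $C^3$-type bounds of Lemma~\ref{uniform bounds for the error terms}, then transfer the energy estimate to the limit by weak compactness and lower semicontinuity of the $L^2$ norm. The only cosmetic difference is that you patch the local limits together via a diagonal subsequence whereas the paper invokes uniqueness for \eqref{the system} to show the limits on overlapping intervals agree; both are standard and equivalent here.
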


\begin{proof}
    Let $N$, $s$ and $n$ be such that Lemma~\ref{uniform bounds for the error terms} holds. Take a decreasing sequence $t_i \to 0$ such that ${t_i \in (0,T_{N,s,n})}$ and corresponding solutions $(\check h_i,\check K_i,\check \theta_i,\check \s_i)$ with initial data at $t_i$ given by $(g_n,\s_n)$ as in Lemma~\ref{local existence}. We want to apply the Arzelà-Ascoli theorem to this sequence. To that end, we define the Riemannian metric $\hat g$ on $(0,T_{N,s,n}] \times \Sigma$ by
    \[
    \hat g := dt \otimes dt + \ho.
    \]
    Denote by $d$ the Riemanninan distance associated with $\hat g$. 

    Let $0 < t_1 < t_2 \leq T_{N,s,n}$. By Lemma~\ref{uniform bounds for the error terms}, we know that the sequence $(\check h_i,\check K_i,\check \theta_i,\check \s_i)$, after some $i$, is uniformly bounded in $[t_1,t_2] \times \Sigma$ in $C^3 \times C^2 \times C^2 \times C^3$. We need to show that the derivatives up to these orders are uniformly equicontinuous. Consider $u \in C^1([t_1,t_2] \times \Sigma)$. Let $p, q \in [t_1,t_2] \times \Sigma$ with $p = (t,x)$ and $q = (s,y)$. Let $\bar \gamma : [0,1] \to \Sigma$ be a minimizing geodesic from $x$ to $y$ with respect to $\ho$, and define $\gamma: [0,1] \to [t_1,t_2] \times \Sigma$ by $\gamma(r) := ( rs + (1-r)t, \bar \gamma(r) )$. Then $\gamma$ is a minimizing geodesic with respect to $\hat g$, joining $p$ to $q$. If we assume that $|du|_{\hat g} \leq C$, then
    \[
    |u(p) - u(q)| \leq \int_0^1 | du_{\gamma(r)}(\gamma'(r)) |dr \leq C \int_0^1 |\gamma'(r)|_{\hat g}dr = C d(p,q).
    \]
    This means that the control of one additional derivative provided by Lemma~\ref{uniform bounds for the error terms} ensures that the sequence is uniformly equicontinuous in $C^3 \times C^2 \times C^2 \times C^3$.

    Define $(\widetilde h_I, \widetilde K_I, \widetilde \theta_I, \widetilde \s_I)$ in $[t_I,T_{N,s,n}] \times \Sigma$, for $I \geq 1$, inductively as follows. First, define $(\widetilde h_1, \widetilde K_1, \widetilde \theta_1, \widetilde \s_1)$ in $[t_1,T_{N,s,n}] \times \Sigma$ as the $C^3 \times C^2 \times C^2 \times C^3$ limit of a subsequence of $(\check h_i,\check K_i,\check \theta_i,\check \s_i)$. Next, assume we are given $(\widetilde h_I, \widetilde K_I, \widetilde \theta_I, \widetilde \s_I)$ in $[t_I,T_{N,s,n}] \times \Sigma$ as the $C^3 \times C^2 \times C^2 \times C^3$ limit of a subsequence $(\check h_{i_k},\check K_{i_k},\check \theta_{i_k},\check \s_{i_k})$ for $i_k \geq I$. Then we can define $(\widetilde h_{I+1}, \widetilde K_{I+1}, \widetilde \theta_{I+1}, \widetilde \s_{I+1})$ in $[t_{I+1},T_{N,s,n}] \times \Sigma$ as the $C^3 \times C^2 \times C^2 \times C^3$ limit of a subsequence of $(\check h_{i_k},\check K_{i_k},\check \theta_{i_k},\check \s_{i_k})$ for $i_k \geq I+1$. Note that this construction ensures that $(\widetilde h_I, \widetilde K_I, \widetilde \theta_I, \widetilde \s_I)$ and $(\widetilde h_J, \widetilde K_J, \widetilde \theta_J, \widetilde \s_J)$ agree on $[t_{\min\{I,J\}},T_{N,s,n}] \times \Sigma$. Moreover, $(\widetilde h_I, \widetilde K_I, \widetilde \theta_I, \widetilde \s_I)$ solves \eqref{the system} in $[t_I,T_{N,s,n}] \times \Sigma$ for all $I$. Now we can define $(h,K,\theta,\s)$ as follows. Let $(t,x) \in (0,T_{N,s,n}] \times \Sigma$ and $I$ such that $t \in [t_I,T_{N,s,n}]$, then define $h(t,x) := \widetilde h_I(t,x)$. Note that $h$ is well defined. We define $K$, $\theta$ and $\s$ similarly.

    It remains to show that \eqref{main main energy estimate 2} holds for $(h,K,\theta,\s)$. Note that for each $t$, there is a subsequence $i_k$ such that $(\check h_{i_k}(t),\check h^{-1}_{i_k}(t),\check K_{i_k}(t),\lie_{\p_t} \check K_{i_k}(t),\check \theta_{i_k}(t),\check \s_{i_k}(t),\p_t \check \s_{i_k}(t))$ has a weak limit satisfying \eqref{main main energy estimate 2}. But by Sobolev embedding, the weak limit has to coincide with $(h,K,\theta,\s)$. 
\end{proof}

\begin{proposition} \label{k is the weingarten map}
    Let $(h,K,\theta,\s)$ be a solution to \eqref{the system} as in Proposition~\ref{global existence proposition}. Then $K$ is the Weingarten map of the $\Sigma_t$ hypersurfaces with respect to $g := -dt \otimes dt + h$.
\end{proposition}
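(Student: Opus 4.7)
My plan is to show that the $h$-antisymmetric part of $K$, defined by
\[
A(X, Y) := h(K(X), Y) - h(X, K(Y)), \qquad X, Y \in \mathfrak{X}(\Sigma),
\]
vanishes identically. Equation (2) of \eqref{the system} forces the Weingarten map $\widetilde K$ of the $\Sigma_t$ hypersurfaces with respect to $g = -dt \otimes dt + h$ to equal the $h$-symmetric part of $K$; hence $K = \widetilde K$ if and only if $A \equiv 0$.

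It is enough to establish vanishing on each of the finite-time approximating solutions $(h_i, K_i, \theta_i, \s_i)$ on $[t_i, T_{N,s,n}]$ from which $(h, K, \theta, \s)$ was obtained as a limit in the proof of Proposition~\ref{global existence proposition}. At $t = t_i$, the initial data are $K_i(t_i) = K_n(t_i)$ and $\lie_{\p_t} K_i(t_i) = \lie_{\p_t} K_n(t_i)$. Since $K_n$ is the Weingarten map of $g_n$, it is $h_n$-symmetric for all $t$, and differentiating the identity $h_n(K_n(\cdot), \cdot) = h_n(\cdot, K_n(\cdot))$ in $t$ together with \eqref{approximate metric equation} shows that $\lie_{\p_t} K_n$ is also $h_n$-symmetric. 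Consequently, both $A_i(t_i) = 0$ and $\lie_{\p_t} A_i(t_i) = 0$.

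The decisive step is to derive a closed linear second-order equation for $A_i$. A direct calculation using only equation (2) of \eqref{the system} expresses $\lie_{\p_t} A_i$ as the sum of a term linear in $A_i$ (with coefficients involving $K_i$) coming from the antisymmetric part of $(K_i^2)^\flat$, plus the $h_i$-antisymmetric part of $\lie_{\p_t} K_i$. Differentiating once more and substituting for $\lie_{\p_t}^2 K_i$ via \eqref{main equation for k}, I would antisymmetrize the resulting identity in the two arguments. The term $\sn^2 \theta_i^\sharp$ drops out because it is $h_i$-symmetric, as do the identity-multiple contributions in $F_1$ and $F_3$; the surviving parts of $F_1(K_i)$, $F_2(K_i)$, $F_3(\s_i)$, together with the Laplacian term (using $(\Delta_{h_i} K_i)^\flat = \Delta_{h_i} K_i^\flat$, which holds since $\sn h_i = 0$), reorganize into a linear expression in $A_i$, $\lie_{\p_t} A_i$, and $\sn A_i$ with coefficients depending only on $(h_i, K_i, \theta_i, \s_i)$ and their low-order derivatives, uniformly bounded on compact subintervals by Lemma~\ref{uniform bounds for the error terms}. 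An energy estimate analogous to Proposition~\ref{wave estimate} applied to this homogeneous linear wave equation, with the vanishing initial data at $t = t_i$, then forces $A_i \equiv 0$ on $[t_i, T_{N,s,n}]$ via Grönwall's inequality, and passing to the limit $t_i \to 0$ along the subsequence from the proof of Proposition~\ref{global existence proposition} yields $A \equiv 0$ on $(0, T_{N,s,n}] \times \Sigma$. The main obstacle will be the algebraic verification of the above cancellations, particularly that the mixed $K_i$--$\lie_{\p_t} K_i$ terms in $F_2(K_i)$ and the scalar-field terms in $F_3(\s_i)$ can be massaged so that their antisymmetric parts are expressed entirely in terms of $A_i$ and $\lie_{\p_t} A_i$ rather than in terms of the $h_i$-symmetric parts of $K_i$ and $\lie_{\p_t} K_i$.
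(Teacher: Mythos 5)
Your proposal follows the same route as the paper: define the $h$-antisymmetric part $A(X,Y) := h(K(X),Y) - h(X,K(Y))$, show that for each approximating solution the choice of initial data forces $A_i(t_i) = \lie_{\p_t}A_i(t_i) = 0$, derive a homogeneous linear wave equation for $A_i$ from equations~\eqref{main equation for theta}--\eqref{main equation for k}, conclude $A_i \equiv 0$ by uniqueness, and pass to the limit. The algebraic verification that you (correctly) flag as the main obstacle is precisely what the paper's proof carries out: it shows that the right-hand side of the wave equation is a sum of contractions of $A$ with $\lie_{\p_t}K$, of $\lie_{\p_t}A$ with $K$, of $A$ with $K\otimes K$, of $A$ with $d\s \otimes d\s$, and a term $(V\circ\s)A$ — all linear in $A$ and $\lie_{\p_t}A$ with no $\sn A$ terms, and with no stray $h$-symmetric parts of $K$ or $\lie_{\p_t}K$ surviving. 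One small point: the $(V\circ\s)K$ piece of $F_3$ does not drop out, but contributes $(V\circ\s)A$, which is still homogeneous in $A$ and therefore harmless.
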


\begin{proof}
    Our goal is to show that the antisymmetric part of $K$ with respect to $h$ vanishes. For that purpose, we deduce an equation for it. Define
    \[
    A(X,Y) := h(K(X),Y) - h(X,K(Y))
    \]
    for $X,Y \in \mfx(\Sigma)$. Then
    \begin{equation} \label{time derivative of A}
        \lie_{\p_t} A(X,Y) = A(K(X),Y) + A(X,K(Y)) + h(\lie_{\p_t}K(X),Y) - h(X,\lie_{\p_t}K(Y)),
    \end{equation}
    implying
    \[
    \begin{split}
        \lie_{\p_t}^2 A(X,Y) &= \lie_{\p_t}A(K(X),Y) + \lie_{\p_t} A(X,K(Y)) + A(\lie_{\p_t}K(X),Y) + A(X,\lie_{\p_t}K(Y))\\
        &\quad + h(K \circ \lie_{\p_t}K(X),Y) + h(\lie_{\p_t}K(X),K(Y)) - h(K(X),\lie_{\p_t}K(Y))\\
        &\quad  - h(X, K \circ \lie_{\p_t}K(Y)) + h(\lie_{\p_t}^2K(X),Y) - h(X,\lie_{\p_t}^2K(Y)).
    \end{split}
    \]
    Denote by $\mathfrak G$ terms that consist of contractions of tensor products of $A$, $\lie_{\p_t}K$, $h$ and $h^{-1}$; by $\mathfrak H$ terms that consist of contractions of tensor products of $\lie_{\p_t}A$, $K$, $h$ and $h^{-1}$; by $\mathfrak I$ terms that consist of contractions of tensor products of $A$, $K$, $h$ and $h^{-1}$; and finally, by $\Phi$ terms consisting of contractions of tensor products of $A$, $d\s$ and $h^{-1}$. We will use the evolution equation \eqref{main equation for k} for $K$ to substitute $\lie_{\p_t}^2K$ and use the types of terms just defined to describe the result. We have
    \[
    \begin{split}
        &h(K \circ \lie_{\p_t}K(X),Y) + h(\lie_{\p_t}K(X),K(Y)) - h(K(X),\lie_{\p_t}K(Y))\\
        &- h(X,K \circ \lie_{\p_t}K(Y)) - h(F_2(K)(X),Y) + h(X,F_2(K)(Y))\\
        &\hspace{2cm}= 3h(\lie_{\p_t}K \circ K(X),Y) + 3h(K \circ \lie_{\p_t}K(X),Y) - A(\lie_{\p_t}K(X),Y)\\
        &\hspace{2cm}\quad -3h(X,\lie_{\p_t}K \circ K(Y)) - 3h(X,K \circ \lie_{\p_t}K(Y)) - A(X,\lie_{\p_t}K(Y)) + \cdots,
    \end{split}
    \]
    where $\cdots$ are the terms which correspond to the $\tr K$ terms in $F_2(K)$. Moreover,
    \[
    \begin{split}
        &h((\lie_{\p_t}K \circ K + K \circ \lie_{\p_t}K)(X),Y) - h(X,(\lie_{\p_t}K \circ K + K \circ \lie_{\p_t}K)(Y))\\
        &= \p_t\big( h(K^2(X),Y) - h(X,K^2(Y)) \big) - h(K^3(X),Y) + h(X,K^3(Y))\\
        &\quad - h(K^2(X),K(Y)) + h(K(X),K^2(Y))\\
        &= \p_t\big( A(K(X),Y) + A(X,K(Y)) \big) - A(K^2(X),Y) - A(X,K^2(Y)) - 2A(K(X),K(Y))\\
        &= \mathfrak G + \mathfrak H + \mathfrak I.
    \end{split}
    \]
    Looking at the rest of the terms in $F_2(K)$ and $F_1(K)$, we see that we can put them directly in the required form, or we just need to use \eqref{time derivative of A}.
    
    Turning our attention to the scalar field terms, we just need to check the following,
    \[
    \begin{split}
        &h(d\s(K(X)) \sn \s,Y) - h(X,d\s(K(Y))\sn \s) + h(d\s(X)K(\sn \s),Y) - h(X,d\s(Y)K(\sn \s))\\
        &= d\s(Y)h(K(X),\sn \s) - d\s(X)h(\sn \s,K(Y)) + d\s(X)h(K(\sn \s),Y) - d\s(Y)h(X,K(\sn \s))\\
        &= d\s(Y)A(X,\sn \s) + d\s(X)A(\sn \s,Y).
    \end{split}
    \]
    Finally, since
    \[
    \Delta_h A(X,Y) = h(\Delta_h K(X),Y) - h(X,\Delta_h K(Y)),
    \]
    we conclude that
    \[
    -\lie_{\p_t}^2 A + \Delta_h A = \mathfrak G + \mathfrak H + \mathfrak I + \Phi + (V \circ \s) A.
    \]

    Now consider $K_i$ as in the proof of Proposition~\ref{global existence proposition} and its antisymmetric part $A_i$. Then, by choice of initial data, $A_i(t_i) = \lie_{\p_t} A_i(t_i) = 0$. But, as we saw above, $A_i$ solves a \emph{homogeneous} wave equation. Thus $A_i \equiv 0$. Since this is true for all $i$, then $A \equiv 0$. 
\end{proof}

\subsection{The constructed solution solves Einstein's equations} \label{the constructed solution solves einsteins eqs}

In this subsection we show that the solution to \eqref{the system} given by Proposition~\ref{global existence proposition}, is in fact a solution to the Einstein--nonlinear scalar field equations with potential $V$, after taking the parameter $N$ large enough. This finishes the proof of Theorem~\ref{global existence}.

\begin{proposition} \label{theta is the mean curvature}
    For $N$ sufficiently large, the solution $(h,K,\theta,\s)$ to \eqref{the system} given by Proposition~\ref{global existence proposition}, satisfies $\theta = \tr K$. In particular $E(\p_t,\p_t) = 0$ and $\Box_g \s = V' \circ \s$.
\end{proposition}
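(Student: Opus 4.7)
The plan is to reduce the claim to a uniqueness statement for a closed constraint-propagation system, and then exploit that, for $N$ large, all constraint quantities decay faster than any polynomial as $t \to 0$. Set $\delta := \theta - \tr K$. Since $K$ is the Weingarten map of the $\Sigma_t$ hypersurfaces by Proposition~\ref{k is the weingarten map}, Proposition~\ref{ricci} gives $\ric(\p_t,\p_t) = -\p_t(\tr K) - \tr K^2$; combining with the definition of $E$ and equation~\eqref{main equation for theta} yields the identity
\[
\p_t \delta = E(\p_t,\p_t).
\]
In particular, $\theta = \tr K$ is equivalent to $\delta \equiv 0$, and the ``in particular'' claim follows at once.

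Two further identities are essential. Writing $\Box_g \s = -\p_t^2 \s + \Delta_h \s - (\tr K)\p_t \s$ and invoking the scalar-field equation in \eqref{the system}, one obtains $\Box_g \s - V' \circ \s = \delta \, \p_t \s$. Comparing \eqref{main equation for k} with the identity \eqref{k second order equation}---which was used to derive \eqref{main equation for k} by replacing $\sn^2(\tr K)^\sharp$ with $\sn^2\theta^\sharp$ and dropping $G(E)$---yields
\[
G(E) = -\sn^2 \delta^\sharp.
\]
Substituting these identities into the evolution equations for $E(\p_t,\p_t)$ and $\cm$ of Lemma~\ref{evolution equations for the constraint equations}, and rearranging $G(E) = -\sn^2\delta^\sharp$ as a first-order equation for $\lie_{\p_t}\ce$, produces a closed linear homogeneous hyperbolic system for the tuple of constraint quantities $(\delta, \p_t\delta, \cm, \ce)$, whose coefficients depend on $(h, K, \s)$ and carry singular behavior of order $1/t$ near $t = 0$. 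Taking the trace of the $\ce$-equation and combining with the evolution of $E(\p_t,\p_t)$, the $\diver_h \cm$ and $\p_t(\tr \ce)$ terms cancel exactly, leaving the wave equation
\[
\p_t^2 \delta - \Delta_h \delta = -2(\tr K)\p_t \delta + 2\delta (\p_t \s)^2 + 4(\tr K)\tr \ce
\]
as the principal $\delta$-component of the system.

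An energy estimate analogous to those of Subsection~\ref{energy estimates}, using a weighted Sobolev energy $\mathcal{E}(t)$ that carries sufficiently many spatial derivatives to absorb the $\sn^2 \delta$ source in the $\ce$-equation and the $d(\p_t \delta)$ source in the $\cm$-equation, yields
\[
\frac{d\mathcal{E}}{dt} \leq \frac{C}{t}\mathcal{E}(t),
\]
hence $\mathcal{E}(t) \leq \mathcal{E}(t_0)(t/t_0)^C$ by Grönwall. The decompositions $\delta = \dt - \tr\dk$, $\cm = \cm_n + (\cm - \cm_n)$ and $\ce = \ce_n + (\ce - \ce_n)$, together with the main energy estimate~\eqref{main main energy estimate 2}, Sobolev embedding, and the rapid decay of $\cm_n, \ce_n$ from Theorem~\ref{approximate solutions} (for $n$ large enough that $2(n+1)\varepsilon$ dominates any fixed power), ensure that $\mathcal{E}(t_0) \to 0$ faster than any polynomial in $t_0$. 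Letting $t_0 \to 0$ forces $\mathcal{E}(t) = 0$ for every $t \in (0, T_{N,s,n}]$, and in particular $\delta \equiv 0$.

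The main obstacle will be the design of the weighted Sobolev energy $\mathcal{E}$ so that the coupled wave–transport system closes cleanly: the $\sn^2\delta$ source in the $\ce$-equation requires enough derivatives of $\delta$ (and, via the wave structure, of $\p_t \delta$) to be included, and the singular $1/t$ coefficients require commutator bookkeeping in the same spirit as the estimates of Subsections~\ref{preliminary estimates}--\ref{energy estimates}. Verifying that the Bianchi-type cancellations above persist at every order of derivatives, and that the constraint energy is controlled by the main energy $\ce_s$ from Proposition~\ref{global existence proposition}, is the core technical work.
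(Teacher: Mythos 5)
The paper's proof is far simpler than your proposal, and the simplification is the whole point: taking the trace of \eqref{main equation for k} and subtracting the $\theta$-evolution \eqref{main equation for theta} gives a \emph{closed} linear homogeneous wave equation
\[
-\p_t^2(\tr K - \theta) + \Delta_h(\tr K - \theta) = 2(\tr K)\p_t(\tr K - \theta) - 2(\p_t\s)^2(\tr K - \theta),
\]
in the single scalar $\tr K - \theta$, with no coupling to $\cm$ or $\ce$. The traces of $F_1$, $F_2$, $H$ yield this directly; after applying Proposition~\ref{wave estimate} and Grönwall, and using that the approximate data satisfy $\tr K_n = \theta_n$ exactly (so the initial energy at $t_0$ decays like $t_0^{2N+2s-4}$), one sends $t_0 \to 0$ for $N$ large. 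No constraint-propagation machinery, no modified energy, no loss of derivatives.

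Your preliminary identities $\p_t\delta = E(\p_t,\p_t)$, $\Box_g\s - V'\circ\s = \delta\,\p_t\s$, and $G(E) = -\sn^2\delta^\sharp$ are all correct and genuinely useful observations. But the route through Lemma~\ref{evolution equations for the constraint equations} and the formula for $G(E)$ is a wrong turn. Two problems. First, the $4(\tr K)\tr\ce$ source in your proposed wave equation for $\delta$ should not be there: a direct computation from the system (using \eqref{k first order equation} for $\ce$, the first-variation formula for $\p_t\bar S$, and the evolution of $h^{-1}$ and $|d\s|_h^2$) shows $\p_t\tr\ce + 2\tr(\ce\circ K) - 2\diver_h\cm = -\Delta_h\delta$, so after inserting this into \eqref{einstein tt evolution equation} the $\ce$-dependence drops out entirely. (The discrepancy with your computation traces back to the formula for $G(E)$ as displayed in the paper, whose trace does not equal $-\Delta_h\delta$; but the identity $G(E) = -\sn^2\delta^\sharp$ together with what the system actually implies forces the closed equation, and the direct trace-of-the-system derivation is the reliable one.) Second, and more importantly, because your $\delta$-equation retains a $\tr\ce$ source you are forced into a coupled energy estimate for $(\delta,\p_t\delta,\cm,\ce)$, which faces a genuine loss of derivatives through the $\sn^2\delta^\sharp$ source in the $\ce$-equation and the $\sn\ce$ and $d(\p_t\delta)$ sources in the $\cm$-equation. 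You flag this but do not resolve it. The paper deliberately avoids the problem by proving $\delta \equiv 0$ first, and only afterward (Proposition~\ref{vanishing of the einstein scalar field tensor}) sets up the $(\cm,\ce)$ system, where the already-established vanishing of $E(\p_t,\p_t)$ removes the $d(E(\p_t,\p_t))$ term from the $\cm$-equation and makes the modified-energy argument close. Merging the two steps, as you propose, is not obviously workable and is certainly not needed.
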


\begin{proof}
    Note that
    \[
    \tr F_1(K) = 2(\tr K) \tr K^2, \quad \tr F_2(K) = 2\tr(\lie_{\p_t} K \circ K) + 2(\tr K)\p_t \tr K
    \]
    and $\tr H(\s) = 0$. Hence, after taking the trace of Equation~\eqref{main equation for k},
    \[
    \begin{split}
        -\p_t^2 \tr K + \Delta_h \tr K &= \Delta_h \theta + 2(\tr K)\tr K^2 + \p_t(\tr K^2) + 2(\tr K) \p_t \tr K\\
        &\quad - 2(V \circ \s)\tr K + 2(\p_t \s) \Delta_h \s - 3(V' \circ \s)\p_t \s. 
    \end{split}
    \]
    Since $\theta$ satisfies \eqref{main equation for theta} and $\s$ satisfies \eqref{main equation for phi}, we see that
    \[
    -\p_t^2(\tr K - \theta) + \Delta_h(\tr K - \theta) = 2(\tr K) \p_t(\tr K - \theta) - 2(\p_t \s)^2(\tr K - \theta).
    \]
    Now we perform energy estimates with this wave equation,
    \[
    \begin{split}
        \frac{d}{dt}\mathbb E[\tr K - \theta] &\leq \frac{1}{t}(C + C_nt^\varepsilon) \mathbb E[\tr K - \theta] + 2t\|(\p_t \s)^2(\tr K - \theta) - (\tr K)\p_t(\tr K - \theta)\|_{L^2}^2\\
        &\leq \frac{1}{t}(C + C_nt^\varepsilon) \mathbb E[\tr K - \theta].
    \end{split}
    \]
    Moreover, since $\tr K_n = \theta_n$,
    \[
    \|\tr K - \theta\|_{H^1}^2 + \|\p_t(\tr K - \theta)\|_{L^2}^2 \leq Ct^{2N + 2s -2},
    \]
    where we have estimated $\p_t(\theta - \theta_n)$ directly from \eqref{delta theta equation}. Therefore, by Grönwall's inequality, we obtain that there are constants $C$ and $C'$ such that
    \[
    \mathbb E[\tr K - \theta](t) \leq C\left( \frac{t}{t_0} \right)^{C'} t_0^{2N + 2s - 4}
    \]
    for $t 
    \geq t_0$. Thus, by taking $N$ large enough and letting $t_0 \to 0$, we conclude that $\theta = \tr K$. 
\end{proof}

\begin{proposition} \label{vanishing of the einstein scalar field tensor}
    Consider the solution $(h,K,\theta,\s)$ to \eqref{the system} given by Proposition~\ref{global existence proposition} and let $g := -dt \otimes dt + h$. Then, for $N$ large enough, $(g,\s)$ is a solution to the Einstein--nonlinear scalar field equations with potential $V$.
\end{proposition}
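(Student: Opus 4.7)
My plan has three main ingredients: reduce to a closed first-order system for the pair $(\ce,\cm)$ using the fact that Equation~\eqref{k second order equation} holds for \emph{any} $(g,\s)$ of Gaussian form, establish an energy estimate with exact cancellation of the top-order cross terms, and close the estimate by exploiting the super-polynomial smallness of the error inherited from the approximate solutions. To begin with, Propositions~\ref{k is the weingarten map} and~\ref{theta is the mean curvature} ensure that $K$ is the Weingarten map of $\Sigma_t$, $\theta = \tr K$ is the mean curvature, and $E(\p_t,\p_t)=0$. Moreover, for $g = -dt\otimes dt + h$ with mean curvature $\theta$ one has $\Box_g \s = -\p_t^2\s + \Delta_h\s - \theta\p_t\s$, so the scalar field equation in~\eqref{the system} is precisely $\Box_g\s - V'\circ\s = 0$; it thus remains to show $\ce \equiv 0$ and $\cm\equiv 0$. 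Subtracting~\eqref{main equation for k} from Equation~\eqref{k second order equation} and using $\theta = \tr K$ yields $G(E) = 0$, an evolution equation for $\ce$ whose principal part is $(\lie_{\cm^\sharp} h)^\sharp$; coupling this with the evolution equation for $\cm$ from Lemma~\ref{evolution equations for the constraint equations}, whose principal part is $\diver_h \ce$, produces the desired closed first-order system.

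As a preliminary step, I would extract the algebraic constraint $\tr \ce = 0$. Taking the trace of $G(E)=0$ and using $\tr I = 3$, $\tr(K\circ\ce) = \tr(\ce\circ K)$, and $\tr(\lie_{\cm^\sharp}h)^\sharp = 2\diver_h\cm$ gives
\[
\p_t\tr\ce + 2\tr(\ce\circ K) + 4\theta\tr\ce - 2\diver_h\cm = 0,
\]
while Equation~\eqref{einstein tt evolution equation}, after using $E(\p_t,\p_t)=0$ and $\Box_g\s = V'\circ\s$, becomes
\[
\p_t\tr\ce + 2\tr(\ce\circ K) - 2\diver_h\cm = 0.
\]
Subtracting yields $4\theta\tr\ce = 0$, hence $\tr\ce = 0$ since $\theta > 0$.

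The main technical step, and the hard part of the argument, is the energy estimate. I would work with the weighted energy
\[
\cf(t) := \int_{\Sigma_t}\big(|\ce|_h^2 + 2|\cm|_h^2\big)\,\mu.
\]
The crucial observation is that $\ce$ is $h$-self-adjoint (since $E$ is a symmetric $2$-tensor) and, by the previous step, trace-free. In an $h$-orthonormal frame the lowered form $\ce_{ij}$ is thus symmetric and trace-free, so $h((\lie_{\cm^\sharp}h)^\sharp,\ce) = 2(\sn_i\cm_j)\ce_{ij}$. Integration by parts on $\Sigma_t$ converts the corresponding spatial integral into $-2\int h(\cm,\diver_h\ce)\,\mu$, and with the weight $2$ on $|\cm|_h^2$ this cancels exactly against the $\diver_h\ce$ contribution produced by $\lie_{\p_t}\cm = -\theta\cm + \diver_h\ce$. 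The remaining algebraic terms all involve $K$ and $\sn\s$, which are $O(t^{-1})$ in sup norm by Subsection~\ref{preliminary estimates}; in particular, the term $-2(\diver_h\cm) I$ in the equation for $\ce$ drops out of $h(\lie_{\p_t}\ce,\ce)$ because $\tr\ce = 0$. Carrying out this cancellation carefully, and keeping track of all the algebraic lower-order contributions, should yield
\[
\frac{d}{dt}\cf(t) \leq \frac{C + C_n t^\varepsilon}{t}\,\cf(t).
\]

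To conclude, I would reuse the strategy of the proof of Proposition~\ref{theta is the mean curvature}. Writing $\ce = (\ce-\ce_n)+\ce_n$ and $\cm=(\cm-\cm_n)+\cm_n$ and combining Theorem~\ref{approximate solutions} with the global Sobolev bound~\eqref{main main energy estimate 2} of Proposition~\ref{global existence proposition}, one obtains $\cf(t)\leq C_n t^{2\beta}$ for some $\beta = \beta(N,n)$ that can be made arbitrarily large by enlarging $N$ and $n$. The differential inequality for $\cf$ then implies that $t^{-C}\cf(t)$ is non-increasing, so for $0 < t_0 < t \leq T_{N,s,n}$,
\[
\cf(t) \leq (t/t_0)^C\,\cf(t_0) \leq C_n (t/t_0)^C\, t_0^{2\beta}.
\]
Choosing $N$ large enough that $2\beta > C$ and letting $t_0 \to 0$ forces $\cf(t) \equiv 0$, hence $\ce \equiv 0$ and $\cm \equiv 0$. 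Combined with $E(\p_t,\p_t) = 0$ this gives $E \equiv 0$, so $(g,\s)$ solves the Einstein--nonlinear scalar field equations with potential $V$.
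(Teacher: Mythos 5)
Your proof is correct, and it takes a genuinely different route from the paper's. The paper differentiates the transport equation for $\cm$ once more in time to obtain a wave equation, couples it to the first-order equation for $\ce$, and then has to contend with the resulting derivative-count mismatch (one spatial derivative of $\cm$ in the $\ce$ equation, one spatial derivative of $\ce$ in the $\cm$ equation) by introducing a modified energy $\widetilde\mfe$ and an elliptic estimate. You instead observe two things the paper does not use: first, that comparing $\tr G(E)=0$ with the traced identity \eqref{einstein tt evolution equation} (using $E(\p_t,\p_t)=0$ and $\Box_g\s=V'\circ\s$) forces $4\theta\tr\ce=0$, hence $\tr\ce\equiv 0$; and second, that after this reduction the top-order coupling $(\lie_{\cm^\sharp}h)^\sharp\leftrightarrow\diver_h\ce$ is exactly skew under integration by parts on the closed slices once $|\cm|^2$ carries the weight $2$, so the dangerous cross terms cancel identically in the first-order energy $\cf=\int(|\ce|^2+2|\cm|^2)\mu$. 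This removes the need for the wave equation, the modified energy, and any $\sn$-derivative control of $\ce,\cm$; the closing Grönwall step and the super-polynomial smallness of $\cf(t_0)$ from Theorem~\ref{approximate solutions} and \eqref{main main energy estimate 2} are the same as in the paper. Two small remarks: the phrase ``the term $-2(\diver_h\cm)I$ in the equation for $\ce$'' does not literally appear in $G(E)=0$; what actually vanishes against $\ce$ is every multiple of $I$, since $h(I,\ce)=\tr\ce=0$, which also kills the $-2\tr(\ce\circ K)I$ contribution — the argument is unaffected, but you should state it this way. Also, your cancellation relies on $\ce$ being $h$-self-adjoint, which is automatic since $E$ is a symmetric $2$-tensor, but it is worth saying so explicitly when you symmetrize $(\lie_{\cm^\sharp}h)_{ij}\ce^{ij}=2(\sn_i\cm_j)\ce^{ij}$.
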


\begin{proof}
    By Proposition~\ref{theta is the mean curvature}, if $N$ is large enough, we already have $\Box_g \s = V' \circ \s$ and ${E(\p_t,\p_t) = 0}$. Hence we only need to ensure that $\ce$ and $\cm$ also vanish. We will derive a system of equations for $\ce$ and $\cm$. Since $K$ satisfies \eqref{main equation for k}, then $\ce$ satisfies
    \[
    \lie_{\p_t} \ce = 3 \ce \circ K + K \circ \ce - 2\tr( \ce \circ K )I - 2(\tr K) \ce + (\tr \ce)\big( K - (\tr K) I  \big) + \big( \lie_{\cm^{\sharp}} h \big)^{\sharp}.
    \]
    Moreover, since $E(\p_t,\p_t) = 0$ and $\Box_g \s = V' \circ \s$, by Lemma~\ref{evolution equations for the constraint equations} $\cm$ satisfies
    \[
    \lie_{\p_t} \cm = -\theta \cm + \diver_h \ce - \frac{1}{2} d(\tr \ce).
    \]
    The idea is to take $\lie_{\p_t}$ of this equation. For tensors $T_1$ and $T_2$, denote by $F(T_1,T_2)$ terms which consist of contractions of tensor products of $T_1$ and $T_2$. We have
    \[
    \lie_{\p_t} \Big( \diver_h \ce - \frac{1}{2} d(\tr \ce) \Big) = \diver_h \lie_{\p_t} \ce - \frac{1}{2} d(\tr \lie_{\p_t} \ce) + ([\lie_{\p_t},\sn] \ce)(e_k,\,\cdot\,,\omega^k).
    \]
    Furthermore, $\tr \lie_{\p_t}\ce = 2\diver_h \cm + F(K,\ce)$, implying
    \[
    \diver_h \lie_{\p_t} \ce - \frac{1}{2} d(\tr \lie_{\p_t}\ce) = \Delta_h \cm + \sric(\,\cdot\,,\cm^{\sharp}) + F(\sn K,\ce) + F(K,\sn \ce).
    \]
    We see that $\cm$ and $\ce$ satisfy a system of the form
    \begin{subequations} \label{equations for einstein scalar field tensor}
        \begin{align}
            \lie_{\p_t} \ce &= \big(\lie_{\cm^\sharp}h\big)^\sharp + F(K,\ce)\label{equation for spatial components of ce}\\
            -\lie_{\p_t}^2 \cm + \Delta_h \cm &= \theta \lie_{\p_t}\cm + (\p_t \theta) \cm - \sric(\,\cdot\,,\cm^\sharp) + F(\sn K,\ce) + F(K,\sn \ce).\label{wave eq for the momentum constraint}
        \end{align}
    \end{subequations}
    Note that \eqref{equations for einstein scalar field tensor} presents a potential loss of derivatives, due to \eqref{equation for spatial components of ce} having terms with one derivative of $\cm$ on the right-hand side. However, this can be dealt with by introducing a modified energy, similarly as for \eqref{the system}. Define the energy
    \[
    \mfe(t) := \sum_{m=0}^1 t^{2(m+1)} \|\sn^m \lie_{\p_t}\cm\|_{L^2}^2 + \sum_{m=0}^2 t^{2m} \|\sn^m \cm\|_{L^2}^2 + \sum_{m=0}^2 t^{2m}\|\sn^m \ce\|_{L^2}^2.
    \]
    In order to define the modified energy, we make the following observation. Note that, since $(\lie_{\cm^\sharp} h)^\sharp = \sn \cm^\sharp + h^{\ell m} \sn_{e_\ell} \cm(\,\cdot\,)e_m$, then
    \[
    \begin{split}
        \lie_{\p_t} \Delta_h \ce &= \Delta_h\big( \sn \cm^\sharp + h^{\ell m} \sn_{e_\ell} \cm(\,\cdot\,)e_m \big) + \cdots\\
        &= \sn \Delta_h\cm^\sharp + h^{\ell m} \sn_{e_\ell} \Delta_h\cm(\,\cdot\,)e_m + \cdots\\
        &= \lie_{\p_t} \big( \sn (\lie_{\p_t}\cm)^\sharp + h^{\ell m} \sn_{e_\ell} (\lie_{\p_t}\cm)(\,\cdot\,)e_m \big)\\
        &\quad + \sn (-\lie_{\p_t}^2\cm + \Delta_h\cm)^\sharp + h^{\ell m} \sn_{e_\ell} (-\lie_{\p_t}^2\cm + \Delta_h\cm)(\,\cdot\,)e_m + \cdots,
    \end{split}
    \]
    where $\cdots$ denotes terms that can be estimated in terms of the energy $\mfe$. Hence, by \eqref{wave eq for the momentum constraint}, we see that
    \[
    \lie_{\p_t} \big( \Delta_h \ce - \sn (\lie_{\p_t}\cm)^\sharp - h^{\ell m} \sn_{e_\ell} (\lie_{\p_t}\cm)(\,\cdot\,)e_m ) \big)
    \]
    is an object that can be estimated in terms of the energy $\mfe$. This motivates us to define the modified energy by
    \[
    \begin{split}
        \widetilde \mfe(t) &:= \sum_{m=0}^1 t^{2(m+1)}\mathbb E_m[\cm] + \sum_{m=0}^1 t^{2m}\|\sn^m \ce\|_{L^2}^2\\
        &\quad + t^4\|\Delta_h \ce - \sn (\lie_{\p_t}\cm)^\sharp - h^{\ell m} \sn_{e_\ell} (\lie_{\p_t}\cm)(\,\cdot\,)e_m\|_{L^2}^2.
    \end{split}
    \]
    Moving on to the energy estimates, by Lemmas~\ref{elliptic estimates} and \ref{commutator estimates}, we see that
    \[
    \mfe(t) \leq \big(C + C_n\langle \ln t \rangle t^\varepsilon\big)\widetilde \mfe(t).
    \]
    Moreover, by using the information in Table~\ref{table} and Lemma~\ref{estimates for the curvature} to control the coefficients of \eqref{equations for einstein scalar field tensor}, and by Propositions~\ref{transport estimate} and \ref{wave estimate},
    \[
    \frac{d}{dt} \widetilde \mfe(t) \leq \frac{1}{t}\big(C + C_n \langle \ln t \rangle t^\varepsilon\big)\mfe(t).
    \]
    We omit the details since everything works similarly as in the energy estimates for \eqref{the system}, but the current situation is much simpler since \eqref{equations for einstein scalar field tensor} is linear. As in the proof of Theorem~\ref{bootstrap theorem}, we can choose $T_{N,s,n}$ smaller if necessary to ensure that $C_n \langle \ln t \rangle t^\varepsilon \leq C$ for all $t \in (0,T_{N,s,n}]$, hence
    \[
    \begin{split}
        \frac{d}{dt}\Big(t^{-4C^2}\widetilde \mfe(t)\Big) &\leq \frac{-4C^2}{t} t^{-4C^2} \frac{\mfe(t)}{C + C_n\langle \ln t \rangle t^\varepsilon} + t^{-4C^2} \frac{1}{t}\big(C + C_n\langle \ln t \rangle t^\varepsilon\big)\mfe(t) \leq 0.
    \end{split}
    \]
    This implies
    \[
    \frac{t^{-4C^2}}{C + C_n\langle \ln t \rangle t^\varepsilon} \mfe(t) \leq t^{-4C^2} \widetilde \mfe(t) \leq t_0^{-4C^2} \widetilde \mfe(t_0)
    \]
    for $t \geq t_0$. So we are finished if we can ensure that $\widetilde \mfe(t)$ can be made to decay as $t \to 0$ as an sufficiently large power of $t$. By \eqref{main main energy estimate 2}, it follows that the relevant norms of $\ce - \ce_n$ and $\cm - \cm_n$ can be made to decay as fast as we want by choosing $N$ large enough. Moreover, by Theorem~\ref{approximate solutions}, the same is true for $\ce$ and $\cm$ by choosing $n$ large enough.
\end{proof}

\subsection{The constructed solution is smooth} \label{the constructed solution is smooth}

In this subsection we prove that the solution constructed in Theorem~\ref{global existence} is actually smooth. The proof makes use of the following uniqueness statement for solutions to \eqref{the system}.

\begin{proposition} \label{first uniqueness result}
    Let $(\Sigma,\ho,\Ko,\phio,\psio)$ be initial data on the singularity and let $V$ be an admissible potential. There is a sufficiently large positive integer $M$, depending only on the initial data and the potential, such that the following holds. Let $(g,\s)$ and $(\widetilde g, \widetilde \s)$, where $g = -dt \otimes dt + h$ and $\widetilde g = -dt \otimes dt + \widetilde h$, be $C^3 \times C^3$ solutions to the Einstein--nonlinear scalar field equations with potential $V$ on $(0,T] \times \Sigma$. Suppose that there are constants $C$ and $\delta>0$ such that
    \[
    \begin{split}
        \sum_{m=0}^2 \Big( |D^m( \bar\h - \ho )|_\ho + |D^m( \overline{\widetilde{\h}} - \ho )|_\ho \Big) &\leq Ct^\delta,\\
        \sum_{\substack{m+r = 2\\ r \leq 1}} t^r\Big( |D^m \lie_{\p_t}^r( tK - \Ko )|_\ho + |D^m \lie_{\p_t}^r( t\widetilde{K} - \Ko )|_\ho \Big) &\leq Ct^\delta,\\
        \sum_{m=0}^2 \Big(|D^m( \bar\Phi - \phio )|_\ho + |D^m( \overline{\widetilde{\Phi}} - \phio )|_\ho \Big) &\leq Ct^\delta,\\
        \sum_{m=0}^1 \Big(|D^m( \bar\Psi - \psio )|_\ho + |D^m( \overline{\widetilde{\Psi}} - \psio )|_\ho \Big) &\leq Ct^\delta,
    \end{split}
    \]
    for some constants $C$ and $\delta > 0$. Moreover, assume that for $i \neq k$,
    \[
    \begin{split}
    \sum_{m=0}^2 \left(\big|D^m\big( \bar\h(e_i,e_k) \big)\big|_\ho(x) + \big|D^m\big( \overline{\widetilde{\h}}(e_i,e_k) \big)\big|_\ho(x) \right) &\leq Ct^{\delta + (p_i+p_k-2p_1)(x)},\\
    \sum_{m=0}^2 \left(\big|D^m\big( \bar\h(e_i,e_k) \big)\big|_\ho(y) + \big|D^m\big( \overline{\widetilde{\h}}(e_i,e_k) \big)\big|_\ho(y) \right) &\leq Ct^{\delta + |p_i - p_k|(y)},\\
    \sum_{\substack{m+r = 2\\ r \leq 1}} t^r\left(\big|D^m \p_t^r\big( tK(e_i,\omega^k) \big)\big|_\ho + \big|D^m \p_t^r\big( t\widetilde K(e_i,\omega^k) \big)\big|_\ho \right) &\leq Ct^\delta \min\{ 1, t^{2(p_i - p_k)} \},
    \end{split}
    \]
    for $x \in D_+$ and $y \in D_-$. If there is a constant $C$ such that
    \[
    \begin{split}
      \sum_{m=0}^3 |D^m(h - \widetilde h)|_\ho + \sum_{m=0}^2 |D^m \lie_{\p_t}(h - \widetilde h)|_\ho &\leq Ct^M,\\
      \sum_{m=0}^2 |D^m(\s - \widetilde \s)|_\ho + \sum_{m=0}^1 |D^m \p_t(\s - \widetilde \s)|_\ho &\leq Ct^M,
    \end{split}
    \]
    then $(g,\s) = (\widetilde g, \widetilde \s)$.
\end{proposition}

\begin{proof}
    Denote by $C'$ constants depending only on the initial data and the potential, and by $C''$ constants depending also on the constant $C$ in the statement of the proposition. For an interval $[t_1,t_2] \subset (0,T]$, uniqueness follows from \cite[Lemma~24]{ringstrom_local_2024}. So we only need to prove uniqueness on an interval $(0,T']$ for $T' \leq T$. That being the case, there is no loss of generality in taking $T$ smaller if necessary to ensure that $C''\langle \ln t \rangle t^\delta \leq 1$ for all $t \in (0,T]$. Note that our assumptions imply that Lemma~\ref{estimates in tensor components} holds with $h_n$ replaced by $h$ and $m \leq 2$. Hence, similarly as in Lemmas~\ref{kn estimates} and \ref{phin estimates}, we have
    \begin{align*}
        \sum_{m=0}^2 t^m\big(\|\sn^m K\|_{L^\infty} + \|\sn^m \widetilde K\|_{L^\infty}\big) &\leq \frac{1}{t}\big(C' + C''\langle \ln t \rangle t^\delta\big),\\
        \sum_{m=0}^1 t^{m+1}\big( \|\sn^m \lie_{\p_t} K\|_{L^\infty} + \|\sn^m \lie_{\p_t} \widetilde K\|_{L^\infty} \big) &\leq \frac{1}{t}\big(C' + C''\langle \ln t \rangle t^\delta\big)\\
        \sum_{m=1}^2t^m\big(\|\sn^m \s\|_{L^\infty} + \|\sn^m \widetilde \s\|_{L^\infty}\big) &\leq C''\langle \ln t \rangle t^\delta,\\
        \sum_{m=0}^1 t^m\big(\|\sn^m \p_t \s\|_{L^\infty} + \|\sn^m \p_t \widetilde\s\|_{L^\infty}\big) &\leq \frac{1}{t}\big(C' + C''t^\delta\big),
    \end{align*}
    where the $L^\infty$ norms and $\sn$ are taken with respect to $h$. Define the variables
    \[
    \begin{split}
        \deh := h - \widetilde h, \quad \deh^{-1} := h^{-1} - \widetilde h^{-1}, \quad \dk := K - \widetilde K, \quad \dt := \theta - \widetilde \theta, \quad \dep := \s - \widetilde \s.
    \end{split}
    \]
    Given $M'$, there is $M$ large enough such that 
    \[
    \|\deh\|_{H^2} + \|\deh^{-1}\|_{H^2} + \|\dk\|_{H^1} + \|\dt\|_{H^2} + \|\dep\|_{H^2} + \|\p_t \dep\|_{H^1} \leq C''t^{M'}.
    \]
    Given $M''$, we can choose $M$ larger if necessary such that
    \[
    \|\ric_h - \ric_{\widetilde h}\|_{L^2} \leq C''t^{M''}.
    \]
    Then, from Equation~\eqref{evolution equation for k} for $K$ and $\widetilde K$,
    \[
    \|\lie_{\p_t} \dk\|_{L^2} \leq C'\max\{t^{M''},t^{M'-2}\}.
    \]
    Note that the $\delta$ variables solve the equations \eqref{wave equations for delta objects} and \eqref{transport equations for delta objects} with $h_n$ replaced by $\widetilde h$, etc. Moreover, the inhomogeneous terms vanish since $(\widetilde g, \widetilde \s)$ solves Einstein's equations. Hence, if we define the energy
    \[
    \begin{split}
        \se(t) &:= t^2 \| \lie_{\p_t} \dk \|_{L^2}^2 + \sum_{m = 0}^1 t^{2m} \| \sn^m \dk \|_{L^2}^2 + \sum_{m = 0}^2 t^{2m} \| \sn^m \dt \|_{L^2}^2\\
        &\quad + \sum_{m = 0}^1 t^{2m} \| \sn^m \p_t \dep \|_{L^2}^2 + \sum_{m = 0}^2 t^{2(m-1)} \| \sn^m \dep \|_{L^2}^2\\
        &\quad + \sum_{m = 0}^2 t^{2(m-1)} \big( \| \sn^m \deh \|_{L^2}^2 + \| \sn^m \deh^{-1} \|_{L^2}^2 \big),
    \end{split}
    \]
    we can run the energy estimates exactly as in the proof of Theorem~\ref{bootstrap theorem}, including the introduction of a modified energy, to obtain
    \[
    t^{-N} \se(t) \leq C't_0^{-N} \se(t_0)
    \]
    for $t \geq t_0$ and $N$ large enough, depending only on the initial data and the potential. By choosing $M$ larger if necessary, we can ensure that $t_0^{-N} \se(t_0) \to 0$ as $t_0 \to 0$, implying $\se(t) \equiv 0$. We conclude that $g = \widetilde g$ and $\s = \widetilde \s$ in $(0,T]$.
\end{proof}

\begin{lemma} \label{asymptotics for the actual solution}
    Let $(g,\s)$, with $g = -dt \otimes dt + h$ be the solution to the Einstein--nonlinear scalar field equations with potential $V$ given by Theorem~\ref{global existence}. If $N$ is large enough, then the following estimates hold,
    \begin{align*}
        \big|D^m\big((\bar\h - \ho)(e_i,e_k)\big)\big|_\ho(x) &\leq C\langle \ln t \rangle^{m+2} t^{2\varepsilon + (p_i + p_k - 2p_1)(x)}, &m \leq s-1;\\
        \big|D^m\big((\bar\h - \ho)(e_i,e_k)\big)\big|_\ho(y) &\leq C\langle \ln t \rangle^{m+2} t^{2\varepsilon + |p_i - p_k|(y)}, &m \leq s-1;\\
        \big|D^m \big( (tK - \Ko)(e_i,\omega^k)\big)\big|_\ho &\leq C\langle \ln t \rangle^{m+2} t^{2\varepsilon} \min\{1,t^{2(p_i - p_k)}\},  &m \leq s-2;\\
        t\big|D^m \p_t\big( (tK - \Ko)(e_i,\omega^k)\big)\big|_\ho &\leq C\langle \ln t \rangle^{m+2} t^{2\varepsilon} \min\{1,t^{2(p_i - p_k)}\}, &m \leq s-3;\\
        |D^m(\bar\Psi - \psio)|_\ho &\leq C\langle \ln t \rangle^{m+2}t^{2\varepsilon}, &m \leq s-2;\\
        |D^m(\bar\Phi - \phio)|_\ho &\leq C\langle \ln t \rangle^{m+3}t^{2\varepsilon}, &m \leq s-1;
    \end{align*}
    for $t \in (0,T_{N,s,n}]$, $x \in D_+$ and $y \in D_-$, where the constant $C$ depends only on the initial data, the potential and $s$.
\end{lemma}

\begin{proof}
    We only prove the estimate for $\bar\h$, since the rest are similar. Let $|\alpha| = m \leq s-1$, then
    \[
    \begin{split}
        e_\alpha (\bar\h - \bar\h_n)(e_i,e_k) &= e_\alpha( t^{-p_i-p_k}(h-h_n)_{ik} )\\
        &= \sum (-\ln t)^r e_{\beta_1}(p_i+p_k) \cdots e_{\beta_r}(p_i+p_k) t^{-p_i-p_k} e_\gamma(h-h_n)_{ik},
    \end{split}
    \]
    where the sum is over appropriate multiindices such that $|\beta_1| + \cdots + |\beta_r| + |\gamma| = m$. We can use Lemmas~\ref{estimates in tensor components}, \ref{comparison of h and hn norms} and Sobolev embedding to estimate the right-hand side, which yields
    \[
    |e_\alpha(\bar\h - \bar\h_n)(e_i,e_k)| \leq C\langle \ln t \rangle^{2m} t^{m(p_1-1+\varepsilon)-5/2} \|\deh\|_{H^{m+2}}.
    \]
    Since $\bar\h_n$ satisfies the desired estimate, by \eqref{main main energy estimate}, it is enough to take $N$ larger if necessary to ensure that the conclusion holds. 
\end{proof}

\begin{theorem} \label{there is a smooth solution}
    There are $N$ and $n$ large enough such that the following holds. Let $(g,\s)$, with $g = -dt \otimes dt + h$, be the solution to the Einstein--nonlinear scalar field equations with potential $V$ constructed in Theorem~\ref{global existence} with $s = 5$ and $N$ and $n$ as above. Then $(g,\s)$ is smooth. Moreover, for every non-negative integer $m$, there are constants $C_m$ such that the estimates
    \[
    \begin{split}
        \big|D^m\big((\bar\h - \ho)(e_i,e_k)\big)\big|_\ho(x) &\leq C_m\langle \ln t \rangle^{m+2} t^{2\varepsilon + (p_i + p_k - 2p_1)(x)},\\
        \big|D^m\big((\bar\h - \ho)(e_i,e_k)\big)\big|_\ho(y) &\leq C_m \langle \ln t \rangle^{m+2} t^{2\varepsilon + |p_i - p_k|(y)},\\
        \big|D^m \big((tK - \Ko)(e_i,\omega^k)\big)\big|_\ho + t\big|D^m \p_t\big((tK - \Ko)(e_i,\omega^k)\big)\big|_\ho &\leq C_m \langle \ln t \rangle^{m+2} t^{2\varepsilon} \min\{1,t^{2(p_i - p_k)}\},\\
        |D^m(\bar\Psi - \psio)|_\ho &\leq C_m \langle \ln t \rangle^{m+2} t^{2\varepsilon},\\
        |D^m(\bar\Phi - \phio)|_\ho &\leq C_m \langle \ln t \rangle^{m+3} t^{2\varepsilon}
    \end{split}
    \]
    hold for $t \in (0,T_{N,5,n}]$, $x \in D_+$ and $y \in D_-$.
\end{theorem}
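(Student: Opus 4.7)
The plan is to upgrade the finite-regularity solution $(g,\s)$ produced by Theorem~\ref{global existence} with $s=5$ to a smooth one by comparing it, via Proposition~\ref{first uniqueness result}, with solutions of arbitrarily high Sobolev regularity, and then propagating the extra regularity forward in time using the local well-posedness theory of \cite{ringstrom_local_2024}. I fix $N$ and $n$ once and for all, so large that the hypotheses of Theorem~\ref{global existence} and Lemma~\ref{asymptotics for the actual solution} are satisfied and, moreover, the energy bound \eqref{main main energy estimate} combined with Sobolev embedding yields a decay of $h-h_n$, $K-K_n$ and $\s-\s_n$ (and their first time derivatives) exceeding the threshold $M$ appearing in Proposition~\ref{first uniqueness result}.

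For each integer $s'\geq 5$, I apply Theorem~\ref{global existence} again with parameters $(N',s',n')$ chosen large enough (in particular $n'\geq n$ and $N'$ so large that $N'+s'$ beats $M$ with room to spare) to obtain a further solution $(g',\s')=(-dt\otimes dt+h',\s')$ on an interval $(0,T']\subset(0,T_{N,5,n}]$. Since the higher-index approximate solution $(h_{n'},K_{n'},\theta_{n'},\s_{n'})$ inherits all the estimates of $(h_n,K_n,\theta_n,\s_n)$, both $(g,\s)$ and $(g',\s')$ are controlled against the same reference. The triangle inequality combined with \eqref{main main energy estimate} for the two solutions and with Sobolev embedding then gives
\[
\textstyle\sum_{m=0}^{3}|D^m(h-h')|_\ho + \sum_{m=0}^{2}|D^m\lie_{\p_t}(h-h')|_\ho + \sum_{m=0}^{2}|D^m(\s-\s')|_\ho + \sum_{m=0}^{1}|D^m\p_t(\s-\s')|_\ho \leq C\, t^{M}
\]
on $(0,T']$, which is exactly the smallness demanded by Proposition~\ref{first uniqueness result}. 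The asymptotic hypotheses on $\bar\h$, $tK$, $\bar\Psi$ and $\bar\Phi$ required by that proposition hold, uniformly in $s'$, by Lemma~\ref{asymptotics for the actual solution}. Proposition~\ref{first uniqueness result} therefore yields $(g',\s')=(g,\s)$ on $(0,T']$, so that $(g,\s)$ is of class $H^{s'+1}\times H^{s'+1}$ there.

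To extend this high regularity to the entire interval $(0,T_{N,5,n}]$ I pick any $t_0\in(0,T']$ and invoke \cite[Corollary~4]{ringstrom_local_2024}: the induced data of $(g,\s)$ at $t_0$ are smooth, so there is a unique $C^{s'+1}$ development on a maximal interval, which by uniqueness at the $H^5$ level must agree with $(g,\s)$. A standard continuation argument based on the already available $H^5$ bounds (these provide a uniform $C^2$ control on the coefficients of the wave--transport system on each $[t_0,T_{N,5,n}]$) shows that this $C^{s'+1}$ development persists throughout $[t_0,T_{N,5,n}]$; letting $t_0\to 0$ gives $(g,\s)\in C^{s'+1}$ on $(0,T_{N,5,n}]$. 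Since $s'$ is arbitrary, $(g,\s)$ is smooth. The asymptotic estimates for every $m$ are then immediate from Lemma~\ref{asymptotics for the actual solution} applied with $s$ arbitrarily large, since the constants there depend only on the initial data, the potential and $s$.

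The main obstacle is verifying the quantitative hypothesis of Proposition~\ref{first uniqueness result} against a fixed $(g,\s)$: although $(g,\s)$ only obeys the $s=5$ version of \eqref{main main energy estimate}, so one might fear that $\|h-h'\|_{H^3}$ cannot be pushed below $t^M$ uniformly, the rescue is that both $(g,\s)$ and $(g',\s')$ are independently compared with the \emph{same} approximate solution, whose order of approximation $N+5$ and $N'+s'$ can each be made as large as desired; this is precisely the flexibility that allows the $t^M$ smallness to be achieved.
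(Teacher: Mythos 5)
Your overall strategy—comparing the fixed $s=5$ solution to higher-regularity solutions via Proposition~\ref{first uniqueness result} and then propagating regularity forward in $t$—is the same as the paper's, and the second half of the argument (invoking local theory near $t_0$ and continuation on $[t_0,T_{N,5,n}]$) is essentially the paper's appeal to \cite[Lemma~27]{ringstrom_local_2024}. However, there is a genuine gap in the verification of the hypotheses of Proposition~\ref{first uniqueness result}.

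You assert that ``both $(g,\s)$ and $(g',\s')$ are independently compared with the \emph{same} approximate solution,'' but this is false as written: $(g,\s)$ obeys \eqref{main main energy estimate} relative to $(h_n,K_n,\theta_n,\s_n)$, whereas $(g',\s')$ obeys it relative to $(h_{n'},K_{n'},\theta_{n'},\s_{n'})$ with $n'\neq n$ (and in general one cannot take $n'=n$, since $n'$ must be at least $n_{N',s'}$). To close your triangle inequality and obtain $|D^m(h-h')|_\ho \leq Ct^M$ you need the cross term $h_n-h_{n'}$ (and likewise $\s_n-\s_{n'}$) to decay like $t^M$. The naive bound coming from the fact that both $\bar\h_n$ and $\bar\h_{n'}$ converge to $\ho$ at rate $t^{2\varepsilon}$ only gives $|D^m(h_n-h_{n'})|_\ho \lesssim t^{2\varepsilon}\cdot t^{2p_{\max}}$, nowhere near the needed $t^M$ for $M$ large. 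The correct fact is that $h_n$ and $h_{n'}$ get closer at a rate improving with $\min\{n,n'\}$, namely $O(t^{2\min\{n,n'\}\varepsilon})$ up to logarithms; this is the content of Lemmas~\ref{kn become closer}, \ref{hn become closer} and \ref{phin become closer} (reflected in the conclusions collected in Proposition~\ref{construction of the approximate solutions}). In the paper this is made explicit by fixing an $n_0$ depending on $M$ so that $|D^m(h_n-h_{n'})|_\ho\leq Ct^M$ for all $n,n'\geq n_0$, and then insisting $n\geq\max\{n_0,n_{N,5}\}$ and $n'\geq\max\{n_0,n_{N',s'}\}$. Without this Cauchy-type statement for the sequence of approximate solutions, your argument does not supply the $t^M$ smallness that Proposition~\ref{first uniqueness result} requires, so the conclusion $(g',\s')=(g,\s)$ is not justified.
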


\begin{proof}
    Let $M$ be as in Proposition~\ref{first uniqueness result}, then there is an $n_0$ large enough such that if $n , n' \geq n_0$, then
    \[
    \begin{split}
        \sum_{m=0}^3 |D^m(h_n - h_{n'})|_\ho + \sum_{m=0}^2 |D^m \lie_{\p_t}(h_n - h_{n'})|_\ho &\leq Ct^M,\\
        \sum_{m=0}^2 |D^m(\s_n - \s_{n'})|_\ho + \sum_{m=0}^1 |D^m \p_t(\s_n - \s_{n'})|_\ho &\leq Ct^M.
    \end{split}
    \]
    Moreover, for $N_0$ large enough in Theorem~\ref{global existence}, if $n \geq n_{N_0,s}$, then \eqref{main main energy estimate} implies
    \[
    \begin{split}
        \sum_{m=0}^3 |D^m(h - h_n)|_\ho + \sum_{m=0}^2|D^m \lie_{\p_t}(h - h_n)|_\ho &\leq Ct^M,\\
        \sum_{m=0}^2 |D^m(\s - \s_n)|_\ho + \sum_{m=0}^1 |D^m \p_t(\s - \s_n)|_\ho &\leq Ct^M,
    \end{split}
    \]
    for $t \in (0,T_{N_0,s,n}]$. Now denote by $(g,\s)$ the solution obtained for $s = 5$, an appropriate $N \geq N_0$ and $n \geq \max\{n_0,n_{N,5}\}$ on $(0,T = T_{N,5,n}]$. Let $s \geq 5$ and $\widetilde N \geq N_0$ large enough, and denote by $(\widetilde g, \widetilde \s)$ the corresponding solution obtained for $ \widetilde n \geq \max\{n_0,n_{\widetilde N,s}\}$ on $(0, \widetilde T = T_{\widetilde N,s,\widetilde n}]$. Then, by Lemma~\ref{asymptotics for the actual solution}, the hypotheses of Proposition~\ref{first uniqueness result} are satisfied in $(0,\min\{T, \widetilde T\}]$, implying $(g,\s) = (\widetilde g, \widetilde \s)$ there. That is, $(h,K,\lie_{\p_t}K,\s,\p_t \s)$ is $H^{s+1} \times H^s \times H^{s-1} \times H^{s+1} \times H^s$ for all $t \in (0,\min\{T, \widetilde T\}]$. But then, by \cite[Lemma~27]{ringstrom_local_2024}, since $(g,\s)$ remains $C^3 \times C^3$ in all of $(0,T]$, it follows that $(g,\s)$ has the improved regularity on the original interval $(0,T]$. Since $s$ was arbitrary, then $(g,\s)$ is smooth for all $t \in (0,T]$. Regularity in time now follows from the evolution equations \eqref{the system}. Finally, the estimates follow by applying Lemma~\ref{asymptotics for the actual solution}, after extending the energy estimate \eqref{main main energy estimate}, for each $s$, to all of $(0,T]$.
\end{proof}

\subsection{Asymptotics for the expansion normalized induced metric} \label{asymptotics for the expansion normalized metric}

The solution $(g,\s)$ given in Theorem~\ref{there is a smooth solution} satisfies the estimates 
\[
|D^m(t\theta - 1)|_\ho \leq C_m\langle \ln t \rangle^{m+2}t^{2\varepsilon}.
\]
This implies that the required convergence of $\K$, $\Phi$ and $\Psi$ to the initial data on the singularity in Theorem~\ref{main existence theorem} is satisfied by $(g,\s)$. It remains to show that $\h$ also converges, which is the purpose of the present subsection. The issue lies in controlling the eigenspaces of $\K$ as $t \to 0$.

\begin{proposition} \label{construction of the frame}
    Let $(\Sigma,\ho,\Ko,\phio,\psio)$ be initial data on the singularity and suppose we have a Lorentzian metric $g = -dt \otimes dt + h$ on $(0,T] \times \Sigma$. Assume that, for every non-negative integer $m$, there are constants $C_m$ and $\delta>0$ such that 
    \[
    |D^m(\K - \Ko)|_\ho \leq C_mt^\delta.
    \]
    Then, by taking $T$ smaller if necessary, there are constants $C_m$ such that the following holds. The eigenvalues of $\K$ are everywhere distinct. Let $q_1 < q_2 < q_3$ denote the eigenvalues of $\K$. Then there is a frame $\{\widetilde e_i\}$ on $\Sigma$, satisfying $\K(\widetilde e_i) = q_i \widetilde e_i$ and $|\widetilde e_i|_\ho = 1$, such that
    \[
    |D^m(q_i - p_i)|_\ho + |D^m(\widetilde e_i - e_i)|_\ho \leq C_m t^\delta.
    \]
    Moreover, if for $i \neq k$,
    \[
    \big|D^m\big( (\K - \Ko)(e_i,\omega^k) \big)\big|_\ho \leq C_mt^\delta \min\{1,t^{2(p_i-p_k)}\},
    \]
    then
    \[
    |D^m( \omega^k(\widetilde e_i) - \delta_i^k )|_\ho \leq C_m t^\delta \min\{1,t^{2(p_i - p_k)}\}.
     \]
\end{proposition}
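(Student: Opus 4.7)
The plan is to use perturbation theory of isolated eigenvalues via contour integrals of the resolvent. Since $\Sigma$ is compact and the $p_i$ are continuous with pointwise distinct values, there is a uniform spectral gap $\eta := \inf_\Sigma \min_{i \neq k} |p_i - p_k| > 0$. The hypothesis $|\K - \Ko|_\ho \leq C_0 t^\delta$ implies that for $T$ small enough, at every $(t,x)$ the spectrum of $\K$ lies in the disjoint union of the three open disks of radius $\eta/3$ around the $p_i(x)$, forcing three distinct real eigenvalues $q_1 < q_2 < q_3$. This settles the distinctness claim.

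For each $i$, fix a positively-oriented contour $\gamma_i \subset \mathbb{C}$ of radius $\eta/3$ encircling $p_i$ alone. I would define the spectral projection onto the $q_i$-eigenspace by
$$P_i := \frac{1}{2\pi i} \oint_{\gamma_i} (\lambda I - \K)^{-1}\, d\lambda,$$
a smooth rank-one $(1,1)$-tensor on $\Sigma$. Set $\widetilde v_i := P_i(e_i)$ and $\widetilde e_i := \widetilde v_i/|\widetilde v_i|_\ho$; the eigenvalue is then recovered via $\K(\widetilde e_i) = q_i \widetilde e_i$, say from $q_i = \omega^i(\K(\widetilde v_i))/\omega^i(\widetilde v_i)$ once one verifies $\omega^i(\widetilde v_i)$ is close to $1$. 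The $C^m$ estimates come from the Neumann expansion
$$(\lambda I - \K)^{-1} = \sum_{n \geq 0} R_0(\lambda)\bigl[(\K - \Ko)R_0(\lambda)\bigr]^n, \qquad R_0(\lambda) := (\lambda I - \Ko)^{-1} = \sum_j \frac{e_j \otimes \omega^j}{\lambda - p_j},$$
which converges uniformly on $\gamma_i$ for $T$ small. Integrating term by term and evaluating residues gives the explicit identity
$$P_i = e_i \otimes \omega^i + \sum_{k \neq i} \frac{(\K - \Ko)(e_i,\omega^k)}{p_i - p_k}\, e_k \otimes \omega^i + \sum_{j \neq i} \frac{(\K - \Ko)(e_j,\omega^i)}{p_i - p_j}\, e_i \otimes \omega^j + R_i,$$
where $R_i$ collects terms quadratic and higher in $\K - \Ko$. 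From this, both $|D^m(\widetilde e_i - e_i)|_\ho \leq C_m t^\delta$ and $|D^m(q_i - p_i)|_\ho \leq C_m t^\delta$ follow directly from the $C^m$ control of $\K - \Ko$, using that $R_0(\lambda)$ and each $e_j \otimes \omega^k$ are time-independent, so all derivatives $D^m$ fall on factors of $\K - \Ko$.

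The main obstacle I expect is the off-diagonal improvement under the sharper hypothesis. The leading contribution to $\omega^k(\widetilde v_i)$ for $k \neq i$ is $(p_i - p_k)^{-1}(\K - \Ko)(e_i,\omega^k)$, which by assumption satisfies the required $t^\delta \min\{1,t^{2(p_i - p_k)}\}$ bound. The bookkeeping for the higher-order terms in $R_i$ requires checking that in a product of the form $\prod_{r=1}^n (\K - \Ko)(e_{j_{r-1}}, \omega^{j_r})$ arising from the $n$-th Neumann iterate, the off-diagonal bounds compose favourably along the chain of indices $j_0 = i, j_1, \ldots, j_n = k$, yielding at least the desired bound for $(i,k)$ with at least $n-1$ additional factors of $t^\delta$; for $n \geq 2$ this is then strictly better than the leading term. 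A parallel bookkeeping argument, applied to $q_i = p_i + \omega^i((\K - \Ko)(\widetilde v_i))/\omega^i(\widetilde v_i) + \textrm{lower order}$, gives the corresponding estimate for $q_i - p_i$, completing the proof.
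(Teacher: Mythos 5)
Your proposal is correct and takes essentially the same route as the paper: contour-integral eigenprojections $P_i$, the Neumann series for the resolvent about $\Ko$, and the observation that the off-diagonal weights $\min\{1,t^{2(p_{j_{r-1}}-p_{j_r})}\}$ compose submultiplicatively along index chains (since $\max\{a,0\}+\max\{b,0\}\geq\max\{a+b,0\}$) so that each Neumann iterate inherits the $(i,k)$ weight with an extra $t^\delta$. The only cosmetic difference is that the paper derives the eigenvalue estimates and smoothness of $q_i$ via the implicit function theorem applied to the characteristic polynomial, whereas you recover $q_i$ directly from $q_i = \omega^i(\K\widetilde v_i)/\omega^i(\widetilde v_i) = p_i + \omega^i((\K-\Ko)\widetilde v_i)/\omega^i(\widetilde v_i)$; both give the same bounds.
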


\begin{proof}
    By \cite[Equation~(3.6), p. 192]{stewart_matrix_1990}, there are continuous parametrizations $q_1 \leq q_2 \leq q_3$ of the eigenvalues of $\K$. Moreover, there is a constant $C$, depending only on the initial data, such that
    \[
    |q_i - p_i| \leq Ct^\delta.
    \]
    Since the $p_i$ are distinct, by taking $T$ smaller if necessary, we can ensure that the $q_i$ are everywhere distinct. 

    Next, we obtain estimates for the spatial derivatives of $q_i$. Let $f(x,\lambda) := \det(\K_x - \lambda I)$ for $x \in (0,T] \times \Sigma$. That is, $f$ is the characteristic polynomial of $\K$. Note that $f$ is smooth. Moreover, if $i$, $k$ and $\ell$ are distinct,
    \[
    \p_\lambda f(x,q_i(x)) = -( q_k(x) - q_i(x) )( q_\ell(x) - q_i(x) ) \neq 0.
    \]
    Thus, by the implicit function theorem, $q_i$ is smooth at $x$. Since $x$ was arbitrary, then $q_i$ is smooth. Furthermore,
    \begin{equation} \label{derivatives of the eigenvalues}
        e_a(q_i) = \frac{e_a\big( \det(\K - \lambda I) \big)\big|_{\lambda = q_i}}{(q_k - q_i)(q_\ell - q_i)}.
    \end{equation}
    Focusing on the numerator, we see that
    \[
    \begin{split}
        e_a\big( \det(\K - \lambda I) \big)\big|_{\lambda = q_i} &= e_a \big(\K(e_i,\omega^i)\big)( \K(e_k,\omega^k) - q_i )( \K(e_\ell,\omega^\ell) - q_i )\\
        &\quad + (\K(e_i,\omega^i) - q_i) e_a\big(\K(e_k,\omega^k)\big)( \K(e_\ell,\omega^\ell) - q_i )\\
        &\quad + (\K(e_i,\omega^i) - q_i)(\K(e_k,\omega^k) - q_i)e_a\big(\K(e_\ell,\omega^\ell)\big) + O(t^\delta)\\
        &= e_a \big(\K(e_i,\omega^i)\big)( \K(e_k,\omega^k) - q_i )( \K(e_\ell,\omega^\ell) - q_i ) + O(t^\delta)\\
        &= e_a\big(\K(e_i,\omega^i)\big)(q_k - q_i)(q_\ell - q_i) + O(t^\delta),
    \end{split}
    \]
    where we have used that (no summation on $a$)
    \[
    |\K(e_a,\omega^a) - q_a| \leq |\K(e_a,\omega^a) - p_a| + |p_a - q_a| \leq Ct^\delta.
    \]
    By taking $T$ smaller if necessary, there is an $\eta > 0$, depending only on the initial data, such that for $a \neq b$, $|q_a - q_b| \geq \eta$. It now follows from \eqref{derivatives of the eigenvalues} that
    \[
    |e_a(q_i - p_i)| \leq Ct^\delta.
    \]
    The estimates for the higher order spatial derivatives of $q_i-p_i$ follow similarly by iteratively differentiating \eqref{derivatives of the eigenvalues}.

    We now turn our attention to the eigenspaces. Fix $x_0 \in (0,T] \times \Sigma$. Let $\Gamma$ be a simple closed positively oriented curve in the resolvent set $\rho(\K_{x_0}) \subset \mathbb{C}$ of $\K_{x_0}$, enclosing $q_j(x_0)$ but no other eigenvalues of $\K_{x_0}$. By continuity, there is a neighborhood $U$ of $x_0$ such that $q_k(x)$ does not touch $\Gamma$ for all $k$ and all $x \in U$, in particular, $q_j(x)$ are enclosed in $\Gamma$. Then the eigenprojection of $\K_x$ associated with the eigenvalue $q_j(x)$ is given by
    \begin{equation} \label{formula for eigenprojections}
        P_j(x) = -\frac{1}{2\pi i} \int_\Gamma R(z,x)dz,
    \end{equation}
    where $R(z,x) := (\K_x - zI)^{-1}$, for $z \in \rho(\K_x)$, denotes the resolvent of $\K_x$ at $z$ (see \cite[Equation~(5.25) in Chapter~I]{kato_perturbation_1995}). Note that for $z \in \rho(\K_x)$, $R(z,x)$ is smooth as a $(1,1)$-tensor on $U$ (in the complexified $(1,1)$-tensor bundle). We conclude that the eigenprojections $P_j$ are smooth $(1,1)$-tensors on $U$. Since $x_0$ was arbitrary, then $P_j$ are smooth on $(0,T] \times \Sigma$. Similarly, the $P_j$ are continuous on $[0,T] \times \Sigma$. 

    Now we define the frame $\{\widetilde e_j\}$ by
    \[
    \widetilde e_j(t) := \frac{P_j(t)(e_j)}{|P_j(t)(e_j)|_\ho}.
    \]
    Note that, by taking $T$ smaller if necessary, $\{\widetilde e_j\}$ is a well defined smooth frame on $\Sigma$ for all $t \in (0,T]$. Indeed, since $P_j(0)(e_j) = e_j$, we can ensure that $P_j(t)(e_j) \neq 0$ for all small enough $t$. In particular, $\widetilde e_j \to e_j$ as $t \to 0$.

    In order to obtain estimates, we need first to obtain estimates for the resolvent. Similarly as above, for $x_0 \in \Sigma$ let $\Gamma$ be a simple closed positively oriented curve in $\rho(\Ko_{x_0})$, enclosing $p_j(x_0)$ but no other eigenvalues of $\Ko_{x_0}$. Then there is a neighborhood $U_{x_0} \subset \Sigma$ and a $t_{x_0} > 0$, such that $q_k(t,x)$ does not touch $\Gamma$ for all $k$ and all $(t,x) \in [0,t_{x_0}] \times U_{x_0}$. Let $\Ro(z) := (\Ko - zI)^{-1}$ denote the resolvent of $\Ko$ at $z$. Then
    \[
    \K - zI = ( I + (\K - \Ko)\Ro(z) )(\Ko - zI).
    \]
    Now take $t_{x_0}$ small enough so that $|(\K - \Ko)\Ro(z)|_\ho \leq \alpha < 1$ for all $(t,x) \in [0,t_{x_0}] \times U_{x_0}$ and all $z \in \Gamma$. Then, if we omit the spatial variable, we have
    \[
    R(z,t) = \Ro(z)( I + (\K_t - \Ko)\Ro(z) )^{-1} = \sum_{n=0}^\infty \Ro(z)[ (\Ko - \K_t)\Ro(z) ]^n.
    \]
    Therefore,
    \begin{equation} \label{resolvent equation}
    \begin{split}
        R(z,t) - \Ro(z) &= \Ro(z)(\Ko - \K_t)\Ro(z) \sum_{n=0}^\infty [(\Ko - \K_t) \Ro(z)]^n\\
        &= \Ro(z)(\Ko - \K_t)\Ro(z) (I - (\Ko - \K_t) \Ro(z))^{-1}.
    \end{split}
    \end{equation}
    This means that there is a constant $C(x_0)$, such that 
    \[
    |R(z,t) - \Ro(z)|_\ho \leq C(x_0)t^\delta
    \]
    for all $(t,x) \in [0,t_{x_0}] \times U_{x_0}$ and all $z \in \Gamma$. Now from \eqref{formula for eigenprojections}, if we denote $\mathring{P}_j = P_j(0)$, we get
    \[
    |P_j(t) - \mathring{P}_j|_\ho \leq \left| \frac{1}{2\pi i} \int_\Gamma R(z,t) - \Ro(z)dz \right|_\ho \leq C(x_0)t^\delta
    \]
    for all $(t,x) \in [0,t_{x_0}] \times U_{x_0}$. In order to estimate the spatial derivatives note that, by uniform convergence, all the factors on the right-hand side of \eqref{resolvent equation} have bounded derivatives of all orders. Thus,
    \[
    |D^m[R(z,t) - \Ro(z)]|_\ho \leq C_m(x_0)t^\delta.
    \]
    Implying
    \[
    |D^m[P_j(t) - \mathring{P}_j]|_\ho \leq C_m(x_0)t^\delta.
    \]
    By compactness of $\Sigma$, we can now find constants $C_m$ (independent of $x_0$) and a small enough $T > 0$ such that the estimates above hold, with $C(x_0)$ replaced by $C$, in all of $(0,T] \times \Sigma$. This implies the desired result for $\widetilde e_j$, except for the off-diagonal improvements.
    
    We continue with the off-diagonal improvements. For that purpose, first note that
    \[
    \Ro(z)(e_j,\omega^k) = \delta_{j}^k(p_j - z)^{-1}.
    \]
    Hence
    \[
    \begin{split}
        &\Ro(z)[ (\Ko - \K)\Ro(z) ]^n(e_j,\omega^k)\\
        &= \sum_{\ell_1,\ldots,\ell_{n-1}} \frac{(\Ko - \K)(e_j,\omega^{\ell_1}) (\Ko - \K)(e_{\ell_1},\omega^{\ell_2}) \cdots (\Ko - \K)(e_{\ell_{n-2}},\omega^{\ell_{n-1}}) (\Ko - \K)(e_{\ell_{n-1}},\omega^k)}{(p_j - z)(p_{\ell_1} - z) \cdots (p_{\ell_{n-1}} - z)(p_k - z)}.  
    \end{split}
    \]
    To estimate the numerator, we make the following observation. For $j$, $k$ and $\ell$ distinct,
    \[
    \begin{split}
        &|(\Ko - \K)(e_j,\omega^m)(\Ko - \K)(e_m,\omega^k)|\\
        &\hspace{3cm}\leq B^2t^{2\delta} \sum_m \min\{1,t^{2(p_j - p_m)}\} \min\{1,t^{2(p_m - p_k)}\}\\
        &\hspace{3cm}\leq B^2t^{2\delta}\big( \min\{1,t^{2(p_j - p_k)}\} + \min\{1,t^{2(p_j - p_\ell)}\} \min\{1, t^{2(p_\ell - p_k)}\} \big)\\
        &\hspace{3cm}\leq B^2t^{2\delta} \min\{1,t^{2(p_j - p_k)}\},
    \end{split}
    \]
    where $B$ denotes the constant which comes from estimating $\Ko - \K$. So if we work in a neighborhood $U$ of a point $x_0 \in \Sigma$ with an appropriate $\Gamma$ enclosing $p_j(x_0)$ as before, then there is a constant $C(x_0)$ such that
    \[
    |\Ro(z)[ (\Ko - \K)\Ro(z) ]^n(e_j,\omega^k)| \leq C(x_0)^{n+1}B^nt^{n\delta} \min\{1,t^{2(p_j - p_k)}\}. 
    \]
    This implies
    \[
    \begin{split}
        \big|\big(R(z,t) - \Ro(z)\big)(e_j,\omega^k)\big| &\leq \sum_{n=1}^\infty C(x_0)^{n+1}B^nt^{n\delta} \min\{1,t^{2(p_j - p_k)}\}\\
        &= \frac{C(x_0)^2B}{1-C(x_0)Bt^\delta} t^\delta \min\{1,t^{2(p_j - p_k)}\} 
    \end{split}
    \]
    for all $t < t_{x_0}$ small enough. Now, by \eqref{formula for eigenprojections} and compactness of $\Sigma$, we can find a constant $C$ (independent of $x_0$) such that, by taking $T$ smaller if necessary, we have
    \[
    |(P_j(t) - \mathring{P}_j)(e_k,\omega^\ell)| \leq Ct^\delta \min\{1,t^{2(p_k - p_\ell)}\}. 
    \]
    The estimates for the derivatives follow similarly as before. The result for $\widetilde e_j$ now follows from its definition.
\end{proof}

Before proceeding with the proof of Theorem~\ref{main existence theorem}, we need a way to estimate expressions involving the mean curvature raised to powers which depend linearly on the eigenvalues of $\K$.

\begin{lemma} \label{powers of theta}
    Let $(\Sigma,\ho,\Ko,\phio,\psio)$ be initial data on the singularity and let $g = -dt \otimes dt + h$ be a Lorentzian metric on $(0,T] \times \Sigma$. Suppose that there are positive constants $\delta$, $\sigma$ and $C_m$, for every non-negative integer $m$, such that
    \[
    |D^m(t\theta - 1)|_\ho \leq C_mt^\sigma, \qquad |D^m(q_i-p_i)|_\ho \leq C_mt^\delta,
    \]
    where $q_1\leq q_2\leq q_3$ denote the eigenvalues of $\K$. Let $L(q_1,q_2,q_3)$ be a linear function of the $q_i$ and  $b \in \R$. Then, by taking $T$ smaller if necessary, there are constants $C_m$ such that
    \[
    |D^m(\theta^{L(q_1,q_2,q_3) + b})|_\ho \leq C_mt^{-L(p_1,p_2,p_3)-b}.
    \]
\end{lemma}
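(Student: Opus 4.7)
The plan is to factor $\theta^{L(q)} = t^{-L(p)} \cdot h$, so that the ``singular'' behavior is carried by $t^{-L(p)}$ while $h$ together with all its derivatives remains uniformly bounded as $t \to 0^+$. Since $\theta = (t\theta)/t$ and by linearity of $L$ one has $L(q) = L(p) + L(q-p)$, the starting identity is
\[
\theta^{L(q)} = t^{-L(p)}\exp\!\bigl(-L(q-p)\ln t + L(q)\ln(t\theta)\bigr),
\]
so $h = e^A$ with $A := -L(q-p)\ln t + L(q)\ln(t\theta)$. The whole argument reduces to verifying that $|D^m A|_\ho \leq C_m$ uniformly in $t$.

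For the first summand of $A$, linearity of $L$ combined with $|D^m(q_i - p_i)|_\ho \leq C_m t^\delta$ yields $|D^m L(q-p)|_\ho \leq C_m t^\delta$; since $\ln t$ is spatially constant, $|D^m(L(q-p)\ln t)|_\ho \leq C_m t^\delta |\ln t|$, which is uniformly bounded (and in fact tends to $0$) after shrinking $T$. For the second summand, shrink $T$ further so that $|t\theta - 1| \leq 1/2$, making $\ln(t\theta) = \ln(1 + (t\theta - 1))$ a smooth composition on a compact range. Faà di Bruno applied to this composition, together with $|D^m(t\theta - 1)|_\ho \leq C_m t^\sigma$, gives $|D^m \ln(t\theta)|_\ho \leq C_m t^\sigma$. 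Since $|D^k L(q)|_\ho$ is uniformly bounded (write again $L(q) = L(p) + L(q-p)$ and use smoothness of the $p_i$ on the compact manifold $\Sigma$), Leibniz then produces $|D^m(L(q)\ln(t\theta))|_\ho \leq C_m t^\sigma$. Adding the two contributions gives $|D^m A|_\ho \leq C_m$, and a final application of Faà di Bruno to $e^A$ delivers $|D^m h|_\ho \leq C_m$.

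The conclusion is then obtained from Leibniz applied to $\theta^{L(q)} = t^{-L(p)} h$: derivatives of $t^{-L(p)} = \exp(-L(p)\ln t)$ are of the form $t^{-L(p)}$ times a polynomial in $\ln t$ whose coefficients are built from derivatives of the fixed smooth function $L(p)$ on $\Sigma$, and combining with the bounded derivatives of $h$ gives the claimed estimate. No step is genuinely hard; the argument is careful exponential calculus. The conceptual insight that makes it work is the splitting of the logarithmically large factor $\ln t$ into a piece $-L(q-p)\ln t$ where $L(q-p)$ is small of order $t^\delta$ and kills the growth, and a piece $-L(p)\ln t$ which is absorbed into the reference singular factor $t^{-L(p)}$. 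The main thing to watch is keeping $T$ small enough to stay in the regime where $t\theta \in [1/2, 3/2]$.
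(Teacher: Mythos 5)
Your approach coincides with the paper's: both isolate the singular factor $t^{-L(p)}$, show the remaining factor has uniformly bounded spatial derivatives, and conclude via Leibniz and a Faà di Bruno expansion. The paper's identity $\theta^{L(q)}=(t\theta)^{L(q)}\,e^{[\sum_i a_i(p_i-q_i)]\ln t}\,t^{-L(p)}$ is exactly your $\theta^{L(q)}=t^{-L(p)}e^A$ after matching exponents.

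There is, however, one concrete error in your version. You invoke ``linearity of $L$'' to write $L(q)=L(p)+L(q-p)$ and then claim $|D^m L(q-p)|_\ho\le C_m t^\delta$. But the paper explicitly allows $L(q_1,q_2,q_3)=a_1q_1+a_2q_2+a_3q_3+b$, an affine function with a possibly nonzero constant $b$; then $L(q)-L(p)=\sum_i a_i(q_i-p_i)$ whereas $L(q-p)=\sum_i a_i(q_i-p_i)+b$, so $L(q)\ne L(p)+L(q-p)$ and $L(q-p)$ is not $O(t^\delta)$ at order zero. Your factorization is therefore off by the spurious factor $t^{-b}$, and your bound $|D^m A|_\ho\le C_m$ fails at $m=0$. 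This is not a vacuous concern: affine $L$ genuinely occurs in the applications of this lemma (for instance $\theta^{2q_i-2}$, $\theta^{2-4q_1}$, $\theta^{2-4q_k}$ in the proof of Lemma~\ref{detailed asymptotics for the frame lemma}). The fix is immediate---replace $L(q-p)$ by $L(q)-L(p)=\sum_i a_i(q_i-p_i)$ throughout, which is precisely the exponent the paper writes---and with that change your argument matches the paper's. A minor remark, shared with the paper's own proof: differentiating $t^{-L(p)}=e^{-L(p)\ln t}$ brings down factors of $\ln t$ from the derivatives of $L(p)$, so what both arguments actually deliver is $C_m\langle\ln t\rangle^m t^{-L(p)}$ rather than the stated logarithm-free bound; this is harmless in the applications but worth being explicit about, as you already noticed in your final paragraph.
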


\begin{proof}
    First note that our assumptions on $\theta$ imply that, after taking $T$ smaller if necessary, there is a positive constant $C$ such that $t\theta \geq C$. For the case with no derivatives, if $L(q_1,q_2,q_3) = a_1q_1 + a_2q_2 + a_3q_3$ with $a_i \in \R$, we have
    \[
    \theta^{L(q_1,q_2,q_3) + b} = (t\theta)^{L(q_1,q_2,q_3) + b} e^{[a_1(p_1-q_1) + a_2(p_2-q_2) + a_3(p_3-q_3)]\ln t} t^{-L(p_1,p_2,p_3) - b},
    \]
    from which the result follows. For the derivatives, we begin by estimating the derivatives of $\ln \theta$. If $\alpha$ is a multiindex with $|\alpha| \leq m$, then
    \[
    e_\alpha(\ln\theta) = \sum \frac{(-1)^{r+1}(r-1)!}{\theta^r}(e_{\beta_1}\theta) \cdots (e_{\beta_r}\theta),
    \]
    where the sum is over appropriate multiindices $\beta_i$ such that $|\beta_1| + \cdots + |\beta_r| = |\alpha|$. Then, by the lower bound on $|t\theta|$,
    \[
    |e_\alpha(\ln\theta)| \leq C_m \sum |e_{\beta_1}(t\theta)|\cdots |e_{\beta_r}(t\theta)| \leq C_m.
    \]
    Moving on, note that
    \[
    e_\alpha(\theta^{L(q_1,q_2,q_3) + b}) = \sum e_{\beta_1}\big( (L(q_1,q_2,q_3) + b)\ln\theta\big) \cdots e_{\beta_r}\big( (L(q_1,q_2,q_3) + b)\ln\theta\big) \theta^{L(q_1,q_2,q_3)+b},
    \]
    which together with the already obtained estimates yields the desired conclusion.
\end{proof}

\begin{proof}[Proof of Theorem~\ref{main existence theorem}]
    Consider the solution given by Theorem~\ref{there is a smooth solution}. Note that the estimates for $K$ imply
    \[
    |D^m(t\theta - 1)|_\ho \leq C_m\langle \ln t \rangle^{m+2} t^{2\varepsilon},
    \]
    hence we immediately obtain the convergence estimates for $\K$, $\Psi$ and $\Phi$ from the ones for $K$, $\bar \Psi$ and $\bar \Phi$. The only thing that remains to prove is the convergence of $\h$. 
    
    We begin by applying Proposition~\ref{construction of the frame} to obtain a frame $\{\widetilde e_i\}$ of eigenvectors of $\K$, such that $\K(\widetilde e_i) = q_i \widetilde e_i$, satisfying the estimates
    \[
    \begin{split}
        |D^m(q_i - p_i)|_\ho &\leq C_m t^{3\varepsilon/2},\\
        |D^m(\omega^k(\widetilde e_i) - \delta_i^k)|_\ho &\leq C_m t^{3\varepsilon/2} \min\{1,t^{2(p_i-p_k)}\}.
    \end{split}
    \]
    Note that we have applied Proposition~\ref{construction of the frame} with $\delta = 3\varepsilon/2$, so that we have used a factor of $t^{\varepsilon/2}$ to bound the powers of $\langle \ln t \rangle$ in the estimates for $\K$. Let $E_i := |\widetilde e_i|_{\h}^{-1} \widetilde e_i$. We want to show that the frame $\{E_i\}$ satisfies estimates similar to those satisfied by $\{\widetilde e_i\}$, from which the desired result for $\h$ will follow. Note that
    \[
    |\widetilde e_i|_{\h}^2 = \theta^{2q_i} \omega^k(\widetilde e_i) \omega^\ell(\widetilde e_i) h_{k \ell},
    \]
    hence
    \[
    \begin{split}
        |\widetilde e_i|_\h^2 - 1 &= \theta^{2q_i} \omega^i(\widetilde e_i)^2 h_{ii} - 1 + \sum_{(k,\ell) \neq (i,i)} \theta^{2q_i} \omega^k(\widetilde e_i) \omega^\ell(\widetilde e_i) h_{k \ell}\\
        &= (t\theta)^{2q_i} - 1 + (t\theta)^{2q_i}( t^{2(p_i-q_i)} - 1 ) + \theta^{2q_i}t^{2p_i}( t^{-2p_i}h_{ii} - 1 )\\
        &\quad + \theta^{2q_i}h_{ii}\big( \omega^i(\widetilde e_i)^2 - 1 \big) + \sum_{(k,\ell) \neq (i,i)} \theta^{2q_i} \omega^k(\widetilde e_i) \omega^\ell(\widetilde e_i) h_{k \ell}.
    \end{split}
    \]
    Now we can use the first order Taylor expansion of $e^x$ and $\ln(1+x)$ around $x = 0$ on $(t\theta)^{2q_i} = e^{2q_i\ln(t\theta)}$ and $t^{2(p_i-q_i)} = e^{2(p_i-q_i)\ln t}$, in addition to Lemma~\ref{powers of theta}, to obtain 
    \[
    |D^m( |\widetilde e_i|_\h^2 - 1 )|_\ho \leq C_mt^\varepsilon.
    \]
    Note that here we have used a factor of $t^{\varepsilon/2}$ coming from the estimate for $q_i - p_i$ to bound the $\ln t$ factor that arises from $t^{2(p_i-q_i)}$. But we need an estimate for $|\widetilde e_i|_\h$ without the exponent. First note that 
    \[
    \big||\widetilde e_i|_\h - 1\big| \leq \big||\widetilde e_i|_\h - 1\big| \big||\widetilde e_i|_\h + 1\big| = \big||\widetilde e_i|_\h^2 - 1\big| \leq Ct^\varepsilon.
    \]
    For the derivatives, note that if $|\alpha| \geq 1$, then
    \[
    2|\widetilde e_i|_{\h} e_\alpha( |\widetilde e_i|_{\h} ) = e_\alpha( |\widetilde e_i|_{\h}^2 ) - \sum e_\beta(|\widetilde e_i|_{\h}) e_\gamma(|\widetilde e_i|_{\h})
    \]
    where the sum is over appropriate multiindices $\beta$ and $\gamma$ such that $|\beta|, |\gamma| < |\alpha|$. By taking $T$ smaller if necessary, so that $|\widetilde e_i|_{\h}$ is bounded from below by a positive constant, we can estimate the derivatives of $|\widetilde e_i|_{\h}$ inductively from the expression above. We conclude that
    \[
    |D^m(|\widetilde e_i|_\h - 1)|_\ho \leq C_mt^\varepsilon.
    \]
    Now for $E_i$, we write
    \[
    E_i - e_i = \frac{1}{|\widetilde e_i|_\h} \big( \widetilde e_i - e_i + ( 1 - |\widetilde e_i|_\h ) e_i \big)
    \]
    which yields
    \[
    |D^m( \omega^k(E_i) - \delta_i^k )|_\ho \leq C_mt^\varepsilon \min\{1,t^{2(p_i-p_k)}\}.
    \]

    Finally, for the metric $\h$, we consider the dual frame $\{\eta^i\}$ of $\{E_i\}$. Since the matrix with components $\eta^k(e_i)$ is the inverse of the matrix with components $\omega^k(E_i)$, it can be computed that
    \[
    |D^m(\eta^k(e_i) - \delta_i^k)| \leq C_mt^{\varepsilon} \min\{1,t^{2(p_i - p_k)}\}.
    \]
    Since $\h = \sum_i \eta^i \otimes \eta^i$, the desired estimate for $\h$ follows.
\end{proof}

\section{Detailed asymptotics and uniqueness of developments} \label{detailed asymptotics and uniqueness}

Throughout the proof of Theorem~\ref{main existence theorem}, we obtain much more detailed asymptotic information about the constructed solutions than what the statement of the theorem says. The purpose of this section is to show that for a locally Gaussian development of initial data on the singularity, the more detailed asymptotic information is a consequence of Einstein's equations. Furthermore, as a consequence of the improved asymptotics, the uniqueness statement of Theorem~\ref{main uniqueness theorem} follows.

\subsection{Asymptotics for the mean curvature} \label{asymptotics for the mean curvature section}

We begin by relating the mean curvature $\theta$ with the time coordinate $t$ and by constructing an appropriate frame of eigenvectors of $\K$. This is the purpose of the following lemma.

\begin{lemma} \label{asymptotics for the mean curvature}
    Let $(\Sigma,\ho,\Ko,\phio,\psio)$ be initial data on the singularity, $V$ an admissible potential and $(M,g,\s)$ a locally Gaussian development of the data. By taking $T$ smaller if necessary, there are constants $C_m$, for every non-negative integer $m$, such that the following holds. Define $\sigma := \min\{\varepsilon,\delta/2\}$. We have
    \[
    |D^m( t\theta - 1 )|_\ho + |D^m( \ln \theta + \ln t )|_\ho \leq C_mt^{2\sigma}.
    \]
    In particular,
    \[
    |D^m(\theta^{-2}V\circ\s)|_\ho \leq C_m\langle \ln t \rangle^m t^{2\varepsilon_V}.
    \]
    The eigenvalues of $\K$ are everywhere distinct. Denote by $q_1<q_2<q_3$ the eigenvalues of $\K$. There is a frame $\{E_i\}$ which is orthonormal with respect to $\h$, with dual frame $\{\eta^i\}$, such that $\K(E_i) = q_iE_i$ and
    \[
    |D^m(q_i-p_i)|_\ho + |D^m(E_i-e_i)|_\ho + |D^m(\eta^i-\omega^i)|_\ho \leq C_m t^\delta.
    \]
\end{lemma}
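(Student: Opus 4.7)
The plan is to first establish the estimates for $\theta$ by analyzing a first-order ODE for $\phi := 1/\theta$, and then construct the frame $\{E_i\}$ by combining Proposition~\ref{construction of the frame} with an $\h$-orthonormalization. From Proposition~\ref{ricci} and Einstein's equations one obtains $\p_t\theta = -\tr K^2 - (\p_t\s)^2 + V \circ \s$, which upon dividing by $\theta^2$ becomes
\[
\p_t\phi = \tr\K^2 + \Psi^2 - \theta^{-2}\, V \circ \s.
\]
Formally, the right-hand side tends to $\tr\Ko^2 + \psio^2 = 1$ as $t \to 0$, so integrating from $t = 0$ (at which $\phi = 0$, using that $\theta \to \infty$ by the definition of a locally Gaussian development) should yield $\phi = t + O(t^{1+\sigma})$, hence $t\theta - 1 = O(t^\sigma)$.

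The main obstacle is controlling $\theta^{-2} V\circ\s$: since $\s = \Phi - \Psi\ln\theta$, bounding $V\circ\s$ requires a priori control of $\ln\theta$, which is precisely what we are trying to establish. I would handle this by a bootstrap argument. Shrinking $T$ and using that $\theta \to \infty$ continuously, one assumes $|t\theta - 1| \leq 1/2$, so that $|\ln\theta + \ln t| \leq \ln 2$. Combined with the assumed convergence of $\Phi$ and $\Psi$ to $\phio$ and $\psio$, this gives $|\s| \leq C + (|\psio| + Ct^\delta)|\ln t|$. The admissibility condition $a < \sqrt{6}$ together with $|\psio| \leq \sqrt{2/3}$ then yields $|\theta^{-2}\, V\circ\s|_\ho \leq Ct^{2 - a|\psio|} \leq C t^{2\varepsilon_V}$, which is in particular (a stronger form of) the second claimed estimate. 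Plugging back into the ODE, $\p_t\phi = 1 + O(t^\sigma)$, and integrating $\phi(t) = \int_0^t \p_r\phi(r)\,dr$ (legitimate since $\phi \to 0$) gives $\phi(t) = t + O(t^{1+\sigma})$. Hence $|t\theta - 1|_\ho \leq Ct^\sigma$, which improves the bootstrap and closes it.

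Higher spatial derivatives are obtained by induction on $m$. Commuting $D^m$ with the time integral and using that $D^m(\tr\Ko^2 + \psio^2) \equiv 0$ for $m \geq 1$, one reduces the estimate to controlling $D^m(\tr\K^2 - \tr\Ko^2 + \Psi^2 - \psio^2)$, which is $O(t^\delta)$ by the assumed convergence of $\K$ and $\Psi$, together with $D^m(\theta^{-2}\, V\circ\s)$. The latter is controlled inductively: at order $m$ only lower-order spatial derivatives of $\theta$, already estimated, enter the chain rule applied to $V\circ\s$ through $\s = \Phi - \Psi\ln\theta$. This yields $|D^m(t\theta - 1)|_\ho \leq C_m t^\sigma$ for every $m$, and the bound on $|D^m(\ln\theta + \ln t)|_\ho = |D^m \ln(t\theta)|_\ho$ then follows by expanding $\ln(1 + w)$ in $w = t\theta - 1$.

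For the frame, I would apply Proposition~\ref{construction of the frame}, whose hypotheses are exactly the assumed convergence $|D^m(\K - \Ko)|_\ho \leq C_m t^\delta$. This produces smooth, everywhere distinct eigenvalues $q_1 < q_2 < q_3$ together with a smooth frame $\{\widetilde e_i\}$ of $\K$-eigenvectors with $|\widetilde e_i|_\ho = 1$ and $|D^m(\widetilde e_i - e_i)|_\ho \leq C_m t^\delta$. A short calculation from the definition of $\h$, using that $\K$ commutes with $\theta^\K$ and is $h$-self-adjoint, shows that $\K$ is $\h$-self-adjoint, so the $\widetilde e_i$ associated with distinct eigenvalues are $\h$-orthogonal. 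Setting $E_i := \widetilde e_i/|\widetilde e_i|_\h$ and using $|D^m(\h - \ho)|_\ho \leq C_m t^\delta$ to show $|\widetilde e_i|_\h = 1 + O(t^\delta)$ at all derivatives then yields the desired $\h$-orthonormal frame, and the corresponding estimate for the dual frame $\{\eta^i\}$ follows by inverting the transition matrix between $\{E_i\}$ and $\{e_i\}$.
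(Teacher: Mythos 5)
Your $m=0$ argument for the mean curvature and your construction of the frame $\{E_i\}$ are correct and essentially identical to the paper's. The remark that eigenvectors of $\K$ for distinct eigenvalues are automatically $\h$-orthogonal (since $\K$ is $h$-self-adjoint and commutes with $\theta^{\K}$, hence is $\h$-self-adjoint) is a useful point that the paper leaves implicit.

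There is, however, a genuine gap in the higher-derivative argument for $\theta$. You claim that in $D^m(\theta^{-2}V\circ\s)$ ``at order $m$ only lower-order spatial derivatives of $\theta$, already estimated, enter the chain rule.'' This is not true: since $\s = \Phi - \Psi\ln\theta$, the Leibniz expansion of $D^m(V\circ\s)$ produces a term $(V'\circ\s)\,\Psi\,D^m\!\ln\theta$, and $D^m(\theta^{-2})$ produces $-2\theta^{-3}D^m\theta$; both are top order in $\theta$. So the right-hand side of $\p_t(D^m\theta^{-1}) = D^m(\p_t\theta^{-1})$ contains $D^m\!\ln\theta$ at precisely the order you are trying to estimate, and the induction does not close. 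Worse, the integral representation $D^m\theta^{-1}(t) = \int_0^t D^m(\p_t\theta^{-1})\,dr$ presupposes $D^m\theta^{-1} \to 0$ as $t\to 0$, which is clear for $m=0$ from $\theta\to\infty$ but is not a priori available for $m\geq 1$: uniform convergence of $\theta^{-1}$ does not control its spatial derivatives near $t=0$.

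The paper handles this with a device absent from your proposal. It introduces the logarithmic volume density $\varrho$, defined by $\mu = e^{\varrho}\mathring{\mu}$, so that $\p_t\varrho = \theta$, and observes that $h = \sum_i\theta^{-2q_i}\eta^i\otimes\eta^i$ together with $\sum_i q_i = 1$ gives the identity $\varrho + \ln\theta = \ln\det(\eta^i(e_k))$. The right-hand side is controlled to all orders purely by the frame asymptotics from Proposition~\ref{construction of the frame} and the convergence of $\h$, independently of any $\theta$ asymptotics, so $|D^m(\varrho+\ln\theta)|_\ho\leq C_m t^\delta$ a priori. The transport equation $\theta^{-1}\p_t(e_i\varrho) = -e_i\varrho + e_i(\varrho+\ln\theta)$ is then solved with an integrating factor along integral curves of $\theta^{-1}\p_t$, integrated \emph{backward} from a fixed positive time rather than forward from $t=0$, which avoids the missing initial condition and yields the a priori bound $|e_i\ln\theta|\leq Ct^{-1}$. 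Only after that bound is in hand can one estimate $D^m(\theta^{-2}V\circ\s)$ and iterate to close the argument. Without the volume-density identity your scheme has no starting point for the iteration.
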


\begin{proof}
    The statements about the eigenvalues of $\K$ follow from Propostion~\ref{construction of the frame}. Now we construct the frame. Consider the frame $\{\widetilde e_i\}$ as in Proposition~\ref{construction of the frame}. Note that the convergence estimates for $\h$ imply $|D^m( |\widetilde e_i|_\h^2 - 1 )|_\ho \leq C_mt^\delta$. Now define $E_i := |\widetilde e_i|_\h^{-1} \widetilde e_i$. Similarly as in the proof of Theorem~\ref{main existence theorem}, we see that
    \[
    |D^m(E_i - e_i)|_\ho + |D^m(\eta^i - \omega^i)|_\ho \leq C_mt^\delta.
    \]

    We move on to the mean curvature. Recall from \eqref{the system} that $\theta$ satisfies the equation
    \[
    \p_t \theta = -\tr K^2 - (\p_t \s)^2 + V \circ \s,
    \]
    implying
    \begin{equation} \label{equation for 1/theta}
    \p_t(\theta^{-1}) = \tr \K^2 + \Psi^2 - \theta^{-2} V \circ \s.
    \end{equation}
    Since $\tr\Ko = 1$, then $\tr \Ko^2 \geq 1/3$. This implies $|\psio| = \sqrt{1-\tr \Ko^2} \leq \sqrt{2/3}$. Therefore,
    \[
    |\theta^{-2} V \circ \s| \leq C\theta^{-2} e^{a|\Psi|\ln \theta + a|\Phi|} \leq C\theta^{-2\varepsilon_V + a|\Psi - \psio|}.
    \]
    It follows that $\theta^{-2} V \circ \s$ converges to zero as $t \to 0$. This in turn yields that $\p_t(\theta^{-1})$ converges to $1$ as $t \to 0$. Hence, by taking $T$ smaller if necessary, we can ensure that $1/2 \leq \p_t(\theta^{-1}) \leq 2$. Upon integrating this inequality from $0$ to $t$, recalling that $\theta \to \infty$ as $t \to 0$, we obtain that 
    \begin{equation} \label{first relation between theta and t}
        \frac{1}{2}t \leq \theta^{-1} \leq 2t, \qquad |\ln \theta + \ln t| \leq \ln 2.
    \end{equation}
    Now we can go back to estimating $\theta^{-2} V \circ \s$ using \eqref{first relation between theta and t} to obtain 
    \[
    |\theta^{-2}V \circ \s| \leq Ct^{2\varepsilon_V}.
    \]
    Moving on, we integrate \eqref{equation for 1/theta} from $0$ to $t$ to obtain
    \begin{equation} \label{formula for 1/theta}
        \theta^{-1} = t + \int_0^t ( \tr \K^2 + \Psi^2 - 1 - \theta^{-2}V \circ \s )(r)dr.
    \end{equation}
    Using our assumptions and the estimate for the potential term, we see that
    \[
    |\theta^{-1} - t| \leq Ct^{1+\min\{\delta,2\varepsilon_V\}},
    \]
    which implies the result for $t \theta - 1$ with no derivatives. The estimate for $\ln \theta + \ln t = \ln(t\theta)$ now follows from a Taylor approximation. 
    
    However, obtaining estimates for the derivatives requires a bit more work. Define the logarithmic volume density $\varrho$ by the condition $\mu = e^\varrho \mathring \mu$, where $\mu$ and $\mathring{\mu}$ are the volume forms of $h$ and $\ho$ respectively. By using the form of the metric $h = \sum_i \theta^{-2q_i} \eta^i \otimes \eta^i$, we see that 
    \[
    e^\varrho = \frac{1}{\theta} \eta^1 \wedge \eta^2 \wedge \eta^3(e_1,e_2,e_3) = \frac{1}{\theta} \det \eta^i(e_k),
    \]
    implying
    \[
    \varrho + \ln \theta = \ln\big( \det \eta^i(e_k) \big).
    \]
    From here, using the asymptotics of the dual frame $\{\eta^i\}$, it can be deduced that
    \begin{equation} \label{estimate for rho plus log theta}
        |D^m(\varrho + \ln \theta)|_\ho \leq C_mt^\delta.
    \end{equation}

    Our goal now is to use these estimates to obtain estimates for $D^m \theta^{-1}$ and $D^m \ln \theta$. This needs to be done inductively, so we start with only one derivative. Note that $\varrho$ satisfies $\p_t \varrho = \theta$, implying
    \[
    \frac{1}{\theta} \p_t e_i \varrho = -e_i \varrho + e_i( \varrho + \ln \theta ).
    \]
    Now let $\gamma$ be an integral curve of $\theta^{-1}\p_t$ such that $\varrho \circ \gamma(s) = s$. By introducing an integrating factor, we can write the equation above as follows,
    \[
    \frac{d}{ds}\big( e^s(e_i \varrho) \circ \gamma(s) \big) = e^s e_i( \varrho + \ln \theta ) \circ \gamma(s).
    \]
    After integrating from $s_1$ to $s_2$, and using $\eqref{estimate for rho plus log theta}$ and \eqref{first relation between theta and t}, we obtain
    \[
    |e^{s_1} (e_i \varrho) \circ \gamma(s_1)| \leq |e^{s_2}(e_i \varrho) \circ \gamma(s_2)| + Ce^{(1+\delta)s_2}.
    \]
    Now we let $\gamma(s_2) = (T,x)$ and $\gamma(s_1) = (t,x)$, then
    \[
    |e_i \varrho|(t,x) \leq Ce^{-\varrho(t,x)} \leq C t^{-1}
    \]
    where we have used \eqref{estimate for rho plus log theta} and \eqref{first relation between theta and t} again, and $C$ depends on $T$ but not on $x$. Note that \eqref{estimate for rho plus log theta} now implies $|e_i \ln \theta| \leq Ct^{-1}$. Unfortunately, this is still not enough, but we can now estimate $e_i(\theta^{-2} V \circ \s)$ as follows. We have
    \[
    e_i( \theta^{-2} V \circ \s) = -2(e_i \ln \theta) \theta^{-2} V \circ \s + \theta^{-2} (V' \circ \s) \big( e_i \Phi - (\ln \theta) e_i \Psi - (e_i \ln \theta) \Psi \big),
    \]
    implying
    \[
    |e_i( \theta^{-2}V \circ \s )| \leq Ct^{-1+2\varepsilon_V}.
    \]
    Now we can take $e_i$ of \eqref{formula for 1/theta} to obtain $|e_i (\theta^{-1})| \leq Ct^{\min\{1+\delta,2\varepsilon_V\}}$. But then we can write ${e_i \ln \theta = -\theta e_i (\theta^{-1})}$, which gives the improvement
    \[
    |e_i \ln \theta| \leq Ct^{-1+\min\{1+\delta,2\varepsilon_V\}}.
    \]
    If $1+\delta \leq 2\varepsilon_V$, we are done. Otherwise, we have $|e_i \ln \theta| \leq Ct^{-1+2\varepsilon_V}$. But then, we can go back to estimating $e_i( \theta^{-2} V \circ \s )$ and repeat the process to obtain further improvements. There is a positive integer $n$ such that $2n\varepsilon_V \geq 1+\delta$. Hence, after performing the improvement process $n$ times, we obtain
    \[
    |e_i( \theta^{-2} V \circ \s )| \leq C\langle \ln t \rangle t^{2\varepsilon_V}, \qquad |e_i(\theta^{-1})| \leq Ct^{1+\delta}, \qquad |e_i \ln \theta| \leq Ct^\delta.
    \]
    Note that at this point, no further improvements are possible. For higher order derivatives, similar arguments can be made by applying successively more derivatives to $\p_t \varrho = \theta$ and to \eqref{formula for 1/theta}. We conclude that
    \[
    |D^m( \theta^{-1} - t )|_\ho \leq C_mt^{1+\delta}, \qquad |D^m( \ln \theta + \ln t )|_\ho \leq C_mt^\delta.
    \]
    This implies the result for $t\theta - 1$.
\end{proof}

\subsection{Detailed asymptotics for the frame of eigenvectors} \label{asymptotics for the frame of eigenvectors}

In order to finish the proof of Theorem~\ref{asymptotics of the frame}, we now need to obtain the off-diagonal improvements for the estimates of the frame of eigenvectors $\{E_i\}$. For that purpose, we first obtain evolution equations for the eigenvalues and eigenvectors, after which we proceed with the proof of the theorem. 

\begin{lemma} \label{equation of the eigenframe}
    Let $(-dt \otimes dt + h,\s)$ be a solution to the Einstein--nonlinear scalar field equations with potential $V$. Let $\{X_i\}$ be an orthonormal frame with respect to $h$ such that $\K(X_i) = q_i X_i$, then
    \[
    \begin{split}
        \frac{1}{\theta} \p_t q_i &= \theta^{-2} \big( \bar S - |d\s|_h^2 - 3V\circ \s \big)q_i - \theta^{-2} \big( \sric(X_i,X_i) - (X_i\s)^2 - V \circ \s \big),\\
        \frac{1}{\theta}[\p_t,X_i] &= -q_i X_i + \sum_{k \neq i}\frac{\theta^{-2}}{q_k - q_i}\big(\sric - d\s \otimes d\s \big)(X_i,X_k)X_k.
    \end{split}
    \]
\end{lemma}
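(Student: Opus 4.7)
The plan is to differentiate the eigenvalue relations $h(K(X_i), X_j) = \theta q_i \delta_{ij}$ (no sum on $i$) in $t$, using Leibniz and the first variation formula $\lie_{\p_t} h(X,Y) = h(K(X),Y) + h(X,K(Y))$, and to read off the two identities by inserting the Einstein equation in its $(1,1)$-tensor form
\[
\lie_{\p_t} K = -\sric^\sharp - \theta K + d\s \otimes \sn \s + (V \circ \s) I,
\]
which follows directly from Proposition~\ref{ricci}. Writing $[\p_t, X_i] = a_{i\ell} X_\ell$ with undetermined coefficients, the preliminary step is to extract $a_{ii} = -\theta q_i$ from $\p_t h(X_i, X_i) = 0$ and the antisymmetry $a_{ki} = -a_{ik}$ for $i \neq k$ from $\p_t h(X_i, X_k) = 0$ (the $\lie_{\p_t} h$ contribution there vanishes by orthogonality of distinct eigenvectors).

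For the evolution of $q_i$, I would take $i = j$. The crucial observation is that the two $[\p_t, X_i]$-contributions, $h(K([\p_t, X_i]), X_i) + h(K(X_i), [\p_t, X_i]) = 2 a_{ii} \theta q_i$, combine with $\lie_{\p_t} h(K(X_i), X_i) = 2\theta^2 q_i^2$ to cancel exactly thanks to $a_{ii} = -\theta q_i$. What remains is $h(\lie_{\p_t} K(X_i), X_i) = (\p_t \theta) q_i + \theta \p_t q_i$. Substituting the Einstein equation converts the left-hand side into $-\sric(X_i, X_i) + (X_i \s)^2 + V \circ \s - \theta^2 q_i$, while $\p_t \theta$ is eliminated via $\p_t \theta = -\bar S - \theta^2 + |d\s|_h^2 + 3 V \circ \s$, obtained by tracing the Einstein equation for $\lie_{\p_t} K$ and applying the Hamiltonian constraint~\eqref{hamiltonian constraint}. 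Dividing by $\theta^2$ produces the first claimed formula.

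For the bracket, I would take $i \neq j$. The $\lie_{\p_t} h$ contribution now vanishes because $K(X_i)$ is proportional to $X_i$ and $h(X_i, X_k) = 0$, and the antisymmetry $a_{ki} = -a_{ik}$ collapses the identity to the single scalar equation $\theta(q_k - q_i) a_{ik} = -h(\lie_{\p_t} K(X_i), X_k)$. Substituting the Einstein equation, noting that the $\theta K$ and $(V \circ \s) I$ terms contribute nothing off-diagonally, gives $a_{ik} = \theta^{-1} (q_k - q_i)^{-1} (\sric - d\s \otimes d\s)(X_i, X_k)$; combining with $a_{ii} = -\theta q_i$ and dividing by $\theta$ yields the second claimed formula.

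The argument is entirely algebraic once the Einstein equation is cast as an evolution for $K$, so no serious obstacle is anticipated. The only point requiring care is confirming the exact cancellation of the a priori unknown $[\p_t, X_i]$-contributions in the $i = j$ step; this is what renders $\p_t q_i$ expressible without first solving for the whole bracket, and it is ultimately a consequence of the orthonormality of the frame together with the relation $a_{ii} = -\theta q_i$ derived beforehand.
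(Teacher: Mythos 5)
Your proposal is correct and follows essentially the same route as the paper: extract $h([\p_t,X_i],X_i)=-\theta q_i$ from orthonormality, differentiate the eigenvalue relation (you work with the scalar form $h(K(X_i),X_j)=\theta q_i\delta_{ij}$, the paper with the tensorial form $K(X_i)=\theta q_iX_i$ before contracting — a cosmetic difference), and substitute the first-order evolution equation for $K$ and $\theta$. One tiny inaccuracy worth flagging: the identity $\p_t\theta=-\bar S-\theta^2+|d\s|_h^2+3V\circ\s$ follows from tracing the evolution equation for $K$ alone (using $\theta=\tr K$, which holds since $K$ is the Weingarten map of an Einstein--scalar field solution); the Hamiltonian constraint is not actually needed for that step, only for converting to the alternative form $\p_t\theta=-\tr K^2-(\p_t\s)^2+V\circ\s$.
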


\begin{proof}
    Since $X_i$ is orthonormal, we have
    \[
    -h([\p_t,X_i],X_k) - h(X_i,[\p_t,X_k]) = \lie_{\p_t} h(X_i,X_k) = 2\theta q_i \delta_{ik}.
    \]
    In particular,
    \[
    h([\p_t,X_i],X_i) = -\theta q_i. 
    \]
    Now we apply $\lie_{\p_t}$ to the equation $K(X_i) = \theta q_i X_i$ to obtain
    \begin{equation} \label{derivative of eigenvector formula}
        \lie_{\p_t}K(X_i) + \sum_\ell \theta q_\ell h([\p_t,X_i],X_\ell)X_\ell = \p_t(\theta q_i)X_i + \theta q_i[\p_t,X_i].
    \end{equation}
    If we multiply this equation by $X_i$ using $h$, we see that
    \[
    h(\lie_{\p_t}K(X_i),X_i) = \theta \p_t q_i + q_i \p_t \theta.
    \]
    By substituting $\p_t \theta$ and $\lie_{\p_t}K$ with Equation~\eqref{evolution equation for k}, we obtain the desired equation for $q_i$. If $i \neq k$, we now take the scalar product of \eqref{derivative of eigenvector formula} with $X_k$ with respect to $h$, which yields
    \[
    h([\p_t,X_i],X_k) = \frac{1}{\theta(q_i - q_k)} h(\lie_{\p_t}K(X_i),X_k). 
    \]
    Summing up,
    \[
    \frac{1}{\theta}[\p_t,X_i] = \frac{1}{\theta}\sum_k h([\p_t,X_i],X_k)X_k = -q_i X_i + \sum_{k \neq i} \frac{\theta^{-2}}{q_i - q_k} h(\lie_{\p_t}K(X_i),X_k)X_k.
    \]
    The equation for $X_i$ follows by using Equation~\eqref{evolution equation for k} to substitute $\lie_{\p_t}K$.
\end{proof}

\begin{lemma} \label{detailed asymptotics for the frame lemma}
    Let $(\Sigma,\ho,\Ko,\phio,\psio)$ be initial data on the singularity, $V$ an admissible potential and $(M,g,\s)$ a locally Gaussian development of the data. Consider the frame $\{E_i\}$ of eigenvectors of $\K$, with dual frame $\{\eta^i\}$, given by Lemma~\ref{asymptotics for the mean curvature}. Define $\sigma := \min\{\varepsilon, \delta/2\}$. Then there are constants $C_m$, for every non-negative integer $m$, such that
    \begin{align*}
        \big|D^m\big(\omega^k(E_i)\big)\big|_\ho(x) + \big|D^m\big(\eta^k(e_i)\big)\big|_\ho(x) &\leq C_m\langle \ln t \rangle^{m+2}t^{2\varepsilon + 2(p_i-p_1)(x)}, \quad &i \neq k;\\
        \big|D^m\big(\omega^k(E_i)\big)\big|_\ho(y) + \big|D^m\big(\eta^k(e_i)\big)\big|_\ho(y) &\leq C_m\langle \ln t \rangle^{m+2}t^{2\varepsilon + 2(p_i-p_k)(y)}, \quad &i > k;
    \end{align*}
    for $x \in D_+$ and $y \in D_-$. In particular, if the $\lambda_{ik}^\ell$ are defined by $[E_i,E_k] = \lambda_{ik}^\ell E_\ell$, we have
    \[
    |D^m(\lambda_{23}^1)|_\ho(y) \leq C_m\langle \ln t \rangle^{m+2}t^{2\varepsilon + 2(p_2-p_1)(y)}.
    \]
    Moreover, if $i \neq k$, the following estimates hold for $\sric$,
    \[
    \begin{split}
        \big|D^m\big(\theta^{-2}\sric^\sharp(E_i,\eta^i)\big)\big|_\ho &\leq C_m\langle \ln t \rangle^{m+2}t^{2\sigma},\\
        \big|D^m\big( \theta^{-2}\sric^\sharp( E_i, \eta^k ) \big)\big|_\ho(x) &\leq C_m\langle \ln t \rangle^{m+2} t^{2\varepsilon + 2(p_i - p_1)(x)},\\
        \big|D^m\big( \theta^{-2}\sric^\sharp( E_i, \eta^k ) \big)\big|_\ho(y) &\leq  C_m\langle \ln t \rangle^{m+2} t^{2\varepsilon} \min\{ 1, t^{2(p_i - p_k)(y)} \},
    \end{split}
    \]
    where there is no summation over $i$ in the first inequality.
\end{lemma}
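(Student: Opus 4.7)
The plan is an iterative bootstrap built from the frame evolution equations of Lemma~\ref{equation of the eigenframe}, starting from the baseline estimates $|D^m(E_i - e_i)|_\ho + |D^m(\eta^i - \omega^i)|_\ho \leq C_m t^\delta$ provided by Lemma~\ref{asymptotics for the mean curvature}, and driven by the sharp off-diagonal bounds of Lemma~\ref{decay of ricci} applied to the actual spatial metric $h$.

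First I would verify that the hypotheses of Lemma~\ref{decay of ricci} are satisfied by $h$ with respect to the initial-data frame $\{e_i\}$ and the scalar field $\s$. Writing $h(e_i,e_k) = \sum_\ell \theta^{-2q_\ell}\eta^\ell(e_i)\eta^\ell(e_k)$ and using the convergence of $\h$ to $\ho$ built into Definition~\ref{development of the data}, the relation $t\theta - 1 = O(t^\sigma)$ from Lemma~\ref{asymptotics for the mean curvature}, Lemma~\ref{powers of theta}, and the baseline estimates on $\{E_i\}$ and $\{\eta^i\}$, I would check the required two-sided control of $\bar\h - \ho$ and its first time derivative, including the positivity $|\bar\h(e_i,e_i)| \geq \eta > 0$ after shrinking $T$. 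Lemma~\ref{decay of ricci} then yields the corresponding bounds on $\bar\Gamma^\ell_{ik}$, on $\theta^{-2}\sric^\sharp(e_i,\omega^k)$ and on $\theta^{-2}(d\s\otimes\sn\s)(e_i,\omega^k)$ in the $\{e_i\}$ frame, with the sharp off-diagonal improvements in $D_+$ and $D_-$.

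Next, since $\omega^k$ is time-independent, applying $\omega^k$ to the evolution equation for $E_i$ from Lemma~\ref{equation of the eigenframe} gives the scalar ODE
\[
\p_t[\omega^k(E_i)] = -\theta q_i\,\omega^k(E_i) + \sum_{\ell\neq i}\frac{\theta^{-1}}{q_\ell - q_i}\big(\sric - d\s\otimes d\s\big)(E_i,E_\ell)\,\omega^k(E_\ell).
\]
Passing to $\tau = -\ln t$ and diagonalizing the leading-order $i$-block (using $q_i \to p_i$ and $t\theta \to 1$), this takes a form to which Lemma~\ref{linear ode lemma} applies. Expanding the source via $E_i = \omega^k(E_i) e_k$ and using the $\{e_i\}$-frame bounds from step one transports the sharp off-diagonal decay of $\sric^\sharp$ and $d\s\otimes\sn\s$ into the inhomogeneity. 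A finite number of iterations, each re-expanding $E_\ell$ with the improved frame estimate and feeding the better decay of $(\sric - d\s\otimes d\s)(E_i,E_\ell)$ back into the source, drives $\omega^k(E_i)$ down to $t^{\varepsilon + 2(p_i - p_1)(x)}$ on $D_+$ and $t^\varepsilon \min\{1,t^{2(p_i - p_k)(y)}\}$ on $D_-$. The bounds on $\eta^k(e_i)$ follow by inverting the perturbed identity matrix $(\omega^k(E_i))$, the bound on $\lambda_{23}^1$ in $D_-$ follows from expanding $[E_2,E_3]$ in the $\{e_i\}$ frame using $\gamma_{23}^1 \equiv 0$ on $D_-$ together with the just-established off-diagonal frame estimates, and the final $\{E_i\}$-frame Ricci bounds are obtained from the $\{e_i\}$-frame bounds of step one by change of frame.

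The main obstacle is that the target exponent $\varepsilon + 2(p_i - p_1)(x)$ on $D_+$ varies pointwise in $x$, so the bootstrap cannot be set up as a contraction in a single weighted space. However, because Lemma~\ref{linear ode lemma} returns, at each iteration, precisely the exponent present in the source, and because the sharp off-diagonal rate is already encoded pointwise in the $\{e_i\}$-frame Ricci bounds of step one, the improvement scheme terminates in a finite pointwise recursion rather than an asymptotic limit. The apparent loss of derivatives in the $\omega^k(E_i)$ equation (through $\sric$ and the brackets $[E_i,E_\ell]$) is not a genuine obstruction either, since Lemma~\ref{asymptotics for the mean curvature} supplies estimates of every order from the outset; what the iteration refines is the decay rate, not the degree of regularity.
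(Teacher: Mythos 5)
Your step one does not go through: the hypotheses of Lemma~\ref{decay of ricci} include, beyond the basic convergence of $\bar\h$, the \emph{sharp} off-diagonal improvements $|D^m\bar\h(e_i,e_k)|_\ho(x)\lesssim t^{(p_i+p_k-2p_1)(x)}$ on $D_+$ and $|D^m\bar\h(e_i,e_k)|_\ho(y)\lesssim t^{|p_i-p_k|(y)}$ on $D_-$, and these are not available at the outset. From Definition~\ref{development of the data} and Lemma~\ref{asymptotics for the mean curvature} one only has $|\eta^i-\omega^i|=O(t^\delta)$ and $t\theta-1=O(t^\sigma)$, which via $\bar\h(e_i,e_k)=t^{-p_i-p_k}\sum_\ell\theta^{-2q_\ell}\eta^\ell(e_i)\eta^\ell(e_k)$ give at best $|\bar\h(e_i,e_k)|=O(t^{-|p_i-p_k|+\delta})$ for $i\neq k$ (the term with $\ell=\min\{i,k\}$ dominates). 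That is far short of the required $t^{|p_i-p_k|}$ unless $\delta\geq 2|p_i-p_k|$, and indeed these off-diagonal improvements on $\bar\h$ are exactly the kind of output the present lemma is supposed to produce, not presuppose. So invoking Lemma~\ref{decay of ricci} here is circular, and the Ricci bounds you want to feed into the ODE iteration are not in hand.

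The paper sidesteps this by working instead with the more primitive Lemma~\ref{general estimate for ricci lemma}, applied to the frame $\{E_i\}$ rather than $\{e_i\}$. In that frame $h=\sum_\ell\theta^{-2q_\ell}\eta^\ell\otimes\eta^\ell$ is \emph{exactly} diagonal, so the off-diagonal metric hypotheses of Lemma~\ref{general estimate for ricci lemma} are trivially satisfied; the only nontrivial input is the estimate on the structure coefficients $\lambda_{ik}^\ell$ and the resulting $\Gamma_{ik}^\ell$, which do follow from the baseline $|E_i-e_i|=O(t^\delta)$. A second ingredient you have not accounted for: on $D_-$ the diagonal component $\theta^{-2}\sric^\sharp(E_i,\eta^i)$ contains a term $\pm\tfrac12(\lambda_{23}^1)^2\theta^{-4q_1}$, and since $\lambda_{23}^1$ is only $O(t^\delta)$ (not exactly zero, unlike $\gamma_{23}^1$) while $p_1$ can be negative, this is $O(t^{2\delta+4p_1})$ and need not decay. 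The paper resolves this by combining the Hamiltonian constraint with \eqref{scalar curvature estimate 1} to upgrade the pointwise bound on $(\lambda_{23}^1)^2$ to $O(t^{-4p_1+\sigma})$ \emph{before} the ODE iteration starts. Your proposal makes no use of the constraint equations; without this step the coefficient $\theta^{-1}\p_t q_i$ in the $E_i$-equation does not come out decaying, and Lemma~\ref{linear ode lemma} cannot be invoked. Both omissions need to be repaired before the finite bootstrap you describe can get off the ground.
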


\begin{proof}
    For this proof, let $i$, $k$ and $\ell$ denote fixed indices, so that there is no summation over any of them when repeated; also, $x$ shall always denote an element of $D_+$ and $y$ an element of $D_-$. Let $X_i := \theta^{q_i} E_i$. By Lemma~\ref{equation of the eigenframe}, we then see that $q_i$ and $E_i$ satisfy the equations
    \begin{subequations} \label{equations for eigenvalues and frame}
    \begin{align}
        \begin{split}
            \frac{1}{\theta}\p_t q_i &= \theta^{-2}\big( \bar S - |d\s|_h^2 - 3V\circ\s \big)q_i\\
            &\quad - \theta^{-2}\sric^\sharp(E_i,\eta^i) + \theta^{2(q_i-1)}(E_i\s)^2 + \theta^{-2}V\circ\s,\label{equation for the eigenvalues}
        \end{split}\\
        \begin{split}
            \frac{1}{\theta}[\p_t,E_i] &= \left( \theta^{-2}\big( \bar S - |d\s|_h^2 - 3V\circ\s \big)q_i - \frac{\ln \theta}{\theta} \p_t q_i \right) E_i\\
            &\quad + \sum_{m \neq i} \frac{\theta^{-2}}{q_m - q_i} \big(\sric^\sharp - d\s \otimes \sn\s\big)(E_i, \eta^m) E_m.\label{eqution for the expansion normalized frame}
        \end{split}
    \end{align}
    \end{subequations}
    Our aim is to use \eqref{eqution for the expansion normalized frame} in order to improve on the estimates that we already have for $E_i$. For that purpose, it is thus of interest to estimate the objects related with the scalar field, the components of $\sric^\sharp$ in terms of the frame $\{E_i\}$, and to obtain a decay estimate for $\theta^{-1}\p_t q_i$. 
    
    Regarding the scalar field, note that $d\s \otimes \sn \s(E_i,\eta^k) = E_i(\s) E_k(\s) \theta^{2q_k}$. Hence, by Lemma~\ref{powers of theta},
    \[
    \begin{split}
        \big|D^m\big( \theta^{-2} d\s \otimes \sn \s(E_i, \eta^k) \big)\big|_\ho(x) &\leq C_m\langle \ln t \rangle^2 t^{2\varepsilon + 2(p_i-p_1)(x)},\\
        \big|D^m\big( \theta^{-2} d\s \otimes \sn \s(E_i, \eta^k) \big)\big|_\ho(y) &\leq C_m\langle \ln t \rangle^2 t^{2\varepsilon} \min\{1,t^{2(p_i - p_k)(y)}\},
    \end{split}
    \]
    which is what we want.
    
    Turning our attention to $\sric$, we intend to apply Lemma~\ref{general estimate for ricci lemma}, so we verify that its conditions are met with the frame $\{E_i\}$. Recall that the $\lambda_{ik}^\ell$ are defined by $[E_i, E_k] = \lambda_{ik}^\ell E_\ell$. Then
    \begin{equation} \label{formula for the lambdas}
        \lambda_{ik}^\ell = \omega^a(E_i) \omega^b(E_k) \eta^\ell(e_m) \gamma_{ab}^m + \omega^a(E_i) e_a\big( \omega^b( E_k ) \big) \eta^\ell(e_b) - \omega^b(E_k) e_b\big( \omega^a(E_i) \big) \eta^\ell(e_a),
    \end{equation}
    implying
    \[
    |D^m( \lambda_{ik}^\ell - \gamma_{ik}^\ell )|_\ho \leq C_mt^\delta.
    \]
    Note that, in particular, 
    \begin{equation} \label{estimate for lambda}
        |D^m(\lambda_{23}^1)|_\ho(y) \leq C_mt^\delta. 
    \end{equation}
    Denote $\Gamma_{ik}^\ell = \eta^\ell(\sn_{E_i} E_k)$. Note that in terms of the frame $\{E_i\}$, we have $h = \sum_i \theta^{-2q_i} \eta^i \otimes \eta^i$. Hence we can use the estimates for $\theta$, the $q_i$ and the $E_i$, together with Lemma~\ref{powers of theta}, to conclude that for $|\alpha| \leq m$,
    \[
    |e_\alpha \Gamma_{ii}^\ell| \leq C_m t^{2(p_i - p_\ell)}, \qquad |e_\alpha \Gamma_{ik}^k| + |e_\alpha \Gamma_{ik}^i| \leq C_m.
    \]
    Moreover, for $i$, $k$ and $\ell$ distinct,
    \[
    2\Gamma_{ik}^\ell = \theta^{2q_\ell}( -\lambda_{k \ell}^i \theta^{-2q_i} - \lambda_{i\ell}^k \theta^{-2q_k}  + \lambda_{ik}^\ell \theta^{-2q_\ell}),
    \]
    implying
    \begin{equation} \label{christoffel symbols in the frame}
        |e_\alpha \Gamma_{ik}^\ell|(x) \leq C_mt^{2(p_1 - p_\ell)(x)}, \qquad |e_\alpha \Gamma_{ik}^\ell|(y) \leq C_mt^{2(p_1 - p_\ell)(y) + \min\{\delta,2(p_2 - p_1)(y)\}}.
    \end{equation}
    We are now in a position to apply Lemma~\ref{general estimate for ricci lemma} (note that the last inequality in \eqref{hypothesis on the connection} holds since $h$ is diagonal in terms of the frame $\{E_i\}$). 
    
    Next, we obtain the required estimate for $\theta^{-1}\p_t q_i$. To that end, we look at the diagonal components of $\sric^\sharp$. By Lemma~\ref{general estimate for ricci lemma},  
    \[
    \big|D^m\big( \sric^\sharp(E_i,\eta^i) + \Lambda_{ik\ell} \big)\big|_\ho \leq C_m \langle \ln t \rangle^{m+2} t^{-2+2\varepsilon},
    \]
    where
    \[
    \Lambda_{ik\ell} = \theta^{2q_\ell} \Gamma_{i\ell}^k \Gamma_{\ell k}^i + \theta^{2q_k} \Gamma_{ik}^\ell \Gamma_{k\ell}^i + \theta^{2q_\ell} \lambda_{i\ell}^k \Gamma_{k\ell}^i + \theta^{2q_k} \lambda_{ik}^\ell \Gamma_{\ell k}^i
    \]
    for $i$, $k$ and $\ell$ distinct. If we look at the terms in $\Lambda_{ik\ell}$, we notice that in $D_+$ they are bounded by the same expression as the right-hand side of the inequality, implying
    \[
    \big|D^m\big( \theta^{-2}\sric^\sharp(E_i,\eta^i)\big)\big|_\ho(x) \leq C_m\langle \ln t \rangle^{m+2} t^{2\varepsilon}.
    \]
    In $D_-$ we need to look more carefully. We have
    \begin{align*}
    \begin{split}
        \theta^{2q_\ell} \Gamma_{i\ell}^k \Gamma_{\ell k}^i &= \frac{1}{4} \theta^2( -\lambda_{\ell k}^i \theta^{-2q_i} - \lambda_{ik}^\ell \theta^{-2q_\ell} + \lambda_{i\ell}^k \theta^{-2q_k} )( -\lambda_{ki}^\ell \theta^{-2q_\ell} - \lambda_{\ell i}^k \theta^{-2q_k} + \lambda_{\ell k}^i \theta^{-2q_i} )\\
        &= \pm\frac{1}{4}(\lambda_{23}^1)^2 \theta^{2-4q_1} + \cdots,
    \end{split}\\
    \begin{split}
        \theta^{2q_k} \Gamma_{ik}^\ell \Gamma_{k\ell}^i &=\frac{1}{4} \theta^2( -\lambda_{k\ell}^i \theta^{-2q_i} - \lambda_{i\ell}^k \theta^{-2q_k} + \lambda_{ik}^\ell \theta^{-2q_\ell} )( -\lambda_{\ell i}^k \theta^{-2q_k} - \lambda_{k i}^\ell \theta^{-2q_\ell} + \lambda_{k\ell}^i \theta^{-2q_i} )\\
        &= \pm\frac{1}{4}(\lambda_{23}^1)^2 \theta^{2-4q_1} + \cdots,
    \end{split}\\
    \theta^{2q_\ell} \lambda_{i\ell}^k \Gamma_{k\ell}^i &= \frac{1}{2} (\lambda_{i\ell}^k)^2 \theta^{2-4q_k} + \cdots,\\
    \theta^{2q_k} \lambda_{ik}^\ell \Gamma_{\ell k}^i &= \frac{1}{2} (\lambda_{ik}^\ell)^2 \theta^{2-4q_\ell} + \cdots,
    \end{align*}
    where $\cdots$ stands for terms which decay after multiplication by $\theta^{-2}$. Consequently,
    \begin{equation} \label{diagonal components of ricci}
        \big|D^m\big( \theta^{-2}\sric^\sharp(E_i,\eta^i) \pm \textstyle\frac{1}{2}(\lambda_{23}^1)^2 \theta^{-4q_1} \big)\big|_\ho(y) \leq C_m\langle \ln t \rangle^{m+2} t^{2\varepsilon},
    \end{equation}
    where we have $-$ for $i=1$ and $+$ for $i=2,3$. This implies that the scalar curvature satisfies the estimate
    \begin{equation} \label{scalar curvature estimate 1}
        \big|D^m\big( \theta^{-2} \bar S + \textstyle\frac{1}{2}(\lambda_{23}^1)^2 \theta^{-4q_1} \big)\big|_\ho(y) \leq C_m\langle \ln t \rangle^{m+2} t^{2\varepsilon};
    \end{equation}
    cf. \cite[Equation~(5), p. 6]{ringstrom_geometry_2021}. On the other hand, the Hamiltonian constraint reads
    \[
    \theta^{-2}\bar S = \tr\K^2 + \Psi^2 - 1 + \theta^{-2}|d\s|_h^2 + 2\theta^{-2}V \circ \s.
    \]
    Implying
    \begin{equation} \label{scalar curvature estimate 2}
        |D^m( \theta^{-2} \bar S )|_\ho \leq C_m \langle \ln t \rangle^{m+2} t^{2\sigma}.
    \end{equation}
    By putting \eqref{scalar curvature estimate 1} and $\eqref{scalar curvature estimate 2}$ together, we see that
    \begin{equation} \label{estimate for lambda squared}
        \big|D^m\big((\lambda_{23}^1)^2\big)\big|_\ho(y) \leq C_m \langle \ln t \rangle^{m+2}t^{-4p_1(y) + 2\sigma}.
    \end{equation}
    Note that, unfortunately, we cannot use this estimate to improve on \eqref{estimate for lambda}. But now we can go back to \eqref{diagonal components of ricci}, which yields
    \[
    \big|D^m\big(\theta^{-2}\sric^\sharp(E_i,\eta^i)\big)\big|_\ho(y) \leq C_m \langle \ln t \rangle^{m+2}t^{2\sigma}.
    \]
    Finally, we can go back to \eqref{equation for the eigenvalues} to conclude that
    \[
    |D^m(\theta^{-1} \p_t q_i)|_\ho \leq C_m\langle \ln t \rangle^{m+2}t^{2\sigma}.
    \]
    We remark that the importance of this estimate is that now we know that the expression in front of $E_i$, on the right-hand side of \eqref{eqution for the expansion normalized frame}, decays as a positive power of $t$.
    
    We are now ready to improve on the estimates for $E_i$. We start by looking at the off-diagonal components of $\sric^\sharp$. From Lemma~\ref{general estimate for ricci lemma}, it follows that, for $i$, $k$ and $\ell$ distinct,
    \begin{equation*}
    \begin{split}
        \big|D^m\big( \theta^{-2} \sric^\sharp(E_i, \eta^k) \big)\big|_\ho &\leq C_m \sum_{|\alpha| \leq m+1} t^{2(1-p_\ell)}( |e_\alpha \Gamma_{i \ell}^k| + |e_\alpha \Gamma_{\ell i}^k| + |e_\alpha \lambda_{i\ell}^k| )\\
        &\quad + C_m\sum_{|\alpha|\leq m} t^{2(1-p_k)} |e_\alpha \Gamma_{ik}^\ell|\\
        &\quad + C_m\langle \ln t \rangle^{m+2} t^{2(1-p_3)} \min\{1,t^{2(p_i - p_k)}\}.
    \end{split}
    \end{equation*}
    In $D_+$, we immediately obtain what we want,
    \[
    \begin{split}
        \big|D^m\big( \theta^{-2}\sric^\sharp( E_i, \eta^k ) \big)\big|_\ho(x) &\leq C_m \langle \ln t \rangle^{m+2} t^{2\varepsilon + 2(p_i - p_1)(x)}\\
        &\leq  C_m \langle \ln t \rangle^{m+2} t^{2\varepsilon} \min\{ 1, t^{2(p_i - p_k)(x)} \}.
    \end{split}
    \]
    However, in $D_-$ we do not necessarily get the desired estimates right away. In this case, we have
    \begin{equation*}
    \begin{split}
        \big|D^m\big( \theta^{-2} \sric^\sharp(E_1, \eta^2) \big)\big|_\ho(y) &\leq C_m\langle \ln t \rangle^{m+2}t^{2\varepsilon + 2(p_1-p_2)(y) + \min\{\delta,2(p_2-p_1)(y)\}},\\
        \big|D^m\big( \theta^{-2} \sric^\sharp(E_1, \eta^3) \big)\big|_\ho(y) &\leq C_m\langle \ln t \rangle^{m+2}t^{2\varepsilon + 2(p_1-p_2)(y) + \min\{\delta,2(p_2-p_1)(y)\}},\\
        \big|D^m\big( \theta^{-2} \sric^\sharp(E_2, \eta^3) \big)\big|_\ho(y) &\leq C_m\langle \ln t \rangle^{m+2}t^{2\varepsilon},\\
        \big|D^m\big( \theta^{-2} \sric^\sharp(E_2, \eta^1) \big)\big|_\ho(y) &\leq C_m\langle \ln t \rangle^{m+2}t^{2\varepsilon + \min\{\delta,2(p_2-p_1)(y)\}},\\
        \big|D^m\big( \theta^{-2} \sric^\sharp(E_3, \eta^1) \big)\big|_\ho(y) &\leq C_m\langle \ln t \rangle^{m+2}t^{2\varepsilon + 2(p_3 - p_2)(y) + \min\{\delta,2(p_2-p_1)(y)\}},\\
        \big|D^m\big( \theta^{-2} \sric^\sharp(E_3, \eta^2) \big)\big|_\ho(y) &\leq C_m\langle \ln t \rangle^{m+2}t^{2\varepsilon + 2(p_3 - p_2)(y)}.
    \end{split}
    \end{equation*}
    Note that if $\delta \geq 2(p_2 - p_1)(y)$, we obtain the desired estimates for $\sric$ at the point $y$. Otherwise, the estimates where one of the indices is $1$ require further improvement. Coming back to \eqref{eqution for the expansion normalized frame}, if $\tau = -\ln t$, we see that for $i$, $k$ and $\ell$ distinct,
    \begin{equation} \label{equation for the components of the frame}
    \begin{split}
        \p_\tau \omega^k(E_i) = A_i \omega^k(E_i) - e^{-\tau}\theta &\bigg( \frac{\theta^{-2}}{q_k - q_i}( \sric - d\s \otimes d\s )^\sharp(E_i,\eta^k)\omega^k(E_k)\\
        &+ \frac{\theta^{-2}}{q_\ell - q_i} ( \sric - d\s \otimes d\s )^\sharp(E_i, \eta^\ell) \omega^k(E_\ell) \bigg),
    \end{split}
    \end{equation}
    where $A_i$ and all of their derivatives decay exponentially in $\tau$. Thus each of the $\omega^k(E_i)$ satisfies an equation as in Lemma~\ref{linear ode lemma}. Note that, since we already know the $\omega^k(E_i)$ to decay exponentially in $\tau$, \eqref{equation for the components of the frame} in addition to the estimates for $\sric^\sharp$ and $d\s \otimes \sn \s$ in $D_+$, immediately imply the desired improvements for $\omega^k(E_i)$ in $D_+$. On the other hand, in $D_-$ we obtain the following improvements for the estimates when $i > k$,
    \begin{equation} \label{improvements on the frame}
    \begin{split}
        \big|D^m\big( \omega^1(E_2) \big)\big|_\ho(y) &\leq C_m\langle \ln t \rangle^{m+2}t^{2\varepsilon + \min\{\delta,2(p_2-p_1)(y)\}},\\
        \big|D^m\big( \omega^1(E_3) \big)\big|_\ho(y) &\leq C_m\langle \ln t \rangle^{m+2}t^{2\varepsilon + 2(p_3 - p_2)(y) + \min\{\delta,2(p_2-p_1)(y)\}},\\
        \big|D^m\big( \omega^2(E_3) \big)\big|_\ho(y) &\leq C_m\langle \ln t \rangle^{m+2}t^{2\varepsilon + 2(p_3 - p_2)(y)}.
    \end{split}
    \end{equation}
    Once again, note that if $\delta \geq 2(p_2 - p_1)(y)$ we obtain the desired estimates at $y$. Turning our attention to the dual frame $\{\eta^i\}$, for $i$, $k$ and $\ell$ distinct, we have
    \[
    \eta^k(e_i) = \frac{\pm 1}{\det( \omega^b(E_a) )} \left( \omega^k( E_i) \omega^\ell(E_\ell) - \omega^k(E_\ell) \omega^\ell(E_i) \right),
    \]
    hence the improvements in \eqref{improvements on the frame} translate to the dual frame. We can now go back to estimating $\lambda_{23}^1$ from $\eqref{formula for the lambdas}$ to obtain $|D^m(\lambda_{23}^1)|_\ho(y) \leq C_m\langle \ln t \rangle^{m+2}t^{2\varepsilon + \min\{\delta,2(p_2-p_1)(y)\}}$, which is an improvement on \eqref{estimate for lambda}. In general, there are going to be points $y \in D_-$ with $\delta < 2(p_2-p_1)(y)$, so the improvements we have obtained are not good enough. On the other hand, they allow us to start an iterative procedure to obtain further improvements.
    
    To set up the iteration, define the sets 
    \[
    B_n := \{ y \in D_- : 2n\varepsilon + \delta < 2(p_2-p_1)(y) \}
    \]
    for $n$ a non-negative integer, and set $B_{-1} := D_-$. Now fix a positive integer $n$ and make the inductive assumption that the conclusions of the lemma hold for $y \in B_{n-2} \setminus B_{n-1}$, while the estimates
    \[
    \begin{split}
        |D^m (\lambda_{23}^1)|_\ho(y) &\leq C_m\langle \ln t \rangle^{m+2} t^{2n\varepsilon + \delta},\\
        \big|D^m\big( \omega^1(E_2) \big)\big|_\ho(y) + \big|D^m\big( \eta^1(e_2) \big)\big|_\ho(y) &\leq C_m\langle \ln t \rangle^{m+2}t^{2n\varepsilon + \delta},\\
        \big|D^m\big( \omega^1(E_3) \big)\big|_\ho(y) + \big|D^m\big( \eta^1(e_3) \big)\big|_\ho(y) &\leq C_m\langle \ln t \rangle^{m+2}t^{2n\varepsilon + \delta + 2(p_3 - p_2)(y)},
    \end{split}
    \]
    hold for $y \in B_{n-1}$ (note that we already know the inductive assumption to hold for $n = 1$).  Then for $i$, $k$ and $\ell$ distinct, $\Gamma_{ik}^\ell$ satisfies
    \begin{equation*}
        |e_\alpha \Gamma_{ik}^\ell|(y) \leq C_m\langle \ln t \rangle^{m+2}t^{2(p_1 - p_\ell)(y) + \min\{2n\varepsilon + \delta,2(p_2-p_1)(y)\}}, \quad y \in B_{n-1},
    \end{equation*}
    instead of the second inequality in \eqref{christoffel symbols in the frame}. It follows that
    \begin{equation*}
    \begin{split}
        \big|D^m\big( \theta^{-2} \sric^\sharp(E_1, \eta^2) \big)\big|_\ho(y) &\leq C_m\langle \ln t \rangle^{m+2}t^{2\varepsilon + 2(p_1-p_2) + \min\{2n\varepsilon + \delta,2(p_2-p_1)\}},\\
        \big|D^m\big( \theta^{-2} \sric^\sharp(E_1, \eta^3) \big)\big|_\ho(y) &\leq C_m\langle \ln t \rangle^{m+2}t^{2\varepsilon + 2(p_1-p_2) + \min\{2n\varepsilon + \delta,2(p_2-p_1)\}},\\
        \big|D^m\big( \theta^{-2} \sric^\sharp(E_2, \eta^1) \big)\big|_\ho(y) &\leq C_m\langle \ln t \rangle^{m+2}t^{2\varepsilon + \min\{2n\varepsilon + \delta,2(p_2-p_1)(y)\}},\\
        \big|D^m\big( \theta^{-2} \sric^\sharp(E_3, \eta^1) \big)\big|_\ho(y) &\leq C_m\langle \ln t \rangle^{m+2}t^{2\varepsilon + 2(p_3 - p_2)(y) + \min\{2n\varepsilon + \delta,2(p_2-p_1)(y)\}},
    \end{split}
    \end{equation*}
    for $y \in B_{n-1}$. Going back to \eqref{equation for the components of the frame}, we can estimate $\omega^1(E_2)$ and $\omega^1(E_3)$ again, and then estimate $\eta^1(e_2)$ and $\eta^1(e_3)$ once more, to obtain
    \[
    \begin{split}
        \big|D^m\big( \omega^1(E_2) \big)\big|_\ho(y) + \big|D^m\big( \eta^1(e_2) \big)\big|_\ho(y) &\leq C_m\langle \ln t \rangle^{m+2}t^{2\varepsilon + \min\{2n\varepsilon + \delta,2(p_2-p_1)(y)\}},\\
        \big|D^m\big( \omega^1(E_3) \big)\big|_\ho(y) + \big|D^m\big( \eta^1(e_3) \big)\big|_\ho(y) &\leq C_m\langle \ln t \rangle^{m+2}t^{2\varepsilon + 2(p_3 - p_2)(y) + \min\{2n\varepsilon + \delta,2(p_2-p_1)(y)\}},
    \end{split}
    \]
    for $y \in B_{n-1}$. Finally, we can estimate $\lambda_{23}^1$ again from \eqref{formula for the lambdas}, which leads to
    \[
    |D^m( \lambda_{23}^1 )|_\ho(y) \leq C_m\langle \ln t \rangle^{m+2} t^{2\varepsilon + \min\{2n\varepsilon + \delta,2(p_2-p_1)(y)\}}, \quad y \in D_{n-1}.
    \]
    But then the inductive assumption holds with $n$ replaced by $n+1$. There is a positive integer $N$ such that $B_N = \varnothing$. The lemma follows.
\end{proof}

\subsection{Proofs of Theorems \ref{asymptotics of the frame} and \ref{main uniqueness theorem}} \label{proofs of detailed asymptotics and uniqueness}

\begin{proof}[Proof of Theorem~\ref{asymptotics of the frame}]
    The statements for the mean curvature, the eigenvalues and the frame of eigenvectors follow from Lemmas~\ref{asymptotics for the mean curvature} and \ref{detailed asymptotics for the frame lemma}. For $\bar\h$ note that
    \[
    \bar\h(e_i,e_k) = t^{-p_i - p_k}h(e_i,e_k) = t^{-p_i - p_k} \sum_\ell \theta^{-2q_\ell}\eta^\ell(e_i)\eta^\ell(e_k). 
    \]
    If $i$, $k$ and $\ell$ are distinct, we have
    \[
    \begin{split}
        \bar \h(e_i,e_i) - 1 &= (t\theta)^{-2q_i} - 1 + (t\theta)^{-2q_i}( t^{2(q_i-p_i)} - 1 ) + t^{-2p_i}\theta^{-2q_i} \big( \eta^i(e_i)^2 - 1 \big)\\
        &\quad + t^{-2p_i}\big( \theta^{-2q_k} \eta^k(e_i)^2 + \theta^{-2q_\ell} \eta^\ell(e_i)^2 \big).
    \end{split}
    \]
    So that, after a suitable Taylor approximation of $(t\theta)^{-2q_i}$ and $t^{2(q_i-p_i)}$, we get
    \[
    \big|D^m\big( \bar\h(e_i,e_i) - 1 \big)\big|_\ho \leq C_mt^\sigma.
    \]
    Moreover,
    \[
    \bar\h(e_i,e_k) = t^{-p_i-p_k}\big( \theta^{-2q_i} \eta^i(e_i)\eta^i(e_k) + \theta^{-2q_k} \eta^k(e_i)\eta^k(e_k) + \theta^{-2q_\ell} \eta^\ell(e_i)\eta^\ell(e_k) \big).
    \]
    Hence
    \[
    \begin{split}
        \big|D^m\big(\bar\h(e_i,e_k)\big)\big|_\ho &\leq C_m \langle \ln t \rangle^{m} \sum_{a = 0}^m \Big( t^{p_i-p_k} \big|D^a\big( \eta^i(e_k) \big)\big|_\ho + t^{p_k-p_i} \big|D^a\big( \eta^k(e_i) \big)\big|_\ho \Big)\\
        &\quad + C_m \langle \ln t \rangle^{m}\sum_{a+b \leq m} t^{p_\ell-p_i} \big|D^a\big( \eta^\ell(e_i) \big)\big|_\ho \cdot t^{p_\ell-p_k} \big|D^b\big( \eta^\ell(e_k) \big)\big|_\ho.\\ 
    \end{split}
    \]
    The result for $\bar\h$ follows. For $tK$, note that
    \[
    tK(e_i,\omega^k) = t \eta^\ell(e_i) \omega^k(E_m) K(E_\ell,\eta^m) = t\theta \sum_\ell q_\ell \eta^\ell(e_i) \omega^k(E_\ell). 
    \]
    Then, if $i$, $k$ and $\ell$ are distinct,
    \[
    \begin{split}
        tK(e_i,\omega^i) - p_i &= q_i - p_i + q_i\big( \eta^i(e_i) - 1 \big) + q_i \eta^i(e_i)\big( \omega^i(E_i) - 1 \big) + (t\theta - 1)q_i\eta^i(e_i)\omega^i(E_i)\\
        &\quad + t\theta\big( q_k\eta^k(e_i)\omega^i(E_k) + q_\ell\eta^\ell(e_i)\omega^i(e_\ell) \big),\\
        tK(e_i,\omega^k) &= t\theta \big( q_i\eta^i(e_i)\omega^k(E_i) + q_k\eta^k(e_i)\omega^k(E_k) + q_\ell\eta^\ell(e_i)\omega^k(E_\ell) \big).
    \end{split}
    \]
    The result for $tK$ follows.

    The only thing that remains to prove is the result for the Kretschmann scalar. The argument is similar to those used in \cite{oude_groeniger_formation_2023,fournodavlos_stable_2023}. We consider the orthonormal frame $\{\p_t,X_1,X_2,X_3\}$ for $g$, where $X_i = \theta^{q_i}E_i$. Then
    \[
        |R|_g^2 = \sum_{i,k,\ell,m} \Big( R(X_i,X_k,X_\ell,X_m)^2 - 4R(\p_t,X_i,X_k,X_\ell)^2 + 4R(\p_t,X_i,X_k,\p_t)^2 \Big).
    \]
    We look at each of the terms separately. By the Gauss equation,
    \[
    R(X_i,X_k,X_\ell,X_m) = \bar R(X_i,X_k,X_\ell,X_m) - k(X_i,X_\ell)k(X_k,X_m) + k(X_i,X_m)k(X_k,X_\ell).
    \]
    Now note that we can use \eqref{curvature in terms of ricci} to obtain
    \[
    t^2|D^m[\bar R(X_i,X_k,X_\ell,X_m)]|_\ho \leq C_mt^\sigma,
    \]
    and moreover
    \[
    \begin{split}
        &\sum_{i,k,\ell,m} \Big( -k(X_i,X_\ell)k(X_k,X_m) + k(X_i,X_m)k(X_k,X_\ell) \Big)^2\\
        &= 2\sum_{i,k,\ell,m} \Big( k(X_i,X_\ell)^2k(X_k,X_m)^2 - k(X_i,X_\ell)k(X_k,X_m)k(X_i,X_m)k(X_k,X_\ell)\Big)\\
        &= 2\theta^4 \Bigg( \sum_{i,k} q_i^2 q_k^2 - \sum_i q_i^4  \Bigg).
    \end{split}
    \]
    Hence
    \[
    t^4\big|D^m\big[ \textstyle \sum_{i,k,\ell,m} R(X_i,X_k,X_\ell,X_m)^2 - 4\theta^4 \textstyle \sum_{i<k} q_i^2 q_k^2 \big]\big|_\ho \leq C_mt^\sigma.
    \]
    Moving on,
    \[
    \begin{split}
        R(\p_t,X_i,X_k,\p_t) &= -\ric(X_i,X_k) + \sum_{\ell} R(X_\ell,X_i,X_k,X_\ell)\\
        &= - \ric(X_i,X_k) + \sric(X_i,X_k) -\sum_{\ell} k(X_\ell,X_k)k(X_i,X_\ell) + \theta k(X_i,X_k)\\
        &= -X_i(\s)X_k(\s) - (V \circ \s)\delta_{ik} + \sric(X_i,X_k) + \theta^2 q_i(1-q_i)\delta_{ik},
    \end{split}
    \]
    which yields
    \[
    t^4\big| D^m \big[ \textstyle \sum_{i,k} R(\p_t,X_i,X_k,\p_t)^2 - \theta^4 \textstyle \sum_i q_i^2(1-q_i)^2 \big] \big|_\ho \leq C_mt^\sigma.
    \]
    Now for the last term, by the Codazzi equation,
    \[
    \begin{split}
        R(\p_t,X_i,X_k,X_\ell) &= \sn_{X_k} k(X_\ell,X_i) - \sn_{X_\ell} k(X_k,X_i)\\
        &= \theta^{q_i+q_k+q_\ell} \big( E_k( \theta^{1-2q_\ell} q_\ell \delta_{\ell i}  ) -E_\ell( \theta^{1-2q_k} q_k \delta_{ki} ) + \lambda_{\ell k}^i \theta^{1-2q_i} q_i + \lambda_{\ell i}^k \theta^{1-2q_k} q_k \big).
    \end{split}
    \]
    Therefore, by using the decay estimate for $\lambda_{23}^1$ in $D_-$, we see that
    \[
    t^2|D^m[ R(\p_t,X_i,X_k,X_\ell) ]|_\ho \leq C_mt^\sigma.
    \]
    We can now put all the estimates together to obtain the desired result for $|R|_g^2$.
\end{proof}

\begin{proof}[Proof of Corollary~\ref{estimates for time derivatives}]
    From \eqref{evolution equation for k}, \eqref{evolution equation for phi}, \eqref{equations for eigenvalues and frame}, and the fact that the matrix with components $\eta^k(e_i)$ is the inverse of the matrix with components $\omega^k(E_i)$, it follows that
    \begin{align*}
        t\p_t(t\theta) &= t\theta(1-t\theta) + t^2( -\bar S + |d\s|_h^2 + 3V \circ \s ),\\
        t\p_t \bar\Psi &= (1 - t\theta)\bar\Psi + t^2( \Delta_h\s - V' \circ\s ),\\
        t\p_t q_i &= (t\theta)^{-1} t^2( \bar S - |d\s|_h^2 - 3V \circ \s )q_i\\
        &\quad + (t\theta)^{-1} t^2\big( -\sric^\sharp + d\s \otimes \sn\s \big)(e_k,\omega^\ell)\omega^k(E_i)\eta^i(E_\ell) + (t\theta)^{-1} t^2V\circ\s,\\
        t\p_t \omega^k(E_i) &= \Big( (t\theta)^{-1}t^2(\bar S - |d\s|_h^2 - 3V\circ\s)q_i - (\ln\theta)t\p_t q_i \Big) \omega^k(E_i)\\
        &\quad + (t\theta)^{-1} \sum_{m \neq i} \frac{t^2}{q_m-q_i}(\sric^\sharp - d\s \otimes \sn\s)(e_a,\omega^b) \omega^a(E_i) \eta^m(e_b) \omega^k(E_m),\\
        t\p_t \eta^k(e_i) &= -\eta^a(e_i) \eta^k(e_b) t\p_t \omega^b(E_a).
    \end{align*}
    The result follows from Lemmas~\ref{decay of ricci}, \ref{asymptotics for the mean curvature} and \ref{detailed asymptotics for the frame lemma} by repeatedly applying $t\p_t$ to the above equations, estimating the resulting right-hand side, and then estimating the $t\p_t$ derivatives of $\bar\h$ and $tK$ from their expressions in terms of the $E_i$ and the $\eta^i$ as in the proof of Theorem~\ref{asymptotics of the frame}. 
\end{proof}

Now we are ready to prove uniqueness of solutions. This relies on the following result, which comes from the fact that given asymptotics for $\bar\h$, $tK$, $\bar\Phi$ and $\bar\Psi$ up to a high enough regularity, the hypotheses of Proposition~\ref{first uniqueness result} are satisfied.

\begin{proposition} \label{second uniqueness result}
    Let $(\Sigma,\ho,\Ko,\phio,\psio)$ be initial data on the singularity and let $V$ be an admissible potential. Let $(g = -dt \otimes dt + h,\s)$ and $(\widetilde g = -dt \otimes dt + \widetilde h, \widetilde \s)$ on $(0,T] \times \Sigma$, be $C^{A+1} \times C^{A+1}$ solutions to the Einstein--nonlinear scalar field equations with potential $V$ such that
    \[
    \begin{split}
        \sum_{m=0}^A \Big( |D^m( \bar\h - \ho )|_\ho + |D^m( \overline{\widetilde{\h}} - \ho )|_\ho + |D^m( tK - \Ko )|_\ho + |D^m( t\widetilde{K} - \Ko )|_\ho \Big) &\leq Ct^\delta,\\
        \sum_{m=0}^A \left(|D^m( \bar\Phi - \phio )|_\ho + |D^m( \overline{\widetilde{\Phi}} - \phio )|_\ho + |D^m( \bar\Psi - \psio )|_\ho + |D^m( \overline{\widetilde{\Psi}} - \psio )|_\ho \right) &\leq Ct^\delta,
    \end{split}
    \]
    for some constants $C$ and $\delta > 0$. Moreover, assume that for $i \neq k$,
    \[
    \begin{split}
    \sum_{m=0}^A \left(\big|D^m\big( \bar\h(e_i,e_k) \big)\big|_\ho(x) + \big|D^m\big( \overline{\widetilde{\h}}(e_i,e_k) \big)\big|_\ho(x) \right) &\leq Ct^{\delta + (p_i+p_k-2p_1)(x)},\\
    \sum_{m=0}^A \left(\big|D^m\big( \bar\h(e_i,e_k) \big)\big|_\ho(y) + \big|D^m\big( \overline{\widetilde{\h}}(e_i,e_k) \big)\big|_\ho(y) \right) &\leq Ct^{\delta + |p_i - p_k|(y)},\\
    \sum_{m=0}^A \left(\big|D^m\big( tK(e_i,\omega^k) \big)\big|_\ho(x) + \big|D^m\big( t\widetilde K(e_i,\omega^k) \big)\big|_\ho(x) \right) &\leq Ct^{\delta + 2(p_i-p_1)(x)},\\
    \sum_{m=0}^A \left(\big|D^m\big( tK(e_i,\omega^k) \big)\big|_\ho(y) + \big|D^m\big( t\widetilde K(e_i,\omega^k) \big)\big|_\ho(y) \right) &\leq Ct^\delta \min\{ 1, t^{2(p_i - p_k)(y)} \},
    \end{split}
    \]
    for $x \in D_+$, $y \in D_-$. If $A$ is large enough (depending only on the initial data, the potential and $\delta$), then $(g,\s) = (\widetilde g, \widetilde\s)$.
\end{proposition}

\begin{proof}
    Let $M$ be as in Proposition~\ref{first uniqueness result}. We claim that there is an $A$ large enough such that 
    \[
    \begin{split}
      \sum_{m=0}^3 |D^m(h - h_n)|_\ho + \sum_{m=0}^2 |D^m \lie_{\p_t}(h - h_n)|_\ho &\leq Ct^M,\\
      \sum_{m=0}^2 |D^m(\s - \s_n)|_\ho + \sum_{m=0}^1 |D^m \p_t(\s - \s_n)|_\ho &\leq Ct^M,
    \end{split}
    \]
    for $n$ large enough, where $(h_n,\s_n)$ is an approximate solution as in Theorem~\ref{approximate solutions}; and similarly for $(\widetilde g, \widetilde \s)$.

    To prove the claim, let $\tau = -\ln t$ and define the variables
    \[
    \dt := e^{-\tau}(\theta - \theta_n), \quad \dk := e^{-\tau}(K - K_n), \quad \delta \bar\h := \bar\h - \bar\h_n, \quad \delta \bar\Psi := \bar\Psi - \bar\Psi_n, \quad \delta \bar\Phi := \bar\Phi - \bar\Phi_n.
    \]
    As a consequence of \eqref{adm equations}, \eqref{k first order equation}, \eqref{approximate metric equation} and \eqref{approximate scalar field equation}, these variables satisfy the following system of equations,
    \begin{subequations} \label{equations for uniqueness} 
    \begin{align} 
    \begin{split}
        \p_\tau (e^{-\tau}\dt) &= (e^{-\tau} \theta + e^{-\tau} \theta_n - 2) e^{-\tau} \dt + e^{-3\tau}(\bar S - \bar S_n)\\
        &\quad - e^{-3\tau}( |d\s|_h^2 - |d\s_n|_{h_n}^2 ) - 3e^{-3\tau}( V \circ \s - V \circ \s_n ) + e^{-3\tau} \tr\ce_n,
    \end{split}\label{improvemets for theta}\\
    \begin{split}
        \lie_{\p_\tau} \dk &= (e^{-\tau}\theta_n - 1) \dk + \dt e^{-\tau}K - e^{-2\tau}( d\s \otimes \sn \s - d\s_n \otimes \sn \s_n )\\
        &\quad - e^{-2\tau}( V \circ \s - V \circ \s_n )I + e^{-2\tau}( \sric^\sharp - \sric_n^\sharp ) +e^{-2\tau} \ce_n,
    \end{split}\label{improvements for k}\\
    \begin{split}
        \lie_{\p_\tau} \delta \bar\h(X,Y) &= \delta \bar\h( e^{-\tau \Ko} \circ (\Ko - e^{-\tau}K) \circ e^{\tau \Ko}(X),Y )\\
        &\quad + \delta \bar\h( X, e^{-\tau \Ko} \circ (\Ko - e^{-\tau}K) \circ e^{\tau \Ko}(Y) ) \\
        &\quad -\bar\h_n( e^{-\tau \Ko} \circ \dk \circ e^{\tau \Ko}(X),Y ) - \bar\h_n( X, e^{-\tau \Ko} \circ \dk \circ e^{\tau \Ko}(Y) )\\
        &\quad +\bar\h_n( e^{-\tau \Ko} \circ e^{-\tau}(\bar K_n - K_n) \circ e^{\tau \Ko}(X),Y )\\
        &\quad + \bar\h_n( X, e^{-\tau \Ko} \circ e^{-\tau}(\bar K_n - K_n) \circ e^{\tau \Ko}(Y) ),
    \end{split}\label{improvements for h}\\
    \begin{split}
        \p_\tau \delta \bar\Psi &= (e^{-\tau} \theta - 1) \delta \bar\Psi + \dt \bar\Psi_n - e^{-2\tau}(\Delta_h \s - \Delta_{h_n} \s_n)\\
        &\quad + e^{-2\tau}( V' \circ \s - V' \circ \s_n ) + e^{-2\tau}( V' \circ \s_n - \Box_{g_n}\s_n ),
    \end{split}\label{improvements for psi}\\
    \p_\tau \delta \bar\Phi &= \tau \p_\tau \delta \bar\Psi,\label{improvements for phi}
    \end{align}
    \end{subequations}
    for $X,Y \in \mfx(\Sigma)$. Let $\sigma := \min\{\varepsilon,\delta\}$. Note that our assumptions and Theorem~\ref{approximate solutions} ensure that the following holds,
    \begin{align*}
        \sum_{m=0}^A\Big(|D^m \dt|_\ho + |D^m\dk|_\ho + |D^m\delta\bar\h|_\ho + |D^m\delta\bar\Psi|_\ho + |D^m\delta\bar\Psi|_\ho\Big) \leq Ct^\sigma,\\
    \end{align*}
    along with the corresponding off-diagonal improvements for $\delta\bar\h$ and $\dk$. Furthermore, by Theorem~\ref{approximate solutions} and Lemma~\ref{error in the approximate weingarten map}, we can take $n$ large enough such that $\ce_n$, $\Box_{g_n}\s_n - V' \circ \s_n$ and $\bar K_n - K_n$ decay as an arbitrarily large power of $t$. Hence each equation in \eqref{equations for uniqueness} is of the type considered in Lemma~\ref{linear ode lemma}. We can thus use Lemma~\ref{linear ode lemma} to successively improve on the estimates for the variables. We illustrate the idea by going through the first iteration.

    By Corollary~\ref{estimates of differences}, we have
    \[
    \sum_{m=0}^{A-2}\Big(t^2|D^m(\sric^\sharp - \sric_n^\sharp)|_\ho + t^2|D^m(d\s \otimes \sn\s - d\s_n \otimes \sn\s_n)|_\ho\Big) \leq Ct^{2\sigma},
    \]
    along with the corresponding off-diagonal improvements, and
    \[
    \begin{split}
        \sum_{m=0}^{A-2} t^2|D^m( \Delta_h \s - \Delta_{h_n}\s_n )|_\ho \leq Ct^{2\sigma},\\
        \sum_{m=0}^{A-2}\Big(t^2|D^m( V \circ \s - V \circ \s_n)|_\ho + t^2|D^m( V' \circ \s - V' \circ \s_n)|_\ho \Big) \leq Ct^{2\sigma}.
    \end{split}
    \]
    Hence, from \eqref{improvemets for theta}, it follows that
    \[
    \sum_{m=0}^{A-2} |D^m \dt|_\ho \leq Ct^{2\sigma}.
    \]
    Now we can use \eqref{improvements for k} and \eqref{improvements for psi} to obtain
    \[
    \sum_{m=0}^{A-2} \Big( |D^m \dk|_\ho + |D^m \delta\bar\Psi|_\ho \Big) \leq Ct^{2\sigma},
    \]
    along with the corresponding off-diagonal improvements for $\dk$. Finally, \eqref{improvements for h} and \eqref{improvements for phi} imply
    \[
    \sum_{m=0}^{A-2} |D^m \delta\bar\h|_\ho \leq Ct^{2\sigma}, \qquad \sum_{m=0}^{A-2} |D^m \delta\bar\Phi|_\ho \leq C\langle \ln t \rangle t^{2\sigma},
    \]
    along with the corresponding off-diagonal improvements for $\delta\bar\h$. Note that we have obtained improvements on the estimates for all the variables. On the other hand, we have lost two derivatives in the process as a consequence of applying Lemma~\ref{estimates of differences}. Nonetheless, by iterating this process we can ensure that the claim holds, with $A$ depending only on $M$ and $\sigma$.

    In order to apply Proposition~\ref{first uniqueness result} it only remains to show that the required estimates for $\lie_{\p_t} K$ hold. But these follow from the estimates for $K$, Lemma~\ref{decay of ricci} and the evolution equation \eqref{evolution equation for k} for $K$. The result follows.
\end{proof}

\begin{proof}[Proof of Theorem~\ref{main uniqueness theorem}]
    By Theorem~\ref{asymptotics of the frame}, we see that there is a sufficiently small $T>0$ such that the assumptions of Proposition~\ref{second uniqueness result} are satisfied by $(F_1^*g_1,\s_1 \circ F_1)$ and $(F_2^* g_2, \s_2 \circ F_2)$ in $(0,T] \times \Sigma$. The result follows.
\end{proof}

Using Theorem~\ref{main uniqueness theorem}, we can justify why it is reasonable to assume that the frame $\{e_i\}$ of eigenvectors of $\Ko$ is global.

\begin{remark} \label{about the global frame}
    Let $(\Sigma,\ho,\Ko,\phio,\psio)$ be initial data on the singularity and $V$ be an admissible potential. If $\Ko$ does not have a global frame of eigenvectors, there is a finite covering space $\widetilde \Sigma$, with covering map $\pi:\widetilde\Sigma \to \Sigma$, such that $\pi^*\Ko$ has a global frame of eigenvectors; see \cite[Lemma~A.1]{ringstrom_wave_2021}. We can then pull back the initial data to $\widetilde\Sigma$ and by Theorem~\ref{main existence theorem}, we obtain a corresponding locally Gaussian development, say $((0,T] \times \widetilde\Sigma,\widetilde g = -dt \otimes dt + \widetilde h, \widetilde\s)$. The idea is to take an appropriate quotient of the development to obtain a development of the original initial data. To that end, let $\Gamma$ denote the group of deck transformations of $\pi$ and define the map $\overline{\pi}:(0,T] \times \widetilde\Sigma \to (0,T] \times \Sigma$ by $\overline{\pi}(t,x) := (t,\pi(x))$. Clearly $\overline{\pi}$ is a covering map. Moreover, if $\bar\gamma$ is a deck transformation of $\overline{\pi}$, then $\bar\gamma(t,x) = (t,\gamma(x))$ for some $\gamma \in \Gamma$. Since each $\gamma \in \Gamma$ preserves the pulled back initial data, then $((0,T] \times \widetilde\Sigma, \widetilde g, \widetilde\s)$ and $((0,T] \times \widetilde\Sigma, \bar\gamma^*\widetilde g, \widetilde\s \circ \bar\gamma)$ are both locally Gaussian developments of the same initial data. Thus, by Theorem~\ref{main uniqueness theorem} and taking $T$ smaller if necessary, $\bar\gamma^*\widetilde g = \widetilde g$ and $\widetilde\s \circ \bar\gamma = \widetilde\s$. Consequently, there is a unique Lorentzian metric $g$ and a unique function $\s$ on $(0,T] \times \Sigma$ such that $\overline{\pi}$ is a local isometry and $\s \circ \overline{\pi} = \widetilde\s$; see \cite[Corollary~12, p. 191]{oneill_semi-riemannian_1983}. Then $((0,T] \times \Sigma, g, \s)$ is the desired locally Gaussian development of the original initial data.
\end{remark}

\begin{appendices}

\section{Conventions} \label{appendix}

\paragraph{Notation for constants.} Throughout the paper we use $C$, $C_m$, etc., to denote positive constants whose value may change from line to line. Moreover, unless otherwise stated, they are only allowed to depend on the initial data on the singularity and the potential. 

\paragraph{Norms of tensors.}

Let $(M,g)$ be a semi-Riemannian manifold. We begin by extending the metric to arbitrary tensors.

\begin{definition}
    The metric $g$ can be extended to tensors as follows. Let $X_i, Y_i \in \mathfrak X(M)$ and $\omega^k, \alpha^k \in \Omega^1(M)$ for $i = 1,\ldots,q$ and $k = 1,\ldots,r$; then $g$ can be defined for simple tensors by the formula
    \[
    \begin{split}
        &g( \omega^1 \otimes \cdots \otimes \omega^r \otimes X_1 \otimes \cdots \otimes X_q, \alpha^1 \otimes \cdots \otimes \alpha^r \otimes Y_1 \otimes \cdots \otimes Y_q )\\
        &\hspace{4cm} := g^{-1}(\omega^1,\alpha^1) \cdots g^{-1}(\omega^r,\alpha^r) g(X_1,Y_1) \cdots g(X_q,Y_q),
    \end{split}
    \]
    and we extend it to arbitrary $(q,r)$-tensors by bilinearity. Then define the norm of a $(q,r)$-tensor $T$ by
    \[
    |T|_g := \sqrt{|g(T,T)|}.
    \]
\end{definition}

Now let $(M,g)$ be a closed Riemannian manifold. We define the corresponding $L^p$, Sobolev and $C^k$ norms as follows.

\begin{definition}
    Let $\mu$ be the volume form of $g$, $T$ be a tensor and $1 \leq p < \infty$, then the $L^p(M,g)$ and $L^\infty(M,g)$ norms of $T$ are defined by
    \[
    \|T\|_{L^p(M,g)} := \bigg( \int_M |T|_g^p \mu \bigg)^{1/p}, \qquad \|T\|_{L^\infty(M,g)} := \sup_M |T|_g.
    \]
    Also, for $1 \leq p \leq \infty$, define the Sobolev $W^{s,p}(M,g)$ norm of $T$ by
    \[
    \|T\|_{W^{s,p}(M,g)} := \sum_{m=0}^s \|\n^m T\|_{L^p(M,g)},
    \]
    where $\n$ is the Levi-Civita connection of $g$ and $\n^m T$ denotes the $m$-fold covariant differential of $T$. Moreover, denote $\|\cdot\|_{H^s(M,g)} := \|\cdot\|_{W^{s,2}(M,g)}$. Finally, define the $C^k(M)$ norm by
    \[
    \|T\|_{C^k(M)} := \sum_{m=0}^k \sup_M |\n^m T|_g.
    \]
\end{definition}

\begin{definition} \label{multiindex definition}
    A multiindex $\alpha$ of order $m$ is a tuple $\alpha = (\alpha_1,\ldots,\alpha_m)$, such that if ${i = 1,\ldots,m}$ then $\alpha_i \in \{1,\ldots,n\}$; where $n = \dim M$. Let $\{e_i\}$ be a frame on $M$ and $u \in C^\infty(M)$, then we use the notation $e_\alpha u := e_{\alpha_1} \cdots e_{\alpha_m} u$ and $|\alpha| := m$. Note that our notation for multiindices differs from the usual one. This is because the frame $\{e_i\}$ does not, in general, commute.
\end{definition}

Often we estimate objects of the form $|\n^m T|_g$ by estimating derivatives of components of $T$ in terms of an orthonormal frame. This is justified by the following observation.

\begin{remark}
    Let $\{e_i\}$ be a global orthonormal frame on $M$ and $T$ be a $(q,r)$-tensor. Then there is a constant $C$, independent of $T$, such that
    \[
    C^{-1} \sum_{k=0}^m |\n^k T|_g \leq \sum |e_\alpha T_{i_1 \cdots i_r}^{k_1 \cdots k_q}| \leq C \sum_{k=0}^m |\n^k T|_g,
    \]
    where the sum in the middle is over all indices and every $\alpha$ such that $|\alpha| \leq m$.
\end{remark}

\paragraph{Normal Lie derivatives.} 

Throughout the paper, we make use of metrics of the form $g = -dt \otimes dt + h$ on $I \times \Sigma$ where $I$ is an interval, $\Sigma$ is a closed manifold, and the hypersurfaces $\Sigma_t := \{t\} \times \Sigma$ are spacelike with induced metric $h_t$ and future pointing unit normal $\p_t$. In this setting, it is convenient to introduce a notion of normal derivative for tensors which are defined on each $\Sigma_t$, like for instance the Weingarten map $K$. 

First a comment regarding the regularity. Let $T$ be a one parameter family of $(q,r)$-tensors on $\Sigma$, for $t \in I$. We say that $T$ is smooth if the function $T(X_1,\ldots,X_r,\omega^1,\ldots,\omega^q)$ is smooth as a function from $I \times \Sigma$ to $\R$, for all $X_1,\ldots,X_r \in \mathfrak X(\Sigma)$ and all $\omega^1,\ldots,\omega^q \in \Omega^1(\Sigma)$. 

Let $X$ be a smooth one parameter family of vector fields on $\Sigma$. We can equivalently think of $X$ as a vector field on $I \times \Sigma$, such that $X_t$ is tangent to the hypersurface $\Sigma_t$ for each $t \in I$. Then we can consider $[\p_t,X]$. Note that for each $t \in I$, the vector field $[\p_t,X]$ is tangent to $\Sigma_t$, hence we can think of it as a one parameter family of vector fields on $\Sigma$. Given this observation, it makes sense to make the following definition.

\begin{definition} \label{normal lie derivative}
    Consider a smooth one parameter family of $(q,r)$-tensors $T$ on $\Sigma$, for $t \in I$. Define
    \[
    \begin{split}
        (\lie_{\p_t} T)(X_1,\ldots,X_r,\omega^1,\ldots,\omega^q) &:= \p_t \big( T(X_1,\ldots,X_r,\omega^1,\ldots,\omega^q) \big)\\
        &\quad - \sum_i T(X_1,\ldots,[\p_t,X_i],\ldots,X_r,\omega^1,\ldots,\omega^q)\\
        &\quad - \sum_k T(X_1,\ldots,X_r,\omega^1,\ldots,\lie_{\p_t} \omega^k,\ldots,\omega^q),
    \end{split}
    \]
    where $X_1,\ldots,X_r$ and $\omega^1,\ldots,\omega^q$ are smooth one parameter families of vector fields and one forms on $\Sigma$ respectively. 
\end{definition}

Note that the same formula defines $\lie_{\p_t}$ for a one form, while only making reference to $[\p_t,X]$. Hence $\lie_{\p_t} T$ is a well defined smooth one parameter family of $(q,r)$-tensors on $\Sigma$.

\paragraph{Raising indices of tensors.}

Here we clarify our conventions regarding our use of the notation $\sharp$.

\begin{definition} \label{raising an index}
    Let $(M,g)$ be a semi-Riemannian manifold and let $T$ be a 2-covariant tensor. Then $T^\sharp$ is the $(1,1)$-tensor defined by
    \[
    g(T^\sharp(X),Y) := T(X,Y),
    \]
    for all $X, Y \in \mathfrak X(M)$.
\end{definition}

\end{appendices}

\printbibliography[heading=bibintoc]

\end{document}